\documentclass[11pt,letterpaper]{article}

\usepackage[T1]{fontenc}
\usepackage[utf8]{inputenc}
\usepackage[margin=1in]{geometry}
\usepackage{amsmath,amssymb,amsthm,mathtools}
\usepackage{algorithm}
\usepackage[noend]{algpseudocode}
\usepackage[final]{microtype}
\usepackage{xcolor}
\usepackage[hidelinks]{hyperref}
\usepackage{comment}
\usepackage{enumitem}

\usepackage{hyperref}
\usepackage{cleveref}

\crefname{appendix}{Appendix}{Appendices}
\Crefname{appendix}{Appendix}{Appendices}

\hypersetup{
plainpages = false,
bookmarksopen = true,
colorlinks = true,
citecolor = purple,
linkcolor = teal,
urlcolor  = brown,
}
\usepackage{doi}

\allowdisplaybreaks
\numberwithin{equation}{section}
\numberwithin{algorithm}{section}
\newcommand{\R}{\mathbb{R}}
\newcommand{\C}{\mathbb{C}}
\newcommand{\Sph}{\mathbb{S}}
\newcommand{\Tr}{\operatorname{Tr}}

\newcommand{\val}{\operatorname{val}}
\newcommand{\distF}{\operatorname{dist}_{\mathrm F}}
\newcommand{\poly}{\operatorname{poly}}
\newcommand{\polylog}{\operatorname{polylog}}
\newcommand{\Ot}{\widetilde{O}}
\newcommand{\Thetat}{\widetilde{\Theta}}
\newcommand{\Omegat}{\widetilde{\Omega}}
\newcommand{\Yes}{\mathsf{Yes}}
\newcommand{\No}{\mathsf{No}}
\newcommand{\sep}{\mathsf{sep}}

\newcommand{\OPT}{\operatorname{OPT}}
\newcommand{\SOS}{\operatorname{SOS}}
\newcommand{\sos}{\mathrm{sos}}

\newcommand{\ip}[2]{\langle #1,#2\rangle}

\DeclarePairedDelimiter{\norm}{\lVert}{\rVert}
\DeclareMathOperator*{\argmax}{arg\,max}
\DeclareMathOperator*{\E}{\mathbb{E}}
\DeclareMathOperator*{\wtE}{\widetilde{\mathbb{E}}}

\newcommand{\Dens}{\mathcal{D}}
\newcommand{\QMA}{\mathsf{QMA}}
\newcommand{\HSEP}{h_\mathsf{sep}}
\renewcommand{\epsilon}{\varepsilon}

\theoremstyle{plain}
\newtheorem{theorem}{Theorem}[section]
\newtheorem{lemma}[theorem]{Lemma}
\newtheorem{proposition}[theorem]{Proposition}
\newtheorem{corollary}[theorem]{Corollary}
\newtheorem{fact}[theorem]{Fact}
\theoremstyle{definition}
\newtheorem{definition}[theorem]{Definition}

\theoremstyle{remark}
\newtheorem{remark}[theorem]{Remark}

\crefname{theorem}{Theorem}{Theorems}
\Crefname{theorem}{Theorem}{Theorems}
\crefname{lemma}{Lemma}{Lemmas}
\Crefname{lemma}{Lemma}{Lemmas}
\crefname{proposition}{Proposition}{Propositions}
\Crefname{proposition}{Proposition}{Propositions}
\crefname{corollary}{Corollary}{Corollaries}
\Crefname{corollary}{Corollary}{Corollaries}
\crefname{definition}{Definition}{Definitions}
\Crefname{definition}{Definition}{Definitions}
\crefname{remark}{Remark}{Remarks}
\Crefname{remark}{Remark}{Remarks}
\crefname{fact}{Fact}{Facts}
\Crefname{fact}{Fact}{Facts}

\title{An Argmax Principle for Sum-of-Squares\\
Relaxations on the Sphere}

\author{
Fernando Granha Jeronimo%
\thanks{University of Illinois Urbana--Champaign.
Email: \url{granha@illinois.edu}}
\and
Pei Wu%
\thanks{The Pennsylvania State University.
Email: \url{pei.wu@psu.edu}}
\and
Haochen Xu%
\thanks{The Pennsylvania State University.
Email: \url{hpx5065@psu.edu}}
}
\date{}

\begin{document}
\maketitle
\begin{abstract}
We develop an argmax principle for analyzing sum-of-squares relaxations of optimization problems over the unit sphere.  Given a feasible pseudo-expectation $\wtE$, we form a polynomial in external parameters whose coefficients are high-order pseudo-moments, such as $\Phi_k(u)=\widetilde{\mathbb E}\langle x,u\rangle^{2k}$.  Our guiding principle is that the maximizers of such polynomial are candidates for rounding: their local and global optimality conditions reveal the reweighed pseudo-expectation inequalities that govern SoS convergence.  We show that this viewpoint gives a unified explanation of several SoS approximation problems where the prior work relied on rather different techniques.
In particular, under this umbrella, we obtain the following three results.
\begin{itemize}
    \item First, for the Best Separable State problem, we give a degree-$O(\sqrt {n/\epsilon})$ SoS analysis for approximating $h_{\sep}(P)$ in the perfect-completeness regime, improving and simplifying the  analysis of Barak, Kothari and Steurer (STOC'17). The dependence is essentially tight for inverse-linear gap under Exponential-Time Hypothesis, matching the hardness results from the $\QMA(2)$ protocols.

    \item Second, for the matrix $2\to4$ norm, we show that degree-$O(\sqrt n/\epsilon)$ SoS gives a multiplicative $(1 + \epsilon)$ approximation. Prior to our result, Barak et al. (STOC'12) gave an algorithm to decide if a subspace contains a $\delta$-analytically sparse vector for some constant $\delta$, with a comparable running time $\exp(O_\delta(\sqrt{n}))$. Our result upgrades it to a multiplicative approximation ratio guarantee, and to the more general $p\to q$ norms with even $q$.

    \item Finally, for general degree-$d$ polynomial optimization, we recover the convergence theorem of Bhattiprolu et al. (FOCS'17) with a shorter and more direct proof: For an arbitrary degree-$d$ polynomial, the degree-$k$ SoS based algorithm gives an approximation ratio of $O_d((n/k)^{d/2-1})$.
\end{itemize}

Thus the paper does not introduce a new relaxation.  Instead, it gives a
new way to read an SoS solution: the high-moment argmax provides a common
analytical object that unifies several previously separate convergence
analyses and yields sharper bounds or simpler proofs.
We view these three applications as evidence that this simple argmax principle is widely applicable for understanding SoS convergence beyond the settings treated here.
\end{abstract}

\newpage
\tableofcontents
\newpage

\section{Introduction}
\label{section:intro}

Optimization over the unit sphere is one of the basic continuous optimization problems behind modern algorithmic complexity.
Several central problems in theoretical computer science and quantum
information can be formulated as polynomial optimization over the sphere.  The quadratic case is spectral: maximizing $x^\top A x$ over $\norm{x}_2=1$ is exactly an eigenvalue problem.
Already for quartic and higher-degree objectives, however, the same template contains substantially harder problems.  In this paper, we study three representative
examples: the Best Separable State problem, the \(2\to4\) and $p\to q$  norms, and general
polynomial optimization.

The sum-of-squares (SoS) relaxation is the canonical SDP hierarchy for polynomial optimization problems.
At degree $2k$, it optimizes over pseudo-expectations
that behave like moments of a distribution on the sphere up to degree $2k$;
equivalently, it searches for degree-$2k$ certificates of nonnegativity
modulo the sphere constraint.  This viewpoint goes back to the moment and
semidefinite programming frameworks of Lasserre and
Parrilo~\cite{Lasserre01,Parrilo03,Lau09}.  For a problem in $n$ variables, degree
$2k$ typically leads to an SDP of size $n^{O(k)}$, so the convergence rate
as a function of $k$ is directly an algorithmic running-time question.

There are two rather different traditions for analyzing this convergence.  In
theoretical computer science, the analysis of SoS relaxations is often coupled
to an efficient rounding algorithm: one shows how to convert a feasible pseudo-expectation
into an actual vector, often through iterative reweighing or
conditioning, and apply Gaussian rounding or other problem-specific rounding procedure to the
resulting low-degree moments~\cite{BRS11,BKS14,BKS17,BHK25}.  In the optimization and quantum-information literature,
by contrast, convergence of SoS and related hierarchies are frequently proved by
harmonic analysis, approximation theory, and de Finetti type bounds~\cite{DPS04, Lau09,BCY11,DW12, dKLS17convergence, FF19, LS22}.

This paper studies a more elementary mechanism for analyzing SoS relaxations
in three settings central to theoretical computer science and quantum
complexity: the Best Separable State problem, approximation of the
\(2\!\to\!4\) and \(p\!\to\!q\) norms, and general polynomial optimization.
Given a feasible pseudo-expectation, we construct a high-moment auxiliary
polynomial in external parameters, such as a direction or a fold, and use a
maximizer of this polynomial to identify the structure around which the pseudo-expectation concentrates.  The optimality or maximality properties of this auxiliary polynomial supply analytical tools in the SoS analysis.  This gives a unified principle that yields constant-factor
approximation algorithms and explicit SoS convergence rates, while also
sharpening or simplifying analyses that previously came from rather different
techniques.

\paragraph{Background.}
% In this paper, we apply our unified principle to three central problems, the Best Separable State problem; the $2\to4$ and $p\to q$ norms problem; and the general polynomial optimization problem over the unit sphere.

Best Separable State (BSS) is a central optimization problem in quantum
information.  Given a measurement operator \(0\preceq M\preceq I\), it asks for
\[
    h_\sep(M):=\max_{\rho\otimes\sigma}
    \Tr\!\left(M(\rho\otimes\sigma)\right),
\]
where \(\rho\) and \(\sigma\) range over density operators.  Its role in quantum
complexity comes from \(\mathsf{QMA}(2)\): a verifier receives two unentangled
quantum proofs, and its maximum acceptance probability is the value of a corresponding BSS instance~\cite{KMY03,HM13}. Thus, $\QMA(2)$ protocols lead to instances whose parameters give a lower bounds for BSS problems~\cite{HM13, BBHKSZ12,HNW2019limitations}.
From a broader optimization perspective, BSS is a tensor-optimization problem:
it maximizes a linear operator over product states.  This viewpoint connects
BSS to separability testing, SDP hierarchies such as the DPS hierarchy, and SoS
algorithms for quantum information~\cite{gurvits03sep,DPS04,BCY11,BH13,BKS17}.
Under standard reductions, \(h_\sep\) is closely related to the tripartite
injective tensor norm~\cite{BBHKSZ12}.  BSS therefore belongs to the broader
landscape of tensor-norm certification and polynomial optimization; related
certification problems also arise in average-case statistical models such as
Tensor PCA~\cite{montanari2014statistical}.
% Best Separable State (BSS) is, on its face, a tensor optimization problem: one
% optimizes a measurement operator over product states.  This already places BSS in the
% broader classical landscape of tensor norms and polynomial optimization.  Its
% most prominent role, however, is in quantum complexity.  A
% $\mathsf{QMA}(2)$ verifier receives two unentangled quantum proofs, and its
% acceptance probability is the maximum of a measurement operator over separable
% states~\cite{KMY03,HM13}.  Thus the basic optimization problem is
% \[
%     h_\sep(M):=\max_{\rho\otimes\sigma}\Tr(M\rho\otimes\sigma),
% \]
% where $0\preceq M\preceq I$.  This separability and tensor-optimization
% viewpoint connects BSS to questions about the complexity of detecting quantum
% entanglement, to SDP hierarchies such as the DPS hierarchy, and to SoS
% algorithms for quantum information~\cite{gurvits03sep,DPS04,BCY11,BH13,BKS17}.  The
% connection to classical tensor optimization is quite direct: $h_{\sep}$ is
% roughly equivalent, under standard reductions, to the tripartite injective
% tensor norm~\cite{BBHKSZ12}.  This places BSS in the same landscape as injective tensor-norm certification problems; similar tensor-norm certification questions also arise in average-case statistical models such as Tensor PCA~\cite{montanari2014statistical}.  At the same time, BSS is the optimization
% core of the $\mathsf{QMA}(2)$
% line of work: two-prover protocols lead to instances whose
% parameters give a lower bounds for BSS problems~\cite{BBHKSZ12,HNW2019limitations}.

The projector $2\!\to\!4$ norm is the corresponding classical sparse-vector
problem.  For a subspace $W\subseteq\R^n$,
\[
    \max_{x\in W,\ \norm{x}_2=1}\norm{x}_4
\]
measures whether $W$ contains a vector concentrated on a small set of
coordinates.  This analytic notion of sparsity is closely tied to
hypercontractivity, small-set expansion, and Unique Games.  In particular,
the $2\!\to\!4$ norm and related $p\to q$ norms appear in the
work of Barak, Brand{\~a}o, Harrow, Kelner, Steurer, and Zhou in hypercontractivity
and small-set expansion, in subexponential algorithms for Unique Games, and in
more recent work on polynomial-time approximation guarantees for matrix
$p\to q$ norms~\cite{BBHKSZ12,ABS15subexp,Jal23,GMU23,HT26,ABH26}.
Hardness of approximating $2\to 4$ norm is often viewed as an intermediate
step toward understanding the hardness landscape around small-set expansion and
Unique Games~\cite{BBHKSZ12,BGGLT23,BGLR25}.
The broader $2\!\to\!q$ problem also has a statistical interpretation:
if the rows of a matrix $A$ are data points, then $\|A\|_{2\to q}^q/n$
is the largest empirical $q$-th moment of a one-dimensional projection.
SoS certificates for such directional moment bounds, often phrased as
certifiable bounded moments or certifiable hypercontractivity, are a basic
primitive in SoS-based algorithms for robust moment estimation, robust
mean and covariance estimation, robust regression, and mixture learning~\cite{HL18,KSS18,KKM18,BP21,ST21,HT26}.   We remark that the recent work of~\cite{GMU23, HT26} focuses on the polynomial-time regime, targeting for improved (still polynomial) approximation ratios; while we study the convergence rate and the constant approximation ratio regime. This is
why the subexponential running time regime we focus is not merely a fallback after polynomial time:
it is one of the natural scales at which current algorithms see these problems.

Finally, general polynomial optimization over the sphere is relatively underexplored in TCS compared to polynomial optimization over the simplex or Boolean cube.
For quadratics, it reduces to eigenvalues; for quartics it already contains tensor-norm and sparse-vector phenomena.
Basic hardness enters through reductions from discrete problems such as Maximum Independent Set, etc.~\cite{nesterov2003random,BBHKSZ12}; while existing convergence theorems from optimization and quantum
information often become nontrivial only at levels linear in the dimension,
outside the subexponential regime most relevant to TCS. In parallel with the polynomial-time approximation line for matrix norms,
there are also polynomial-time algorithms for arbitrary degree-$d$ forms over
the sphere, with only polynomially large approximation guarantee, for instance
$O_d(n^{d/2-1})$~\cite{HLZ10,So11}.
Bhattiprolu, Ghosh, Guruswami, Lee, and Tulsiani developed a
powerful technique of weak decoupling and polynomial folds giving explicit $O_d((n/k)^{d/2-1})$
approximation ratio of the degree-${2k}$ SoS for degree-$d$ polynomials over the unit sphere~\cite{BGGLT}.

Taken together, this set of problems comes from
some of the central algorithmic and complexity-theoretic lines around tensor
optimization, quantum separability and $\mathsf{QMA}(2)$,
small-set expansion and Unique Games.

\subsection{Our results and techniques}
In this work, the SoS programs we study are essentially the \emph{tensor-SDP} as in~\cite{BKS14}.
The central conceptual ingredient of this paper is an argmax principle for pseudo-expectations.
Given a feasible pseudo-expectation $\wtE$, we then form a high-moment polynomial in external
parameters and ask where it is maximized.  The parameter may be a direction, as
in BSS and $2\!\to\!4$ norm, or a fold, as in the general polynomial optimization problem.
Its local or global optimality conditions turn pseudo-moment feasibility into usable structural information,
which in our applications yields \emph{convergence bounds} and sometimes \emph{efficient rounding} for the SoS hierarchy.
We summarize our unifying argmax principle underlying our results:
\begin{quote}
   \emph{The argmax of a high-moment polynomial built from a feasible pseudo-expectation is a rounding object for understanding SoS convergence.}
%\emph{The argmax of a high-moment polynomial built from a feasible pseudo-expectation is the rounding object: its optimality conditions turn global pseudo-moment information into local moment inequalities.}
\end{quote}
The three main applications will use this principle in increasingly flexible ways.
For BSS, the relevant argmax is a pair of directions, and only its first- and second-order optimality conditions are needed:
after reweighing, they force a spectral gap in a pseudo-covariance matrix, which yields an approximately rank-one product direction.
This gives a sharp SoS analysis for the perfect-completeness BSS gap problem, improving the parameter dependence of~Barak--Kothari--Steurer~\cite{BKS17}.
For the \(2\to 4\) and more general \(p\to q\) norms, the same directional moment polynomial is used more strongly: global maximality gives an argmax-induced fourth-order pseudo-concentration estimate, which is the key tool to our analysis.
This yields multiplicative approximation algorithms and explicit convergence rates for the corresponding SoS relaxations, strengthening the subexponential algorithm due to Barak, Brand\~ao, Harrow, Kelner, Steurer, and Zhou~\cite{BBHKSZ12}.
For fixed-degree polynomial optimization over the sphere, the argmax object is no longer a direction but a fold; together with pseudo-H\"older and spherical moment identities, the argmax fold witnesses a large value of the given polynomial.
This recovers the Bhattiprolu--Ghosh--Guruswami--Lee--Tulsiani convergence bound with a shorter and more direct analysis~\cite{BGGLT}.

Thus the same argmax principle appears as a spectral-gap argument, a
higher-order pseudo-concentration argument, and finally a fold-selection argument.
For the problems studied here, this viewpoint gives short and
quite transparent convergence analyses. A separate
and more surprising point is that, in some cases, the same analysis also suggests an efficient rounding procedure.

We now elaborate our results.

\paragraph{Best Separable State.}
Our first result concerns the perfect-completeness Best Separable State (BSS)
problem.  The key idea is best illustrated in the clean real symmetric setting. Let $P$ be an orthogonal projector on $\R^n\otimes\R^n$, and set
\[
    h_\sep(P)
    =
    \max_{x\in\Sph^{n-1}}
    \langle x\otimes x,\;P(x\otimes x)\rangle .
\]
The $(1,1-\epsilon)$-BSS gap problem distinguishes
\[
    \Yes:\quad h_\sep(P)=1,
    \qquad
    \No:\quad h_\sep(P)\le 1-\epsilon .
\]
Equivalently, the unit-eigenspace of $P$ either contains a tensor square
$x\otimes x$, or every tensor square has projection at most $1-\epsilon$ onto
this subspace.  The body proves the slightly more general product-vector
formulation $x\otimes y$; however the symmetric case already contains the mechanism.

The SoS relaxation asks for a pseudo-expectation $\wtE$ which
behaves as if
\[
    \|x\|_2^2=1,
    \qquad
    P(x\otimes x)=x\otimes x .
\]
\begin{theorem}[Informal BSS convergence]
\label{thm:bss-convergence-intro}
The degree-$O(\sqrt {n/\epsilon})$ SoS decides $(1, 1-\epsilon)$-BSS.
\end{theorem}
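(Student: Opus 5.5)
We argue the contrapositive: if a degree-$2k$ pseudo-expectation $\wtE$ satisfies $\norm{x}_2^2=1$ and $P(x\otimes x)=x\otimes x$ with $k=C\sqrt{n/\epsilon}$, then $h_\sep(P)>1-\epsilon$. Soundness is the only issue, since in the $\Yes$ case the point mass on a maximizer is feasible.

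Following the argmax principle, the auxiliary polynomial is
\[
  \Phi_k(u)=\wtE\,\langle x,u\rangle^{2k},\qquad u\in\Sph^{n-1}.
\]
Let $u^*$ be a global maximizer and define the reweighed pseudo-expectation $\wtE'[f]=\wtE[f\langle x,u^*\rangle^{2k-2}]/\wtE\langle x,u^*\rangle^{2k-2}$. Since the weight is a square, $\wtE'$ is a valid pseudo-expectation of degree $2$. It still satisfies the sphere and projector constraints, because these are ideal constraints that survive multiplication by polynomials. The target is to show that the pseudo-covariance $M=\wtE'[xx^\top]$, a PSD trace-one matrix, has $\langle u^*,Mu^*\rangle\ge 1-O(\epsilon)$. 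Once we have this, $\wtE'[\norm{P(x\otimes x)-x\otimes x}^2]=0$ gives $\wtE'\langle x\otimes x,P(x\otimes x)\rangle=1$. A degree-4 version of the argument (weight $2k-4$, or reading off the fourth moment via the concentration of $\wtE'$ on $u^*$) then shows that $\langle u^*\otimes u^*,P(u^*\otimes u^*)\rangle\ge 1-O(\epsilon)$.

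The spectral-gap step uses the optimality conditions at $u^*$. Taking the first-order (Lagrange) condition on the sphere, $\wtE\langle x,u^*\rangle^{2k-1}x=\lambda u^*$ with $\lambda=\Phi_k(u^*)$, and pairing it against $v\perp u^*$ gives $\wtE\langle x,u^*\rangle^{2k-1}\langle x,v\rangle=0$. The second-order condition, obtained by perturbing $u=(u^*+tv)/\sqrt{1+t^2}$, reads
\[
  (2k-1)\,\wtE\langle x,u^*\rangle^{2k-2}\langle x,v\rangle^2\le \wtE\langle x,u^*\rangle^{2k}.
\]
Summing over an orthonormal basis of $(u^*)^\perp$ and using $\norm{x}^2=1$ yields $(2k-1)(1-m)\le (n-1)\,m'$, where $m=\langle u^*,Mu^*\rangle$ and $m'=\wtE\langle x,u^*\rangle^{2k}/\wtE\langle x,u^*\rangle^{2k-2}\le 1$. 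This alone gives $1-m\le n/(2k)$, which requires $k\sim n/\epsilon$ and is too weak.

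The main obstacle is recovering the square-root saving $k\sim\sqrt{n/\epsilon}$. My first attempt would exploit the projector constraint as a second-order effect. Write $x=\alpha u^*+w$ with $w\perp u^*$. Under $\wtE'$ the vector $w$ is small in the sense that $\wtE'\norm{w}^2=1-m$. The constraint $P(x\otimes x)=x\otimes x$ expands to
\[
  \alpha^2\, u^*\!\otimes u^* + \alpha\,(u^*\otimes w+w\otimes u^*) + w\otimes w .
\]
Here the cross term has zero pseudo-mean by the first-order condition, and the loss $1-\langle u^*\otimes u^*,P\,u^*\otimes u^*\rangle$ is controlled by $\wtE'\norm{w}^4$ rather than by $\wtE'\norm{w}^2$. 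I would then bound the fourth moment $\wtE'\norm{w}^4\lesssim (n/k)^2$ by applying the second-order inequality with weights $\langle x,u^*\rangle^{2k-4}\norm{w}^2$, or by iterating the summed inequality at two reweighing levels. This would give loss $O(n/k^2)\le\epsilon$ for $k=C\sqrt{n/\epsilon}$. Making this quadratic-in-$w$ bookkeeping rigorous within degree-$2k$ SoS, including the pseudo-Cauchy–Schwarz steps needed for the cross terms and the product $x\otimes y$ variant, is where I expect most of the work.
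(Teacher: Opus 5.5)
Your setup matches the paper's symmetric sketch: argmax of $\Phi_k$, reweighing by $\langle x,u^*\rangle^{2k-2}$, first-order optimality killing the cross term, and the per-direction second-order inequality. The gap is in the step meant to give the square-root saving.

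\textbf{Why the concentration route fails.} Both your target $\langle u^*,Mu^*\rangle\ge 1-O(\epsilon)$ and your fourth-moment repair require the reweighed $\wtE'$ to be concentrated near $u^*$. At degree $k\sim\sqrt{n/\epsilon}\ll n$ no such concentration holds. Take $P=I$ (a yes-instance) and let $\wtE$ be the uniform measure on $\Sph^{n-1}$, with $s=\langle x,u\rangle$.
\begin{itemize}
\item $\Phi_k$ is constant, so every $u$ is a maximizer.
\item $\lambda:=\langle u,Mu\rangle=\wtE s^{2k}/\wtE s^{2k-2}=(2k-1)/(n+2k-2)\approx 2k/n$, so your summed inequality holds with equality.
\item $\wtE'\norm{w}^4\ge(\wtE'\norm{w}^2)^2=(1-\lambda)^2\approx 1$.
\end{itemize}
In general, $\wtE'[x\otimes x]=\lambda\,u^*\otimes u^*+\wtE'[w\otimes w]\in\operatorname{im}P$ bounds the loss by $\norm{\wtE'[w\otimes w]}^2/\lambda^2\le\wtE'\norm{w}^4/\lambda^2$. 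Since $\wtE'\norm{w}^4\ge(\wtE'\norm{w}^2)^2$, this route can never beat the trace scale. From $\wtE'\norm{w}^4\lesssim(n/k)^2$ you get loss $O(n^2/k^2)$ at best, not $O(n/k^2)$ as you wrote, which forces degree $n/\sqrt{\epsilon}$.

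\textbf{The missing idea.} Bound the Frobenius norm of the averaged matrix, not the spread of its mass.
\begin{itemize}
\item Your $M=\wtE'[xx^\top]$ already satisfies $\operatorname{vec}(M)\in\operatorname{im}P$, because the subspace constraint survives reweighing.
\item First-order optimality gives $M=\lambda\,u^*u^{*\top}+B$ with $B=\wtE'[ww^\top]\succeq 0$ and $Bu^*=0$.
\item Your per-direction second-order inequality is exactly the operator-norm bound $B\preceq\frac{\lambda}{2k-1}I_{u^{*\perp}}$. Do not sum it into the trace bound $\Tr B\le\frac{(n-1)\lambda}{2k-1}$. Use it in Frobenius norm instead: $\|B\|_{\mathrm F}\le\sqrt{n-1}\,\|B\|_{\mathrm{op}}\le\sqrt{n-1}\,\lambda/(2k-1)$.
\item Then $\operatorname{dist}(u^*\otimes u^*,\operatorname{im}P)\le\|u^*u^{*\top}-M/\lambda\|_{\mathrm F}=\|B\|_{\mathrm F}/\lambda\le\sqrt{n-1}/(2k-1)$.
\item Hence $\langle u^*\otimes u^*,P(u^*\otimes u^*)\rangle\ge 1-(n-1)/(2k-1)^2$. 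This uses only degree $2$ of $\wtE'$, with no fourth moments.
\end{itemize}
In the uniform example, $B=\frac{1-\lambda}{n-1}I$ and $\|B\|_{\mathrm F}/\lambda=\sqrt{n-1}/(2k-1)$ exactly, even though $\Tr B\approx 1$. The saving is cancellation inside the averaged matrix, and it is lost as soon as you bound $\norm{\wtE'[w\otimes w]}$ by $\wtE'\norm{w}^2$ or by $(\wtE'\norm{w}^4)^{1/2}$.

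\textbf{Bipartite version.} For $x\otimes y$ the paper reweighs by $s^{2k-2}t^{2k-2}$ and uses the cross-moment $\wtE'[st\,xy^\top]=\sigma uv^\top+E\in W$. It controls $E$ by the block-PSD bound $\|E\|_{\mathrm F}^2\le\|X\|_{\mathrm F}\|Y\|_{\mathrm F}$, where $X,Y$ are the reweighed tangent covariances, each with operator norm at most $\sigma/(2k-1)$. This gives squared distance at most $\sqrt{(m-1)(n-1)}/(2k-1)^2$.
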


Given a feasible pseudo-expectation $\wtE$, the rounding target is to make the (reweighed) second moment approximately rank one.
Concretely, if $\Sigma=\wtE xx^\top$ has top eigenvector $u$ and $\lambda_{\max}(\Sigma)\ge(1-\epsilon)\|\Sigma\|_{\mathrm F},$
then the squared distance between $u\otimes u$ and the subspace of $\operatorname{im}(P)$ is at most $O(\epsilon)$. Consequently, this direction $u$ satisfies
\[
    \langle u\otimes u,\;P(u\otimes u)\rangle\ge 1-O(\epsilon).
\]
% This is because if
% % $\lambda_{\max}(\wtE xx^T) \ge (1-\epsilon) \|(\wtE xx^T)\|_F,$
% % then the largest eigenvector $u$ of $\wtE xx^T$ satisfies that $\|P(u\otimes u)\|\ge 1-O(\sqrt\epsilon).$
% \[
%     \Sigma=\wtE xx^\top=\lambda uu^\top+B,
%     \qquad Bu=0,
%     \qquad \|B\|_{\mathrm F}\le \epsilon \lambda,
% \]
% % then the subspace constraint implies
% % \[
% %     dist_F(u\otimes u,W)\le O(\alpha),
% % \]
% % and hence
% then
% \[
%     \langle u\otimes u,\;P(u\otimes u)\rangle
%     \ge 1-O(\epsilon^2).
% \]
% Thus it is enough to obtain $\epsilon=O(\sqrt\eta)$.
In their exciting and pathbreaking work, Barak, Kothari, and Steurer obtain such a direction by Gaussian rounding and an iterative progress-or-round argument by conditioning on a cap from the Gaussian rounding~\cite{BKS17}.  We instead choose the direction by looking for the maximizer $u$ of the following polynomial taking advantage of the whole pseudo-moment matrix in one-shot,
\begin{equation}
\label{eq:Phi-k}
    u=\argmax_{v\in\Sph^{n-1}} \Phi_k(v), \qquad \Phi_k(v): =\wtE\langle x,v\rangle^{2k}.
\end{equation}
Consider the pseudo-expectation of $\wtE$ reweighed by $w(x)=\langle x,u\rangle^{2k-2}$, denoted $\wtE_w$. First-order derivative
shows that the maximizer $u$ of $\Phi_k$ is an eigenvector of the reweighed second pseudo-moment matrix $\Sigma_w=\wtE_w xx^\top$; while
the second-order derivative controls
the spectral gap, such that for any $h\in \Sph^{n-1}$ and $h\perp u$,
\begin{equation}
\label{eq:second-order-argmax-comparison}
\wtE_w \langle x,h\rangle^2 \lesssim \frac{\wtE_w \langle x,u\rangle^2}{k}.
\end{equation}
Consequently, all the $(n-1)$ eigenvalues except the largest one contribute at most $(n-1)/k^2$ fraction of the squared Frobenius mass in $\Sigma_w$. Hence, a degree-$O(\sqrt {n/\epsilon})$ SoS is feasible if and only if the input is a yes instance.
Indeed, taking $k\gtrsim \sqrt{n/\epsilon}$, one obtains
$\langle u\otimes u,\;P(u\otimes u)\rangle\ge 1-\epsilon.$
Note that for the gap problem, one never needs to actually compute the maximizer $u$, it is only for the analysis.

\bigskip
\emph{Tightness.} Our analysis improves BKS's degree-$\tilde O(\sqrt n /\epsilon^2)$ SoS on the dependence of $\epsilon$ to essentially tight.\footnote{\cite{BKS17} remarks that the $\sqrt n$ dependence in the exponent has to be there in the current techniques.} This $\epsilon$ parameter is important in view of the quantum complexity. There has been a long line of research on $\QMA(2)$ protocols with short-proofs certifying $\mathsf{NP}$-hard problems~\cite{ABDFS08,CD10, BT12, CF13, LGNN12, NZ23qFreegame, LW26}, which gives the hardness side for the hsep and the BSS problem.
In particular for the regime that matters here, the $(\log n + O(1))$-size proof $\mathsf{QMA}(2)$ protocol due to Le Gall, Nakagawa,
and Nishimura for
3-SAT instances of size $L=\Ot(n)$, reduces to perfect-completeness BSS instances with a local
dimension $n$ and inverse-linear gap $\epsilon=\Thetat(1/n)$~\cite{LGNN12}.  In this regime, our SoS algorithm has degree
$\Ot(n)$ with running time $2^{\Ot(n)}$ that
decides the BSS instances thus decides the 3-SAT problem with running time $2^{\Ot(L)}$. This matches the Exponential-Time Hypothesis (ETH), establishing the conditional tightness of this SoS algorithm.
Similarly, on the protocol side, there is no logarithmic-proof $\QMA(2)$ protocols that can certify 3-SAT with perfect-completeness and a $1/o(n)$ gap under ETH.
%A constant gap protocol, on the other hand, would resolve a major open problem in quantum complexity.

We should remark that the tradeoffs between local dimension $n$ and the gap $\epsilon$ can be delicate. Our tightness statement is specific to the inverse-linear gap regime $\epsilon=\Thetat(1/n)$. It does not preclude a different tradeoff. For example, nothing prevents even a quasi-polynomial time algorithm for BSS when $\epsilon=\Omega(1)$ at the moment~\cite{HM13,JLW26}.

\bigskip
In our view, this new analysis is quite satisfying. It is  elementary and the core calculation is completely automatic once we decide to analyze the derivatives of the polynomial $\Phi_k$. Furthermore, this argmax object of $\Phi_k$ is deterministic, one-shot, and avoids the iterative progress or the cumulants arguments in BKS's analysis, resulting in an ETH-tight bound.
What might be even more surprising, \Cref{thm:bss-convergence-intro} also has a companion efficient algorithmic rounding statement.  The BKS-style
rounding mechanism makes efficiency manifest because it samples Gaussian
directions and iterates.  On the other hand, the maximizer of $\Phi_k$ in our analysis initially looks less constructive:
$\Phi_k$ is high degree, and a global maximizer is not obviously computable
within the moment-SDP time budget.  The derivative proof removes this obstacle.
The analysis only needs the second-order local
optimality behind \eqref{eq:second-order-argmax-comparison}; algorithmically,
a standard smooth optimization routine on the product sphere, such as a
Riemannian trust-region method, finds a positive approximate second-order
local maximizer~\cite{BAC18}.

\medskip
\paragraph{Approximating $2\!\to\!4$ and $p\!\to\!q$ norms.}
Our second result gives a multiplicative approximation for the matrix
$2\!\to\!4$ norm, and more generally $p\!\to\!q$ norm. In the introduction, for simplicity we talk about $2\!\to\!4$ norm, which captures the main idea. We will use $\norm{\cdot}_p$ to denote the $\ell_p$-norm, thus for an operator $P\in \R^{n\times m}$
\[
    \norm P_{2\to 4} := \max_{x\in \Sph^{m-1}} \|Px\|_4 = \max_{x\in \Sph^{m-1}} \left(\sum_i (Px)_i^4\right)^{1/4}.
\]
We further restrict $P\in\R^{n\times n}$ to be a nonzero orthogonal projector and set $W=\operatorname{im}(P)$.\footnote{Projector is not really important, our analysis extends to non-projector operator. For the application to small-set expansion and Unique Games though, we normally care about the projector of top eigenspace of the adjacency matrix from some underlying graph.}
Here $\norm{P}_{2\to 4}$ is an analytic sparsity statistic on $W$: a flat unit vector $v$ on
$\delta n$ coordinates has $\norm{v}_4 ^4 =1/(\delta n)$.  This is why the
$2\!\to\!4$ norm appears naturally around small-set expansion and Unique Games.
\begin{theorem}[Approximating $2\to 4$ norm]
\label{thm:norm-convergence-intro}
Given a projector $P\in \R^{n\times n}$, a degree-$2k$ SoS based algorithm approximates $\|P\|_{2\to4}$ to a multiplicative factor of $1+O(\sqrt n/k + n/k^2)$. In particular,
a degree-$O(\sqrt n /\epsilon) $ SoS based algorithm approximates $\|P\|_{2\to4}$ to the approximation ratio $(1+\epsilon)$.
\end{theorem}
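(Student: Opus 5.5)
The SoS relaxation I analyze maximizes $\wtE\norm{Px}_4^4$ over degree-$2k$ pseudo-expectations on $\Sph^{n-1}$. Its value is at least $\norm{P}_{2\to4}^4$, so only the upper bound needs work. Let $\wtE$ be an optimal pseudo-expectation with value $\mathrm{SOS}$. Following the argmax principle, I will take
\[
u=\argmax_{v\in\Sph^{n-1}}\Phi_k(v),\qquad \Phi_k(v)=\wtE\langle x,v\rangle^{2k},
\]
and the reweighed pseudo-expectation $\wtE_w$ with $w(x)=\langle x,u\rangle^{2k-4}$, normalized so that $\wtE_w 1=1$. Since $W=\operatorname{im}P$, I may replace $x$ by $Px$, so that $u\in W$ without loss of generality. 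The rounding candidate is $u$ itself, and the goal is
\[
\norm{Pu}_4^4\ \ge\ (1-O(\sqrt n/k+n/k^2))\,\mathrm{SOS}.
\]

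The plan has three steps. \emph{First}, I relate the reweighed objective to the unreweighed one. Because $u$ maximizes $\Phi_k$ globally, for every $v$ we have $\wtE\langle x,v\rangle^{2k}\le\wtE\langle x,u\rangle^{2k}$. Averaging this over $v$ uniform on the sphere, or over a suitable design, gives $\wtE\langle x,u\rangle^{2k}\gtrsim (k/n)^{k}$ up to spherical constants. This shows the reweighing does not lose much of the objective: via a pseudo-H\"older step I aim to show $\wtE_w\norm{Px}_4^4\ge(1-O(\sqrt n/k))\,\mathrm{SOS}$.

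\emph{Second}, I prove the argmax-induced fourth-order pseudo-concentration. For $h\perp u$, global maximality compared with $v=(u+th)/\sqrt{1+t^2}$, applied at all $t$ and not just infinitesimally, should give
\[
\wtE_w\langle x,h\rangle^4\lesssim\frac{1}{k^2}\,\wtE_w\langle x,u\rangle^4,\qquad
\wtE_w\langle x,h\rangle^2\langle x,u\rangle^2\lesssim\frac1k\,\wtE_w\langle x,u\rangle^4.
\]
The reason is that expanding $\Phi_k(v)\le\Phi_k(u)$ to fourth order in $t$, after the first- and second-order terms are controlled, isolates the $\binom{2k}{4}t^4$ term.

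\emph{Third}, I write $x=\langle x,u\rangle u+x_\perp$ and expand $\norm{Px}_4^4=\sum_i(\langle x,u\rangle(Pu)_i+(Px_\perp)_i)^4$. The main term under $\wtE_w$ is $\norm{Pu}_4^4\,\wtE_w\langle x,u\rangle^4\le\norm{Pu}_4^4$. For the cross terms, I apply Cauchy--Schwarz and H\"older in pseudo-expectation, bounding $\sum_i (Pu)_i^{4-j}(Px_\perp)_i^j$ by $\norm{Pu}_4^{4-j}\norm{Px_\perp}_4^j$. I then control $\wtE_w\norm{Px_\perp}_4^4\le\wtE_w\norm{x_\perp}_2^4$, using $\norm{Py}_4\le\norm{Py}_2$. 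Summing the fourth-order bound over an orthonormal basis of $u^\perp$ gives $\wtE_w\norm{x_\perp}_2^4\lesssim n^2/k^2$ relative to the main term, and the mixed bound gives $n/k$ for the second-order cross term.

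This yields the error profile $1+O(\sqrt n/k+n/k^2)$ after taking fourth roots. The main obstacle is that the additive errors are measured against $\wtE_w\langle x,u\rangle^4$ rather than against $\norm{Pu}_4^4$, so the argument only closes multiplicatively if $\norm{Pu}_4^4$ is comparable to $\mathrm{SOS}\cdot\wtE_w\langle x,u\rangle^4$. Using $\norm{Py}_4\le\norm{Py}_2$ on $x_\perp$ would be too lossy when $\norm{P}_{2\to4}^4\ll1$. I would instead try to bound $\wtE_w\norm{Px_\perp}_4^4$ by a constant times $\mathrm{SOS}\cdot\wtE_w\norm{x_\perp}_2^4$. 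This should follow from applying the pseudo-expectation inequality $\wtE'\norm{Py}_4^4\le\mathrm{SOS}\,\wtE'\norm{y}_2^4$ to the reweighed and rescaled functional, which is again feasible since $w$ is a square. If normalization spoils feasibility on the sphere, I would homogenize the program in advance so that this optimality inequality holds for all PSD-reweighed functionals.
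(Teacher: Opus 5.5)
Your Step 2 is essentially the paper's argmax-induced concentration lemma. The two steps around it, however, have genuine gaps, and the repairs you suggest do not close them.

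\textbf{Keeping the objective after reweighing.} Your relaxation only maximizes $\wtE\norm{Px}_4^4$; it has no threshold constraint with SoS multipliers. Step 1 then needs $\wtE_w\norm{Px}_4^4\ge(1-O(\sqrt n/k))\,\mathrm{SOS}$, and nothing in your sketch produces this.
\begin{itemize}
\item Averaging $\Phi_k$ over the sphere lower-bounds $\wtE s^{2k}$, not $\wtE[s^{2k-4}\norm{Px}_4^4]$.
\item Pseudo-H\"older can only lower-bound a pure moment via a mixed moment you already control, and you control no moment coupling $s^{2k-4}$ with $f$.
\item Optimality of $\wtE$ at degree $2k$ says nothing about $\wtE_w$, which is only a degree-$4$ functional.
\end{itemize}
The same degree obstruction defeats your last-paragraph repair. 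The inequality $\wtE'\norm{Py}_4^4\le\mathrm{SOS}\cdot\wtE'\norm{y}_2^4$ is guaranteed only for functionals feasible at degree $2k$. The reweighed $\wtE_w$ is not "again feasible" merely because $w$ is a square: it is defined only on polynomials of degree $\le 4$, and homogenizing does not change that. The paper sidesteps this by building the threshold into the program: it imposes $\wtE[q^2(f-\tau)]\ge0$ for all $\deg q\le k-2$ (with $Px=x$ as an equality, so $u\in\Sph(W)$ honestly) and binary-searches over $\tau$. Taking $q=s^{k-2}$ then gives $\tau\le\wtE_w f(x)$ for free.

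\textbf{Bounding the tangent mass relative to $\tau$.} This is the missing idea: the tangent error must be measured against $\tau$, not against $\wtE_w s^4\le1$. Summing the fourth-order bound over an orthonormal basis of $u^\perp$ gives $n^2/k^2$, and the mixed bound gives $n/k$. Neither matches your claimed $\sqrt n/k$ after any root, and neither is relative to the objective.

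The paper instead proceeds as follows.
\begin{itemize}
\item Write $z=(P-uu^\top)x$, so that $z_i=\ip{x}{h_i}$ with $h_i=(P-uu^\top)e_i\in W\cap u^\perp$ and $\norm{h_i}_2^2\le P_{ii}$.
\item Apply the concentration bound to each $h_i$ separately, giving $\wtE_w\norm{z}_4^4\le\frac{9}{(k-1)^2}\sum_iP_{ii}^2$.
\item Use a dichotomy. If some $P_{ii}^2\ge(1-\epsilon)\tau$, then $Pe_i/\sqrt{P_{ii}}$ is already a good vector. Otherwise $\sum_iP_{ii}^2<n\tau$, so the tangent error is $O(n\tau/k^2)$.
\end{itemize}
Consequently the output is the better of $u$ and these coordinate candidates, not $u$ alone.

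\textbf{The linear cross term.} The term $4\wtE_w[s^3\sum_iu_i^3z_i]$ cannot be handled by H\"older; that would cost $(n/k^2)^{1/4}\tau$ and yield degree $O(\sqrt n/\epsilon^2)$. It vanishes exactly by first-order stationarity applied to $h=(P-uu^\top)u^{\odot3}$.

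With these ingredients in place, your H\"older treatment of the remaining cross terms does work. Using $\norm{u}_4^4<\tau$ in the nontrivial case, the $j=2,3,4$ terms cost $\sqrt\delta\,\tau$, $\delta^{3/4}\tau$ and $\delta\tau$ with $\delta=n/k^2$, which is exactly the $1+O(\sqrt n/k+n/k^2)$ profile. The paper reaches the same profile via $(a+b)^4\le(1+\eta)a^4+4a^3b+C_\eta b^4$, optimizing over $\eta$.
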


Prior to our result, Barak et al., motivated by applications to Unique Games, gave a $\exp(O_\delta(\sqrt n))$ time algorithm that decides if a given subspace $W$ contains a unit vector $v$ with $\|v\|_4^4 \ge 1/(\delta n)$ for some constant $\delta>0$~\cite{BBHKSZ12}. Their algorithm follows from a simple dimension argument: when the subspace $W$ has dimension at least $\Omega_\delta(\sqrt n)$, then it must contain a $\delta$-analytically sparse vector; in case when the subspace $W$ has dimension $O(\sqrt n )$, one can exhaustive search to find if there is such a sparse vector. To the best of the authors' knowledge, our result is the first multiplicative $(1+\epsilon)$-approximation algorithm with a comparable running time $\exp(\Ot(\sqrt n)).$

Our degree-$2k$ SoS relaxation asks for a
pseudo-expectation $\wtE$, given a threshold $\tau$, behaving as if
\[
    \norm{x}_2^2=1,
    \qquad
    Px=x,
    \qquad
    \norm {x}_4^4\ge \tau .
\]
The rounding again uses
\[
    \Phi_k(v)=\wtE\langle x,v\rangle^{2k},
    \qquad
    v\in \Sph(W).
\]
Let $u\in \Sph(W)$ maximize $\Phi_k$.
Reweigh $\wtE$ by $w(x)=\langle x, u\rangle ^{2k-4}$. We would like to establish that the pseudo-distribution is pseudo-concentrated around the direction $u$ after the reweighing, and consequently $\wtE_w \|u\|^4_4\ge(1-\epsilon) \wtE_w\|x\|_4^4$.
The exact analytical tool
is the following inequality:
for every $h\in \Sph(W)\cap u^\perp$,
\begin{equation}
\label{eq:fourth-order-argmax-comparison}
    \wtE_w \langle x,h\rangle^4
    \lesssim
    \frac{\wtE_w \langle x,u\rangle^4}{k^2}.
\end{equation}
In the second-moment setting, the same argmax comparison gives a spectral gap for the reweighed pseudo-covariance matrix~\eqref{eq:second-order-argmax-comparison}, here it gives a higher-moment analogue: an argmax-induced fourth-moment concentration estimate. This is where the argument goes beyond the derivative calculation used for BSS. There, the maximizer was used merely through first- and second-order optimality, whereas here the fourth-moment objective forces us to use global maximality of $\Phi_k$ to derive higher-moment pseudo-expectation inequalities.

Decompose the pseudo-random $x\in W$ into the $u$ direction and its orthogonal complement, say $x=\alpha u + z$, where $z=(P-uu^\top)x$.
A coordinate-wise Sum-of-Squares inequality applied to $x_i=\alpha u_i+z_i$
reduces $\wtE_w \|x\|_4^4$ to the decoupled contribution of $\wtE_w \|u\|_4^4$ and $\wtE_w \|z\|_4^4$.\footnote{For example, one can use this quartic SoS inequality: $(a+b)^4 \le(1+\delta)^3a^4+(1+\delta^{-1})^3b^4$ for any $\delta>0$ of our choice, which can decouple the contribution of $\wtE_w \|u\|_4^4$ and $\wtE_w \|z\|_4^4$ cleanly. In our actual analysis, we tried to optimize the parameters, which complicates the proof.}
% \[
%     (\alpha u_i+z_i)^4
%     \le
%     (1+\eta)\alpha^4u_i^4
%     +4\alpha^3u_i^3z_i
%     +C_\eta z_i^4 .
% \]
% The point of keeping the linear term is that it is exactly the term killed by
% the first-order optimality of the high-moment argmax after reweighting.  Thus
% the fourth-moment mass of the pseudo-random vector is bounded by the mass along
% the selected direction $u$, up to a tangent fourth-moment error.  The latter
% is precisely what the fourth-order argmax comparison controls.
Then~\eqref{eq:fourth-order-argmax-comparison} controls the contribution to $\wtE_w \|x\|^4$ from directions orthogonal to $u$. To see this, note that $z_i=\langle x,(P-uu^\top)e_i\rangle$, thus
\begin{equation}
\label{eq:z-four-norm-bound-intro}
    \wtE_w \| z\|_4^4 = \wtE_w \left[\sum_i \langle x, (P-uu^\top)e_i\rangle^4\right] \lesssim \frac{\wtE_w \langle x, u\rangle^4}{k^2} \cdot \sum_i \|(P-uu^\top)e_i\|^4.
\end{equation}
Without loss of generality, the big summation on the right hand side is bounded above by $\tau n$, simplifying
\begin{equation}
\label{eq:norm-orth-contrib}
    \wtE_w \| z\|_4^4 \lesssim \frac{\wtE_w \langle x, u\rangle^4 \cdot n\tau}{k^2}.
\end{equation}
This is because $\|(P-uu^\top)e_i\|^2\le \| P e_i \|^2=P_{ii}$. If a diagonal entry already satisfies
$P_{ii}^2\ge \tau$, then $v:=Pe_i/\sqrt{P_{ii}}$ is a good
vector in $W$, as one can verify that $\|v\|_4^4\ge v_i^4 = P_{ii}^2\ge \tau$. So we find a sparse vector trivially. Otherwise $\sum_i P_{ii}^2\le n\tau$.\footnote{We remark that one can also without loss of generality assume that $\sum_i P_{ii}^2 \le \dim (W)^2 \tau$: instead of checking $P e_i$, one can check $P g$ for a Gaussian random $g$, see for example~\cite{BKS14}. This gives faster convergence when the subspace dimension is small.} Thus, \eqref{eq:norm-orth-contrib} explains why $k$ scales like $\sqrt{n}$.

Surprisingly once again, the argmax rounding analysis can be converted into an efficient rounding algorithm.
The key observation is that: although~\eqref{eq:fourth-order-argmax-comparison} is stated for every
direction \(h\in W\cap u^\perp\) which is guaranteed only by global maximality of $u$; however in the  analysis~\eqref{eq:z-four-norm-bound-intro}, it is
applied only to \((P-uu^\top)e_i\).  Moreover, to compare $u$ and one fixed direction $(P-uu^\top)e_i$, it suffices to focus on the 1D unit circle $\mathcal C_i(u)$ in  $\mathop{\mathrm{span}} \{u, (P-uu^\top)e_i\}$.
%A unit circle can be discretized using the standard approximation theory tool like Ehlich-Zeller's grid~\cite{EZ64}.
Now, the union of $\mathcal C_i(u)$ is a much smaller search space.
Putting everything together, given a feasible pseudo-expectation, our efficient rounding algorithm would be the following iterative search starting from a good initial point $u$:
\begin{enumerate}[itemsep=-2pt]
    \item Search over the union of the discretized $\mathcal C_i(u)$ looking for a $v$ such that $\Phi_k(v) \ge (1+\epsilon) \Phi_k(u)$;
    \item If no such $v$, $u$ is good for our rounding analysis; otherwise, we can update $u\leftarrow v$, and iterate.
\end{enumerate}

\bigskip
The same argument extends to matrix \(p\!\to q\) norms, with additional technical work. We state this extension below
\begin{theorem}[Approximating $p\to q$ norms]
\label{thm:pq-convergence-intro}
Fix $1<p<q$, where $p$ is rational and $q$ is an even integer, and set
$\lambda(p):=\max\{p,2\}$.  Given a nonzero matrix
$A\in\R^{N\times d}$, a degree-$2k$ SoS based algorithm approximates
$\|A\|_{p\to q}$ to a multiplicative factor
\[
    1+
    O_{p,q}\!\left(
        \frac{N^{2/q}}{k^{2/\lambda(p)}}
        +
        \frac{N}{k^{q/\lambda(p)}}
    \right).
\]
In particular, degree
\[
    O_{p,q}\!\left(
        \frac{N^{\lambda(p)/q}}{\epsilon^{\lambda(p)/2}}
    \right)
\]
SoS gives a $(1+\epsilon)$-approximation to $\|A\|_{p\to q}$.
\end{theorem}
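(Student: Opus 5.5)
The plan is to rerun the $2\to4$ argument of \Cref{thm:norm-convergence-intro}, with the $\ell_p$ unit ball replacing the sphere in the domain and with $q$ replacing $4$. First, homogenize the problem. Write $\|A\|_{p\to q}^q=\max_{\|y\|_p\le 1}\sum_{i\le N}\langle a_i,y\rangle^q$, where $a_i$ are the rows of $A$. Set $x=Ay\in\R^N$, so $x$ ranges over $W=\operatorname{im}(A)$ and its norm is measured by $\|x\|_q^q=\sum_i x_i^q$, a polynomial because $q$ is even. For rational $p$, the constraint $\|y\|_p^p\le1$ is encoded with auxiliary variables $s_j\ge0$, $s_j^{2}=y_j^2$ (or $s_j^{b}=y_j^{a}$ where $p=a/b$), so that $\sum_j s_j^p\le 1$ is polynomial. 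The degree-$2k$ pseudo-expectation $\wtE$ is then asked to satisfy these constraints together with $\sum_i x_i^q\ge\tau$.

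The argmax object is $\Phi_k(v)=\wtE\langle y,v\rangle^{2k}$ maximized over the dual ball $\|v\|_{p^*}\le 1$. Equivalently, one uses directions in which $\langle y,v\rangle\le 1$ is SoS-certifiable by Hölder; this is where $\lambda(p)=\max\{p,2\}$ enters. Take $u$ to be a global maximizer and reweigh by $w=\langle y,u\rangle^{2k-q}$.

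The key step is the analogue of \eqref{eq:fourth-order-argmax-comparison}. For a perturbation $u+th$ with $h$ tangent to the dual ball, global maximality gives $\Phi_k(u+th)\le \Phi_k(u)\,\|u+th\|_{p^*}^{2k}$. Expanding the left side binomially, and using the first-order condition to kill the linear term, the $t^q$ term yields
\[
  \wtE_w \langle y,h\rangle^q \lesssim_{q} \frac{\wtE_w\langle y,u\rangle^q}{k^{q/\lambda(p)}}\,\|h\|^{q}.
\]
The exponent $q/\lambda(p)$ replaces $q/2$. The reason is that the curvature of the $\ell_{p^*}$ ball behaves like $t^{\lambda(p)}$ rather than $t^2$: it is $2$-uniformly smooth when $p\ge2$, and $p$-smooth for $p<2$ after dualizing. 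One should therefore choose $t\sim k^{-1/\lambda(p)}$ and optimize.

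Next, decouple coordinatewise. Write $\langle a_i,y\rangle=\alpha\langle a_i,u'\rangle+\langle a_i,z\rangle$, where $u'$ is the primal direction dual to $u$ and $z$ is the orthogonal-type remainder. Apply the SoS inequality $(a+b)^q\le(1+\delta)^{q-1}a^q+(1+\delta^{-1})^{q-1}b^q$, and bound $\wtE_w\sum_i\langle a_i,z\rangle^q$ by the comparison above with $h$ built from the projected rows $a_i$. Reducing to the case where no single row is itself a good witness, as in the $P_{ii}^2\le\tau$ trick, gives $\sum_i\|h_i\|^q\lesssim N\tau$.

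Optimizing $\delta$ balances the two contributions:
\begin{itemize}
  \item $\delta^{-(q-1)}N/k^{q/\lambda}$ against $\delta$, which gives the $N^{2/q}/k^{2/\lambda}$ term when $\delta\sim (N/k^{q/\lambda})^{1/q}$, up to the $q$th root taken at the end;
  \item the direct term $N/k^{q/\lambda}$.
\end{itemize}
The main obstacle I expect is the non-Euclidean second-order comparison: establishing the $k^{-q/\lambda(p)}$ rate with constants depending only on $p,q$, and making the rational-$p$ encoding compatible with the reweighing so that the reweighed functional is still a pseudo-expectation for these inequalities.
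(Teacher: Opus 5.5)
Your architecture is the paper's. It maximizes $\Phi_k$ over the $\ell_{p^*}$-sphere, reweighs by $s^{2k-q}$, and gets a tangent $q$-th moment bound from global maximality. It splits rows as $a_i=\alpha_i w+b_i$ with $b_i$ tangent, and uses row candidates to get $\sum_i\norm{b_i}_{p^*}^q\lesssim N\norm{Av}_q^q$.

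\textbf{Main gap: the decoupling step loses a square root.} With $(a+b)^q\le(1+\delta)^{q-1}a^q+(1+\delta^{-1})^{q-1}b^q$ and $\gamma:=N/k^{q/\lambda(p)}$ you get
\[
\tau\le\Bigl((1+\delta)^{q-1}+C_{p,q}(1+\delta^{-1})^{q-1}\gamma\Bigr)\norm{Av}_q^q .
\]
At the optimal $\delta\sim\gamma^{1/q}$ (your choice) the error is $\Theta(\delta)=\Theta(N^{1/q}/k^{1/\lambda(p)})$. That is the square root of the claimed $N^{2/q}/k^{2/\lambda(p)}$. The final $q$-th root only rescales the error by a constant, so this gives degree $N^{\lambda(p)/q}/\epsilon^{\lambda(p)}$, not $N^{\lambda(p)/q}/\epsilon^{\lambda(p)/2}$.

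The loss comes from absorbing the cross term $qa^{q-1}b$, whose pseudo-expectation is exactly zero. The paper instead uses Lemma~\ref{lem:pq-row-binary-sos}, $(a+b)^q\le a^q+qa^{q-1}b+C_q(a^{q-2}b^2+b^q)$. In its notation ($w$ the maximizer, $s=\ip{w}{x}$) the three pieces are handled as follows.
\begin{itemize}
\item The linear term $q\,\wtE_w\bigl[s^{q-1}\ip{\sum_i\alpha_i^{q-1}b_i}{x}\bigr]$ vanishes by first-order stationarity \eqref{eq:pq-dual-tangent-stationarity}. This, not the tangent bound, is where stationarity is essential.
\item The term $\sum_i|\alpha_i|^{q-2}\wtE_w[s^{q-2}\ip{b_i}{x}^2]$ is controlled by the weighted second-moment bound \eqref{eq:pq-dual-tangent-quadratic}, of order $k^{-2/\lambda(p)}$, plus H\"older over the rows. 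This is exactly the source of $N^{2/q}/k^{2/\lambda(p)}$. The second-moment bound also follows from your $q$-th moment bound by log-convexity of $j\mapsto\wtE_w[s^{q-2j}\ip{b}{x}^{2j}]$.
\item The $b^q$ term gives $N/k^{q/\lambda(p)}$.
\end{itemize}
No $\delta$ needs to be optimized.

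\textbf{Secondary issue: the tangent-bound derivation has the smoothness reversed.} For $p\ge2$ the dual norm $\ell_{p^*}$, with $p^*\le2$, is only $p^*$-smooth. Lemma~\ref{lem:pq-tangent-smoothness} gives $\norm{w\pm tb}_{p^*}\le1+C_p|t|^{p^*}$. The paper takes $t\sim k^{-1/p^*}$, so that $\binom{2k}{q}t^q\sim k^{q(1-1/p^*)}=k^{q/p}$. With your $t\sim k^{-1/p}$, the factor $(1+C_p|t|^{p^*})^{2k}$ is $\exp(\Theta(k^{(p-2)/(p-1)}))$ for $p>2$, and the estimate collapses. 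In general $\lambda(p)$ is the conjugate exponent of the dual norm's smoothness power $\min\{p^*,2\}$, not that power itself.

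Also, the odd terms of the binomial expansion are removed by averaging the $\pm t$ perturbations, after which every surviving term is a pseudo-square. The first-order condition alone kills only the linear term.

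\textbf{On the rational-$p$ encoding.} The paper's proof is carried out only for even integers $2\le p<q$ (so $\lambda(p)=p$), with conjugate exponents below $2$ obtained via Corollary~\ref{cor:pq-dual-regime}. There is no counterpart for your rational-$p$ encoding. For it, you would additionally need SoS certificates, in your variables and for every $\ell_{p^*}$-unit $w$, of $1-\ip{w}{y}^{2j}\ge0$ from the encoded ball constraint. That is the role Lemma~\ref{lem:pq-holder-from-even-ball} plays in the even case.
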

% Putting things together one can obtain
% \[
%     \tau^{1/4}
%     \le
%     \norm{u}_4
%     +
%     O\!\left(\frac{\sqrt n}{k}\right)\tau^{1/4}.
% \]
% Hence, for $k\ge C\sqrt n/\varepsilon$, every feasible threshold
% pseudo-expectation rounds to a vector $u\in W$ with
% \[
%     \norm{u}_4^4\ge (1-\varepsilon)\tau .
% \]
% Combining this with binary search gives, in time
% $n^{O(\sqrt n/\varepsilon)}$, an estimate $\widehat{\tau}$ satisfying
% \[
%     (1+\varepsilon)^{-1/4}\norm{P}_{2\to4}
%     \le
%     \widehat{\tau}^{1/4}
%     \le
%     (1-\varepsilon)^{-1/4}\norm{P}_{2\to4}.
% \]
% The familiar $\exp(\Ot(\sqrt n))$ constant-accuracy scale is therefore
% explained by the same one-shot calculus principle as in BSS, with the
% $1/k^2$ gain in \eqref{eq:fourth-order-argmax-comparison} replacing the $1/k$ bound in \eqref{eq:second-order-argmax-comparison}.

\medskip
\paragraph{General polynomial optimization.}
Our third application shows that the same principle is not tied to directional rounding.
For fixed-degree polynomial optimization, the argmax object is a \emph{fold} rather than a direction.
This recovers the seminal result due to Bhattiprolu, Ghosh, Guruswami, Lee, and Tulsiani~\cite{BGGLT}, which illustrates how versatile this argmax principle can be.

Let $f:\R^n \to \R$ be a degree-$d$ polynomial. The goal is to optimize
\[
    \norm{f}_\Sph:=\max_{x\in \Sph^{n-1}} |f(x)|.
\]
\begin{theorem}[Convergence of degree-$d$ polynomial optimization]
\label{thm:homo-opt-convergence-intro}
For any degree-$d$ polynomial $f$, there is a degree-$O(k)$ SoS algorithm outputting an estimate $\hat\tau$, such that
    \begin{equation}
        \norm{f}_\Sph \le \hat\tau \le O_d\left(\left(\frac{n}{k}\right)^{d/2-1}\right)\norm{f}_\Sph.
    \end{equation}
\end{theorem}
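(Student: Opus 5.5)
The plan is to follow the fold-selection form of the argmax principle announced in the introduction. First reduce to the homogeneous case: a degree-$d$ polynomial on the sphere is homogenized by multiplying lower-degree parts by powers of $\norm{x}_2^2$. Even and odd $d$ are handled as in \cite{BGGLT}, e.g.\ by passing to $f^2$, or by introducing one extra variable, losing only $O_d(1)$ factors. For even $d=2t$, write $f(x)=\langle x^{\otimes d},T\rangle$ with $T$ symmetric. The algorithm takes the smallest $\hat\tau$ for which the degree-$O(k)$ SoS cannot find a pseudo-expectation with $\|x\|^2=1$ and $f(x)\ge\hat\tau$; in practice it is simpler to bound the SoS value of $\max\wtE f$ directly. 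The lower bound $\norm f_\Sph\le\hat\tau$ is immediate, since actual points are pseudo-expectations. So everything reduces to showing that every feasible $\wtE$ with $\wtE f(x)=\tau$ certifies a real point $y\in\Sph^{n-1}$ with $|f(y)|\gtrsim_d (k/n)^{d/2-1}\tau$.

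Next I will define folds. For $m=d-2$ (and $k$ a multiple of $m$ as convenient), a fold is a degree-$m$ symmetric tensor $S$ with $\|S\|_F\le 1$ contracted against $T$. This gives a quadratic form $Q_S(x)=\langle T, S\otimes xx^\top\rangle$, whose spectral norm is directly computable. The auxiliary argmax polynomial is
\[
  \Psi(S)=\wtE\,\langle x^{\otimes m},S\rangle^{2k/m}\cdot(\text{suitable normalisation}),
\]
or more simply $\Psi(v)=\wtE\langle x,v\rangle^{2k}$, combined with pseudo-H\"older. Let $S^\ast$ be the maximizer over the unit Frobenius ball. The steps are as follows.

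(i) Use pseudo-H\"older to write $\tau=\wtE f\le (\wtE f^{q})^{1/q}$ with $q\approx k/d$. Expanding $f(x)^q$ and using $\wtE\langle x^{\otimes m},S\rangle^{2r}\le\Psi(S^\ast)$ for all unit $S$ bounds $\wtE f^q$ by $\Psi(S^\ast)$ times $\max_{\|y\|=1}Q_{S^\ast}(y)$ raised to a power. The mechanism is that after reweighing by $w=\langle x^{\otimes m},S^\ast\rangle^{2r-2}$, the quantity $\wtE_w f$ is linear in the fold direction.

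(ii) Use the spherical moment identity, namely that $\E_{g}\langle g,x\rangle^{2k}$ for uniform $g$ on the sphere equals $\frac{(2k-1)!!}{n(n+2)\cdots(n+2k-2)}\|x\|^{2k}$. This shows $\Psi(S^\ast)\ge (ck/n)^{k}$ relative to the trivial bound $\wtE\|x\|^{2k}=1$. This $(k/n)$ gain per degree-2 block, raised to the $m/2=d/2-1$ fold degrees, is exactly where $(n/k)^{d/2-1}$ comes from.

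(iii) Conclude that $\|Q_{S^\ast}\|_{op}\gtrsim (k/n)^{d/2-1}\tau$. Then a top eigenvector $y$ of $Q_{S^\ast}$, combined with a decoupling/polarization step $\langle T,S\otimes yy^\top\rangle\le O_d(1)\max_z|f(z)|$ applied to a mixture of $y$ and the fold, gives a real witness with $|f|\gtrsim_d (k/n)^{d/2-1}\tau$.

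The main obstacle is step (iii) together with the link in (i). A fold $S^\ast$ is a general symmetric tensor, not a rank-one $z^{\otimes m}$, so the value $\langle T,S^\ast\otimes yy^\top\rangle$ must be converted into a genuine value of $f$ at a point. My first attempt is to take $S^\ast$ itself as the argmax over rank-one folds $z^{\otimes m}$; this is a maximizer of $\Phi_k(z)=\wtE\langle x,z\rangle^{2k}$ as in \eqref{eq:Phi-k}. Then $f(\alpha z+\beta y)$ is a univariate-in-$(\alpha,\beta)$ polynomial whose mixed coefficient is $\langle T, z^{\otimes m}\otimes yy^\top\rangle$, and Markov-type coefficient bounds on the circle yield $\max_{\alpha^2+\beta^2=1}|f|\ge\Omega_d(1)\cdot$ that coefficient. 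The delicate point is that (i) then needs the global maximality of $z$ to dominate every cross term in the expansion of $\wtE f^q$. I expect to use the reweighed inequality $\wtE_w\langle x,h\rangle^{2}\le\wtE_w\langle x,z\rangle^2$ for all unit $h$. Making the exponents and $O_d(1)$ losses match is the step I would check most carefully.
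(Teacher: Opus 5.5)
\paragraph{The gap.} Your steps (i)--(iii) never connect $\tau=\wtE f$ to the operator norm of a specific fold. With the auxiliary polynomials you propose, no such connection holds in general. Both $\Psi(v)=\wtE\ip{x}{v}^{2k}$ and $\Psi(S)=\wtE\langle x^{\otimes m},S\rangle^{2k/m}$ depend only on the pseudo-distribution and never see $T$.

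Concretely, take $d=4$ and $f(x)=x_1^4$, so $T(a,b,c,g)=a_1b_1c_1g_1$. Take the actual distribution $\wtE=\epsilon\,\delta_{e_1}+(1-\epsilon)\,\delta_{e_2}$ with $\epsilon<1/2$; it satisfies the sphere constraint and has $\wtE f=\epsilon$.
\begin{itemize}
\item $\Psi(v)=\epsilon v_1^{2k}+(1-\epsilon)v_2^{2k}$ is maximized at $v=\pm e_2$, and the Frobenius-ball version is maximized at $S=e_2e_2^\top$.
\item The fold $T(e_2,e_2,x,x)$ is identically zero.
\item Reweighing toward $e_2$ by any power of $\ip{x}{e_2}^2$ gives $\wtE_w f=0$.
\end{itemize}
So the inequality you want in (i), bounding $\wtE f^q$ by $\Psi(S^\ast)$ times a power of $\norm{Q_{S^\ast}}_{\mathrm{op}}$, is false here, and so is the conclusion of (iii). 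The reweighed inequality $\wtE_w\ip{x}{h}^2\le\wtE_w\ip{x}{z}^2$ is again a statement about the pseudo-distribution alone, so it cannot control the cross terms of $f^q$.

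Imposing $f(x)\ge\tau$ as an SoS constraint would make reweighing preserve the value. But expanding $x=sz+\xi$ still leaves the terms with three or more copies of $\xi$, and the fold at $z$ does not control those.

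\paragraph{The missing idea.} Let the argmax polynomial be the high moment of the fold itself, taken over product folds. With $m=d-2$ and unit vectors $u_j$, set $q_U(x)=T(u_1,\ldots,u_m,x,x)$ and $\Phi_k(U)=\wtE[q_U(x)^{2k}]$.

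The link to $\tau$ then comes from ``max $\ge$ average'', not from any optimality condition or reweighing. Apply pseudo-H\"older to the positive functional $\E_U\wtE$ with $p_U(x)=\prod_j\ip{u_j}{x}$:
\[
\bigl|\E_U\wtE[p_U^{2k-1}q_U]\bigr|^{2k}\le\bigl(\E_U\wtE[p_U^{2k}]\bigr)^{2k-1}\,\E_U\Phi_k(U).
\]
Two spherical identities, applied one slot at a time, evaluate the averages:
\[
\E_u\ip{u}{x}^{2k}=c_{n,k}\norm{x}_2^{2k},\qquad \E_u\ip{u}{x}^{2k-1}\ip{u}{a}=c_{n,k}\norm{x}_2^{2k-2}\ip{x}{a}.
\]
They give $\E_U\wtE[p_U^{2k}]=c_{n,k}^m$ and $\E_U\wtE[p_U^{2k-1}q_U]=c_{n,k}^m\wtE f$. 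Hence $\max_U\Phi_k\ge c_{n,k}^m\tau^{2k}$, with $c_{n,k}^{1/(2k)}\ge\sqrt{k/(3en)}$.

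This is where your step (ii) belongs. The spherical moments appear in both the numerator and the denominator, and only the factor $c_{n,k}^m$ survives. Your lower bound on $\max_v\wtE\ip{x}{v}^{2k}$ by itself carries no information about $f$.

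The rest of the argument is then short:
\begin{itemize}
\item The SoS certificate for $\norm{H}_{\mathrm{op}}^{2k}\norm{x}_2^{4k}-(x^\top Hx)^{2k}$ gives $\norm{H_{U^\ast}}_{\mathrm{op}}\ge c_{n,k}^{m/(2k)}\tau$.
\item Banach polarization, applied to the $d$ unit vectors $u_1,\ldots,u_m,w,w$ with $w$ a top eigenvector, yields $|f(y)|\ge\frac{d!}{d^d}c_{n,k}^{m/(2k)}\tau$.
\end{itemize}
Because the folds are products of unit vectors from the start, your concern about turning a general symmetric tensor $S^\ast$ into a point value does not arise. The argument also works verbatim for odd $d$.

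By contrast, passing to $f^2$ would cost an extra factor $\sqrt{n/k}$. The degree-$2d$ guarantee $(n/k)^{d-1}$ gives only $(n/k)^{(d-1)/2}$ after taking a square root.
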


We will further assume that $f$ is homogeneous from now on. This is without loss of generality over the unit sphere by the standard trick: one can decompose $f$ into the odd and even part, then within either the odd or even part, multiply a homogeneous degree-$d'$ component by some power of $\|x\|^2$ to enforce the global homogeneity~\cite{DW12}.

Our degree-$O(k)$ SoS relaxation asks for a pseudo-expectation $\wtE$, which maximizes the linear objective $\wtE f(x)$ (or $-\wtE f(x)$), behaving as if
\[
    \|x\|_2^2 = 1.% \qquad f(x) \ge \tau.
\]

In the introduction, to keep notations simple, we focus on the quartic case. Let $T:(\R^n)^4\to\R$ be a symmetric $4$-linear form, such that $f(x)=T(x,\ldots,x).$
The folding viewpoint says that one can fix two tensor slots first.  For
$u,v\in\Sph^{n-1}$, define the quadratic fold
\[
    q_{u,v}(x)=T(u,v,x,x).
\]
This is justified by the polarization lemma that
\[
    \max_{x\in\Sph^{n-1}} |T(x,x,x,x)| \approx _d \max_{u,v,x,x \in \Sph^{n-1
    }} |T(u,v,x,x)|.
\]
Therefore, up to a loss of a fixed constant factor depending on $d$, to approximate $|f(x)|$, it suffices to look for the best $u,v$.
Once $(u,v)$ is fixed, optimizing $x\mapsto q_{u,v}(x)$ is an eigenvalue problem, reducing the hard quartic rounding
step to the computationally easy spectral step. The remaining question is how
to choose $(u,v)$.  This is the same polynomial-folding viewpoint as in many of the previous works~\cite{KN07,HLZ10,So11}, and in particular in~\cite{BGGLT}. The theorem of Bhattiprolu et al.
reaches the $O_d((n/k)^{d/2-1})$ convergence scale through the novel \emph{weak decoupling} technique they introduce, which is technically quite demanding and delicate~\cite{BGGLT}.

Our argmax principle naturally suggests that one could choose $u,v$ to be the maximizer of the following polynomial,
\[
    \Phi_k(u,v):=\wtE [ T(u,v,x,x)^{2k}].
\]
This strategy leads us to a remarkably simple analysis. Suppose $(u^*,v^*)\in \argmax \Phi_k$. Let $p(x):= \langle u\otimes v, x\otimes x\rangle.$ Key to our analysis is the following pseudo-H\"older inequality,
\begin{equation}\label{eq:quartic-holder}
\Phi_k(u^*,v^*)\ge \E_{u,v} \wtE [T(u,v,x,x)^{2k}] \ge \frac{(\E_{u,v}\wtE[ p^{2k-1} T(u,v,x,x)]) ^{2k} }{ (\E_{u,v} \wtE [ p^{2k}])^{2k-1}}.
\end{equation}
The denominator is some powered even spherical moment. In particular, let $c_{n,k}$ denote the $2k$-th moment of the first coordinate of a Haar random $v\in \Sph^{n-1}$, i.e., $c_{n,k} := \E_{v\in\Sph^{n-1}} v_1^{2k}$. Then
\begin{equation}\label{eq:quartic-holder-denominator}
    \E_{u,v} \wtE [p^{2k}] = \wtE\,\E_{u,v} [\langle u, x\rangle^{2k} \langle v, x\rangle^{2k}] = c_{n,k}^2.
\end{equation}
The numerator, as $T$ is a 4-linear form, can be simplified
\begin{equation}\label{eq:quartic-holder-numerator}
    \E_{u,v}\wtE[ p^{2k-1} T(u,v,x,x)] = \wtE\,\E_{u,v} [ p^{2k-1} T(u,v,x,x)]=c_{n,k}^2 \wtE[T(x,x,x,x)].
\end{equation}
Abbreviate $\wtE[T(x,x,x,x)]=\tau$, and plug~\eqref{eq:quartic-holder-denominator}-\eqref{eq:quartic-holder-numerator} to~\eqref{eq:quartic-holder}, one obtains that
\begin{equation*}
   \Phi_k(u^*,v^*) \ge c_{n,k}^2 \tau ^{2k}.
\end{equation*}
Let $H$ be the symmetric matrix such that $\langle x, Hx\rangle = T(u^*, v^*,x,x)$ and $\lambda$ be the operator norm of $H$, one can verify that $\lambda^{2k}\|x\|^{4k} - \langle x, Hx\rangle^{2k}$ is a degree-$4k$ sum-of-squares.
Therefore,
\[
    \max_{x\in\Sph^{n-1}} T(u^*,v^*,x,x)^{2k} = \|H\|_{\mathrm{op}}^{2k} \ge \wtE \langle x, H x\rangle^{2k} = \Phi_k(u^*,v^*)\ge c_{n,k}^2\tau^{2k}.
\]
Since the $(2k)$-th spherical moment $c_{n,k}$ scales like $(k/n)^{k}$, this shows that $\|f\|_\Sph \ge O_d(k/n)\tau$ for quartic polynomials by the polarization lemma, matching the convergence rate in~\Cref{thm:homo-opt-convergence-intro} with $d$ set to 4.
% The reason this works is already visible before any derivative calculation.
% Let
% \[
%     p_{u,v}(x)=\langle u,x\rangle\langle v,x\rangle,
%     \qquad
%     c_{n,k}:=\E_{a\sim\Sph^{n-1}}\langle a,e_1\rangle^{2k}.
% \]
% Pseudo-H\"older, applied to the positive functional
% $\E_{u,v}\wtE_x$, gives
% \[
%     \E_{u,v}\wtE q_{u,v}(x)^{2k}
%     \ge
%     \frac{
%         \left|\E_{u,v}\wtE p_{u,v}(x)^{2k-1}q_{u,v}(x)\right|^{2k}
%     }{
%         \left(\E_{u,v}\wtE p_{u,v}(x)^{2k}\right)^{2k-1}
%     } .
% \]
% The two spherical averages are explicit:
% \[
%     \E_{u,v}\wtE p_{u,v}(x)^{2k}=c_{n,k}^2,
%     \qquad
%     \E_{u,v}\wtE p_{u,v}(x)^{2k-1}q_{u,v}(x)
%     =
%     c_{n,k}^2\wtE[f(x)]
%     \ge c_{n,k}^2\tau .
% \]
% Hence
% \[
%     \Phi_k(u_*,v_*)
%     \ge
%     \E_{u,v}\wtE q_{u,v}(x)^{2k}
%     \ge
%     c_{n,k}^2\tau^{2k}.
% \]
% Since $c_{n,k}^{1/(2k)}\gtrsim\sqrt{k/n}$, this gives
% \[
%     \Phi_k(u_*,v_*)^{1/(2k)}
%     \gtrsim
%     \frac{k}{n}\,\tau .
% \]

% Here the same fixed-degree scale is recovered from the
% moment side by a direct fold-selection argument.  The quartic discussion above
% is only the simplest way to see the mechanism: in degree $d$, one folds
% $d-2$ slots, the same spherical-moment calculation contributes the factor
% $(k/n)^{(d-2)/2}$, and the remaining computation is quadratic.

\subsection{Summary and future directions}
In this paper, we formalize an argmax principle for constructing one-shot, deterministic rounding arguments from maximizers of high-moment polynomials. The auxiliary polynomials are simple: to select a direction, we use a powered overlap; to select a fold, we use a powered fold. Despite this simplicity, this principle turns out to be powerful enough to simplify the analysis of several cornerstone results, and sometimes improving the parameters to the central problem in TCS like BSS, approximating $2\to4$ norm.

At first glance, maximizing a high-degree polynomial need not be computationally tractable. Nevertheless, the resulting analysis often identifies weaker optimality conditions that can be found efficiently, thereby leading to efficient rounding algorithms.

The simplicity of our analyses suggests that the argmax principle may apply beyond the clean spherical settings treated here.

\subsection{Organization}
Section~\ref{sec:prelim} fixes notation and recalls the basic SoS and pseudo-expectation facts used throughout the paper.  Section~\ref{sec:bss} treats perfect-completeness BSS.
Section~\ref{sec:sos-2to4} proves the projector $2\to4$ norm approximation guarantee, including a finite-search rounding. Section~\ref{sec:sos-pq} extends the same method to matrix $p\!\to q$ norms.
Section~\ref{sec:degree-d-argmax-fold-sos} proves the
fixed-degree homogeneous polynomial optimization theorem using the fold-selection argument.
The appendices contain the detailed BSS rounding proof, finite polynomial descriptions for $p$-unit balls, and auxiliary inequalities used in the main text.

\section{Preliminaries}
\label{sec:prelim}

\subsection{Notation and basic linear algebra}
\label{subsec:notation-prelim}
For an integer $n\ge1$, write $[n]=\{1,\ldots,n\}$ and let
$e_1,\ldots,e_n$ denote the standard basis of $\R^n$.  The Euclidean inner
product is denoted by $\ip{x}{y}$, and the Euclidean unit sphere is
\[
    \Sph^{n-1}:=\{x\in\R^n:\norm{x}_2=1\}.
\]
If $W\subseteq\R^n$ is a linear subspace, then
$\Sph(W):=W\cap\Sph^{n-1}$. For $1\le p<\infty$, $\norm{x}_p$ denotes the
usual $\ell_p$ norm.  Vectors are columns unless stated otherwise.

For matrices, we use the Frobenius inner product
\[
    \ip{A}{B}:=\Tr(A^\top B),
\]
and the associated norm $\norm{A}_{\mathrm F}$.  The operator norm is
$\norm{A}_{\operatorname{op}}$.  For symmetric matrices, $A\succeq0$ means
that $A$ is positive semidefinite, and $0\preceq A\preceq B$ means that
$A\succeq0$ and $B-A\succeq0$.  If $W\subseteq\R^{m\times n}$ is a matrix
subspace, $\Pi_W$ denotes the Frobenius-orthogonal projector onto $W$, and
\[
    \distF(X,W):=\norm{X-\Pi_WX}_{\mathrm F}.
\]
The vectorization map is normalized by
\[
    \operatorname{vec}(xy^\top)=x\otimes y,
    \qquad x\in\R^m,
    \quad y\in\R^n.
\]

The symbols $O(\cdot)$, $\Omega(\cdot)$, and $\Theta(\cdot)$ have their usual
asymptotic meanings.  The notation $\Ot(\cdot)$, $\Omegat(\cdot)$, and
$\Thetat(\cdot)$ suppresses polylogarithmic factors. Constants hidden in
$O_{p,q}(\cdot)$ or $O_d(\cdot)$ depend only on the fixed parameters appearing
in the subscript.

\subsection{Quantum separability and \texorpdfstring{$\QMA(2)$}{QMA(2)}}
\label{subsec:quantum-prelim}
For a finite-dimensional real or complex Hilbert space $H$, let
\[
    \Dens(H):=\{\rho\succeq0:\Tr\rho=1\}
\]
be the set of density operators on $H$.  A bipartite state on
$H_A\otimes H_B$ is a product state if it has the form
$\rho\otimes\sigma$ with $\rho\in\Dens(H_A)$ and $\sigma\in\Dens(H_B)$; it is
separable if it is a convex combination of product states.

For an accept operator $0\preceq M\preceq I$ acting on $H_A\otimes H_B$, its
separable value is
\[
    h_\sep(M)
    :=
    \max_{\rho\in\Dens(H_A),\,\sigma\in\Dens(H_B)}
    \Tr\bigl(M(\rho\otimes\sigma)\bigr).
\]
Since the objective is linear in each density operator, the maximum is
attained by pure product states.  In the real case $H_A=\R^m$ and
$H_B=\R^n$, this gives
\[
    h_\sep(M)
    =
    \max_{x\in\Sph^{m-1},\,y\in\Sph^{n-1}}
    \ip{x\otimes y}{M(x\otimes y)} .
\]
All statements below are over the reals since a
complex instance can be converted to real by a standard realification; this changes
local dimensions by at most a constant factor and preserves the asymptotic
running-time and gap scales used below. Moreover, if either $m=1$ or $n=1$, then the problem is trivial. Thus we assume $m,n \geq 2$ throughout the remainder of the paper.
The Best Separable State problem, also called separable optimization or
$\HSEP$ in this paper, is to estimate or distinguish the value
$h_\sep(M)$.

A $\QMA(2)$ verifier receives two quantum proofs that are promised to be
unentangled across the two proof registers.  After fixing the verifier's circuit, the final accept/reject measurement induces an accept POVM element \(0 \preceq M \preceq I\) on the two proof
registers, obtained by tracing out the verifier's private workspace and
ancilla registers.  On proofs
$\rho\in\Dens(H_A)$ and $\sigma\in\Dens(H_B)$, the acceptance probability is
\(
    \Tr\bigl(M(\rho\otimes\sigma)\bigr),
\)
and the optimum acceptance probability of the verifier is precisely
$h_\sep(M)$.  A $\QMA(2)$ protocol with completeness $c$ and soundness
$s$ therefore induces an $\HSEP$ instance with yes-case
$h_\sep(M)\ge c$ and no-case $h_\sep(M)\le s$.  The quantity
$c-s$ is the completeness-soundness gap; perfect completeness means $c=1$.

\subsection{Sum-of-squares, pseudo-expectations and reweighing}
\label{subsec:sos-prelim}
We take the ``pseudo'' distributional formalization of SoS~\cite{BKS14, BBHKSZ12, BKS17}. Thus we sometimes omit ``pseudo'' and abuse terminologies like covariance, concentration, etc. from probability theory. Let $z=(z_1,\ldots,z_N)$ and let $\R[z]_{\le D}$ denote the vector space of
real polynomials in $z$ of degree at most $D$.  A degree-$D$
pseudo-expectation is a linear functional
\[
    \wtE:\R[z]_{\le D}\to\R
\]
satisfying
\[
    \wtE[1]=1,
    \qquad
    \wtE[h(z)^2]\ge0
    \quad\text{for every }h\in\R[z]_{\le D/2}.
\]
Equivalently, the pseudo-moment matrix indexed by monomials of degree at most
$\lfloor D/2\rfloor$ is positive semidefinite.

Polynomial equalities are imposed in the pseudo-expectation sense as follows.
For constraints $g_i(z)=0$, we require
\[
    \wtE[p(z)g_i(z)]=0
    \qquad\text{whenever }\deg(p g_i)\le D.
\]
Polynomial inequalities $h_j(z)\ge0$ are imposed by requiring
\[
    \wtE[q(z)^2h_j(z)]\ge0
    \qquad\text{whenever }\deg(q^2h_j)\le D.
\]
All feasibility regions used above are affine slices of the cone of degree-$D$
pseudo-moment matrices together with such positivity constraints, and hence are
semidefinite programs.

For $N$ variables and degree $D=2r$, the principal moment matrix has dimension
\(
    \binom{N+r}{r}.
\)
Thus a degree-$D$ pseudo-expectation can be optimized over by an SDP whose main
matrix dimension is $N^{O(D)}$ for variable degree.  Throughout the paper, SDP running times are stated in the standard
real-arithmetic model, with polynomial dependence on the input encoding length
and numerical precision suppressed unless otherwise specified.

The Cauchy–Schwarz inequality also holds in the pseudo-expectation sense.

\begin{lemma}[pseudo-Cauchy–Schwarz]\label{pseudo-Cauchy–Schwarz}
   For \(f,g\in\R[z]\) with
\(
    2\deg f\le D,
    2\deg g\le D,
\)
we have
\begin{equation}
\label{eq:pseudo-cauchy-schwarz}
    \bigl(\wtE[fg]\bigr)^2
    \le
    \wtE[f^2]\,\wtE[g^2].
\end{equation}
\end{lemma}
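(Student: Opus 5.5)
The plan is the standard quadratic-form argument, run inside the pseudo-expectation. For a real parameter $t$, I consider the polynomial $(f - t g)^2$. It has degree at most $\max(2\deg f, 2\deg g)\le D$, and $f-tg$ has degree at most $D/2$. So the positivity axiom of a degree-$D$ pseudo-expectation gives $\wtE[(f-tg)^2]\ge 0$ for every $t\in\R$. By linearity of $\wtE$ this expands to
\[
    q(t) := \wtE[f^2] - 2t\,\wtE[fg] + t^2\,\wtE[g^2] \;\ge\; 0
    \qquad\text{for all } t\in\R .
\]
All three coefficients are well defined, because $\deg(f^2),\deg(fg),\deg(g^2)\le D$.

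Next I split into two cases according to $\wtE[g^2]$, which is nonnegative by positivity.

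\emph{Case $\wtE[g^2]>0$.} I substitute the minimizing value $t=\wtE[fg]/\wtE[g^2]$. This gives $\wtE[f^2] - (\wtE[fg])^2/\wtE[g^2]\ge 0$. Multiplying through by $\wtE[g^2]$ yields \eqref{eq:pseudo-cauchy-schwarz}. Equivalently, a nonnegative real quadratic in $t$ has nonpositive discriminant.

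\emph{Case $\wtE[g^2]=0$.} Here $q(t)=\wtE[f^2]-2t\,\wtE[fg]$ is affine in $t$ and nonnegative on all of $\R$. That forces $\wtE[fg]=0$, so both sides of \eqref{eq:pseudo-cauchy-schwarz} vanish and the inequality holds trivially.

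The only real point to check is the degree bookkeeping. We need $f-tg$ to lie in $\R[z]_{\le D/2}$ so that the square axiom applies, and $fg$ to lie in the domain of $\wtE$. Both follow directly from the hypotheses $2\deg f\le D$ and $2\deg g\le D$. The degenerate case $\wtE[g^2]=0$ is the one place where the usual division step would fail, which is why it is handled separately. I do not expect any genuine obstacle beyond this.
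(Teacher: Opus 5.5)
Your proof is correct and is essentially the paper's argument. The paper uses nonnegativity of $\wtE[(\alpha f+\beta g)^2]$ for all $\alpha,\beta\in\R$ to conclude that the $2\times 2$ matrix of pseudo-moments $\wtE[f^2],\wtE[fg],\wtE[g^2]$ is positive semidefinite and reads off its determinant; this is equivalent to your discriminant argument, where you treat the degenerate case $\wtE[g^2]=0$ separately.
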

Indeed, for any $\alpha,\beta\in\R$, positivity of
\(
    \wtE[(\alpha f+\beta g)^2]\ge0
\)
implies that the matrix
\(
    \begin{pmatrix}
        \wtE[f^2] & \wtE[fg]\\
        \wtE[fg] & \wtE[g^2]
    \end{pmatrix}
\)
is positive semidefinite, which gives
\eqref{eq:pseudo-cauchy-schwarz}.  In particular, if
\(\wtE[f^2]=0\), then
\(
    \wtE[fg]=0
\)
for every \(g\) of compatible degree.

We also use the following standard sum-of-squares terminology.  A polynomial
\(p\in\R[z]\) is a sum of squares, abbreviated SoS, if there exist polynomials
\(g_1,\ldots,g_s\in\R[z]\) such that
\[
    p(z)=\sum_{i=1}^s g_i(z)^2 .
\]
We write \(p\in\SOS\) for this condition.  More generally, an inequality
\(p(z)\ge0\) has a degree-\(D\) SoS certificate if \(\deg(g_i^2)\le D\) for every \(i\). We will use the elementary SoS fact.
\begin{fact}\label{psd-sos}
    If \(B\succeq0\), then the quadratic form \(x^\top Bx\) is SoS.
\end{fact}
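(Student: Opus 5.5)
The plan is to reduce the statement to the spectral decomposition of $B$ and then use that a nonnegative combination of squares is a sum of squares. Since $B$ is real symmetric, the spectral theorem gives an orthonormal eigenbasis $v_1,\dots,v_N$ with eigenvalues $\lambda_1,\dots,\lambda_N$. Because $B\succeq0$, every $\lambda_i\ge0$. We then write
\[
    x^\top B x=\sum_{i=1}^N \lambda_i \ip{v_i}{x}^2
    =\sum_{i=1}^N\Bigl(\sqrt{\lambda_i}\,\ip{v_i}{x}\Bigr)^2 ,
\]
which exhibits $x^\top Bx$ as a sum of squares of linear forms $g_i(x)=\sqrt{\lambda_i}\ip{v_i}{x}$. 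Each $g_i^2$ has degree $2$, so this is in fact a degree-$2$ SoS certificate.

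Alternatively, one can avoid eigenvectors. Take a Cholesky-type factorization $B=L^\top L$, which exists because $B\succeq0$; for instance $L=B^{1/2}$. Then $x^\top Bx=\norm{Lx}_2^2=\sum_j (Lx)_j^2$, and each $(Lx)_j$ is a linear form in $x$. I would present this version as the main line, since it is one sentence, and mention the eigen-decomposition only as the justification that $L$ exists.

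There is essentially no obstacle. The only point to check is that the factorization is over the reals, which holds because $B$ is a real symmetric matrix. The same argument also applies verbatim when $x$ is replaced by any vector of polynomials $m(z)$, for example a vector of monomials. In that case $m(z)^\top B\,m(z)$ is SoS with degree at most $2\deg m$. This is the form in which the fact is used later: writing $\lambda^{2k}\norm{x}^{4k}-\ip{x}{Hx}^{2k}$ as an SoS via a PSD Gram matrix.
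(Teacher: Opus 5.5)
Your proposal is correct, and its main line is exactly the paper's one-line argument: factor $B=L^\top L$ and read off $x^\top Bx=\norm{Lx}_2^2=\sum_i (Lx)_i^2$ as a sum of squares of linear forms. One small correction to your closing remark: the paper does not use a Gram matrix on monomials, but applies the fact directly to the quadratic forms $R\norm{x}_2^2\pm x^\top H_U x$, and then takes their product and uses a geometric-series factorization to get the degree-$4k$ certificate.
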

 Indeed, writing
\(B=L^\top L\) gives
\(
    x^\top Bx=\norm{Lx}_2^2=\sum_i (Lx)_i^2 .
\)

We will use normalized reweighings of pseudo-expectations.  Let
\(\wtE\) be a degree-\(D\) pseudo-expectation in variable \(z\), and let
\(w(z)\in\SOS\) have degree \(2s\).  Assume that
\(
    \wtE[w]>0.
\)
The \(w\)-reweighed pseudo-expectation is the linear functional
\[
    \wtE_{w}[f]
    :=
    \frac{\wtE[w f]}{\wtE[w]},
    \qquad
    f\in\R[z]_{\le D-2s}.
\]
Then one can verify that \(\wtE_{w}\) is a degree-\((D-2s)\) pseudo-expectation, which also preserves polynomial constraints at the degree \(D-2s\).
\subsection{Norms, duality maps, and polynomial objectives}
\label{subsec:norms-polynomials}
For a linear map $A:\R^d\to\R^N$ and $1<p,q<\infty$, define
\[
    \norm{A}_{p\to q}
    :=
    \max_{x\ne0}\frac{\norm{Ax}_q}{\norm{x}_p}.
\]
The dual exponent of $p$ is
\[
    p^*:=\frac{p}{p-1}.
\]
For $1<s<\infty$, the duality map $J_s:\R^d\to\R^d$ is
\[
    J_s(w)_j:=\operatorname{sgn}(w_j)|w_j|^{s-1}.
\]
If $\norm{w}_s=1$ and $u=J_s(w)$, then
$\norm{u}_{s^*}=1$ and $\ip{w}{u}=1$.

A $d$-linear form $T:(\R^n)^d\to\R$ is symmetric if
\[
    T(x_1,\ldots,x_d)=T(x_{\pi(1)},\ldots,x_{\pi(d)})
\]
for every permutation $\pi$ of $[d]$.  Its associated homogeneous degree-$d$
polynomial is
\[
    f_T(x):=T(x,\ldots,x),
\]
and its spherical norm is
\[
    \norm{f_T}_{\Sph}:=\max_{\norm{x}_2=1}|f_T(x)|.
\]
When the form is clear, we write $f$ for $f_T$.

% ============================================================
% Part 1: BSS
% ============================================================

\section{Best Separable State (BSS) with perfect-completeness}
\label{sec:bss}
This section develops the sum-of-squares distinguisher for the perfect-completeness BSS gap problem.
We first reduce it to a rank-one matrix subspace problem, then formulate a degree-\(4k\) SoS feasibility test for this subspace problem.
The feasibility relaxation is the standard degree-\(4k\) tensor-SDP/SoS relaxation, following the framework developed in earlier SoS rounding
works~\cite{BRS11,BKS14} and instantiated for BSS by
Barak--Kothari--Steurer~\cite{BKS17}.
The main difference lies in the soundness analysis. Here we prove its soundness by analyzing a pseudo-moment polynomial on \(\Sph^{m-1}\times\Sph^{n-1}\).  The proof extracts a product direction from a maximizer of this polynomial through a reweighing argument and shows that the resulting direction has large projection onto the target subspace.

\subsection{BSS gap formulation and subspace reduction}

Using the notation of \Cref{subsec:quantum-prelim}, the perfect-completeness BSS gap problem is the following promise problem.

\begin{definition}[\((1,1-\epsilon)\)-Best Separable State gap problem]
For \(\epsilon\in(0,1)\), the perfect-completeness Best Separable State (BSS) gap problem asks to distinguish the following two cases for an input operator \(0\preceq M\preceq I\):
\[
\Yes:\quad h_\sep(M)=1,
\qquad
\No:\quad h_\sep(M)\le 1-\epsilon .
\]
\end{definition}

Following Barak--Kothari--Steurer~\cite{BKS17}, we use the equivalent
rank-one-subspace formulation of perfect-completeness BSS.  Let \(W\subseteq \R^{m\times n}\) be a linear subspace.  The corresponding objective is
\[
\val(W)
=
\max_{x\in \Sph^{m-1},\,y\in \Sph^{n-1}}
\|\Pi_W(xy^\top)\|_{\mathrm F}^2 .
\]
This objective gives the following gap problem.

\begin{definition}[Rank-one subspace gap problem]
For \(\epsilon\in(0,1)\) and a subspace \(W\subseteq\R^{m\times n}\), the rank-one subspace gap problem \(\textsc{RankOneSubspace}_{m,n}(W,\epsilon)\) asks to distinguish
\[
\Yes:\quad \exists x\in \Sph^{m-1},\; y\in \Sph^{n-1}
\text{ such that } xy^\top\in W,
\]
from
\[
\No:\quad
\|\Pi_W(xy^\top)\|_{\mathrm F}^2\le 1-\epsilon
\quad
\text{for every }x\in \Sph^{m-1},\;y\in \Sph^{n-1}.
\]
Since \(\|xy^\top\|_{\mathrm F}=1\), the no case can equivalently be written as
\[
\No:\quad
\distF(xy^\top,W)^2\ge \epsilon
\quad
\text{for every }x\in \Sph^{m-1},\;y\in \Sph^{n-1}.
\]
\end{definition}

The deterministic distinguisher will use the following reduction.  In the perfect-completeness regime, a BSS acceptance operator can be replaced by its unit-eigenspace and then viewed, under vectorization, as a matrix subspace.

\begin{proposition}[Gap-preserving reduction to rank-one subspace gap problem]
\label{prop:bss-subspace-reduction}
Let \(0\preceq M\preceq I\) be a real BSS instance acting on \(\R^m\otimes\R^n\). Define
\[
W_M=\{X\in\R^{m\times n}:\operatorname{vec}(X)\in\ker(I-M)\}.
\]
For every \(\epsilon\in(0,1)\), the map \(M\mapsto W_M\) reduces the perfect-completeness BSS gap problem with parameter \(\epsilon\) to \(\textsc{RankOneSubspace}_{m,n}(W_M,\epsilon)\).
\end{proposition}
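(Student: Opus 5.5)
The plan is to relate the quadratic form of \(M\) on a product vector to the squared projection onto the unit eigenspace. Let \(V:=\ker(I-M)\subseteq\R^m\otimes\R^n\) and let \(\Pi_V\) be its orthogonal projector. Vectorization is a Frobenius-to-Euclidean isometry, so \(\operatorname{vec}\circ\Pi_{W_M}=\Pi_V\circ\operatorname{vec}\). For unit \(x,y\), setting \(z:=x\otimes y=\operatorname{vec}(xy^\top)\), we therefore get \(\|\Pi_{W_M}(xy^\top)\|_{\mathrm F}^2=\|\Pi_V z\|_2^2\). It then suffices to compare \(\langle z,Mz\rangle\) with \(\|\Pi_V z\|^2\) on the two sides of the promise.

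\emph{Completeness.} If \(h_\sep(M)=1\), the maximum over pure product states is attained by some unit \(z=x\otimes y\) with \(\langle z,Mz\rangle=1\). Since \(0\preceq I-M\), we have \(\langle z,(I-M)z\rangle=\|(I-M)^{1/2}z\|^2=0\). Hence \((I-M)z=0\), so \(z\in V\), which means \(xy^\top\in W_M\). This is the \(\Yes\) case of \(\textsc{RankOneSubspace}_{m,n}(W_M,\epsilon)\).

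\emph{Soundness.} Diagonalize \(M=\sum_i\mu_i P_i\) with eigenvalues \(\mu_i\in[0,1]\), where the eigenvalue \(1\) has eigenprojector \(\Pi_V\). Then \(M\succeq\Pi_V\), because \(M-\Pi_V=\sum_{\mu_i<1}\mu_iP_i\succeq0\). So for every unit product vector \(z\),
\[
\|\Pi_V z\|^2=\langle z,\Pi_V z\rangle\le\langle z,Mz\rangle\le h_\sep(M)\le 1-\epsilon.
\]
Thus \(\|\Pi_{W_M}(xy^\top)\|_{\mathrm F}^2\le1-\epsilon\) for all unit \(x,y\), which is the \(\No\) case.

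The map \(M\mapsto W_M\) is computable by an eigendecomposition, or by computing the kernel of \(I-M\). No step is a real obstacle. The only point needing care is the operator inequality \(M\succeq\Pi_V\), which uses \(M\succeq0\) on the complement of \(V\), together with the fact that vectorization preserves the Frobenius norm.
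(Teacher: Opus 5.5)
Your proof is correct and follows essentially the same route as the paper. Completeness uses that \(\langle z,(I-M)z\rangle=0\) forces \(z\in\ker(I-M)\), and soundness uses the spectral inequality \(M\succeq\Pi_V\), which is the paper's \(I-M\preceq I-\Pi_V\) rearranged; bounding \(\|\Pi_V z\|^2\) directly instead of lower-bounding \(\distF(xy^\top,W_M)^2\) is an equivalent formulation since \(\|xy^\top\|_{\mathrm F}=1\).
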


\begin{proof}
Let \(z=x\otimes y=\operatorname{vec}(xy^\top)\), where \(x\in\Sph^{m-1}\) and \(y\in\Sph^{n-1}\).  If \(h_\sep(M)=1\), then for some such product vector \(z\),
\[
\langle z,Mz\rangle=1.
\]
Since \(I-M\succeq0\), this implies \(\langle z,(I-M)z\rangle=0\), and therefore \(z\in\ker(I-M)\).  Thus \(xy^\top\in W_M\).

Conversely, assume \(h_\sep(M)\le 1-\epsilon\).  Then for every unit product vector \(z=x\otimes y\),
\[
\langle z,(I-M)z\rangle\ge \epsilon .
\]
Let \(P\) be the orthogonal projector onto \(\ker(I-M)\).  Since \(0\preceq I-M\preceq I\) and \(P\) projects onto the kernel of \(I-M\), the spectral decomposition of \(I-M\) gives
\[
I-M\preceq I-P .
\]
Consequently,
\[
\epsilon
\le \langle z,(I-M)z\rangle
\le \langle z,(I-P)z\rangle
=\operatorname{dist}(z,\ker(I-M))^2
=\distF(xy^\top,W_M)^2 .
\]
This is exactly the no condition in the rank-one subspace formulation.  Hence \(W_M\) is a \(\Yes\)-instance when \(h_\sep(M)=1\), and it is a \(\No\)-instance when \(h_\sep(M)\le 1-\epsilon\).
\end{proof}

\subsection{Decision algorithm}
We now describe the SoS feasibility test for the rank-one subspace gap problem.
The input to this subsection is a subspace \(W\subseteq \R^{m\times n}\) and a gap parameter
\(\epsilon\in(0,1)\).  For BSS instances, we first apply Proposition
\ref{prop:bss-subspace-reduction} and then run the same test on \(W_M\).

The degree-\(4k\) sum-of-squares relaxation for the rank-one subspace problem asks for a degree-\(4k\) pseudo-expectation on variables \(x,y\), satisfying
\[
\|x\|_2^2=1,\qquad \|y\|_2^2=1,\qquad xy^\top\in W .
\]
The last constraint means that for every \(A\in W^\perp\),
\[
\wtE q(x,y)\,\langle A,xy^\top\rangle=0
\qquad
\text{whenever } \deg q+2\le 4k .
\]
For a target gap \(\epsilon\in(0,1)\),  choose the degree parameter \(k\ge 1\) so that
\begin{equation}
\label{eq:kcondition}
2k-1\ge \frac{\sqrt{2}\,R_{m,n}}{\sqrt{\epsilon}},
\end{equation}
where
\[
R_{m,n}=((m-1)(n-1))^{1/4}.
\]
With this choice of degree, the distinguisher is the following SoS feasibility test.

\begin{algorithm}[H]
\caption{Degree-\(4k\) SoS distinguisher for the rank-one subspace gap problem}
\label{alg:sos-bss}
\begin{algorithmic}[1]
\Require A subspace \(W\subseteq \R^{m\times n}\), a gap parameter \(\epsilon\in(0,1)\), and an integer \(k\) satisfying \eqref{eq:kcondition}.
\State Form the degree-\(4k\) moment SDP for a pseudo-expectation \(\wtE\) in variables \(x\in\R^m\), \(y\in\R^n\), with constraints
\[
\|x\|_2^2=1,
\qquad
\|y\|_2^2=1,
\qquad
xy^\top\in W
\]
imposed in the pseudo-expectation sense.
\State Solve the resulting semidefinite feasibility problem.
\If{the SDP is feasible}
    \State \Return \(\Yes\).
\Else
    \State \Return \(\No\).
\EndIf
\end{algorithmic}
\end{algorithm}

The running time of Algorithm~\ref{alg:sos-bss} is governed by the size of the degree-\(4k\) moment matrix.  The number of monomials of degree at most \(2k\) in \(m+n\) variables is
\(
\binom{m+n+2k}{2k}=(m+n)^{O(k)} ,
\)
and the number of scalar moments of degree at most \(4k\) is \((m+n)^{O(k)}\).  The sphere constraints and the linear equations defining \(W^\perp\) contribute only a polynomial number of linear constraints in this moment space.  Thus, in the standard exact real-arithmetic SDP model, Algorithm~\ref{alg:sos-bss} runs in \((m+n)^{O(k)}\) arithmetic operations, up to polynomial factors in the input description of \(W\).  In the balanced case \(m=n=D\), this is \(D^{O(k)}\).

Completeness of Algorithm~\ref{alg:sos-bss} is immediate: if \(xy^\top\in W\) for some unit vectors \(x,y\),
then evaluation at this pair defines a feasible pseudo-expectation at every
degree.  The nontrivial part is soundness. The remainder of this section proves
that, for \(k\) satisfying \eqref{eq:kcondition}, feasibility of the degree-\(4k\)
relaxation implies the existence of a unit product matrix with projection at
least \(1-\epsilon\) onto \(W\).  Therefore, on \(\No\)-instances the SDP is
infeasible.

\subsection{Spherical directional moment polynomial and maximizer properties}

The soundness analysis below is organized around an argmax-principle-guided reweighing and rounding step: from any feasible pseudo-expectation, we select maximizers, a product direction $(u,v)$, of a \emph{spherical directional moment polynomial}; reweighing by the associated spherical cap, we prove that $uv^\top$ has large overlap in $W$.

Throughout this subsection, fix a feasible degree-\(4k\) pseudo-expectation satisfying
\[
\|x\|_2^2=1,\qquad \|y\|_2^2=1,\qquad xy^\top\in W .
\]
The central auxiliary object is the following \emph{spherical directional moment polynomial} on the product of the two unit spheres.  For \(p\in\Sph^{m-1}\) and \(q\in\Sph^{n-1}\), set
\begin{equation}\label{eq:spherical-directional-momemt-polynomial}
\Phi_k(p,q)=\wtE\langle x,p\rangle^{2k}\langle y,q\rangle^{2k}.
\end{equation}

For unit directions \(p\) and \(q\), \(\Phi_k(p,q)\) records the joint high-order pseudo-moment of the projections of \(x\) and \(y\) onto these directions.  We use maximizers of this polynomial to select a product direction around which the pseudo-expectation can be reweighed and rounded.
The proof only uses first- and second-order optimality of \(\Phi_k\) with respect to each sphere variable separately.  We formalize this condition as follows.

\begin{definition}[coordinate-wise second-order local maximizer]
A pair \((u,v)\in \Sph^{m-1}\times \Sph^{n-1}\) is a coordinate-wise second-order local maximizer of \(\Phi_k\) if, for every unit \(h\perp u\) and every unit \(\ell\perp v\), the curves
\[
u_h(\tau)=\frac{u+\tau h}{\sqrt{1+\tau^2}},
\qquad
v_\ell(\tau)=\frac{v+\tau \ell}{\sqrt{1+\tau^2}}
\]
satisfy
\[
\left.\frac{d}{d\tau}\Phi_k(u_h(\tau),v)\right|_{\tau=0}=0,
\qquad
\left.\frac{d}{d\tau}\Phi_k(u,v_\ell(\tau))\right|_{\tau=0}=0,
\]
and
\[
\left.\frac{d^2}{d\tau^2}\Phi_k(u_h(\tau),v)\right|_{\tau=0}\le 0,
\qquad
\left.\frac{d^2}{d\tau^2}\Phi_k(u,v_\ell(\tau))\right|_{\tau=0}\le 0 .
\]
\end{definition}

For later use, we calculate derivative identities for each sphere variable.  Let \((u,v)\in\Sph^{m-1}\times\Sph^{n-1}\), and set
\[
s=\langle x,u\rangle,
\qquad
 t=\langle y,v\rangle .
\]
For a unit vector \(h\perp u\), let \(r=\langle x,h\rangle\) and \(u_h(\tau)=(u+\tau h)/\sqrt{1+\tau^2}\).  Differentiating
\[
\Phi_k(u_h(\tau),v)
=
\wtE (s+\tau r)^{2k}(1+\tau^2)^{-k}t^{2k}
\]
at \(\tau=0\) gives
\begin{equation}
\label{eq:x-first-derivative}
\left.\frac{d}{d\tau}\Phi_k(u_h(\tau),v)\right|_{\tau=0}
=
2k\,\wtE s^{2k-1}t^{2k}r,
\end{equation}
and
\begin{equation}
\label{eq:x-second-derivative}
\left.\frac{d^2}{d\tau^2}\Phi_k(u_h(\tau),v)\right|_{\tau=0}
=
2k(2k-1)\wtE s^{2k-2}t^{2k}r^2
-
2k\wtE s^{2k}t^{2k}.
\end{equation}
Similarly, for a unit vector \(\ell\perp v\), let \(q=\langle y,\ell\rangle\) and \(v_\ell(\tau)=(v+\tau \ell)/\sqrt{1+\tau^2}\).  Then
\begin{equation}
\label{eq:y-first-derivative}
\left.\frac{d}{d\tau}\Phi_k(u,v_\ell(\tau))\right|_{\tau=0}
=
2k\,\wtE s^{2k}t^{2k-1}q,
\end{equation}
and
\begin{equation}
\label{eq:y-second-derivative}
\left.\frac{d^2}{d\tau^2}\Phi_k(u,v_\ell(\tau))\right|_{\tau=0}
=
2k(2k-1)\wtE s^{2k}t^{2k-2}q^2
-
2k\wtE s^{2k}t^{2k}.
\end{equation}

The derivative identities will be applied at a \emph{positive} maximizer of \(\Phi_k\), whose existence follows from averaging over the two spheres.

\begin{lemma}[Positive maximum]
\label{lem:positive-max}
The polynomial \(\Phi_k\) has a global maximizer \((u,v)\in\Sph^{m-1}\times\Sph^{n-1}\) with \(\Phi_k(u,v)>0\). Moreover, every global maximizer is a coordinate-wise second-order local maximizer.
\end{lemma}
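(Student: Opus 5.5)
Existence of a global maximizer is a compactness statement: \(\Phi_k\) is a polynomial in the entries of \((p,q)\), hence continuous on the compact set \(\Sph^{m-1}\times\Sph^{n-1}\), so it attains its maximum. Positivity will come from averaging. Let \(p\) and \(q\) be independent and Haar distributed on \(\Sph^{m-1}\) and \(\Sph^{n-1}\). For a fixed vector \(a\), rotational invariance gives \(\E_p\langle a,p\rangle^{2k}=c_{m,k}\|a\|_2^{2k}\), where \(c_{m,k}=\E_p p_1^{2k}>0\). Both sides are polynomials in \(a\), so this is a polynomial identity in \(a\).

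Applying \(\wtE\), which is linear and commutes with the finite-dimensional average over \(p,q\), I would get
\[
\E_{p,q}\Phi_k(p,q)=c_{m,k}\,c_{n,k}\,\wtE\,\|x\|_2^{2k}\|y\|_2^{2k}.
\]
I then need \(\wtE\|x\|_2^{2k}\|y\|_2^{2k}=1\). Write \(\|x\|_2^{2k}-1=(\|x\|_2^2-1)g(x)\) with \(\deg g\le 2k-2\), and do the same for \(y\). The sphere constraints, imposed on all multipliers of total degree at most \(4k\), let me replace \(\|x\|_2^{2k}\) by \(1\) and then \(\|y\|_2^{2k}\) by \(1\). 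The degree count is \(2+(2k-2)+2k\le 4k\), so this is allowed, and \(\wtE[1]=1\) finishes it. Hence the average of \(\Phi_k\) equals \(c_{m,k}c_{n,k}>0\), so the maximum value is positive.

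For the second claim, let \((u,v)\) be a global maximizer and take unit vectors \(h\perp u\), \(\ell\perp v\). The curves \(u_h(\tau)\) and \(v_\ell(\tau)\) stay on their spheres and equal \(u\), \(v\) at \(\tau=0\). Therefore the one-variable functions \(\tau\mapsto\Phi_k(u_h(\tau),v)\) and \(\tau\mapsto\Phi_k(u,v_\ell(\tau))\) have a global maximum at \(\tau=0\). They are smooth, being compositions of a polynomial with smooth maps, so the ordinary calculus conditions \(f'(0)=0\) and \(f''(0)\le 0\) hold. This is exactly the definition of a coordinate-wise second-order local maximizer, with the derivatives written out in \eqref{eq:x-first-derivative}–\eqref{eq:y-second-derivative}.

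I expect no real obstacle. The only point that needs care is the degree bookkeeping in the averaging step: checking that the sphere-constraint substitution stays within degree \(4k\), and that the Haar average is a genuine polynomial identity so that it can be pushed through \(\wtE\).
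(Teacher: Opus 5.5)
Your proof is correct and follows the paper's argument essentially step for step. You average $\Phi_k$ over the uniform measures on the two spheres via the spherical moment identity to get $c_{m,k}c_{n,k}\,\wtE\|x\|_2^{2k}\|y\|_2^{2k}=c_{m,k}c_{n,k}>0$, use compactness for a positive global maximizer, and apply one-variable calculus along the curves $u_h(\tau)$ and $v_\ell(\tau)$; your explicit degree check $2+(2k-2)+2k\le 4k$ also fills in the step the paper only asserts when it replaces $\|x\|_2^{2k}\|y\|_2^{2k}$ by $1$.
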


\begin{proof}
Let \(\sigma_m\) and \(\sigma_n\) denote the uniform probability measures on \(\Sph^{m-1}\) and \(\Sph^{n-1}\).  For fixed \(x\), Lemma \ref{lem:degree-d-spherical-moments} gives
\[
\int_{\Sph^{m-1}}\langle x,p\rangle^{2k}\,d\sigma_m(p)
=
c_{m,k}\|x\|_2^{2k},
\]
where \(c_{m,k}>0\) is the \(2k\)-th moment of the first coordinate of a uniformly random point on \(\Sph^{m-1}\).  Similarly,
\[
\int_{\Sph^{n-1}}\langle y,q\rangle^{2k}\,d\sigma_n(q)
=
c_{n,k}\|y\|_2^{2k}.
\]
Therefore
\[
\int_{\Sph^{m-1}}\int_{\Sph^{n-1}}
\Phi_k(p,q)\,d\sigma_n(q)\,d\sigma_m(p)
=
c_{m,k}c_{n,k}\,\wtE \|x\|_2^{2k}\|y\|_2^{2k}.
\]
The sphere constraints imply
\[
\wtE \|x\|_2^{2k}\|y\|_2^{2k}=1,
\]
so the average displayed above is \(c_{m,k}c_{n,k}>0\).  Hence \(\Phi_k\) is positive at some point.  Since \(\Sph^{m-1}\times\Sph^{n-1}\) is compact and \(\Phi_k\) is continuous, a positive global maximizer exists.  At any global maximizer, the first derivative along every tangent retraction varying one sphere variable while the other is fixed vanishes, and the second derivative along the same curve is nonpositive.  Thus every global maximizer is a coordinate-wise second-order local maximizer.
\end{proof}

%The next theorem is the rounding step: given a feasible pseudo-expectation and a positive coordinate-wise second-order local maximizer, the corresponding product direction has large projection onto \(W\).
The next theorem is the rounding step: it shows that any positive coordinate-wise second-order local maximizer $(u,v)$ of the directional moment polynomial already gives a good product direction: the rank-one matrix \(uv^\top\) has almost all of its Frobenius mass inside \(W\).

\begin{theorem}[Argmax rounding for BSS]
\label{thm:main}
Let \(\wtE\) be a degree-\(4k\) pseudo-expectation satisfying
\[
\|x\|_2^2=1,\qquad
\|y\|_2^2=1,\qquad
xy^\top\in W.
\]
Let \((u,v)\in\Sph^{m-1}\times\Sph^{n-1}\) be a coordinate-wise second-order local maximizer of \(\Phi_k\) with \(\Phi_k(u,v)>0\).  Then
\[
\|\Pi_W(uv^\top)\|_{\mathrm F}^2
\ge
1-
\frac{\sqrt{(m-1)(n-1)}}{(2k-1)^2}.
\]
\end{theorem}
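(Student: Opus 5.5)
The plan is to reweigh by the spherical cap around \((u,v)\) and exhibit an explicit matrix in \(W\) that is almost parallel to \(uv^\top\). As in the derivative computations, set \(s=\langle x,u\rangle\) and \(t=\langle y,v\rangle\), and let
\[
w=(s^{k-1}t^{k-1})^2,
\]
which is a square of degree \(4k-4\). We need \(\wtE[w]>0\). If \(\wtE[w]=0\), then pseudo-Cauchy--Schwarz (\Cref{pseudo-Cauchy–Schwarz}) applied to \(f=s^{k-1}t^{k-1}\) and \(g=s^{k+1}t^{k+1}\) (both of degree at most \(2k\)) gives \(\Phi_k(u,v)=\wtE[fg]=0\), a contradiction. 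So \(\wtE_w\) is a degree-\(4\) pseudo-expectation that still satisfies the constraints. Put \(a:=\wtE_w[s^2t^2]=\Phi_k(u,v)/\wtE[w]>0\). The candidate matrix is
\[
M:=\wtE_w\big[s\,t\,xy^\top\big],
\]
interpreted entrywise. It lies in \(W\): for every \(A\in W^\perp\) we have \(\langle A,M\rangle=\wtE[s^{2k-1}t^{2k-1}\langle A,xy^\top\rangle]/\wtE[w]=0\), since the multiplier has degree \(4k-2\) and the constraint is imposed up to degree \(4k\).

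Next I decompose \(M\) in an orthonormal basis adapted to \(u,v\). Take orthonormal \(h_1,\dots,h_{m-1}\perp u\) and \(\ell_1,\dots,\ell_{n-1}\perp v\), and write \(r_i=\langle x,h_i\rangle\), \(q_j=\langle y,\ell_j\rangle\). The four blocks are:
\begin{itemize}
\item the \(uv^\top\) coefficient, \(\langle M,uv^\top\rangle=a\);
\item the \(h_iv^\top\) coefficients, \(\wtE_w[s\,t^2 r_i]\propto\wtE[s^{2k-1}t^{2k}r_i]=0\) by the first-order condition \eqref{eq:x-first-derivative};
\item the \(u\ell_j^\top\) coefficients, which vanish in the same way by \eqref{eq:y-first-derivative};
\item the cross block \(B\), with entries \(B_{ij}=\wtE_w[s\,t\,r_iq_j]\).
\end{itemize}
Hence \(M=a\,uv^\top+B\) with \(B\perp uv^\top\), and \(\|M\|_{\mathrm F}^2=a^2+\|B\|_{\mathrm F}^2\).

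The key step is bounding \(B\) using the second-order conditions. The pairing matters here. I write \(s t r_i q_j=(t r_i)(s q_j)\) and apply pseudo-Cauchy--Schwarz for \(\wtE_w\) (the factors have degree \(2\), within degree \(4\)):
\[
B_{ij}^2\le \wtE_w[t^2r_i^2]\;\wtE_w[s^2q_j^2].
\]
Condition \eqref{eq:x-second-derivative}\(\le0\) reads \((2k-1)\wtE[s^{2k-2}t^{2k}r_i^2]\le\wtE[s^{2k}t^{2k}]\), that is, \(\wtE_w[t^2r_i^2]\le a/(2k-1)\). Symmetrically, \eqref{eq:y-second-derivative} gives \(\wtE_w[s^2q_j^2]\le a/(2k-1)\). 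Summing over the \((m-1)(n-1)\) pairs \((i,j)\) yields
\[
\|B\|_{\mathrm F}^2\le \delta a^2,\qquad \delta:=\frac{(m-1)(n-1)}{(2k-1)^4}.
\]

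To conclude, since \(M\in W\) is nonzero,
\[
\|\Pi_W(uv^\top)\|_{\mathrm F}^2\ge\frac{\langle uv^\top,M\rangle^2}{\|M\|_{\mathrm F}^2}=\frac{a^2}{a^2+\|B\|_{\mathrm F}^2}\ge\frac1{1+\delta}\ge 1-\delta .
\]
If \(\delta\le 1\), then \(1-\delta\ge1-\sqrt\delta\), which is the claimed bound. If \(\delta>1\), then \(\sqrt\delta>1\) and the claim is vacuous.

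The main obstacle I expect is the Cauchy--Schwarz pairing in the bound on \(B\). The naive pairing \((s r_i)(t q_j)\) produces \(\wtE_w[s^2r_i^2]\), which the second-order conditions do not control. The cross pairing \((t r_i)(s q_j)\) is exactly what matches the weights \(s^{2k-2}t^{2k}\) and \(s^{2k}t^{2k-2}\) appearing in \eqref{eq:x-second-derivative} and \eqref{eq:y-second-derivative}. The remaining care is degree bookkeeping: every pseudo-expectation used must have degree at most \(4k\).
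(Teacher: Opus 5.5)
Your setup (the weight $w=(s^{k-1}t^{k-1})^2$, the matrix $M=\wtE_w[st\,xy^\top]\in W$, the first-order cancellation, and the cross pairing $(tr_i)(sq_j)$) matches the paper. However, the bound on $B$ contains an arithmetic slip that hides a genuine loss. Each factor in your Cauchy--Schwarz bound is at most $a/(2k-1)$, so $B_{ij}^2\le a^2/(2k-1)^2$. Summing over the $(m-1)(n-1)$ pairs gives
\[
\|B\|_{\mathrm F}^2\le\frac{(m-1)(n-1)}{(2k-1)^2}\,a^2 ,
\]
not $(m-1)(n-1)a^2/(2k-1)^4$. With the correct constant, your argument proves only $\|\Pi_W(uv^\top)\|_{\mathrm F}^2\ge\bigl(1+(m-1)(n-1)/(2k-1)^2\bigr)^{-1}$. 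Its error term is larger than the claimed one by essentially a factor $\sqrt{(m-1)(n-1)}$; for example, $m=n=3$, $k=2$ gives $9/13<7/9$. For BSS with $m=n=D$ this would require $k\gtrsim D/\sqrt{\epsilon}$ rather than $\sqrt{D/\epsilon}$, which is exactly the improvement the theorem is meant to deliver. There was a warning sign: your value of $\delta$ yielded a bound strictly stronger than the theorem, which you then had to weaken to $1-\sqrt{\delta}$.

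The missing idea concerns how the second-order condition is used. Set $\xi=(I-uu^\top)x$, $\zeta=(I-vv^\top)y$, $X=\wtE_w[t^2\xi\xi^\top]$ and $Y=\wtE_w[s^2\zeta\zeta^\top]$. Summing your entrywise bounds $B_{ij}^2\le X_{ii}Y_{jj}$ gives only $\|B\|_{\mathrm F}^2\le\Tr X\cdot\Tr Y$, because it uses the second-order condition only along the fixed basis vectors. That condition actually holds for \emph{every} unit $h\perp u$, so it bounds $\|X\|_{\mathrm{op}}\le a/(2k-1)$, and this must be exploited. The paper does so via the PSD block matrix with blocks $X,B,B^\top,Y$ and Lemma~\ref{lem:block-cauchy}: $\|B\|_{\mathrm F}^2\le\|X\|_{\mathrm F}\|Y\|_{\mathrm F}$, with $\|X\|_{\mathrm F}^2\le\|X\|_{\mathrm{op}}\Tr X\le(m-1)a^2/(2k-1)^2$, and likewise for $Y$.

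Within your own framework, a column-wise Cauchy--Schwarz also works. Let $c_j:=\wtE_w[st\,q_j\,\xi]$ be the $j$-th column of $B$ and $h:=c_j/\|c_j\|_2\perp u$. Then
\[
\|c_j\|_2^2=\bigl(\wtE_w[(t\langle x,h\rangle)(sq_j)]\bigr)^2\le\frac{a}{2k-1}\,\wtE_w[s^2q_j^2].
\]
Summing over $j$ gives $\|B\|_{\mathrm F}^2\le(n-1)a^2/(2k-1)^2$, and symmetrically $\|B\|_{\mathrm F}^2\le(m-1)a^2/(2k-1)^2$. Since $\min\{m-1,n-1\}\le\sqrt{(m-1)(n-1)}$, your final step $\|\Pi_W(uv^\top)\|_{\mathrm F}^2\ge1-\|B\|_{\mathrm F}^2/a^2$ then gives the theorem.

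Separately, your positivity argument for $\wtE[w]$ is not degree-valid. The polynomial $g=s^{k+1}t^{k+1}$ has degree $2k+2$, so $\wtE[g^2]$ is undefined at degree $4k$. Instead, note that modulo the sphere constraints $w-s^{2k}t^{2k}=s^{2k-2}t^{2k-2}(\|\xi\|_2^2+s^2\|\zeta\|_2^2)$ is a sum of squares of degree at most $4k$. Hence $\wtE[w]\ge\Phi_k(u,v)>0$.
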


\begin{proof}
For the given \((u,v)\),  we decompose the pseudo-random product vector into its components parallel and orthogonal to \((u,v)\).  Set
\[
s=\langle x,u\rangle,
\qquad
 t=\langle y,v\rangle,
\qquad
\xi=(I-uu^\top)x,
\qquad
\zeta=(I-vv^\top)y .
\]
Then
\[
x=su+\xi,
\qquad
 y=tv+\zeta.
\]

\medskip
\paragraph{Reweighing and the reweighed cross-moment matrix.} The reweighing is by a cap associated with the maximizer directions $(u,v)$.  Define
\[
w(x,y)= \langle x, u\rangle^{2k-2} \langle y,v\rangle^{2k-2}.
\]
Consider the following pseudo-expectation reweighing:
\[
\wtE_w f
=
\frac{\wtE[ wf]}{\wtE w}.
\]
% We now argue that the above reweighing is well-defined. Introduce the normalizing quantities
% \[
% A=\wtE s^{2k}t^{2k}=\Phi_k(u,v)>0,
% \qquad
% Z=\wtE w.
% \]
% Since
% \[
% w-s^{2k}t^{2k}
% =
% s^{2k-2}t^{2k-2}(1-s^2t^2),
% \]
% and, modulo the two sphere constraints,
% \[
% 1-s^2t^2=(1-s^2)+s^2(1-t^2)
% =
% \|\xi\|_2^2+s^2\|\zeta\|_2^2,
% \]
% the polynomial \(s^{2k-2}t^{2k-2}(1-s^2t^2)\) is a sum of squares modulo the constraints.  Therefore
% \[
% Z-A=\wtE\bigl[s^{2k-2}t^{2k-2}(1-s^2t^2)\bigr]\ge 0,
% \]
% and hence \(Z\ge A>0\).
%Let \(A=\wtE s^{2k}t^{2k}=\Phi_k(u,v)>0\) and \(Z=\wtE w\).
Since, modulo the sphere constraints,
\[
    w-s^{2k}t^{2k}
    =
    s^{2k-2}t^{2k-2}(\|\xi\|_2^2+s^2\|\zeta\|_2^2)
\]
is a sum of squares, we have \(\wtE w \ge \wtE s^{2k}t^{2k} = \Phi_k(u,v) >0\). Thus \(\wtE_w\) is
well-defined. Since \(w=(s^{k-1}t^{k-1})^2\) has degree \(4k-4\),
\(\wtE_w\) is a degree-\(4\)
pseudo-expectation.

The central object is the following reweighed cross-moment matrix
\[
M=\wtE_w[(sx)(ty)^\top]\in W.
\]
% The reason \(M\in W\) is as follows. For every \(B\in W^\perp\),
% \[
% \langle B,M\rangle
% =
% \frac{\wtE s^{2k-1}t^{2k-1}\langle B,xy^\top\rangle}{Z}
% =
% 0,
% \]
% since the degree of the polynomial inside \(\wtE\) is \(4k\) and the constraint \(xy^\top\in W\) is imposed through degree \(4k\).
Since \(xy^\top\in W\) holds as a pseudo-expectation constraint, reweighing preserves the such containment.  Thus \(M\in W\). Indeed, by our SoS constraints for every \(B\in W^\perp\),
\[
\langle B,M\rangle
=
\frac{\wtE
s^{2k-1}t^{2k-1}\langle B,xy^\top\rangle}
{\wtE w}
=0.
\]

The bulk of our analysis concerns the singular-decomposition structure of \(M\): the first-order conditions show that \(u,v\) form a singular-vector pair of \(M\), while the second-order conditions provide a \emph{spectral gap} on singular values of $M$.
Looking ahead, we will prove
\[M = \sigma\, uv^\top + E,\]
for some singular value $\sigma>0$, and the residual term $E$ with a small total Frobenius mass relative to $\sigma$. Since $M\in W$, this would establish that $\|\Pi_W(uv^\top)\|_{\mathrm F}^2$ is large.

\paragraph{Singular-vector pair from first-order optimality.}
To relate \(M\) to the rank-one direction \(uv^\top\), expand
\[
xy^\top
=
st\,uv^\top+t\,\xi v^\top+s\,u\zeta^\top+\xi\zeta^\top.
\]
Multiplying $st$ on both sides and applying \(\wtE_w\), we obtain
% \[
% M=a\,uv^\top+m_xv^\top+u m_y^\top+c,
% \]
% where
% \[
% a=\wtE_w s^2t^2=\frac{A}{Z},
% \qquad
% m_x=\wtE_w[st^2\xi],
% \qquad
% m_y=\wtE_w[s^2t\zeta],
% \qquad
% c=\wtE_w[st\,\xi\zeta^\top].
% \]
\[
M=\wtE_w [s^2t^2 \,uv^\top + st^2 \,\xi v^\top +  s^2t\, u\zeta^\top + st\,\xi\zeta^\top],
\]
% The first-order optimality conditions eliminate the two terms in the middle in this expansion.  For every unit \(h\perp u\), write \(r=\langle x,h\rangle=\langle \xi,h\rangle\).
% by \eqref{eq:x-first-derivative},
% \[
% 0=
% 2k\,\wtE s^{2k-1}t^{2k}\langle x, h\rangle = 2k\wtE_w st^2 \langle\xi,h\rangle.
% \]
% Dividing by \(Z\) yields \(\langle m_x,h\rangle=0\) for every \(h\perp u\), and hence \(m_x=0\).  The identical argument for the second sphere variable, using \eqref{eq:y-first-derivative}, gives \(m_y=0\).  Thus
% \[
% M=a\,uv^\top+c .
% \]
The first-order optimality conditions eliminate the two terms in the middle in this expansion.  Indeed, for every unit \(h\perp u\), by \eqref{eq:x-first-derivative},
\[
2k\,\wtE s^{2k-1}t^{2k}\langle x, h\rangle = 0~ \implies~ \wtE_w st^2 \langle\xi,h\rangle = 0.
\]
Thus \(\wtE_w[st^2\xi]\in u^\perp\) has zero inner product with every
\(h\perp u\), and therefore vanishes.
The identical argument, using \eqref{eq:y-first-derivative}, kills the third term $\wtE_w[s^2tu\zeta^\top]$.  Thus $M$ simplifies to the alluded form
\begin{equation}\label{eq:M-top-SVD}
M=\sigma\,uv^\top+E,
\end{equation}
where $\sigma = \wtE_w [s^2t^2] > 0$, and $E= \wtE_w[st\xi\zeta^\top].$

Equivalently,
\[
Mv=\sigma u,\qquad M^\top u=\sigma v,
\]
so \(u,v\) form a singular-vector pair of \(M\).

\medskip
\paragraph{Bounding the remaining singular mass from second-order optimality.} It remains to bound the Frobenius mass from the residual term \(E\) relative to \(\sigma\) in~\eqref{eq:M-top-SVD}.  For this purpose, introduce the reweighed pseudo-covariance matrices
\[
X=\wtE_w[t^2\xi\xi^\top],
\qquad
Y=\wtE_w[s^2\zeta\zeta^\top].
\]
The second-order optimality condition for the first sphere variable and \eqref{eq:x-second-derivative} give, for every unit \(h\perp u\),
\begin{align*}
0
\ge
2k(2k-1)\wtE s^{2k-2}&t^{2k}\langle x,h\rangle^2
-
2k\wtE s^{2k}t^{2k}
\\
\iff\quad & (2k-1)\wtE_w t^{2}\langle x,h\rangle^2 \le \wtE_w s^{2}t^{2}.
\end{align*}
Using \(\langle x,h\rangle=\langle \xi,h\rangle\) and that $\sigma=\wtE_w s^2t^2$ derived above gives
\[
h^\top Xh\le \frac{\sigma}{2k-1}
\qquad
\text{for every unit }h\perp u.
\]
Since \(\xi\in u^\perp\), the matrix \(X\) has range contained in \(u^\perp\).  Therefore, as an operator on \(u^\perp\),
\[
X\preceq \frac{\sigma}{2k-1}I_{u^\perp}.
\]
Since \(X\succeq 0\), it follows that
\[
\|X\|_{\mathrm F}^2
\le
\|X\|_{\operatorname{op}}\Tr X
\le
\frac{\sigma}{2k-1}\cdot \frac{(m-1)\sigma}{2k-1}
=
\frac{(m-1)\sigma^2}{(2k-1)^2}.
\]
Thus
\[
\|X\|_{\mathrm F}
\le
\frac{\sqrt{m-1}\,\sigma}{2k-1}.
\]
The same argument yields
\[
\|Y\|_{\mathrm F}
\le
\frac{\sqrt{n-1}\,\sigma}{2k-1}.
\]

The Frobenius mass of the residual matrix \(E\) can be bounded by \(X\) and \(Y\) through a block-PSD argument. In particular, note that the block matrix
\[
\begin{pmatrix}
X & E\\
E^\top & Y
\end{pmatrix} \succeq 0.
\]
This is because for arbitrary \(h\in u^\perp\) and \(\ell\in v^\perp\),
\[
h^\top X h+2h^\top E\ell+\ell^\top Y\ell
=
\wtE_w
\left(
t\langle \xi,h\rangle
+
s\langle \zeta,\ell\rangle
\right)^2
\ge 0 .
\]
By the Cauchy–Schwarz inequality for block PSD matrices (Lemma~\ref{lem:block-cauchy}),
\[
\|E\|_{\mathrm F}^2
\le
\|X\|_{\mathrm F}\|Y\|_{\mathrm F},
\]
and hence
\[
\|E\|_{\mathrm F}
\le
\frac{((m-1)(n-1))^{1/4}\sigma}{2k-1}.
\]

\medskip
\paragraph{Finishing the proof.} The residual estimate now gives the required distance bound.  Since \(M\in W\) and \(W\) is linear, \(M/\sigma\in W\).  Therefore
\[
\distF(uv^\top,W)
\le
\left\|uv^\top-\frac{M}{\sigma}\right\|_{\mathrm F}
=
\frac{\|E\|_{\mathrm F}}{\sigma}
\le
\frac{((m-1)(n-1))^{1/4}}{2k-1}.
\]
Since \(\|uv^\top\|_{\mathrm F}=1\), we then have
\[
\|\Pi_W(uv^\top)\|_{\mathrm F}^2
=
1-\distF(uv^\top,W)^2\ge
1-
\frac{\sqrt{(m-1)(n-1)}}{(2k-1)^2}. \qedhere
\]
\end{proof}

\subsection{Algorithmic consequence}

We now derive the formal Yes/No guarantee for Algorithm~\ref{alg:sos-bss}.

\begin{corollary}[Yes/No guarantee]
\label{cor:algorithm-guarantee}
Let \(W\subseteq \R^{m\times n}\), let \(\epsilon\in(0,1)\), and choose \(k\) satisfying \eqref{eq:kcondition}.  Algorithm~\ref{alg:sos-bss} accepts every \(\Yes\)-instance of \(\textsc{RankOneSubspace}_{m,n}(W,\epsilon)\) and rejects every \(\No\)-instance.  Equivalently, after applying the reduction in Proposition~\ref{prop:bss-subspace-reduction}, it distinguishes perfect-completeness BSS instances with separable value \(1\) from those with separable value at most \(1-\epsilon\).

The SoS relaxation degree is
\[
4k=O\!\left(\frac{((m-1)(n-1))^{1/4}}{\sqrt{\epsilon}}\right),
\]
and the running time is \((m+n)^{O(k)}\).  For balanced local dimensions \(m=n=D\), the degree is
\[
O\!\left(\frac{\sqrt{D}}{\sqrt{\epsilon}}\right),
\]
and the running time is \(D^{O(\sqrt {D/\epsilon})}\).
\end{corollary}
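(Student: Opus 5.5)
The plan is to combine completeness (immediate) with soundness, which follows from Lemma~\ref{lem:positive-max} and Theorem~\ref{thm:main}, and then substitute the degree condition \eqref{eq:kcondition}.

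\emph{Completeness.} On a \(\Yes\)-instance there are unit \(x_0,y_0\) with \(x_0y_0^\top\in W\). The point-evaluation functional \(\wtE f:=f(x_0,y_0)\) is a pseudo-expectation of every degree. It satisfies the sphere constraints exactly, and \(\wtE q\,\langle A,xy^\top\rangle=0\) for all \(A\in W^\perp\). So the degree-\(4k\) SDP is feasible and Algorithm~\ref{alg:sos-bss} returns \(\Yes\).

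\emph{Soundness.} Argue by contrapositive. Suppose the SDP is feasible, and fix any feasible degree-\(4k\) \(\wtE\). By Lemma~\ref{lem:positive-max}, \(\Phi_k\) has a global maximizer \((u,v)\) with \(\Phi_k(u,v)>0\), and this maximizer is a coordinate-wise second-order local maximizer. Theorem~\ref{thm:main} then gives
\[
\|\Pi_W(uv^\top)\|_{\mathrm F}^2\ge 1-\frac{\sqrt{(m-1)(n-1)}}{(2k-1)^2}=1-\frac{R_{m,n}^2}{(2k-1)^2}.
\]
Squaring \eqref{eq:kcondition} gives \((2k-1)^2\ge 2R_{m,n}^2/\epsilon\), so the right-hand side is at least \(1-\epsilon/2>1-\epsilon\). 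Hence some unit product matrix has projection onto \(W\) strictly larger than \(1-\epsilon\), and \(W\) is not a \(\No\)-instance. Therefore the SDP is infeasible on every \(\No\)-instance, and the algorithm returns \(\No\) there. The factor \(\sqrt2\) in \eqref{eq:kcondition} is what makes the inequality strict.

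\emph{BSS and complexity.} The BSS statement follows by composing this with Proposition~\ref{prop:bss-subspace-reduction}. For the degree, take the smallest integer \(k\) satisfying \eqref{eq:kcondition}. Then \(4k=O(1+R_{m,n}/\sqrt\epsilon)=O(((m-1)(n-1))^{1/4}/\sqrt\epsilon)\), since \(m,n\ge2\) and \(\epsilon<1\) make \(R_{m,n}/\sqrt\epsilon\ge1\). The running-time bound \((m+n)^{O(k)}\) is the moment-matrix count stated after Algorithm~\ref{alg:sos-bss}. Setting \(m=n=D\) gives \(R=D-1\le D\), so the degree is \(O(\sqrt{D/\epsilon})\) and the running time is \(D^{O(\sqrt{D/\epsilon})}\).

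No step is a genuine obstacle, because the analytic work is already done in Theorem~\ref{thm:main}. The only care needed is the bookkeeping: get the inequality direction right so that the bound is strictly above the \(\No\) threshold, and check that \(k\ge1\) is compatible with the degree requirements of the reweighing. The latter holds since Theorem~\ref{thm:main} only requires degree \(4k\).
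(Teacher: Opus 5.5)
Your proposal is correct and follows the paper's proof step for step. Completeness is point evaluation at $(x_0,y_0)$. Soundness combines Lemma~\ref{lem:positive-max} and Theorem~\ref{thm:main} with the squared degree condition to get $\|\Pi_W(uv^\top)\|_{\mathrm F}^2\ge 1-\epsilon/2>1-\epsilon$. The BSS claim and the degree and running-time bounds then follow from the reduction and the moment count.

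One small slip needs fixing. For $m=n=D$ one has $R_{m,n}=((D-1)^2)^{1/4}=\sqrt{D-1}$, not $D-1$; with $R=D-1$ you would only get degree $O(D/\sqrt{\epsilon})$. Your stated bound $O(\sqrt{D/\epsilon})$ is correct, but it rests on the corrected value $R_{m,n}=\sqrt{D-1}$.
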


\begin{proof}
Completeness is immediate.  If \(W\) is a \(\Yes\)-instance, then there exist
\(x_0\in \Sph^{m-1}\) and \(y_0\in \Sph^{n-1}\) such that
\(x_0y_0^\top\in W\).  Define a linear functional on polynomials of degree at
most \(4k\) by
\[
\wtE p := p(x_0,y_0) \text{ for every }p\in \R[x,y]_{\le 4k}.
\]
This is a degree-\(4k\) pseudo-expectation satisfying all constraints.
Thus the SDP in Algorithm~\ref{alg:sos-bss} is feasible, and the algorithm
returns \(\Yes\).

For soundness, suppose toward contradiction that \(W\) is a \(\No\)-instance and that the SDP in Algorithm~\ref{alg:sos-bss} is feasible.  Let \(\wtE\) be a feasible degree-\(4k\) pseudo-expectation.  By Lemma~\ref{lem:positive-max}, the associated moment polynomial \(\Phi_k\) has a positive global maximizer \((u,v)\).  This pair is a coordinate-wise second-order local maximizer, so the derivative identities \eqref{eq:x-first-derivative}--\eqref{eq:y-second-derivative} apply.  Applying Theorem~\ref{thm:main} gives
\[
\|\Pi_W(uv^\top)\|_{\mathrm F}^2
\ge
1-
\frac{R_{m,n}^2}{(2k-1)^2}.
\]
The choice of \(k\) in \eqref{eq:kcondition} implies
\[
\frac{R_{m,n}^2}{(2k-1)^2}
\le
\frac{\epsilon}{2}.
\]
Therefore
\[
\|\Pi_W(uv^\top)\|_{\mathrm F}^2
\ge
1-\frac{\epsilon}{2},
\]
contradicting the \(\No\)-condition
\[
\|\Pi_W(xy^\top)\|_{\mathrm F}^2\le 1-\epsilon
\qquad
\text{for every }x\in \Sph^{m-1},\;y\in \Sph^{n-1}.
\]
Thus the SDP is infeasible on every \(\No\)-instance, and Algorithm~\ref{alg:sos-bss} returns \(\No\).

The degree and running-time bounds are those established in the running-time analysis following Algorithm~\ref{alg:sos-bss}.  The BSS statement follows from Proposition~\ref{prop:bss-subspace-reduction}.
\end{proof}

\subsection{Algorithmic rounding}
\label{subsec:bss-algorithmic-rounding}
A surprising byproduct of our analysis is a constructive rounding procedure
for the \(\Yes\) case.  The rounding argument does not require computing a
global maximizer of \(\Phi_k\); it only uses a positive approximate coordinate-wise second-order local maximizer.  This
is a much weaker requirement and can be obtained by standard numerical
optimization primitives on the product sphere.  For instance, one may apply the
Riemannian trust-region method of Boumal--Absil--Cartis~\cite{BAC18}, obtaining the following theorem, whose proof we defer to~\Cref{app:bss-algorithmic-rounding}.
% This subsection turns the preceding \(\Yes\)/\(\No\) distinguisher into a constructive
% rounding procedure in the \(\Yes\) case.  Given a feasible degree-\(4k\)
% pseudo-expectation, the algorithmic task is to find an explicit rank-one matrix
% \(\widehat X=\widehat u\widehat v^\top\) with large projection onto \(W\).
% For the rounding step, it suffices to find a positive approximate second-order local maximizer of
% \[
%     \Phi_k(p,q)=\wtE\langle x,p\rangle^{2k}\langle y,q\rangle^{2k}.
% \]
% We initialize such a local search on the standard bases
% and then use a Riemannian trust-region method developed in \cite{BAC18} on the product sphere.

\begin{theorem}
\label{thm:bss-algorithmic-rounding}
Let \(\epsilon\in(0,1)\), and let \(k\) be the minimal integer satisfying
\eqref{eq:kcondition}.  Let \(\wtE\) be a feasible degree-\(4k\)
pseudo-expectation from Algorithm~\ref{alg:sos-bss}.  There is a deterministic
algorithm which, given the degree-\(4k\) moment vector of \(\wtE\), the
subspace \(W\subseteq\R^{m\times n}\), and \(\epsilon\), outputs
\[
    (\widehat u,\widehat v,\widehat X)
    \in
    \Sph^{m-1}\times\Sph^{n-1}\times\R^{m\times n}
\]
with \(\widehat X=\widehat u\widehat v^\top\) and
\begin{equation}
\label{eq:bss-algorithmic-rounding-projection}
    \|\Pi_W(\widehat X)\|_{\mathrm F}^2>1-\epsilon .
\end{equation}
The rounding time is polynomial in the size of the supplied moment table of \(\wtE\) and is
bounded, in the running-time used for Algorithm~\ref{alg:sos-bss}, by
\[
    (m+n+k)^{O(k)}
\]
arithmetic operations, up to polynomial factors in the input encoding length.
\end{theorem}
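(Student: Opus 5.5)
The plan is to make the proof of \Cref{thm:main} robust to approximate optimality and then run a second-order Riemannian method to produce such an approximate point. First I will redo \Cref{thm:main} under relaxed hypotheses. Suppose \((u,v)\) satisfies \(\Phi_k(u,v)\ge \gamma>0\). Suppose also that the first derivatives \eqref{eq:x-first-derivative}, \eqref{eq:y-first-derivative} are bounded in absolute value by \(\delta_1\) for all unit tangent \(h,\ell\). Finally, suppose the second derivatives \eqref{eq:x-second-derivative}, \eqref{eq:y-second-derivative} are at most \(\delta_2\).

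Then, in the notation of that proof, the vectors \(m_x=\wtE_w[st^2\xi]\) and \(m_y=\wtE_w[s^2t\zeta]\) have norm at most \(\delta_1/(2k\,\wtE w)\). The second-order step gives
\[
X\preceq \frac{\sigma+\delta_2/(2k\,\wtE w)}{2k-1}\,I_{u^\perp},
\]
and similarly for \(Y\). Write \(\sigma=\Phi_k(u,v)/\wtE w\). To keep errors relative to \(\sigma\), I will use \(\wtE w\le 1\), which holds because \(1-w\) is SoS modulo the sphere constraints. Taking \(\delta_1,\delta_2\le c\,\epsilon\gamma\) for a small constant \(c\) then makes all errors relative to \(\sigma\) of size \(O(\epsilon)\). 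The distance bound becomes
\[
\distF(uv^\top,W)\le \frac{\|E\|_{\mathrm F}+\|m_x\|+\|m_y\|}{\sigma}.
\]
Because \(k\) satisfies \eqref{eq:kcondition}, which contributes only \(\epsilon/2\), there is slack to reach a strict \(\|\Pi_W(\widehat X)\|_{\mathrm F}^2>1-\epsilon\).

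Second, I need a point with \(\Phi_k\ge\gamma\) and an explicit \(\gamma\). Averaging as in \Cref{lem:positive-max} gives \(\E\Phi_k=c_{m,k}c_{n,k}\). Since \(\Phi_k\) is a polynomial of degree \(2k\) in each block, I will derandomize this by evaluating \(\Phi_k\) on a finite set. Two candidate sets are the standard basis pairs \((e_i,e_j)\) and a spherical design of strength \(2k\), whose size is \((m+n)^{O(k)}\). On a spherical design the average equals the sphere average exactly. Taking the best point gives \(\gamma\ge c_{m,k}c_{n,k}\ge (m+n+k)^{-O(k)}\). Each evaluation of \(\Phi_k\), or of its gradient and Hessian, is a contraction of the moment table with \(p^{\otimes 2k}\otimes q^{\otimes 2k}\), so it costs time polynomial in the table size.

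Third, I will run the Riemannian trust-region method of~\cite{BAC18} on \(\Sph^{m-1}\times\Sph^{n-1}\) to maximize \(\Phi_k\) from this starting point. The method is monotone, so \(\Phi_k\ge\gamma\) is preserved along the iterates. It returns a point with Riemannian gradient norm at most \(\delta_1\) and Hessian eigenvalues at most \(\delta_2\) within
\[
O\!\left(\frac{L_2^2 \sup\Phi_k}{\delta_2^3}+\dots\right)
\]
iterations. The Lipschitz constants of the gradient and Hessian along the retraction are \(\poly(k)\), since \(|\Phi_k|\le 1\) and its derivatives are controlled by SoS bounds on \(\langle x,p\rangle^{2k}\). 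With \(\delta\sim\epsilon\gamma\), the iteration count is \(\poly(1/\gamma,k,1/\epsilon)=(m+n+k)^{O(k)}\). The full Riemannian gradient and Hessian bounds imply the coordinate-wise conditions used in the first step. The method is deterministic, so the whole procedure is too.

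I expect the main obstacle to be the first step, specifically relative versus absolute errors. Approximate stationarity is absolute, while \(\sigma\) can be exponentially small in \(k\). So I must check carefully that all error terms are normalized by \(\wtE w\) in the same way as \(\sigma\). Equivalently, after multiplying through by \(\wtE w\), every tolerance must compare to \(\Phi_k(u,v)\ge\gamma\). This is the reason for insisting on the explicit lower bound \(\gamma\), and it is the source of the \((m+n+k)^{O(k)}\) rather than polynomial iteration bound.
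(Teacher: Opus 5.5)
Your proposal is correct and follows essentially the same route as the paper. The paper likewise proves a robust version of Theorem~\ref{thm:main} under approximate first- and second-order conditions (Lemma~\ref{lem:bss-approximate-local-rounding}), starts from the best coordinate pair with $\Phi_k\ge (mn)^{-k}$, and runs the monotone Riemannian trust-region method of~\cite{BAC18} with $\poly(k)$ regularity constants. The differences are only cosmetic: the paper uses a relative tolerance $\delta_0\Phi_k(u,v)$ with a fixed constant $\delta_0$ instead of your $c\epsilon\gamma$, and it bounds the basis-pair start by an SoS power-mean certificate rather than by exact averaging (your spherical-design alternative would additionally need an explicit deterministic construction).
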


\subsection{ETH-tightness of the SoS degree}

% We end by comparing the degree scale above with the standard hardness scale arising from short-proof QMA(2) protocols.
We close the BSS section by explaining why the degree scale proved above is
essentially tight in the natural short-proof \(\QMA(2)\) regime under the exponential-time hypothesis.  This is
where the improved dependence on the gap parameter \(\epsilon\) becomes crucial:
for the inverse-linear gaps produced by logarithmic-proof \(\QMA(2)\)
protocols, our degree bound gives \(d=\widetilde O(D)\), and the corresponding
moment-SDP running time matches the ETH lower bound up to polylogarithmic
factors.  Thus, in this regime, a substantially smaller degree scale would
yield a subexponential-time algorithm for 3-SAT.

\paragraph{From $\QMA(2)$ verification to $\HSEP$.}
By \Cref{subsec:quantum-prelim}, a two-proof verifier with local proof dimension \(D\), completeness \(c\), soundness \(s\), and gap \(\Delta=c-s\) induces an accept operator \(0\preceq M\preceq I\) on \(\C^D\otimes\C^D\) whose optimum acceptance probability over unentangled proofs is \(h_\sep(M)\).  Hence yes-instances satisfy \(h_\sep(M)\ge c\), no-instances satisfy \(h_\sep(M)\le s\), and any additive approximation to \(h_\sep(M)\) with error smaller than a fixed constant multiple of \(\Delta\) distinguishes the two cases.

\paragraph{Short-proof $\QMA(2)$ protocols for 3-SAT.}
The logarithmic-proof construction of \cite{LGNN12} gives a two-proof QMA protocol for 3-SAT with perfect completeness and inverse-linear soundness gap up to polylogarithmic factors.  For a 3-SAT instance of size \(L\), the induced local proof dimension satisfies
\[
D=\Thetat(L),
\]
and the completeness-soundness gap satisfies
\[
\Delta=\Thetat(1/D).
\]
Under ETH, a distinguisher for the corresponding perfect-completeness $\HSEP$ instances, equivalently their BSS formulation, running in time
\[
2^{o(D/\polylog D)}
\]
would yield a \(2^{o(L)}\)-time algorithm for 3-SAT, contradicting the hypothesis.

\paragraph{Calibration of the SoS degree.}
For balanced local dimensions \(m,n=\Theta(D)\), Corollary~\ref{cor:algorithm-guarantee} uses degree
\[
d=O\!\left(\frac{\sqrt{D}}{\sqrt{\epsilon}}\right).
\]
At the inverse-linear gap \(\epsilon=\Thetat(1/D)\), this becomes
\[
d=\Ot(D),
\]
and the associated moment-SDP running time is \(D^{O(d)}=2^{O(d\log D)}\).  Combining this running-time form with the ETH lower bound above shows that, in this gap regime, any \(D^{O(d)}\)-time SoS family distinguishing these instances must satisfy \(d=\Omegat(D)\).  Hence the present degree scale is optimal up to polylogarithmic factors.

\bigskip
\begin{remark}
The above optimality statement is specific to the inverse-linear gap regime
\(\epsilon=\Thetat(1/D)\).  It does not preclude a different tradeoff between the
local dimension \(D\) and the precision parameter \(\epsilon\).
For example, there could be other tradeoffs between local dimension and the approximation precision $\epsilon$: currently nothing prevents a quasi-polynomial time algorithm for $\epsilon=\Omega(1)$~\cite{HM13,JLW26}.
\end{remark}
% ============================================================
% Part 2: Projector 2-to-4 Norm
% ============================================================

\section{An SoS relaxation for the projector \texorpdfstring{\(2\!\to\!4\)}{2-to-4} problem}
\label{sec:sos-2to4}
This section develops the SoS relaxation used to approximate the
projector \(2\!\to\!4\) norm.
We first reformulate the problem as maximizing
the fourth moment over the unit sphere in \(\operatorname{im}(P)\), then
introduce a fixed-threshold degree-\(2k\) SoS feasibility relaxation and use binary search
over the threshold to approximate the SoS value.

The main part of this section is the rounding analysis.  We prove that, for
sufficiently large degree, every feasible pseudo-expectation can be rounded to an
actual vector in \(\operatorname{im}(P)\).
%The rounding uses the spherical directional moment polynomial, and the argmax principle, and yields the stated relative approximation guarantee.
The analysis continues the argmax-principle viewpoint from
the BSS section, but uses it in a stronger form.  There, the selected maximizer was used merely through its first- and second-order optimality conditions.
Here, the $\ell_4$-norm objective requires consequences of global maximality itself: the maximizing direction yields higher-moment pseudo-expectation inequalities, including the fourth-moment estimate~\eqref{eq:fourth-order-argmax-comparison} that drives the rounding argument.

% The main part of the analysis proves that, for sufficiently large degree,
% every feasible pseudo-expectation can be rounded to an actual vector in
% \(\operatorname{im}(P)\) using a new argmax-principle-guided rounding procedure
% based on the spherical directional moment polynomial, which yields the stated
% relative approximation guarantee.

\subsection{Problem formulation}

Let \(P\in\R^{n\times n}\) be a nonzero orthogonal projector, and let
\[
    W=\operatorname{im}(P).
\]
For \(x\in\R^n\), define
\[
    f(x): \R^n\to \R,  \quad f: x\mapsto \sum_{i=1}^n x_i^4.
\]
The fourth power of the \(2\!\to\!4\) norm of \(P\) is
\begin{equation}
\label{eq:projector-2to4-opt}
    \OPT(P)
    :=
    \norm{P}_{2\to4}^4
    =
    \max_{\norm{y}_2=1}\norm{Py}_4^4
    =
    \max_{\substack{x\in W\\ \norm{x}_2=1}} f(x).
\end{equation}
Indeed, \(Py\in W\) and \(\norm{Py}_2\le1\) for every unit vector \(y\),
whereas every unit vector \(x\in W\) satisfies \(Px=x\). The last equality
then follows from the homogeneity of \(f\).

Throughout this section, we approximate the quartic objective
\(\OPT(P)\). Taking fourth roots yields the corresponding approximation to
\(\norm{P}_{2\to4}\).

\subsection{The fixed-threshold SoS relaxation}

Fix an integer \(k\ge2\) and a threshold \(\tau\in[0,1]\). Let
\(\mathcal{F}_{2k}(P,\tau)\) denote the set of degree-\(2k\)
pseudo-expectations \(\wtE\) satisfying
\begin{subequations}
\label{eq:fixed-threshold-relaxation}
\begin{align}
    \wtE\!\left[
        p(x)\bigl(\norm{x}_2^2-1\bigr)
    \right]
    &=0
    &&\text{for every \(p\) with \(\deg p\le 2k-2\),}
    \label{eq:sphere-constraint}\\
    \wtE\!\left[
        p(x)\bigl((I-P)x\bigr)_i
    \right]
    &=0
    &&\text{for every \(p\) with \(\deg p\le 2k-1\) and every
    \(i\in\{1,\ldots,n\}\),}
    \label{eq:subspace-constraint}\\
    \wtE\!\left[
        q(x)^2\bigl(f(x)-\tau\bigr)
    \right]
    &\ge0
    &&\text{for every \(q\) with \(\deg q\le k-2\).}
    \label{eq:threshold-constraint}
\end{align}
\end{subequations}
 For each fixed threshold
\(\tau\), the constraints in \eqref{eq:fixed-threshold-relaxation} are
affine in the degree-\(2k\) pseudo-moments and therefore define a
semidefinite feasibility problem.

Define the value of the degree-\(2k\) relaxation by
\begin{equation}
\label{eq:sos-value}
    \SOS_{2k}(P)
    :=
    \sup\left\{
        \tau\in[0,1]:
        \mathcal{F}_{2k}(P,\tau)\neq\varnothing
    \right\}.
\end{equation}
The feasible thresholds are downward closed. Indeed, suppose that
\(\wtE\in\mathcal{F}_{2k}(P,\tau)\) and \(0\le\tau'\le\tau\). Then, for
every polynomial \(q\) with \(\deg q\le k-2\),
\begin{align}
\label{eq:threshold-monotonicity}
    \wtE\!\left[
        q(x)^2\bigl(f(x)-\tau'\bigr)
    \right]
    &=
    \wtE\!\left[
        q(x)^2\bigl(f(x)-\tau\bigr)
    \right]
    +(\tau-\tau')\wtE q(x)^2
    \ge0.
\end{align}
Consequently, $\wtE\in\mathcal{F}_{2k}(P,\tau')$.
\subsection{The projector \texorpdfstring{\(2\!\to\!4\)}{2-to-4} norm approximation algorithm}

Algorithm~\ref{alg:sos-2to4} uses the fact that feasible thresholds remain
feasible after decreasing the threshold to compute a relative approximation to
\(\SOS_{2k}(P)\). Given a tolerance \(\varepsilon\in(0,1)\), it searches
over the interval \([1/n,1]\), maintains a feasible lower endpoint together
with a corresponding pseudo-expectation, and stops when the upper endpoint is
at most \(1+\varepsilon\) times the feasible lower endpoint. The validity of
the initial interval and the binary-search invariant is proved as part of
Theorem~\ref{thm:sos-main}.

\begin{algorithm}[H]
\caption{SoS approximation for the projector \(2\!\to\!4\) problem}
\label{alg:sos-2to4}
\begin{algorithmic}[1]
\Statex \textbf{Input:} A nonzero orthogonal projector \(P\), an integer
\(k\ge2\), and a tolerance \(\varepsilon\in(0,1)\).
\Statex \textbf{Output:} A threshold \(\widehat{\tau}\) and a degree-\(2k\)
pseudo-expectation \(\wtE^{\mathrm{out}}\).

\State Choose \(i_0\in\{1,\ldots,n\}\) such that \(P_{i_0i_0}>0\), and set
\(v_0\gets Pe_{i_0}/\sqrt{P_{i_0i_0}}\),
\(\tau_{\mathrm{lo}}\gets1/n\), \(\tau_{\mathrm{hi}}\gets1\), and
\(\wtE^{\mathrm{best}}\gets\wtE_{v_0}\), where
\(\wtE_{v_0}[p]:=p(v_0)\) for every polynomial \(p\) of degree at most
\(2k\).

\While{\(\tau_{\mathrm{hi}}>(1+\varepsilon)\tau_{\mathrm{lo}}\)}
    \State
    \[
        \tau_{\mathrm{mid}}
        \gets
        \frac{\tau_{\mathrm{lo}}+\tau_{\mathrm{hi}}}{2}.
    \]
    \If{\(\mathcal{F}_{2k}(P,\tau_{\mathrm{mid}})
        \neq\varnothing\)}
        \State Choose
        \[
            \wtE_{\mathrm{mid}}
            \in
            \mathcal{F}_{2k}(P,\tau_{\mathrm{mid}}).
        \]
        \State
        \[
            \tau_{\mathrm{lo}}\gets\tau_{\mathrm{mid}},
            \qquad
            \wtE^{\mathrm{best}}\gets\wtE_{\mathrm{mid}}.
        \]
    \Else
        \State
        \(\tau_{\mathrm{hi}}\gets\tau_{\mathrm{mid}}\).
    \EndIf
\EndWhile

\State
\[
    \widehat{\tau}\gets\tau_{\mathrm{lo}},
    \qquad
    \wtE^{\mathrm{out}}\gets\wtE^{\mathrm{best}}.
\]
\State \Return
\(\bigl(\widehat{\tau},\wtE^{\mathrm{out}}\bigr)\).
\end{algorithmic}
\end{algorithm}

\paragraph{Running time.}
The length of the search interval is halved at every iteration. Since
\(\tau_{\mathrm{lo}}\ge1/n\), the stopping condition is reached after
\(O(\log(n/\varepsilon))\) semidefinite feasibility calls.

The pseudo-moment matrix of a degree-\(2k\) pseudo-expectation in \(n\) variables
has dimension
\(
    \binom{n+k}{k},
\)
and the matrix associated with the threshold constraints
\eqref{eq:threshold-constraint} has dimension
\(
    \binom{n+k-2}{k-2}.
\)
Consequently, each feasibility problem has description size \(n^{O(k)}\).
Up to the standard polynomial dependence on the input encoding length and
the numerical precision of the SDP solver, the total running time of
Algorithm~\ref{alg:sos-2to4} is
\[
    n^{O(k)}.
\]
For
\[
    k=\left\lceil C_0\frac{\sqrt n}{\varepsilon}\right\rceil+1,
\]
where \(C_0\) is the universal constant in
Theorem~\ref{thm:sos-main}, the running time is
\[
    n^{O(\sqrt n/\varepsilon)}
    =
    2^{O((\sqrt n/\varepsilon)\log n)}.
\]

The algorithm above approximates the optimum of the degree-\(2k\) threshold
relaxation. The following theorem states that, with sufficiently large degree $k$, this relaxation is tight enough to yield a relative approximation to the true projector \(2\to4\) optimum.
\begin{theorem}[Approximation guarantee for the projector \(2\!\to\!4\) norm]
\label{thm:sos-main}
There exists a universal constant \(C_0>0\) such that the following holds.
Let \(P\in\R^{n\times n}\) be a nonzero orthogonal projector, let
\(0<\varepsilon<1\), and let \(k\ge2\) satisfy
\begin{equation}
\label{eq:degree-requirement}
    k-1\ge C_0\frac{\sqrt n}{\varepsilon}.
\end{equation}
Let \(\bigl(\widehat{\tau},\wtE^{\mathrm{out}}\bigr)\) be the output of
Algorithm~\ref{alg:sos-2to4} with tolerance \(\varepsilon\). Then
\begin{equation}
\label{eq:output-norm-approximation}
    (1+\varepsilon)^{-1/4}\norm{P}_{2\to4}
    \le
    \widehat{\tau}^{1/4}
    \le
    (1-\varepsilon)^{-1/4}\norm{P}_{2\to4}.
\end{equation}
\end{theorem}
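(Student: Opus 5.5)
The proof splits into a bookkeeping part (validity of the binary search) and a rounding part (every feasible pseudo-expectation certifies a true vector). \textbf{Bookkeeping.} The initial interval is valid. Every unit $x\in\R^n$ has $f(x)\ge \norm{x}_2^4/n=1/n$ by Cauchy--Schwarz, so the point-evaluation $\wtE_{v_0}$ lies in $\mathcal F_{2k}(P,1/n)$. Also $f(x)\le \norm{x}_2^4=1$ on the sphere, so $\OPT(P)\le 1$.

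Completeness is by point evaluation. If $x^*$ attains $\OPT(P)$, then evaluation at $x^*$ lies in $\mathcal F_{2k}(P,\OPT(P))$. By the downward closedness \eqref{eq:threshold-monotonicity}, every threshold $\le \OPT(P)$ is feasible. Hence the invariant $\tau_{\mathrm{hi}}\ge \OPT(P)$ holds: either $\tau_{\mathrm{hi}}=1$, or $\tau_{\mathrm{hi}}$ was declared infeasible, which forces $\tau_{\mathrm{hi}}>\OPT(P)$. At termination, $\OPT(P)\le\tau_{\mathrm{hi}}\le(1+\varepsilon)\widehat\tau$, which is the left inequality of \eqref{eq:output-norm-approximation}.

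For the right inequality it suffices to prove the following rounding statement. For every $\tau$ and every $\wtE\in\mathcal F_{2k}(P,\tau)$ with $k-1\ge C_0\sqrt n/\varepsilon$, there is a unit $u\in W$ with $f(u)\ge(1-\varepsilon)\tau$. Applying it to $\wtE^{\mathrm{out}}$ and $\widehat\tau$ then gives $\OPT(P)\ge(1-\varepsilon)\widehat\tau$.

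\textbf{Rounding via the argmax.} First dispose of the trivial case. If some $P_{ii}^2\ge\tau$, then $Pe_i/\sqrt{P_{ii}}$ already works. Otherwise $\sum_i\norm{(P-uu^\top)e_i}_2^4\le\sum_iP_{ii}^2\le n\tau$ for any unit $u\in W$.

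Let $u$ be a global maximizer of $\Phi_k(v)=\wtE\ip{x}{v}^{2k}$ over $\Sph(W)$. It is positive by the spherical averaging of Lemma~\ref{lem:positive-max}. Reweigh by $w=\ip{x}{u}^{2k-4}=(\ip{x}{u}^{k-2})^2$. This is exactly the largest square allowed by \eqref{eq:threshold-constraint}, so $\wtE_w f(x)\ge\tau$. The reweighed functional is a degree-$4$ pseudo-expectation that still satisfies $Px=x$ and $\norm{x}^2=1$.

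Write $x=su+z$ with $s=\ip{x}{u}$ and $z=(P-uu^\top)x$, so that $z_i=\ip{x}{h_i}$ with $h_i=(P-uu^\top)e_i\in W\cap u^\perp$. Coordinatewise, apply the SoS inequality $(a+b)^4\le(1+\delta)^3a^4+(1+\delta^{-1})^3b^4$. Together with $\wtE_w s^4\le\wtE_w\norm{x}^4=1$, this gives
\[
\tau\le\wtE_w f(x)\le(1+\delta)^3 f(u)+(1+\delta^{-1})^3\sum_i\wtE_w\ip{x}{h_i}^4 .
\]

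\textbf{Main obstacle: the fourth-order argmax comparison \eqref{eq:fourth-order-argmax-comparison}.} We need $\wtE_w\ip{x}{h}^4\le C\,\wtE_w s^4/k^2$ for unit $h\in W\cap u^\perp$. This requires global maximality, not just the derivative identities used for BSS. The plan is as follows.
\begin{itemize}
\item Compare $\Phi_k(u)$ with the average of $\Phi_k(\cos\theta\,u\pm\sin\theta\,h)$, which kills the odd terms in $r=\ip{x}{h}$.
\item Expand binomially to obtain
\[
\sum_j\binom{2k}{2j}\cos^{2k-2j}\theta\,\sin^{2j}\theta\;\wtE s^{2k-2j}r^{2j}\le\wtE s^{2k}.
\]
\item Every term is nonnegative: it is a square times powers of a square. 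So the $j=2$ term alone is bounded by the right side.
\item Choose $\theta\asymp1/\sqrt k$ so that $\cos^{2k-4}\theta=\Omega(1)$ and $\binom{2k}{4}\sin^4\theta=\Theta(k^2)$.
\item Relate $\wtE s^{2k}$ to $\wtE_w s^4$, using the sphere constraint $s^2\le1$ in SoS. This gives $\wtE_w r^4\lesssim\wtE_w s^4/k^2$.
\end{itemize}
The delicate point is matching the normalizations: the left side carries the weight $s^{2k-4}$ and the right side carries $s^{2k}$. We must check that no factor is lost; if one is, use the $j=1$ second-order bound $\wtE_w s^2r^2\le\wtE_w s^4/(2k-1)$ as an intermediate step.

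\textbf{Parameter choice.} Summing the comparison over $i$ gives $\sum_i\wtE_w z_i^4\lesssim n\tau/k^2$. Hence
\[
\tau\le(1+\delta)^3f(u)+O(\delta^{-3}n\tau/k^2).
\]
Choosing $\delta\asymp(n/k^2)^{1/4}$ would give an error of order $(n/k^2)^{1/4}$, which is not enough. Instead use the paper's refined decoupling that keeps the cross term $4s^3u_i^3z_i$, which vanishes by first-order optimality. The remaining $s^2z_i^2$ and $z_i^4$ terms are then controlled by the second- and fourth-order comparisons, targeting an error of $O(\sqrt n/k+n/k^2)$. With $k-1\ge C_0\sqrt n/\varepsilon$ this yields $f(u)\ge(1-\varepsilon)\tau$, completing the proof.
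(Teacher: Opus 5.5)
Your bookkeeping matches the paper's: the valid interval $[1/n,1]$, the invariant $\OPT(P)\le\tau_{\mathrm{hi}}$, and the reduction of the right inequality to a rounding statement. Your fourth-order comparison also matches the paper's proof of Lemma~\ref{lem:fourth-order-tangent}: average over $\cos\theta\,u\pm\sin\theta\,h$, keep the $j=2$ term, and take $\theta\asymp k^{-1/2}$. The normalization you worry about is automatic. Since $\wtE s^{2k}=\wtE[s^{2k-4}\cdot s^4]$, dividing by $\wtE s^{2k-4}$ gives $\wtE_w r^4\lesssim k^{-2}\wtE_w s^4$ with no loss. Positivity of $\max_{\Sph(W)}\Phi_k$ is not literally Lemma~\ref{lem:positive-max}. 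Averaging over $\Sph(W)$ needs $\wtE\norm{Px}_2^{2k}=1$, which comes from the subspace constraint. You also need $\wtE s^{2k-4}\ge\wtE s^{2k}>0$ before reweighing. Both facts are Lemma~\ref{lem:positive-projective-moment}.

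The step that earns degree $\sqrt n/\varepsilon$ rather than $\sqrt n/\varepsilon^2$ is your last paragraph, and there you only state a target. Two things are missing.

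First, $(su_i+z_i)^4$ also contains the term $4su_iz_i^3$, which you drop. You can absorb it via $4su_iz_i^3\le 2s^2u_i^2z_i^2+2z_i^4$, since the difference is $2(su_iz_i-z_i^2)^2$. This gives $x_i^4\le s^4u_i^4+4s^3u_i^3z_i+8s^2u_i^2z_i^2+3z_i^4$.

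Second, the second-order bound $\wtE_w s^2\ip{x}{h}^2\le\norm{h}_2^2\,\wtE_w s^4/(2k-1)$ controls the $s^2z_i^2$ terms only by $\frac{8}{2k-1}\sum_i u_i^2\norm{Qe_i}_2^2$. That is an absolute error. Bounding the sum by $1$ would force $k\gtrsim n/\varepsilon$ when $\tau\approx 1/n$. What makes it relative is
$\sum_i u_i^2\norm{Qe_i}_2^2\le\sum_i u_i^2P_{ii}\le\norm{u}_4^2\bigl(\sum_iP_{ii}^2\bigr)^{1/2}\le\sqrt{f(u)\,n\tau}$, which gives
\[
\tau\le f(u)+\frac{8\sqrt n}{2k-1}\sqrt{f(u)\,\tau}+O\!\left(\frac{n}{k^2}\right)\tau .
\]
So either $f(u)\ge\tau$, or $f(u)\ge\bigl(1-O(\sqrt n/k+n/k^2)\bigr)\tau\ge(1-\varepsilon)\tau$ once $C_0$ is large.

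With these lines your argument is complete, and it takes a mildly different route from the paper. The paper uses no second-order bound at this point. It absorbs both $6a^2b^2$ and $4ab^3$ into the single SoS inequality $(a+b)^4\le(1+\eta)a^4+4a^3b+C_\eta b^4$ with $C_\eta=O(1/\eta)$ (Lemma~\ref{lem:centered-quartic-sos}). It therefore needs only stationarity and the fourth-order comparison, and then optimizes $\eta=b^2/\sqrt\tau\lesssim\sqrt n/k$. The $1/\eta$ growth of that constant, instead of your crude $\delta^{-3}$, is what turns $(n/k^2)^{1/4}$ into $\sqrt n/k$. Your variant spends one more consequence of maximality: the second-order condition, or equivalently the $j=1$ term of the same binomial expansion. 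In exchange it avoids the $\eta$-optimization and the explicit $C_\eta$ bookkeeping.
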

\subsection{Spherical directional moment polynomial and rounding}
\label{subsec:sharp-one-shot-rounding}
For the soundness analysis of Theorem~\ref{thm:sos-main}, analogous to that of Theorem~\ref{thm:main}, we define the spherical directional moment polynomial and use the argmax-principle-guided reweighing and rounding step.
Fix \(\tau\in[0,1]\) and
\(\wtE\in\mathcal{F}_{2k}(P,\tau)\).

\paragraph{Setups.} Consider the spherical directional moment polynomial
\begin{equation}
\label{eq:projective-moment}
    \Phi_k(v):=\wtE\langle x,v\rangle^{2k},
    \qquad v\in \Sph(W).
\end{equation}
This is the restriction to the unit sphere \(\Sph(W)\) of a homogeneous degree-\(2k\) polynomial in the direction variable \(v\). It records the \(2k\)-th pseudo-moment of the projection of \(x\) along \(v\).
For a global maximizer \(u\in \Sph(W)\) of \(\Phi_k\), set
\begin{equation}
\label{eq:weighted-decomposition}
    s:=\langle x,u\rangle,
    \qquad
    Q:=P-uu^\top,
    \qquad
    z:=Qx=x-su,
    \qquad
    w(x):=s^{2k-4}.
\end{equation}
The identity \(z=x-su\) is understood modulo the subspace constraint
\(Px=x\). The weight \(w\) biases the reweighed pseudo-expectation toward points with
large overlaps with \(u\).  The following two lemmas are the one direction variants of what is established in the previous section, regarding first-order derivative information and the basic validity of reweighing by $w$.

\begin{lemma}[First-order stationarity of a projective maximizer]
\label{lem:first-order-stationarity}
Let \(u\in \Sph(W)\) be a global maximizer of \(\Phi_k\), and set
\(s:=\langle x,u\rangle\). Then, for every \(h\in W\cap u^\perp\),
\begin{equation}
\label{eq:first-order-stationarity}
    \wtE\!\left[s^{2k-1}\langle x,h\rangle\right]=0.
\end{equation}
\end{lemma}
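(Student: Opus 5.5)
The plan is to mimic the first-order computation from the BSS section, now on the single sphere \(\Sph(W)\). Fix \(h\in W\cap u^\perp\). If \(h=0\) there is nothing to prove. Otherwise, by homogeneity of \eqref{eq:first-order-stationarity} in \(h\), we may normalize \(\norm{h}_2=1\). Consider the curve
\[
    u_h(\tau)=\frac{u+\tau h}{\sqrt{1+\tau^2}} .
\]
Since \(u,h\in W\) and \(u\perp h\), this curve stays in \(\Sph(W)\) for every real \(\tau\), and \(u_h(0)=u\). Because \(u\) is a global maximizer of \(\Phi_k\) on \(\Sph(W)\), the scalar function \(g(\tau):=\Phi_k(u_h(\tau))\) attains its maximum at \(\tau=0\).

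The next step is to check that \(g\) is differentiable and compute \(g'(0)\). By linearity of \(\wtE\), with \(s=\langle x,u\rangle\) and \(r=\langle x,h\rangle\), we have
\[
    g(\tau)=(1+\tau^2)^{-k}\,\wtE\,(s+\tau r)^{2k}.
\]
Expanding \((s+\tau r)^{2k}\) binomially shows this is \((1+\tau^2)^{-k}\) times a polynomial in \(\tau\) whose coefficients are finite pseudo-moments \(\wtE[s^{2k-j}r^j]\). These are all well defined, since each has degree \(2k\) and \(\wtE\) has degree \(2k\). Hence \(g\) is smooth. Differentiating at \(0\), the derivative of \((1+\tau^2)^{-k}\) vanishes there, so
\[
    g'(0)=2k\,\wtE[s^{2k-1}r],
\]
exactly as in \eqref{eq:x-first-derivative} with the \(t\)-factor removed.

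Since \(0\) is an interior maximum of the smooth function \(g\) on \(\R\), we get \(g'(0)=0\). Dividing by \(2k\neq0\) yields \eqref{eq:first-order-stationarity}.

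The only point needing care, and the nearest thing to an obstacle, is checking that the curve really remains in the feasible domain \(\Sph(W)\). This is where the hypotheses \(h\in W\) and \(h\perp u\) are used, and they also make the normalization \(\norm{u+\tau h}_2^2=1+\tau^2\) exact. Note that no sphere or subspace constraints on \(x\) are needed: the argument uses only linearity of \(\wtE\) on degree-\(2k\) polynomials.
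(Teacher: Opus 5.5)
Your proof is correct and is essentially the paper's argument. The only difference is that the paper keeps $h$ unnormalized and uses the curve $(u+th)/\sqrt{1+t^2\norm{h}_2^2}$, whereas you first rescale $h$ to unit length; the rest (the curve stays in $\Sph(W)$, the normalization factor has zero derivative at $0$, and $g'(0)=2k\,\wtE[s^{2k-1}\langle x,h\rangle]=0$) is identical.
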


\begin{proof}
The claim is trivial for \(h=0\), so fix \(h\in W\cap u^\perp\). Let
\(r:=\langle x,h\rangle\), and define
\[
    g(t):=
    \Phi_k\!\left(
        \frac{u+th}{\sqrt{1+t^2\norm{h}_2^2}}
    \right)
    =
    \wtE\!\left[
        \left(
        \frac{s+tr}{\sqrt{1+t^2\norm{h}_2^2}}
        \right)^{2k}
    \right].
\]
The curve inside \(\Phi_k\) lies in \(\Sph(W)\), and it passes through \(u\)
at \(t=0\). Since \(u\) is a global maximizer, \(g\) is stationary at
\(t=0\), so \(g'(0)=0\). Writing
\[
    y(t):=\frac{s+tr}{\sqrt{1+t^2\norm{h}_2^2}},
\]
we have \(y(0)=s\) and \(y'(0)=r\), because the derivative of
\((1+t^2\norm{h}_2^2)^{-1/2}\) vanishes at \(t=0\). Therefore
\[
    0=g'(0)
    =2k\,\wtE\!\left[y(0)^{2k-1}y'(0)\right]
    =2k\,\wtE\!\left[s^{2k-1}r\right].
\]
Dividing by \(2k\) proves \eqref{eq:first-order-stationarity}.
\end{proof}

\begin{lemma}[Positivity of the maximum and the reweighing]
\label{lem:positive-projective-moment}
Let \(u\in \Sph(W)\) be a global maximizer of \(\Phi_k\), and let \(s\) and
\(w\) be as in \eqref{eq:weighted-decomposition}. Then
\begin{equation}
\label{eq:positive-projective-moment}
    \wtE w(x)\ge\Phi_k(u)=\wtE s^{2k}>0.
\end{equation}
\end{lemma}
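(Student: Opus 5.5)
The plan mirrors the proof of Lemma~\ref{lem:positive-max}, specialized to one sphere and to the subspace $W$. There are two inequalities to prove: $\Phi_k(u)>0$, and $\wtE w\ge \wtE s^{2k}$.

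\textbf{Positivity of the maximum.} Let $\sigma_W$ be the uniform probability measure on $\Sph(W)$, and let $r=\dim W\ge 1$. For fixed $x$, rotation invariance within $W$ together with Lemma~\ref{lem:degree-d-spherical-moments} gives the polynomial identity
\[
\int_{\Sph(W)}\langle x,v\rangle^{2k}\,d\sigma_W(v)=c_{r,k}\,\norm{Px}_2^{2k},
\qquad c_{r,k}>0.
\]
When $r=1$ this is trivial: $\Sph(W)=\{\pm u_0\}$ and $c_{1,k}=1$. Applying $\wtE$ and using linearity (the integral is a finite combination of monomials), we get
\[
\int\Phi_k\,d\sigma_W=c_{r,k}\,\wtE\norm{Px}_2^{2k}.
\]
The subspace constraint \eqref{eq:subspace-constraint} lets us replace $Px$ by $x$ inside $\wtE$. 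Concretely, $\norm{Px}_2^{2k}-\norm{x}_2^{2k}$ is a combination of multiples of the coordinates $((I-P)x)_i$ of degree at most $2k$, which \eqref{eq:subspace-constraint} annihilates. The sphere constraint \eqref{eq:sphere-constraint} then gives $\wtE\norm{x}_2^{2k}=1$, by peeling off factors $(\norm{x}_2^2-1)$ one at a time, each multiplied by a polynomial of degree at most $2k-2$. Hence the average of $\Phi_k$ over $\Sph(W)$ equals $c_{r,k}>0$, so its maximum is positive and $\Phi_k(u)=\wtE s^{2k}>0$.

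\textbf{Comparison $\wtE w\ge \wtE s^{2k}$.} We write
\[
w-s^{2k}=s^{2k-4}(1-s^2)
\]
and use the constraints to rewrite $1-s^2$. Modulo the sphere constraint, $1-s^2\equiv \norm{x}_2^2-\langle x,u\rangle^2=\norm{(I-uu^\top)x}_2^2$, which is a sum of squares (Fact~\ref{psd-sos}, since $I-uu^\top\succeq 0$). Therefore
\[
w-s^{2k}\equiv \sum_j\bigl(s^{k-2}\,((I-uu^\top)x)_j\bigr)^2,
\]
a sum of squares of polynomials of degree $k-1\le k$. The replacement costs only $\wtE[s^{2k-4}(\norm{x}_2^2-1)]=0$, which holds by \eqref{eq:sphere-constraint} because $\deg s^{2k-4}=2k-4\le 2k-2$. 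Positivity of $\wtE$ on squares of degree-$\le k$ polynomials then yields $\wtE w-\wtE s^{2k}\ge 0$. Alternatively, one may use $\norm{z}_2^2$ with $z=Qx$ via the subspace constraint, but $(I-uu^\top)x$ avoids needing that constraint.

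\textbf{Main obstacle.} The only delicate point is degree bookkeeping. Since $k\ge 2$, $w=s^{2k-4}$ is a square, namely $(s^{k-2})^2$. Every identity used stays within degree $2k$, so all constraints are applied in their allowed ranges. The spherical-average identity on the subspace $\Sph(W)$ also has to be stated carefully; a rotation within $W$ reduces it to the standard sphere case.
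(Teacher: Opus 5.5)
Your positivity argument is the same as the paper's: average $\Phi_k$ over $\Sph(W)$, use the subspace constraint to replace $\wtE\norm{Px}_2^{2k}$ by $\wtE\norm{x}_2^{2k}$, then use the sphere constraint to get $1$. That part is fine.

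The comparison step, however, rests on a false identity. Since $w=s^{2k-4}$,
\[
w-s^{2k}=s^{2k-4}-s^{2k}=s^{2k-4}(1-s^4),
\]
and not $s^{2k-4}(1-s^2)=s^{2k-4}-s^{2k-2}$. It looks carried over from the BSS weight $s^{2k-2}t^{2k-2}$, where the exponent gap was $2$; here it is $4$. As written, your certificate only proves $\wtE s^{2k-4}\ge\wtE s^{2k-2}$, which is not the claimed inequality.

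The repair is short. Factor $1-s^4=(1-s^2)(1+s^2)$ and set $\ell=(I-uu^\top)x$. Then, modulo the sphere constraint,
\[
w-s^{2k}\equiv\sum_j\bigl(s^{k-2}\ell_j\bigr)^2+\sum_j\bigl(s^{k-1}\ell_j\bigr)^2 ,
\]
and the multiplier of $\norm{x}_2^2-1$ is $s^{2k-4}(1+s^2)$, of degree exactly $2k-2$. This is precisely the paper's certificate. Equivalently, you can run your argument twice to get the chain $\wtE s^{2k-4}\ge\wtE s^{2k-2}\ge\wtE s^{2k}$; the second step uses the squares $(s^{k-1}\ell_j)^2$ and the sphere multiplier $s^{2k-2}$.

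Either way, the degree budget is used right up to its edge: squares of degree-$k$ polynomials and a sphere multiplier of degree $2k-2$. So your bookkeeping remark that only degree-$(k-1)$ polynomials need to be squared is also incorrect.
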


The proof of the above lemma is deferred to~\cref{sec:appendix-pos}. It justifies our reweighing
\begin{equation}
\label{eq:reweighted-functional}
    \wtE_{w}[g]
    :=
    \frac{\wtE[w(x)g(x)]}{\wtE w(x)}
\end{equation}
%\pnote{I don't really mind denoting reweighing by superscript. My impression is that subscript is more common. If we need subscript for something else, we can stick to superscript; otherwise subscript? check...}
is well defined for every polynomial \(g\) of degree at most four.
% Under this reweighing, the next lemma says that every tangent direction
% \(h\in W\cap u^\perp\) is less visible, the fourth moment in direction \(h\) is at most
% \(9/(k-1)^2\) times the fourth moment in the maximizing direction \(u\).

\paragraph{Argmax-induced concentration.}
The new ingredient for the projector \(2\!\to\!4\) rounding analysis is a fourth-moment (pseudo) concentration estimate.
After reweighing around a maximizer \(u\), for every tangent direction \(h\in W\cap u^\perp\), the fourth moment in direction \(h\) is at most
\(9/(k-1)^2\) times the fourth moment in the maximizing direction \(u\).
This is the one-sphere analogue of the spectral concentration around the top singular-vector pair used in the previous section.

% Under this reweighing, the reweighed pseudo-expectation ``concentrates'' around the maximizing direction \(u\).  The next lemma gives the fourth-moment
% instance of this concentration: for every tangent direction
% \(h\in W\cap u^\perp\), the fourth moment in direction \(h\) is at most
% \(9/(k-1)^2\) times the fourth moment in the maximizing direction \(u\).

\begin{lemma}[Argmax-induced concentration]
\label{lem:fourth-order-tangent}
Let \(u\in \Sph(W)\) be a global maximizer of \(\Phi_k\), and let \(s\) and
\(w\) be as in \eqref{eq:weighted-decomposition}. Then, for every
\(h\in W\cap u^\perp\),
\begin{equation}
\label{eq:fourth-order-tangent-bound}
    \wtE_w\!\left[\langle x,h\rangle^4\right]
    \le
    \frac{9}{(k-1)^2}
    \wtE_w\!\left[\langle x,u\rangle^4\right]
    \norm{h}_2^4.
\end{equation}
\end{lemma}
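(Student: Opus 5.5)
The plan is to use global maximality of \(u\) along the great circle through \(u\) and \(h\), and to average the two opposite directions on that circle so that every surviving term is a pseudo-square. By homogeneity of both sides of \eqref{eq:fourth-order-tangent-bound} in \(h\), assume \(\norm{h}_2=1\) (the case \(h=0\) is trivial). Put \(r:=\langle x,h\rangle\). Since the normalizer \(\wtE w\) cancels, the claim is equivalent to
\[
    \wtE\bigl[s^{2k-4}r^4\bigr]\le \frac{9}{(k-1)^2}\,\wtE\bigl[s^{2k}\bigr].
\]

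First I would consider the curve \(u_\theta:=\cos\theta\,u+\sin\theta\,h\), which lies in \(\Sph(W)\) because \(h\in W\cap u^\perp\) is a unit vector. Global maximality gives \(\Phi_k(u)\ge\tfrac12\bigl(\Phi_k(u_\theta)+\Phi_k(u_{-\theta})\bigr)\). Expanding \(\langle x,u_{\pm\theta}\rangle^{2k}=(s\cos\theta\pm r\sin\theta)^{2k}\) binomially, the average kills all odd powers of \(r\). This leaves
\[
    \sum_{j\ \mathrm{even}}\binom{2k}{j}\cos^{2k-j}\theta\,\sin^{j}\theta\;\wtE\bigl[s^{2k-j}r^{j}\bigr].
\]
Each monomial \(s^{2k-j}r^j\) with \(j\) even is the square \((s^{k-j/2}r^{j/2})^2\) of a polynomial of degree \(k\). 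Hence its pseudo-expectation is nonnegative at degree \(2k\), and this is the key point. I would drop every term except \(j=0\) and \(j=4\), and put \(t:=\sin^2\theta\). This yields
\[
    \bigl(1-(1-t)^k\bigr)\,\wtE s^{2k}\;\ge\;\binom{2k}{4}(1-t)^{k-2}t^2\,\wtE\bigl[s^{2k-4}r^4\bigr].
\]

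Next, use \(1-(1-t)^k\le kt\) and divide by \(t\), which gives
\[
    \wtE\bigl[s^{2k-4}r^4\bigr]\le \frac{k}{\binom{2k}{4}\,t\,(1-t)^{k-2}}\,\wtE s^{2k}.
\]
Then choose \(t\) of order \(1/k\), for instance \(t=1/(k-1)\), so that \((1-t)^{k-2}\ge e^{-1}\) up to a harmless factor. Since \(\binom{2k}{4}=\tfrac{2k(2k-1)(2k-2)(2k-3)}{24}\) is of order \(k^4\), the right-hand side becomes \(O(1/k^2)\,\wtE s^{2k}\).

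The only real work is the constant bookkeeping needed to reach exactly \(9/(k-1)^2\) for all \(k\ge2\). I expect this to be a routine but fiddly step. I would check small \(k\) (say \(k=2,3\)) directly, where \(t\) may have to be taken as large as \(1\). For large \(k\) I would use \((1-1/(k-1))^{k-2}\ge e^{-1}\) together with the product bound on \(\binom{2k}{4}\). If the \(j=4\) term alone is too weak at small \(k\), I would also retain the \(j=2\) term, or optimize \(t\) exactly instead of fixing \(t=1/(k-1)\).

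A degree check confirms that every pseudo-expectation used is of degree exactly \(2k\), so the argument is valid for \(\wtE\in\mathcal F_{2k}(P,\tau)\). No use of the subspace or threshold constraints is needed beyond \(h\in W\).
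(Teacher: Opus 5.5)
Your proposal is correct and is essentially the paper's argument. The paper's curve $(u\pm th)/\sqrt{1+t^2}$ is your $u_{\pm\theta}$ with $t=\tan\theta$, and it likewise symmetrizes over $\pm$, uses nonnegativity of the even pseudo-moments, and keeps only the $r^4$ term (it also discards the $s^{2k}$ term on the left, takes $t^2=1/k$, and uses $(1+1/k)^k<3$ and $\binom{2k}{4}\ge k^2(k-1)^2/3$, with $k=2$ giving exactly $9$). The bookkeeping you deferred does close with your choice $\sin^2\theta=1/(k-1)$: the bound becomes $\frac{6}{(2k-1)(2k-3)}\bigl(1+\frac{1}{k-2}\bigr)^{k-2}<\frac{6e}{4(k-1)^2-1}\le\frac{9}{(k-1)^2}$ for $k\ge3$, and equals $2\le 9$ at $k=2$.
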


\begin{proof}
The assertion is trivial for \(h=0\), and homogeneity reduces the proof
to \(\norm{h}_2=1\). Set
\(
    r:=\langle x,h\rangle.
\)
Since \(h\perp u\), for every \(t\in\R\) the vectors
\[
    v_\pm(t):=\frac{u\pm th}{\sqrt{1+t^2}}
\]
belong to \(\Sph(W)\). Global maximality of \(u\) therefore gives
\begin{equation}
\label{eq:global-maximality-perturbation}
    \wtE(s\pm tr)^{2k}
    \le
    \wtE s^{2k}(1+t^2)^k.
\end{equation}
Adding the two inequalities in
\eqref{eq:global-maximality-perturbation}, dividing by two, and expanding
the even powers gives
\begin{equation}
\label{eq:even-binomial-expansion}
    \sum_{j=0}^k
    \binom{2k}{2j}t^{2j}
    \wtE\!\left[s^{2k-2j}r^{2j}\right]
    \le
    \wtE s^{2k}(1+t^2)^k.
\end{equation}
Every pseudo-moment on the left-hand side is nonnegative since \( \wtE s^{2k-2j}r^{2j}=\wtE(s^{k-j}r^j)^2 \geq 0\). Retaining only the term \(j=2\) in
\eqref{eq:even-binomial-expansion} and choosing \(t^2=1/k\) yields
\begin{equation}
\label{eq:raw-fourth-order-tangent-bound}
    \wtE\!\left[w(x)r^4\right]
    \le
    \frac{k^2(1+1/k)^k}{\binom{2k}{4}}
    \wtE s^{2k}.
\end{equation}
For \(k=2\), the coefficient on the right-hand side equals \(9\). If
\(k\ge3\), then \((1+1/k)^k<3\), and
\[
    \binom{2k}{4}
    =
    \frac{k(k-1)(2k-1)(2k-3)}{6}
    \ge
    \frac{k^2(k-1)^2}{3}.
\] Hence
\[
    \frac{k^2(1+1/k)^k}{\binom{2k}{4}}
    \le
    \frac{9}{(k-1)^2}.
\]
Finally,
\[
    \wtE s^{2k}
    =
    \wtE\!\left[w(x)s^4\right]
    =
    \wtE\!\left[w(x)\langle x,u\rangle^4\right].
\]
This proves \eqref{eq:fourth-order-tangent-bound} for unit \(h\) after normalization; the
general case follows by scaling.
\end{proof}

The next theorem is the main rounding estimate used in the soundness
argument. It shows that any feasible threshold pseudo-expectation can be
rounded, to the maximizer \(u\) of the spherical directional moment polynomial, whose
\(\ell_4\) norm is within an explicit tangent-error term of
\(\tau^{1/4}\). The proof reweighs the pseudo-expectation by
\(w(x)=\langle x, u\rangle^{2k-4}\), decomposes \(x=su+z\) into the maximizing direction and its orthogonal tangent component, and then uses the first-order optimality to cancel terms. The error term involves \(\sum_i P_{ii}^2\) arising by applying the preceding tangent fourth-order concentration bound Lemma \ref{lem:fourth-order-tangent} coordinate-wise to the
tangent directions obtained by projecting the coordinate directions onto the tangent space \(W\cap u^\perp\) at $u$, and then summing the
resulting estimates.
\begin{theorem}[Argmax rounding for projective $2\to4$ norm]
\label{thm:sharp-one-shot-rounding}
Assume that \(\tau>0\) and
\(\wtE\in\mathcal{F}_{2k}(P,\tau)\). Let
\[
    u
    \in
    \operatorname*{arg\,max}_{v\in \Sph(W)}\Phi_k(v).
\]
Then the following bound holds with the universal constant \(C_1=24\):
\begin{equation}
\label{eq:sharp-one-shot-bound}
    \tau^{1/4}
    \le
    \norm{u}_4
    +
    C_1
    \min\left\{
        \left(
            \frac{9}{(k-1)^2}\sum_{i=1}^nP_{ii}^2
        \right)^{1/4}\!,~~
        \tau^{-1/4}\left(\frac{9}{(k-1)^2}\sum_{i=1}^nP_{ii}^2
            \right)^{1/2}
    \right\}.
\end{equation}
\end{theorem}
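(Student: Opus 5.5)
We work throughout with the reweighed functional $\wtE_w$ from \eqref{eq:reweighted-functional}, which is a degree-$4$ pseudo-expectation that still satisfies the sphere and subspace constraints. The strategy is to bound $\tau$ from above by $\wtE_w f(x)$, expand $f(x)=\sum_i(su_i+z_i)^4$ coordinatewise, and show that only the $s^4u_i^4$ terms survive up to small error.

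\textbf{Step 1: transfer the threshold to $\wtE_w$.} Since $w=(s^{k-2})^2$ and $\deg s^{k-2}=k-2$, the threshold constraint \eqref{eq:threshold-constraint} applied with $q=s^{k-2}$ gives $\wtE[w f]\ge\tau\,\wtE w$, that is,
\[
\wtE_w f(x)\ge\tau.
\]

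\textbf{Step 2: control the tangent fourth moment.} Write $z_i=\langle x,Qe_i\rangle$ with $Qe_i\in W\cap u^\perp$. Lemma~\ref{lem:fourth-order-tangent} together with $\|Qe_i\|_2^2\le P_{ii}$ yields
\[
\wtE_w\|z\|_4^4\le\delta\,A,\qquad \delta:=\frac{9}{(k-1)^2}\sum_i P_{ii}^2,\quad A:=\wtE_w s^4 .
\]
We also need $A\le1$. This follows because $s^2\le\|x\|^2=1$ holds as an SoS inequality modulo the constraints, so $A\le\wtE_w s^2\le1$.

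\textbf{Step 3: expand and bound each term.} Expanding $(su_i+z_i)^4$, the main term is $\wtE_w s^4\sum_iu_i^4=A\|u\|_4^4$. The remaining cross terms $4s^3u_i^3z_i$, $6s^2u_i^2z_i^2$, $4su_iz_i^3$ and $z_i^4$ we bound by pseudo-H\"older/Cauchy--Schwarz, in the form $\wtE_w[s^au_i^az_i^{4-a}]$ summed over $i$. This gives
\[
\tau\le\wtE_w\sum_i(su_i+z_i)^4 \le \bigl((A\|u\|_4^4)^{1/4}+(\delta A)^{1/4}\bigr)^4,
\]
which is essentially the Minkowski inequality for the pseudo-$L^4$ norm of $\ell_4$-valued polynomials. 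Proving this carefully is the main obstacle, since Minkowski is not automatic for pseudo-expectations. I would justify it term by term using the scalar pseudo-H\"older bound $\wtE[a^jb^{4-j}]\le(\wtE a^4)^{j/4}(\wtE b^4)^{1-j/4}$, derived from repeated Lemma~\ref{pseudo-Cauchy–Schwarz}, applied to $a=su_i$ and $b=z_i$, followed by discrete H\"older over $i$.

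The cubic cross term needs separate attention, because $\wtE_w[s^3z_i]$ need not vanish after reweighing. One could try first-order stationarity, Lemma~\ref{lem:first-order-stationarity}, to simplify it, but it applies to $s^{2k-1}$, that is, it gives $\wtE_w[s^3 z_i]=0$ exactly. This is precisely our term, since $w s^3=s^{2k-1}$. So the linear term vanishes, and this is what produces the second, sharper branch of the minimum.

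\textbf{Step 4: conclude both branches.} Using $A\le1$, the Minkowski bound gives the first branch:
\[
\tau^{1/4}\le\|u\|_4+\delta^{1/4}.
\]
For the second branch, once the linear term is killed the remaining error is at most
\[
6\sqrt{A\|u\|_4^4\cdot\delta A}+4(A\|u\|_4^4)^{1/4}(\delta A)^{3/4}+\delta A,
\]
which is quadratic in $\delta^{1/4}$. From this we get $\tau\le\|u\|_4^4+O(\sqrt{\delta}\,\|u\|_4^2+\delta)$. Then we take fourth roots via $\tau^{1/4}-\|u\|_4\le(\tau-\|u\|_4^4)/\tau^{3/4}$ and split into the cases $\|u\|_4^4\gtrless\tau/2$; the case $\|u\|_4^4<\tau/2$ forces $\delta\gtrsim\tau$, where the first branch already suffices. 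This yields the $\tau^{-1/4}\delta^{1/2}$ branch with a constant such as $C_1=24$.
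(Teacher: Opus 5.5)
Your proof is correct, but the decoupling step differs from the paper's. The shared ingredients are the same: reweighing by $s^{2k-4}$, the tangent bound $\wtE_w\norm{z}_4^4\le\delta A$ from Lemma~\ref{lem:fourth-order-tangent}, stationarity to kill the $s^3z_i$ term, and $A=\wtE_w s^4\le 1$.

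\textbf{The paper's route.} It never expands $(su_i+z_i)^4$ in full. It applies the single quartic SoS certificate $(a+b)^4\le(1+\eta)a^4+4a^3b+C_\eta b^4$ (Lemma~\ref{lem:centered-quartic-sos}) coordinatewise, so $\wtE_w$ is used only linearly. It kills the linear term once, via $h=Qu^{\odot 3}$. This gives the internal estimate $\tau\le(1+\eta)\norm{u}_4^4+C_\eta\delta$, and the choice $\eta=\delta^{1/2}/\tau^{1/2}$ yields $C_1=24$.

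\textbf{Your route.} You expand all binomial terms and bound the mixed moments by degree-$4$ pseudo-H\"older, obtained from two pseudo-Cauchy--Schwarz steps (for instance $|\wtE_w[sz_i^3]|\le(\wtE_w[s^2z_i^2])^{1/2}(\wtE_w z_i^4)^{1/2}$). You then apply discrete H\"older over $i$, which amounts to a pseudo-Minkowski inequality. Combined with per-coordinate stationarity $\wtE_w[s^3z_i]=0$ (valid since $Qe_i\in W\cap u^\perp$) and $A\le1$, this gives $\tau\le a^4+6a^2b^2+4ab^3+b^4$, where $a=\norm{u}_4$ and $b=\delta^{1/4}$.

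\textbf{What each buys.} Your route needs no parameter tuning. It gives the first branch with constant $1$ even without stationarity. For $a,b<\tau^{1/4}$ it gives directly $\tau-a^4\le 11\tau^{1/2}b^2$, hence $C_1=11$. Your split on $\norm{u}_4^4\gtrless\tau/2$ is therefore unnecessary, though harmless. The paper's route produces a linear, certificate-based inequality that slots directly into the finite-search analysis (Lemma~\ref{lem:approx-coordinate-rounding}), where the linear term vanishes only approximately and contributes the additive $4\Delta_\alpha$. It also mirrors the binary-form SoS lemma used for $p\to q$ norms.

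\textbf{Cleanup.} Delete your sentence saying $\wtE_w[s^3z_i]$ ``need not vanish''; it contradicts what you correctly show immediately afterward.
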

\begin{proof}
For the pseudo-random vector \(x\), decompose it into its component along
\(u\) and its tangent component:
\[
    s:=\langle x,u\rangle,\qquad
    Q:=P-uu^\top,\qquad
    z:=Qx=x-su.
\]
Reweigh $\wtE$ by $w(x)=\langle u,x\rangle^{2k-4}$, which is well-defined by Lemma~\ref{lem:positive-projective-moment}, obtaining a degree-4 pseudo-expectation $\wtE_w$. Note further by the same Lemma~\ref{lem:positive-projective-moment},
\[
    \wtE_w s^4 = \frac{\wtE s^{2k}}{\wtE w(x)} \le 1.
\]
To lower bound $f(u)=\|u\|_4^4$, our proof starts with the following quartic SoS inequality (see Lemma~\ref{lem:centered-quartic-sos}),
\begin{equation}
    (a+b)^4
    \le
    (1+\eta)a^4+4a^3b+C_\eta b^4,
\end{equation}
for any $\eta>0$, and a constant $C_\eta$ that depends on $\eta$.
Apply the above inequality coordinate-wise to
\(x_i=su_i+z_i\), summing over \(i\), and taking \(\wtE_w\), we get
\begin{align}
    \tau
    &\le \wtE_w f(x)
    \nonumber\\
    &\le
    (1+\eta)\wtE_w s^4 f(u) + 4\wtE_w\left[ s^3\sum_{i=1}^n u_i^3z_i \right]+ C_\eta \wtE_w f(z),
   % \label{eq:x-4-norm-decomposition}
    \nonumber\\
    &\le (1+\eta)f(u) + 4\wtE_w\left[ s^3\sum_{i=1}^n u_i^3z_i \right] + C_\eta  \wtE_w f(z)
    \nonumber
    \\
    &\le (1+\eta)f(u)
    + C_\eta  \wtE_w f(z)
    \nonumber\\
    &\le (1+\eta)f(u)
    + C_\eta \frac{9}{(k-1)^2} \sum_{i=1}^n P_{ii}^2
    \label{eq:f-u-bound}
\end{align}
where the first inequality holds due to our SoS constraint~\eqref{eq:threshold-constraint}; the third inequality follows from $\wtE_w s^4 \le 1$. We now elaborate the last two nontrivial steps.
First, we claim the term in the middle vanishes completely in the fourth step, using the first-order stationarity of the maximizer \(u\).  Let
\(u^{\odot3}:=(u_1^3,\ldots,u_n^3)\), and put
\(h=Qu^{\odot3}\in W\cap u^\perp\).  Since \(Q\) is symmetric and \(z=Qx\),
\[
    \langle x,h\rangle
    =
    \sum_{i=1}^n u_i^3 z_i.
\]
Therefore Lemma~\ref{lem:first-order-stationarity} gives
\[
    \wtE_w\left[
        s^3\sum_{i=1}^n u_i^3z_i
    \right]
    =
    \frac{1}{\wtE w(x)}
    \wtE\left[
        s^{2k-1}\sum_{i=1}^n u_i^3z_i
    \right]
    =0.
\]
Finally, the last step is due to Lemma~\ref{lem:fourth-order-tangent}. Concretely, since \(u\in W\) and \(\norm{u}_2=1\), the matrix \(Q\) is the orthogonal
projector onto \(W\cap u^\perp\).  For each \(i\in[n]\), set
\(h_i:=Qe_i\).  Then \(h_i\in W\cap u^\perp\),
\[
    z_i=\langle x,h_i\rangle,
    \qquad
    \norm{h_i}_2^2=Q_{ii}=P_{ii}-u_i^2.
\]
Applying Lemma~\ref{lem:fourth-order-tangent} to each \(h_i\), summing over
\(i\), and using \(0\le P_{ii}-u_i^2\le P_{ii}\), gives
\begin{equation*}
%\label{eq:sharp-reweighted-tangent-bound}
    \wtE_w f(z) = \wtE_w \norm{z}_4^4
    \le
    \frac{9\,\wtE_w \langle x,u\rangle ^4}{(k-1)^2} \sum_{i=1}^n P_{ii}^2.
\end{equation*}
Together with \(\wtE_w s^4\le1\), this gives the fifth inequality in
\eqref{eq:f-u-bound}.
That completes the conceptual part of our proof. To conclude, we get that for every $\eta>0$,
\begin{equation}
\label{eq:sharp-internal-estimate}
    \tau
    \le
    (1+\eta)\norm u _4^4
    +
    C_\eta
    \frac{9}{(k-1)^2}\sum_{i=1}^n P_{ii}^2.
\end{equation}

The remaining work is to optimize over the parameter $\eta$.
Abbreviate
\[
    a:=\norm{u}_4,
    \qquad
    b:=
    \left(
        \frac{9}{(k-1)^2}\sum_{i=1}^nP_{ii}^2
    \right)^{1/4}.
\]
Since $P$ is a non-zero and $P\succeq 0$, we have $b>0$.  If
\(b\ge\tau^{1/4}\), then
\(b^2/\tau^{1/4}\ge b\), so the minimum in
\eqref{eq:sharp-one-shot-bound} is equal to \(b\). Since \(C_1=24\ge1\),
\eqref{eq:sharp-one-shot-bound} follows in this case.
It remains to consider \(0<b<\tau^{1/4}\). If \(a\ge\tau^{1/4}\), then
\eqref{eq:sharp-one-shot-bound} is immediate. Hence assume
\(a<\tau^{1/4}\), and choose
\[
    \eta:=\frac{b^2}{\tau^{1/2}}\in(0,1).
\]
Using the explicit form of \(C_\eta\), together with \(a<\tau^{1/4}\)
and \(b<\tau^{1/4}\), gives
\[
\begin{aligned}
    \eta a^4+C_\eta b^4
    &=\eta a^4+
      \left(1+\frac{18}{\eta}+3\left(\frac{2}{\eta}\right)^{1/3}\right)b^4 \\
    &\le
    \left(1+1+18+3\cdot2^{1/3}\right)b^2\tau^{1/2}
    \le
    24\,b^2\tau^{1/2}.
\end{aligned}
\]
Thus \eqref{eq:sharp-internal-estimate} gives
\begin{equation*}
    (\tau^{1/4}-a)\tau^{3/4} \le \tau-a^4 \le 24b^2\tau^{1/2},
\end{equation*}
where the first inequality uses \(a<\tau^{1/4}\). Hence
\[
    \tau^{1/4}-a
    \le
    24\,\frac{b^2}{\tau^{1/4}}.
\]
Since \(0<b<\tau^{1/4}\), this is exactly the desired bound in the
case where the minimum equals \(b^2/\tau^{1/4}\).
% Hence \eqref{eq:sharp-internal-estimate} implies
% \[
%     \tau\le a^4+24\,b^2\tau^{1/2}.
% \]
% Since \(a<\tau^{1/4}\),
% \[
%     \tau-a^4
%     \ge
%     (\tau^{1/4}-a)\tau^{3/4}.
% \]
% Therefore
% \[
%     \tau^{1/4}-a
%     \le
%     24\,\frac{b^2}{\tau^{1/4}}.
% \]
% Since \(0<b<\tau^{1/4}\), one has
% \[
%     \min\left\{b,\frac{b^2}{\tau^{1/4}}\right\}
%     =
%     \frac{b^2}{\tau^{1/4}}.
% \]
% This proves \eqref{eq:sharp-one-shot-bound} with \(C_1=24\).
\end{proof}
The preceding theorem immediately gives the following rounding corollary: when the degree is sufficiently large, every feasible threshold pseudo-expectation yields an actual unit vector in \(W\) whose $\ell_4^4$ value is within a factor \(1-\varepsilon\) of the threshold.
\begin{corollary}[Rounding]
\label{cor:sharp-ideal-rounding}
There exists a universal constant \(C_0>0\) such that the following holds.
Fix \(\tau>0\), a pseudo-expectation
\(\wtE\in\mathcal{F}_{2k}(P,\tau)\), and \(0<\varepsilon<1\). If
\[
    k-1\ge C_0\frac{\sqrt n}{\varepsilon},
\]
then there exists a unit vector \(v\in W\) satisfying
\begin{equation}
\label{eq:sharp-ideal-vector-guarantee}
    \norm{v}_4^4\ge(1-\varepsilon)\tau.
\end{equation}
More explicitly, if there exists an index \(i\) such that
\(P_{ii}^2\ge(1-\varepsilon)\tau\), one may take
\(v=Pe_i/\sqrt{P_{ii}}\). If no such index exists, one may take
\(v\in
    \operatorname*{arg\,max}_{v\in \Sph(W)}\Phi_k(v).\)

\end{corollary}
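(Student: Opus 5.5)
The plan is to split on the diagonal entries of \(P\), exactly as the statement suggests. In the first case there is an index \(i\) with \(P_{ii}^2\ge(1-\varepsilon)\tau\). Then \(P_{ii}>0\), and we set \(v=Pe_i/\sqrt{P_{ii}}\). This vector lies in \(W\) and is a unit vector, since \(\norm{Pe_i}_2^2=\langle e_i,Pe_i\rangle=P_{ii}\). Its \(i\)-th coordinate is \(P_{ii}/\sqrt{P_{ii}}=\sqrt{P_{ii}}\). Hence \(\norm{v}_4^4\ge v_i^4=P_{ii}^2\ge(1-\varepsilon)\tau\), which settles this case.

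In the complementary case every index satisfies \(P_{ii}^2<(1-\varepsilon)\tau\le\tau\), so \(\sum_i P_{ii}^2\le n\tau\). We take \(u\) to be a global maximizer of \(\Phi_k\) on the compact set \(\Sph(W)\); it exists and is a unit vector in \(W\). Applying Theorem~\ref{thm:sharp-one-shot-rounding} with the second branch of the minimum gives
\[
\tau^{1/4}\le \norm{u}_4+C_1\tau^{-1/4}\Bigl(\frac{9}{(k-1)^2}\,n\tau\Bigr)^{1/2}
=\norm{u}_4+\frac{3C_1\sqrt n}{k-1}\,\tau^{1/4}.
\]
Under the degree hypothesis \(k-1\ge C_0\sqrt n/\varepsilon\), this becomes \(\norm{u}_4\ge(1-3C_1\varepsilon/C_0)\tau^{1/4}\).

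Raising this to the fourth power gives \(\norm{u}_4^4\ge(1-3C_1\varepsilon/C_0)^4\tau\). We then fix \(C_0\) large enough that this is at least \((1-\varepsilon)\tau\). Bernoulli's inequality \((1-\delta)^4\ge1-4\delta\) shows that \(C_0=12C_1=288\) suffices; this choice also keeps \(3C_1\varepsilon/C_0=\varepsilon/4<1\), so the fourth power is taken of a positive quantity.

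The only point needing care is the bookkeeping in this second case. We must use the \(\tau^{-1/4}(\cdot)^{1/2}\) branch of the minimum, because that branch produces the linear dependence \(\sqrt n/k\). We must also check that the case split makes the bound \(\sum_iP_{ii}^2\le n\tau\) legitimate. Everything else reduces to the earlier theorem.
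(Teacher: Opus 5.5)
Your proposal is correct and follows the paper's proof essentially step for step. It uses the same case split on the diagonal entries, the bound $\sum_i P_{ii}^2 < n\tau$, the second branch of the minimum in Theorem~\ref{thm:sharp-one-shot-rounding} to get $\norm{u}_4 \ge (1-\varepsilon/4)\tau^{1/4}$ with $C_0 = 12C_1$, and the estimate $(1-\varepsilon/4)^4\ge 1-\varepsilon$. The only difference is cosmetic: the paper also checks $B<\tau$ before applying the theorem, and that check is unnecessary because, as you note, the minimum is at most its second branch.
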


\begin{proof}
Let \(C_1\) be the universal constant from
Theorem~\ref{thm:sharp-one-shot-rounding}, and choose
\(C_0\ge\max\{12C_1,3\}\). In the first case \(P_{ii}^2\ge(1-\varepsilon)\tau\), the vector
\(Pe_i/\sqrt{P_{ii}}\) lies in \(W\) and has unit Euclidean norm, and its
\(i\)-th coordinate is \(\sqrt{P_{ii}}\). Hence $\norm {P e_i / \sqrt P_{ii}}_4^4$ is at
least \(P_{ii}^2\ge(1-\varepsilon)\tau\).

Otherwise, let
\[
    u\in\operatorname*{arg\,max}_{v\in \Sph(W)}\Phi_k(v)
\]
be the vector specified in the statement. Then
\[
    \sum_{i=1}^nP_{ii}^2
    \le
    n\max_iP_{ii}^2
    <
    n\tau.
\]
Put
\[
    B:=
    \frac{9}{(k-1)^2}\sum_{i=1}^nP_{ii}^2.
\]
The degree assumption gives
\[
    \frac{3\sqrt n}{k-1}
    \le
    \min\left\{1,\frac{\varepsilon}{4C_1}\right\}.
\]
Consequently \(B<\tau\), and
\[
    \frac{B^{1/2}}{\tau^{1/4}}
    \le
    \frac{3\sqrt n}{k-1}\tau^{1/4}
    \le
    \frac{\varepsilon}{4C_1}\tau^{1/4}.
\]
Theorem~\ref{thm:sharp-one-shot-rounding} therefore gives
\[
    \norm{u}_4
    \ge
    \left(1-\frac{\varepsilon}{4}\right)\tau^{1/4}.
\]
Since \((1-\varepsilon/4)^4\ge1-\varepsilon\) for
\(0<\varepsilon<1\), this proves
\[
    \norm{u}_4^4\ge(1-\varepsilon)\tau.\qedhere
\]
\end{proof}

\subsection{Proof of the approximation guarantee}
\label{subsec:proof-sos-main}
We now complete the proof of the main approximation guarantee. The argument sandwiches the degree-\(2k\) SoS value between \(\OPT(P)\) and \((1-\varepsilon)^{-1}\OPT(P)\), and then obtain the relative bound for \(\widehat{\tau}^{1/4}\).
\begin{proof}[Proof of Theorem~\ref{thm:sos-main}]
We first prove the lower bound
\(
    \OPT(P)\le\SOS_{2k}(P).
\)
Since \(\Sph(W)\) is compact, there exists
\(x^\star\in \Sph(W)\) such that
\(
    f(x^\star)=\OPT(P).
\)
Define the evaluation pseudo-expectation
\[
    \wtE_{x^\star}[p]:=p(x^\star).
\]
\(\wtE_{x^\star}\) satisfies all constraints \eqref{eq:sphere-constraint},~\eqref{eq:subspace-constraint} and \eqref{eq:threshold-constraint}.
Thus \(\OPT(P)\) is a feasible threshold, proving
\(\OPT(P)\le\SOS_{2k}(P)\).

We next prove the upper bound on the SoS value. Fix an arbitrary
feasible threshold \(\tau\in[0,1]\) and choose
\(\wtE\in\mathcal{F}_{2k}(P,\tau)\). The case \(\tau=0\) is trivial, so
assume \(\tau>0\). By Corollary~\ref{cor:sharp-ideal-rounding}, the
degree condition \eqref{eq:degree-requirement} implies that there exists
a unit vector \(v\in W\) satisfying
\[
    f(v)\ge(1-\varepsilon)\tau.
\]
Since \(v\) is a feasible vector for \(\OPT(P)\), we have
\[
    \OPT(P)\ge f(v)\ge(1-\varepsilon)\tau.
\]
Equivalently,
\[
    \tau\le\frac{1}{1-\varepsilon}\OPT(P).
\]
Taking the supremum over all feasible thresholds \(\tau\) gives
\[
    \OPT(P)
    \le
    \SOS_{2k}(P)
    \le
    \frac{1}{1-\varepsilon}\OPT(P).
\]
This proves the SoS-value bound used below.

It remains to verify the initialization and output guarantee of
Algorithm~\ref{alg:sos-2to4}. Since \(P\neq0\), there exists an index
\(i_0\) with \(P_{i_0i_0}>0\). Set
\begin{equation}
\label{eq:initial-vector}
    v_0:=\frac{Pe_{i_0}}{\sqrt{P_{i_0i_0}}}.
\end{equation}
Since \(P^2=P\), the vector \(v_0\) belongs to \(W\) and satisfies
\(\norm{v_0}_2=1\). Moreover, Cauchy--Schwarz gives
\begin{equation}
\label{eq:initial-lower-bound}
    f(v_0)
    =
    \sum_{i=1}^n (v_0)_i^4
    \ge
    \frac1n
    \left(\sum_{i=1}^n (v_0)_i^2\right)^2
    =
    \frac1n.
\end{equation}
Hence the evaluation pseudo-expectation \(\wtE_{v_0}\) belongs to
\(\mathcal{F}_{2k}(P,1/n)\).

The sphere constraint also implies \(\wtE f(x)\le1\). Indeed,
\begin{equation}
\label{eq:fourth-moment-upper-certificate}
    1-f(x)
    =
    \bigl(1-\norm{x}_2^2\bigr)\bigl(1+\norm{x}_2^2\bigr)
    +2\sum_{1\le i<j\le n}x_i^2x_j^2.
\end{equation}
Combining \eqref{eq:fourth-moment-upper-certificate} with the threshold
constraint for \(q=1\) shows that every feasible threshold is at most one.
Therefore
\begin{equation}
\label{eq:sos-initial-interval}
    \frac1n\le\SOS_{2k}(P)\le1.
\end{equation}

At every iteration of Algorithm~\ref{alg:sos-2to4},
\(\tau_{\mathrm{lo}}\) is feasible and
\(\SOS_{2k}(P)\le\tau_{\mathrm{hi}}\). When the algorithm terminates,
\(\tau_{\mathrm{hi}}\le(1+\varepsilon)\tau_{\mathrm{lo}}\), and therefore
\begin{equation}
\label{eq:binary-search-guarantee}
    \wtE^{\mathrm{out}}
    \in
    \mathcal{F}_{2k}(P,\widehat{\tau}),
    \qquad
    \widehat{\tau}
    \le
    \SOS_{2k}(P)
    \le
    (1+\varepsilon)\widehat{\tau}.
\end{equation}

Combining \eqref{eq:binary-search-guarantee} with the SoS-value bound above gives
\[
    \widehat{\tau}
    \ge
    \frac{1}{1+\varepsilon}\SOS_{2k}(P)
    \ge
    \frac{1}{1+\varepsilon}\OPT(P),
\]
and
\[
    \widehat{\tau}
    \le
    \SOS_{2k}(P)
    \le
    \frac{1}{1-\varepsilon}\OPT(P).
\]
Using \(\OPT(P)=\norm{P}_{2\to4}^4\) and taking fourth roots gives
\[
    (1+\varepsilon)^{-1/4}\norm{P}_{2\to4}
    \le
    \widehat{\tau}^{1/4}
    \le
    (1-\varepsilon)^{-1/4}\norm{P}_{2\to4},
\]
which is exactly \eqref{eq:output-norm-approximation}.
\end{proof}
\subsection{Algorithmic rounding by finite search}
\label{subsec:sos-2to4-algorithmic-rounding}
The ideal rounding step in Corollary~\ref{cor:sharp-ideal-rounding}
uses a global maximizer of the spherical directional moment polynomial
\(\Phi_k(v)=\wtE\langle x,v\rangle^{2k}\).  We now replace this exact
maximization by a finite deterministic search.
Throughout this subsection, fix \(\tau>0\), \(0<\varepsilon<1\), and
\(\wtE\in\mathcal{F}_{2k}(P,\tau)\).  For a unit vector $u$ under discussion, we inherit the notations from the previous sections:
\[
    s=\langle x,u\rangle,\qquad
    w=s^{2k-4},\qquad
    Q=P-uu^\top,\qquad
    M=\Phi_k(u)=\wtE s^{2k},\qquad
    A=\wtE s^{2k-4}.
\]

% The key observation is that the~\Cref{lem:fourth-order-tangent}, \todo{why theorem, some bug to fix..} where the global maximality of $u$ is used in~\eqref{eq:global-maximality-perturbation}, is in fact used in a somewhat  conservative manner in our analysis: we don't need to apply it to an arbitrary direction $h\perp u$, but only the fixed $(P-uu^\top)e_i$ for all $e_i$; and .
% This suggests the following definition
%We thus isolate the approximate maximality property that the  search will certify.

\paragraph{Coordinate-circle maximizer.}
%The rounding analysis \cref{thm:sos-main} only uses the argmax-induced concentration (Lemma~\ref{lem:fourth-order-tangent}) through a much smaller family of two-dimensional comparisons:
Although Lemma~\ref{lem:fourth-order-tangent} is stated for every tangent
direction \(h\in W\cap u^\perp\), in the proof of the rounding theorem~\cref{thm:sos-main} it is
applied only to \((P-uu^\top)e_i\);  moreover, the binomial argument proving Lemma~\ref{lem:fourth-order-tangent} uses global maximality only along the circle spanned by \(u\) and \(h\) when comparing $u$ and the direction $h$.  Therefore, it suffices to require approximate local maximality on the circles
\(\Sph(W)\cap\operatorname{span}\{u,Pe_i\}\).  This motivates the following definition.
\begin{definition}[Coordinate-circle maximizer]
\label{def:approx-coordinate-circle-maximality}
A vector \(u\in \Sph(W)\) is a \(\rho\)-coordinate-circle maximizer of
\(\Phi_k\) if \(\Phi_k(u)>0\) and, for every \(i\in[n]\) and every unit vector
\[
    v\in \Sph(W)\cap\operatorname{span}\{u,Pe_i\},
\]
one has
\[
    \Phi_k(v)\le \rho\Phi_k(u).
\]
\end{definition}
%\todo{try to prove argmax-induced concentration here}
Adapting the proof of Lemma~\ref{lem:fourth-order-tangent} to the coordinate-circle maximizer, one obtains
\begin{proposition}[Weak concentration of $\rho$-coordinate-circle maximizer]
\label{prop:coordinate-circle-concentration}
Let \(u\in \Sph(W)\) be a \(\rho\)-coordinate-circle maximizer of \(\Phi_k\).
Then, for every \(i\in[n]\),
\[
    \wtE_w[\langle x,Qe_i\rangle^4]
    \le
    \frac{9\rho}{(k-1)^2}\,
    \wtE_w[\langle x,u\rangle^4] \cdot \|Qe_i\|_2^4.
\]
Consequently,
\begin{equation}
    \wtE_w \|Qx\|_4^4
    \le
    \frac{9\rho}{(k-1)^2}\,
    \wtE_w [\langle x,u\rangle^4] \cdot \sum_i P_{ii}^2.
\end{equation}
\end{proposition}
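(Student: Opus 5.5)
The plan is to rerun the binomial argument of Lemma~\ref{lem:fourth-order-tangent}, with global maximality replaced by the $\rho$-coordinate-circle hypothesis, and then sum over $i$ as in the proof of Theorem~\ref{thm:sharp-one-shot-rounding}.

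Fix $i\in[n]$ and set $h_i:=Qe_i$. If $h_i=0$ the claim is trivial, so assume $h_i\neq0$ and put $\hat h:=h_i/\norm{h_i}_2$. Because $u\in W$ is a unit vector, $Q$ is the orthogonal projector onto $W\cap u^\perp$. Hence $\hat h\in W\cap u^\perp$. Moreover $h_i=Pe_i-u_iu$, so $\hat h\in\operatorname{span}\{u,Pe_i\}$. It follows that for every $t\in\R$ the vectors
\[
v_\pm(t)=\frac{u\pm t\hat h}{\sqrt{1+t^2}}
\]
lie in $\Sph(W)\cap\operatorname{span}\{u,Pe_i\}$. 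Writing $r=\langle x,\hat h\rangle$, the hypothesis therefore gives
\[
\wtE(s\pm tr)^{2k}\le\rho(1+t^2)^k\,\wtE s^{2k}.
\]
Only the expansion inside $\wtE$ matters here; the identity $z=x-su$ modulo $Px=x$ is not needed.

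Next, average the $\pm$ inequalities. On the left, expand the even powers; every resulting term $\binom{2k}{2j}t^{2j}\wtE s^{2k-2j}r^{2j}$ is nonnegative, since $\wtE s^{2k-2j}r^{2j}$ is the pseudo-expectation of the square $(s^{k-j}r^j)^2$. Drop all terms except $j=2$ and set $t^2=1/k$. This yields
\[
\wtE[s^{2k-4}r^4]\le\frac{\rho\,k^2(1+1/k)^k}{\binom{2k}4}\,\wtE s^{2k}.
\]
The coefficient estimate from Lemma~\ref{lem:fourth-order-tangent} bounds the fraction, without the factor $\rho$, by $9/(k-1)^2$. Divide both sides by $\wtE w>0$; this quantity is positive because $\wtE w\ge\Phi_k(u)>0$, by the same sum-of-squares argument as Lemma~\ref{lem:positive-projective-moment}, which uses only $\Phi_k(u)>0$ and not global maximality. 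Using $\wtE s^{2k}=\wtE[w s^4]$ and rescaling by $\norm{h_i}_2^4$ then gives the first claim.

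For the consequence, note that $(Qx)_i=\langle x,Qe_i\rangle$ because $Q$ is symmetric, and that $\norm{Qe_i}_2^2=Q_{ii}=P_{ii}-u_i^2\le P_{ii}$. Summing the first claim over $i$ therefore gives the second inequality. The only step requiring care is the membership $v_\pm(t)\in\operatorname{span}\{u,Pe_i\}$, which holds because $Qe_i=Pe_i-u_iu$; no other obstacle is expected.
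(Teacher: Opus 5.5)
Your proof is correct and is exactly the adaptation the paper has in mind. You rerun the binomial argument of Lemma~\ref{lem:fourth-order-tangent} along the circles through $u$ and $Qe_i/\norm{Qe_i}_2$, which lie in $\operatorname{span}\{u,Pe_i\}$ because $Qe_i=Pe_i-u_iu$; you pick up the factor $\rho$ from the hypothesis and sum using $\norm{Qe_i}_2^2=P_{ii}-u_i^2\le P_{ii}$. Your remark that $\wtE w\ge\Phi_k(u)>0$ needs only $\Phi_k(u)>0$, not global maximality, is also right and makes explicit a point the paper leaves implicit.
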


% Now search algorithm only optimizes
% \(\Phi_k\) over the coordinate-generated circles through the current vector,
% and it terminates when no such circle gives a significant multiplicative
% improvement.  This condition is strong enough to recover the two ingredients
% used in the ideal proof: a fourth-moment tangent bound and an approximate
% first-order cancellation.

\medskip
\paragraph{Discretizing the circle.}
Now we discretize the unit circle. Recall the well-known Bernstein inequality.
\begin{lemma}[Bernstein inequalities for trigonometric polynomials]
\label{lem:trig-bernstein}
For every real trigonometric polynomial \(p\) of degree at most \(D\), and for any $\theta$,
\[
    |p'(\theta)|^2
    \le
    D^2(\|p\|_\infty^2-p(\theta)^2)
\]
In particular,
\[
    \|p'\|_\infty\le D\|p\|_\infty.
\]
\end{lemma}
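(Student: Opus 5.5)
We argue by contradiction, using the classical zero-counting argument for trigonometric polynomials. The only fact needed is that a nonzero real trigonometric polynomial of degree at most \(D\) has at most \(2D\) zeros in \([0,2\pi)\), counted with multiplicity. This follows by writing it as \(e^{-iD\theta}\) times an algebraic polynomial of degree at most \(2D\) in \(e^{i\theta}\).

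By scaling, we may assume \(\|p\|_\infty=1\); the case \(p\equiv 0\) is trivial. Fix \(\theta_0\) and suppose, for contradiction, that
\[
\lambda^2:=p(\theta_0)^2+\frac{p'(\theta_0)^2}{D^2}>1 .
\]
We choose a phase \(\beta\) with
\[
\lambda\cos(D\theta_0+\beta)=p(\theta_0),\qquad -\lambda D\sin(D\theta_0+\beta)=p'(\theta_0).
\]
This is possible precisely because \((p(\theta_0),p'(\theta_0)/D)\) is a point on the circle of radius \(\lambda\). Set
\[
T(\theta):=\lambda\cos(D\theta+\beta)-p(\theta).
\]
This is a real trigonometric polynomial of degree at most \(D\). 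By construction \(T(\theta_0)=T'(\theta_0)=0\), so \(\theta_0\) is a zero of multiplicity at least two.

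Now consider the \(2D\) points \(\theta_j\) in \([0,2\pi)\) where \(\cos(D\theta+\beta)=\pm1\), with alternating signs. Since \(|p|\le 1<\lambda\), the value \(T(\theta_j)\) has the sign of \(\pm\lambda\). Hence \(T\) strictly alternates in sign at these points, and in particular \(T\not\equiv 0\). On each of the \(2D\) open arcs between consecutive \(\theta_j\), the endpoint signs are opposite. So the number of zeros of \(T\) in that arc, counted with multiplicity, is odd and hence at least one.

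The point \(\theta_0\) is not one of the \(\theta_j\), since \(T(\theta_0)=0\) while \(T(\theta_j)\ne0\). So \(\theta_0\) lies in one of these arcs. That arc then contains a zero of multiplicity at least two, and by the parity count it contains at least three zeros with multiplicity. The total zero count is therefore at least \(2D+2>2D\), which contradicts the zero bound for the nonzero polynomial \(T\). Hence
\[
p(\theta_0)^2+\frac{p'(\theta_0)^2}{D^2}\le 1,
\]
which after rescaling is exactly \(|p'(\theta)|^2\le D^2(\|p\|_\infty^2-p(\theta)^2)\).

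The consequence \(\|p'\|_\infty\le D\|p\|_\infty\) follows immediately by dropping the nonnegative term \(D^2p(\theta)^2\).

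The step needing the most care is the multiplicity bookkeeping. We must check two things: that \(\theta_0\) lies strictly inside an arc, and that the parity argument is applied with multiplicities. For the latter, a zero of even multiplicity does not change sign, so an arc with opposite endpoint signs carries an odd total multiplicity. The remaining steps are routine.
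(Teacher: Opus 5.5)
Your proof is correct, and there is nothing in the paper to compare it against line by line. The paper never proves this lemma: it invokes it as the ``well-known'' Bernstein inequality, which in this pointwise form is the Szeg\H{o}--van der Corput--Schaake refinement. It is then used in two places. The crude bound $\|p'\|_\infty\le D\|p\|_\infty$ drives the trigonometric-grid lemma, and the pointwise form at $\theta=0$ gives $|p_i'(0)|\le 2k\Delta_\alpha M$ in the termination analysis.

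Your argument is the classical self-contained proof of exactly the pointwise form the paper needs. You compare $p$ with the extremal function $\lambda\cos(D\theta+\beta)$ matched to second order at $\theta_0$, then count roots against the bound of $2D$ zeros. So it supplies what the paper leaves to the literature rather than competing with a different route. What citing buys the paper is brevity; what your argument buys is a complete, elementary justification of the sharper inequality.

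Two small remarks:
\begin{itemize}
\item Set aside the case $D=0$ before defining $\lambda^2=p(\theta_0)^2+p'(\theta_0)^2/D^2$. There $p$ is constant and the claim reads $0\le 0$.
\item The parity bookkeeping is stronger than needed. The other $2D-1$ arcs each contain a zero by the intermediate value theorem. The arc through $\theta_0$ contains a zero of multiplicity at least two. Together these already give at least $2D+1>2D$ zeros, so the odd-multiplicity count is unnecessary.
\end{itemize}
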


The following lemma makes each one-dimensional
circle search finite, which is a self-contained version of the Ehlich-Zeller's trigonometric grid~\cite{EZ64,BV12}.
\begin{lemma}[Ehlich-Zeller's trigonometric grid]
\label{lem:finite-trigonometric-grid}
% Let \(p:\R\to\R_{\ge0}\) be a nonnegative trigonometric polynomial of
% Fourier degree at most \(D\).  If
% \(0<\beta<1\),
% \[
%     N\ge \frac{\pi D(D+1)}{\beta},
%     \qquad
%     \Theta_N:=\{2\pi j/N:0\le j<N\},
% \]
% then
% \[
%     \max_{\theta\in\Theta_N}p(\theta)
%     \ge
%     (1-\beta)\max_\theta p(\theta).
% \]
Let \(p:\mathbb R\to\mathbb R_{\ge0}\) be a nonnegative trigonometric polynomial of degree at
most \(D\). If \(0<\beta<1\) and
\[
    N\ge \frac{\pi D}{\beta},
    \qquad
    \Theta_N:=\{2\pi j/N:0\le j<N\},
\]
then
\[
    \max_{\theta\in\Theta_N}p(\theta)
    \ge
    (1-\beta)\max_\theta p(\theta).
\]
\end{lemma}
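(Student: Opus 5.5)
The plan is to compare the maximum point \(\theta^*\) of \(p\) with the nearest grid point, using Bernstein's inequality (Lemma~\ref{lem:trig-bernstein}) to control how fast \(p\) can fall off near its maximum. Write \(\|p\|_\infty=p(\theta^*)=:m\). If \(m=0\) there is nothing to prove, so assume \(m>0\). Every \(\theta\) lies within distance \(\pi/N\) of some grid point of \(\Theta_N\), since the spacing is \(2\pi/N\). So it suffices to show that
\[
p(\theta)\ge(1-\beta)m \qquad\text{whenever } |\theta-\theta^*|\le \pi/N.
\]

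I would use the derivative bound \(\|p'\|_\infty\le Dm\) from Lemma~\ref{lem:trig-bernstein}. By the mean value theorem,
\[
m-p(\theta)\le Dm|\theta-\theta^*|\le \frac{\pi Dm}{N}\le \beta m,
\]
using \(N\ge\pi D/\beta\). This gives \(p(\theta)\ge(1-\beta)m\). Nonnegativity of \(p\) is not even needed for this crude form, since \(\|p\|_\infty\) is just \(\max p\) in that case.

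For a sharper constant one could use the refined Bernstein bound \(|p'|\le D\sqrt{m^2-p^2}\). Applied to \(\arcsin(p/m)\), it shows this function is \(D\)-Lipschitz, which yields \(p(\theta)\ge m\cos(D|\theta-\theta^*|)\) whenever \(D|\theta-\theta^*|\le\pi/2\). That gives loss \(O((\pi D/N)^2)\), but it is not necessary for the stated bound.

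The only subtle point is that the supremum norm in Bernstein's inequality must equal \(\max p\). Since \(p\ge0\), we have \(\|p\|_\infty=\max p\), and the estimate goes through. No step looks like a real obstacle; it is essentially a one-line Lipschitz argument.
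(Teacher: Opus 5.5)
Your argument is correct and is exactly the paper's proof: pass from a maximizer \(\theta^\star\) to a grid point within circular distance \(\pi/N\), then use Bernstein's bound \(\|p'\|_\infty\le D\|p\|_\infty\) together with \(\|p\|_\infty=\max p\), so the loss is at most \(\pi D\|p\|_\infty/N\le\beta\|p\|_\infty\). One correction to your aside: nonnegativity is needed for precisely that identification, since otherwise \(\|p\|_\infty=\max|p|\) can greatly exceed \(\max p\) and the conclusion fails (e.g.\ a phase-shifted \(\cos(D\theta)-1+\delta\) with small \(\delta>0\)); your final paragraph does use \(p\ge0\) correctly.
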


\begin{proof}
Let \(\theta^\star\) maximize \(p\). Since $p\ge 0$, $p(\theta^\star)=\|p\|_\infty$.  Choose
\(\theta_0\in\Theta_N\) whose circular distance from \(\theta^\star\) is at
most \(\pi/N\).  By Bernstein's inequality for trigonometric polynomials,
\[
    \|p'\|_\infty\le D\|p\|_\infty.
\]
Therefore
\[
    p(\theta_0)
    \ge
    p(\theta^\star)-\|p'\|_\infty\frac{\pi}{N}
    \ge
    \|p\|_\infty-D\|p\|_\infty\frac{\pi}{N}
    \ge
    (1-\beta)\|p\|_\infty.\qedhere
\]
\end{proof}

% \begin{proof}
% Write \(p(\theta)=\sum_{m=-D}^Dc_me^{im\theta}\), and let \(M_p=\max_\theta p(\theta)\).  Since \(p\ge0\),
% \[
%     |c_m|
%     \le
%     \frac1{2\pi}\int_0^{2\pi}p(\theta)\,d\theta
%     =c_0
%     \le M_p.
% \]
% Therefore \(\norm{p'}_\infty\le D(D+1)M_p\).  If \(\theta^\star\) maximizes
% \(p\) and \(\theta_0\in\Theta_N\) is within circular distance at most
% \(\pi/N\) of \(\theta^\star\), then
% \[
%     p(\theta_0)
%     \ge
%     M_p-D(D+1)M_p\frac{\pi}{N}
%     \ge
%     (1-\beta)M_p.\qedhere
% \]
% \end{proof}

\medskip
\paragraph{Iterative search.} We next give the finite local search procedure.  Its initialization guarantees
a positive objective value, while each accepted update increases this
value by a fixed multiplicative factor. The finite search is stated with an
accuracy parameter \(\alpha\); the value of \(\alpha\) needed for rounding is
specified in Theorem~\ref{thm:projector-algorithmic-rounding}.

\begin{algorithm}[H]
\caption{Iterative coordinate-circle local search}
\label{alg:projective-local-search}
\begin{algorithmic}[1]
\Require The degree-\(2k\) moment vector of \(\wtE\), the projector \(P\),
         and an accuracy parameter \(0<\alpha<1\).
\State Let \(r=\operatorname{rank}(P)\), choose an orthonormal basis
\(b_1,\ldots,b_r\) of \(W\), and set \(u\gets b_j\), where \(j\) maximizes
\(\wtE\langle x,b_j\rangle^2\).
\State Set
\[
    D\gets2k,
    \qquad
    N\gets\left\lceil\frac{\pi D}{\alpha}\right\rceil,
    \qquad
    \Theta_N\gets\{2\pi j/N:0\le j<N\}.
\]
\Repeat
    \State Set \(M\gets\Phi_k(u)\) and \(Q\gets P-uu^\top\).
    \State Form the finite candidate set
    \[
        \mathcal C_\alpha(u):=
        \left\{
        \cos\theta\,u+
        \sin\theta\,\frac{Qe_i}{\norm{Qe_i}_2}:
        Qe_i\neq0,\ \theta\in\Theta_N
        \right\}.
    \]
    \If{there exists \(w\in\mathcal C_\alpha(u)\) with
        \(\Phi_k(w)>(1+\alpha)M\)}
        \State Replace \(u\) by a vector in \(\mathcal C_\alpha(u)\)
        maximizing \(\Phi_k\) over \(\mathcal C_\alpha(u)\).
    \Else
        \State \Return \(u\).
    \EndIf
\Until{termination}
\end{algorithmic}
\end{algorithm}

\medskip
\paragraph{Analysis.}
Algorithm~\ref{alg:projective-local-search} terminates in a small number of steps.
\begin{proposition}
\label{prop:projective-local-search-termination}
Algorithm~\ref{alg:projective-local-search} terminates after at most
\[
    O\left(\frac{k\log r}{\alpha}\right)
\]
updates.
\end{proposition}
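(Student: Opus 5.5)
We bound the number of updates by a potential argument on \(\Phi_k\), sandwiching its value between an initial lower bound and a global upper bound. Each accepted update replaces \(u\) by a vector \(w\) with \(\Phi_k(w)>(1+\alpha)\Phi_k(u)\); the maximizer over \(\mathcal C_\alpha(u)\) is at least this large. Every element of \(\mathcal C_\alpha(u)\) is a unit vector in \(W\), because \(u\perp Qe_i\), \(Qe_i\in W\), and \(\cos^2\theta+\sin^2\theta=1\). Hence after \(T\) updates the current vector \(u_T\in\Sph(W)\) satisfies
\[
    \Phi_k(u_T)>(1+\alpha)^T\Phi_k(u_0),
\]
so it suffices to bound the ratio \(\max_{v\in\Sph(W)}\Phi_k(v)/\Phi_k(u_0)\).

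\emph{Upper bound.} For any unit \(v\), the polynomial \(\|x\|_2^{2k}-\langle x,v\rangle^{2k}\) is SoS modulo the sphere constraint. To see this, write \(a=\langle x,v\rangle^2\) and \(b=\|x\|_2^2-\langle x,v\rangle^2\), which is a sum of squares. Then \((a+b)^k-a^k\) is a sum of products of powers of the SoS quantities \(a,b\), each of which is SoS. Applying \(\wtE\) and the sphere constraint gives \(\Phi_k(v)\le \wtE\|x\|_2^{2k}=1\).

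\emph{Lower bound at initialization.} Since \(Px=x\) and \(\|x\|^2=1\) hold in the pseudo-expectation sense, \(\sum_{j\le r}\wtE\langle x,b_j\rangle^2=\wtE\|Px\|^2=1\). The chosen \(b_j\) therefore has \(\wtE\langle x,b_j\rangle^2\ge 1/r\). Pseudo-H\"older, obtained by repeated pseudo-Cauchy--Schwarz (Lemma~\ref{pseudo-Cauchy–Schwarz}) or Jensen-type monotonicity \((\wtE s^{2})^{k}\le \wtE s^{2k}\), then gives \(\Phi_k(u_0)\ge r^{-k}\). In particular \(\Phi_k(u_0)>0\).

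To justify \((\wtE s^2)^k\le\wtE s^{2k}\), use the chain \((\wtE s^{2j})^2\le \wtE s^{2j-2}\,\wtE s^{2j+2}\) from Cauchy--Schwarz. This shows that the sequence \(\log\wtE s^{2j}\) is convex in \(j\) with value \(0\) at \(j=0\). Convexity gives \(\wtE s^{2k}\ge(\wtE s^2)^k\). If some intermediate moment vanished, Cauchy--Schwarz would force \(\wtE s^2=0\), contradicting the \(1/r\) lower bound.

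\emph{Conclusion.} Combining the two bounds,
\[
    (1+\alpha)^T r^{-k}<1,
\]
so \(T< k\log r/\log(1+\alpha)=O(k\log r/\alpha)\), using \(\log(1+\alpha)\ge\alpha/2\) for \(\alpha<1\). The main delicate step is the initialization lower bound, specifically verifying the pseudo-H\"older/log-convexity of the even moments within degree \(2k\). I expect the required Cauchy--Schwarz instances to all stay within degree \(2k\), since they involve \(s^{j-1}\) and \(s^{j+1}\) with \(j+1\le k\).
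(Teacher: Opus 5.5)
Your proof is correct and follows the paper's own argument: $\Phi_k$ serves as a potential, bounded above by $1$ via an SoS certificate modulo the sphere constraint, and bounded below at initialization by $r^{-k}$, using $\sum_j\wtE\langle x,b_j\rangle^2=1$ together with log-convexity of the even pseudo-moments from pseudo-Cauchy--Schwarz. Your extra check that no intermediate moment can vanish, so the logarithms are well defined, is a small point the paper leaves implicit.
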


\begin{proof}
Use $\Phi_k$ as a potential function. First, we note that the initialization point $u$ in Step 1 from Algorithm~\ref{alg:projective-local-search} has not-too-small objective value.  The constraints
\(Px=x\) and \(\norm{x}_2^2=1\) imply
\(
    \sum_{j=1}^r\wtE\langle x,b_j\rangle^2=1,
\)
so the chosen basis vector satisfies
\(\wtE\langle x,u\rangle^2\ge1/r\).  By pseudo-Cauchy--Schwarz Lemma \ref{pseudo-Cauchy–Schwarz}, the even pseudo-moments
\(a_j:=\wtE\langle x,u\rangle^{2j}\) are log-convex:
\(a_j^2\le a_{j-1}a_{j+1}\). Hence \(a_k\ge a_1^k\), and therefore
\[
    \Phi_k(u)
    =\wtE\langle x,u\rangle^{2k}
    \ge
    \bigl(\wtE\langle x,u\rangle^2\bigr)^k
    \ge r^{-k}.
\]

Second, $\Phi_k$ is bounded above by 1: for every \(v\in \Sph(W)\), one has \(0\le\Phi_k(v)\le1\), because
\[
    1-\langle x,v\rangle^{2k}
    =
    \bigl(1-\langle x,v\rangle^2\bigr)
    \sum_{j=0}^{k-1}\langle x,v\rangle^{2j}
\]
is a sum of squares modulo \(\norm{x}_2^2=1\).

Each update increase the potential
\(\Phi_k\) by a multiplicative factor more than \(1+\alpha\), and hence the number of updates is
at most
\[
    \left\lceil
    \frac{k\log r}{\log(1+\alpha)}
    \right\rceil
    =O\left(\frac{k\log r}{\alpha}\right).\qedhere
\]
\end{proof}

For \(0<\alpha<1\), define
\begin{equation}
\label{eq:rho-delta-alpha-definitions}
    \rho_\alpha:=\frac{1+\alpha}{1-\alpha},
    \qquad
    \Delta_\alpha:=\sqrt{\rho_\alpha^2-1}=\frac{2\sqrt\alpha}{1-\alpha}.
\end{equation}
The next statement records the two consequences of termination that are used
in rounding: approximate maximality along all coordinate circles and a
corresponding approximate stationarity estimate.
\begin{proposition}
After termination, Algorithm~\ref{alg:projective-local-search} returns a \(\rho_\alpha\)-coordinate-circle maximizer \(u\).
% Moreover, if
% \[
%     s=\langle x,u\rangle,
%     \qquad
%     M=\Phi_k(u),
%     \qquad
%     Q=P-uu^\top,
% \]
Moreover, for every index \(i\),
% \begin{equation*}
%     \left|\wtE s^{2k-1}\sum_{i=1}^n u_i^3z_i\right| \le \Delta_\alpha M.
% \end{equation*}
\begin{equation}
\label{eq:approx-first-order-coordinate}
    \left|
    \wtE s^{2k-1}\langle x,Qe_i\rangle
    \right|
    \le
    \Delta_\alpha M\norm{Qe_i}_2.
\end{equation}
\end{proposition}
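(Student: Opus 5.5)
The plan has two parts: first convert the grid-based termination condition into maximality on every full coordinate circle, then extract approximate stationarity from that circle maximality.

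\emph{Step 1: grid maximality implies circle maximality.} At termination, $\Phi_k(w)\le(1+\alpha)M$ for every $w\in\mathcal C_\alpha(u)$, where $M=\Phi_k(u)>0$. Positivity of $M$ holds at initialization by the proof of Proposition~\ref{prop:projective-local-search-termination} and is preserved by updates.

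Fix $i$ with $Qe_i\ne0$ and set $g_i:=Qe_i/\norm{Qe_i}_2$. Then $\operatorname{span}\{u,Pe_i\}=\operatorname{span}\{u,g_i\}$, since $Pe_i=u_iu+Qe_i$. Every unit vector of this span has the form $c_\theta:=\cos\theta\,u+\sin\theta\,g_i$. If $Qe_i=0$, the circle is just $\pm u$, and the claim is trivial because $\Phi_k$ is even.

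The function $p(\theta):=\Phi_k(c_\theta)=\wtE(\cos\theta\, s+\sin\theta\, r_i)^{2k}$, with $r_i=\langle x,g_i\rangle$, is a trigonometric polynomial of degree at most $2k=D$. It is nonnegative, being a pseudo-expectation of a square. Since $N\ge\pi D/\alpha$, Lemma~\ref{lem:finite-trigonometric-grid} with $\beta=\alpha$ gives
\[
(1-\alpha)\max_\theta p\le\max_{\Theta_N}p\le(1+\alpha)M .
\]
Hence $\Phi_k(v)\le\rho_\alpha M$ on the whole circle, so $u$ is a $\rho_\alpha$-coordinate-circle maximizer.

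\emph{Step 2: approximate stationarity, the main step.} Exact stationarity $p'(0)=0$ is no longer available, so I will use the sharp form of Bernstein's inequality in Lemma~\ref{lem:trig-bernstein} at $\theta=0$:
\[
|p'(0)|^2\le D^2\bigl(\|p\|_\infty^2-p(0)^2\bigr)\le (2k)^2M^2(\rho_\alpha^2-1)=(2k)^2\Delta_\alpha^2M^2 .
\]
As computed in Lemma~\ref{lem:first-order-stationarity},
\[
p'(0)=2k\,\wtE s^{2k-1}r_i .
\]
Dividing by $2k$ gives $|\wtE s^{2k-1}\langle x,g_i\rangle|\le\Delta_\alpha M$. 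Multiplying through by $\norm{Qe_i}_2$ yields \eqref{eq:approx-first-order-coordinate}, using linearity and $\langle x,Qe_i\rangle=\norm{Qe_i}_2\,r_i$.

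The only delicate point is choosing this variant of Bernstein's inequality. It makes the error scale like $\sqrt{\rho_\alpha^2-1}$, matching the definition of $\Delta_\alpha$. The cruder derivative bound $\|p'\|_\infty\le D\|p\|_\infty$ would not vanish as $\alpha\to0$.
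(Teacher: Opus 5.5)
Your proposal is correct and follows the paper's proof essentially step for step. The termination condition plus the Ehlich--Zeller grid lemma gives $\max_\theta p_i\le\rho_\alpha M$ on each coordinate circle, and the sharp Bernstein inequality at $\theta=0$, together with $p_i'(0)=2k\,\wtE s^{2k-1}\langle x,h_i\rangle$, gives the $\Delta_\alpha$ bound. Your extra checks fill in details the paper leaves implicit: that $M>0$ is preserved by updates, that $\operatorname{span}\{u,Pe_i\}=\operatorname{span}\{u,Qe_i\}$, and the degenerate case $Qe_i=0$.
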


\begin{proof}

At termination, fix \(i\) with \(Qe_i\neq0\), and set
\(h_i=Qe_i/\norm{Qe_i}_2\).
%
% We first prove
% \begin{equation}
% \label{eq:approx-first-order-coordinate}
%     \left|
%     \wtE s^{2k-1}\langle x,Qe_i\rangle
%     \right|
%     \le
%     \Delta_\alpha M\norm{Qe_i}_2.
% \end{equation}
The function
\[
    p_i(\theta)=\Phi_k(\cos\theta\,u+\sin\theta\,h_i)
\]
is a nonnegative trigonometric polynomial of degree at most \(2k\).  By
Lemma~\ref{lem:finite-trigonometric-grid}, its grid maximum is at least
\((1-\alpha)\max_\theta p_i(\theta)\).  Since termination means that every
grid value is at most \((1+\alpha)\Phi_k(u)\),
\[
    \max_\theta p_i(\theta)
    \le
    \rho_\alpha\Phi_k(u).
\]
This proves the required coordinate-circle maximality for all indices with
\(Qe_i\neq0\).  If \(Qe_i=0\), then
\(\Sph(W)\cap\operatorname{span}\{u,Pe_i\}\subseteq\{u,-u\}\), and the same
claim is immediate since \(\Phi_k\) is even.

It remains to prove the approximate first-order estimate~\eqref{eq:approx-first-order-coordinate}.
% We use the Bernstein inequality for real trigonometric polynomials: if \(p\) has degree at most \(D\), then
% \[
%     |p'(\theta)|^2
%     \le
%     D^2\bigl(\norm{p}_\infty^2-p(\theta)^2\bigr).
% \]
Apply Bernstein inequality to \(p_i\) at \(\theta=0\), with \(D=2k\),
\(p_i(0)=M\), and \(\norm{p_i}_\infty\le\rho_\alpha M\), obtaining
\[
    |p_i'(0)|
    \le
    2k\Delta_\alpha M.
\]
On the other hand,
\[
    p_i'(0)
    =2k\,\wtE s^{2k-1}\langle x,h_i\rangle.
\]
Multiplying by \(\norm{Qe_i}_2\) gives
\eqref{eq:approx-first-order-coordinate}; if \(Qe_i=0\), the estimate is trivial.
%
% Therefore
% \[
% \begin{aligned}
%     \left|
%     \wtE s^{2k-1}\sum_{i=1}^n u_i^3z_i
%     \right|
%     &= \left| \wtE s^{2k-1}\sum_{i=1}^n u_i^3\langle x,Qe_i\rangle \right|\\
%     &\le
%     \sum_{i=1}^n |u_i|^3
%     \left|\wtE s^{2k-1}\langle x,Qe_i\rangle\right|  \\
%     &\le
%     \Delta_\alpha M
%     \sum_{i=1}^n |u_i|^3\norm{Qe_i}_2
%     \le
%     \Delta_\alpha M,
% \end{aligned}
% \]
% using \(\norm{Qe_i}_2\le1\) and
% \(\sum_i |u_i|^3\le\norm{u}_2^2=1\).
\end{proof}

These termination estimates allow the ideal rounding proof to be repeated
with explicit error terms.
\begin{lemma}
\label{lem:approx-coordinate-rounding}
Let \(u\) be the output of Algorithm~\ref{alg:projective-local-search}.  Then,
for every \(\eta>0\),
\begin{equation}
\label{eq:approx-coordinate-internal-estimate}
    \tau
    \le
    (1+\eta)f(u)
    +C_\eta\rho_\alpha\frac{9}{(k-1)^2}\sum_{i=1}^nP_{ii}^2
    +4\Delta_\alpha,
\end{equation}
where \(C_\eta\) is the constant from
Lemma~\ref{lem:centered-quartic-sos}.
\end{lemma}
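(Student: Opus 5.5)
The plan is to repeat the chain of inequalities \eqref{eq:f-u-bound} from the proof of Theorem~\ref{thm:sharp-one-shot-rounding}, with $u$ now the output of Algorithm~\ref{alg:projective-local-search}. Exact global maximality was used in three places, and each will be replaced by its approximate counterpart. First, $\Phi_k(u)>0$ by the initialization bound $\Phi_k(u)\ge r^{-k}$ in Proposition~\ref{prop:projective-local-search-termination}. The SoS argument of Lemma~\ref{lem:positive-projective-moment} only needs $\Phi_k(u)>0$, so it gives $A=\wtE w\ge M=\wtE s^{2k}>0$. Hence $\wtE_w$ is a well-defined degree-$4$ pseudo-expectation with $\wtE_w s^4=M/A\le 1$.

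Write $x_i=su_i+z_i$ with $z=Qx$. Apply the quartic inequality of Lemma~\ref{lem:centered-quartic-sos} coordinatewise, sum over $i$, and take $\wtE_w$. Using the threshold constraint $\wtE_w f(x)\ge\tau$ (degree $2k-4+4$ fits) and $\wtE_w s^4\le1$, this gives
\[
\tau\le(1+\eta)f(u)+4\,\wtE_w\Bigl[s^3\textstyle\sum_i u_i^3 z_i\Bigr]+C_\eta\,\wtE_w\|Qx\|_4^4 .
\]
The last term is handled by Proposition~\ref{prop:coordinate-circle-concentration}. Since termination yields a $\rho_\alpha$-coordinate-circle maximizer, that proposition together with $\wtE_w\langle x,u\rangle^4\le1$ gives $\wtE_w\|Qx\|_4^4\le \rho_\alpha\frac{9}{(k-1)^2}\sum_iP_{ii}^2$.

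The main obstacle is the cross term, which previously vanished by exact stationarity. Here it must be bounded by $\Delta_\alpha$. Write $\sum_iu_i^3z_i=\sum_i u_i^3\langle x,Qe_i\rangle$, so that
\[
\Bigl|\wtE_w\Bigl[s^3\textstyle\sum_iu_i^3z_i\Bigr]\Bigr|\le \frac1A\sum_i|u_i|^3\bigl|\wtE s^{2k-1}\langle x,Qe_i\rangle\bigr|\le \frac{\Delta_\alpha M}{A}\sum_i|u_i|^3\|Qe_i\|_2 ,
\]
where the second inequality is the approximate first-order estimate \eqref{eq:approx-first-order-coordinate}. Three facts then finish it: $M/A\le1$, $\|Qe_i\|_2\le\sqrt{P_{ii}}\le1$, and $\sum_i|u_i|^3\le\max_i|u_i|\cdot\|u\|_2^2\le1$. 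Together these bound the cross term by $\Delta_\alpha$, and hence its contribution by $4\Delta_\alpha$. This is the one place where dividing by $A$ rather than $M$ matters, so I will check that the normalization $\wtE_w=\wtE[w\,\cdot]/A$ is exactly what turns the $M$ in \eqref{eq:approx-first-order-coordinate} into the harmless ratio $M/A$.

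Collecting the three bounds gives \eqref{eq:approx-coordinate-internal-estimate}.
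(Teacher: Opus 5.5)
Your proposal is correct and follows the paper's proof essentially step for step: the same positivity bound $A\ge M>0$, the threshold constraint with multiplier $s^{k-2}$, the coordinatewise quartic inequality, Proposition~\ref{prop:coordinate-circle-concentration} for the tangent term, and \eqref{eq:approx-first-order-coordinate} with $\sum_i|u_i|^3\|Qe_i\|_2\le 1$ for the cross term. The only difference is bookkeeping, and it is equivalent. The paper stays unnormalized, proving $\tau A\le M(\cdots)$ and then dividing by $M$ using $A/M\ge 1$. You normalize by $A$ and use $M/A\le 1$ in each of the three terms, so that ratio is not specific to the cross term.
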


\begin{proof}
We follow the similar proof of Theorem~\ref{thm:sharp-one-shot-rounding}, with the maximizer estimates replaced by their approximate coordinate-circle
analogues.
% Set
% \[
%     s=\langle x,u\rangle,
%     \qquad
%     M=\Phi_k(u)=\wtE s^{2k}>0,
%     \qquad
%     Q=P-uu^\top,
%     \qquad
%     z=Qx=x-su,
%     \qquad
%     A=\wtE s^{2k-4}.
% \]
As in Lemma~\ref{lem:positive-projective-moment}, \(A\ge M>0\), and the
threshold constraint with \(q=s^{k-2}\) gives
\[
    \tau A\le \wtE s^{2k-4}f(x).
\]
% For \(h_i=Qe_i\), Proposition~\ref{prop:projective-local-search-termination}
% and the binomial estimate used in Lemma~\ref{lem:fourth-order-tangent} give
% the aggregate tangent bound
By Proposition~\ref{prop:coordinate-circle-concentration}, we have
\[
    \wtE [s^{2k-4}\norm{z}_4^4]
    \le
    \frac{9\rho_\alpha}{(k-1)^2}
    M\sum_{i=1}^nP_{ii}^2,
\]
and
\begin{align*}
    \left|
    \wtE s^{2k-1}\sum_{i=1}^n u_i^3z_i
    \right|
    &= \left| \wtE s^{2k-1}\sum_{i=1}^n u_i^3\langle x,Qe_i\rangle \right|\\
    &\le
    \sum_{i=1}^n |u_i|^3
    \left|\wtE s^{2k-1}\langle x,Qe_i\rangle\right|  \\
    &\le
    \Delta_\alpha M
    \sum_{i=1}^n |u_i|^3\norm{Qe_i}_2
    \le
    \Delta_\alpha M.
\end{align*}
% \[
%     \left|
%     \wtE s^{2k-1}\sum_{i=1}^n u_i^3z_i\todo{}
%     \right|
%     \le
%     \Delta_\alpha M.
% \]
% Indeed,
% \[
%     \sum_{i=1}^n u_i^3z_i
%     =
%     \sum_{i=1}^n u_i^3\langle x,Qe_i\rangle,
% \]
% and the conclusion follows from
% \(\norm{Qe_i}_2\le1\) and
% \(\sum_i |u_i|^3\le \norm{u}_2^3=1\).

Applying Lemma~\ref{lem:centered-quartic-sos} coordinate-wise to
\(x_i=su_i+z_i\), summing over \(i\), multiplying by \(s^{2k-4}\), and then
using the preceding two estimates gives
\[
    \tau A
    \le
    M\left(
        (1+\eta)f(u)
        +C_\eta\rho_\alpha\frac{9}{(k-1)^2}\sum_{i=1}^nP_{ii}^2
        +4\Delta_\alpha
    \right).
\]
Since \(A/M\ge1\), division by \(M\) proves
\eqref{eq:approx-coordinate-internal-estimate}.
\end{proof}

We now choose the search accuracy and obtain a deterministic rounding
procedure with the same guarantee as the ideal maximizer-based rounding.

\begin{theorem}
\label{thm:projector-algorithmic-rounding}
There exists a universal constant \(C_{\mathrm{alg}}>0\) such that the
following holds.  Let \(0<\varepsilon<1\), \(\tau>0\), and
\(\wtE\in\mathcal{F}_{2k}(P,\tau)\).  Suppose
\begin{equation}
\label{eq:algorithmic-rounding-degree-condition}
    k-1\ge C_{\mathrm{alg}}\frac{\sqrt n}{\varepsilon}.
\end{equation}
The following deterministic rounding rule outputs a unit vector
\(\widehat v\in W\) satisfying
\begin{equation}
\label{eq:algorithmic-rounding-vector-guarantee}
    \norm{\widehat v}_4^4\ge(1-\varepsilon)\tau .
\end{equation}
If some index \(i\) satisfies
\(P_{ii}^2\ge(1-\varepsilon)\tau\), output
\[
    \widehat v=\frac{Pe_i}{\sqrt{P_{ii}}}.
\]
Otherwise set
\begin{equation}
\label{eq:finite-search-alpha}
    \alpha:=
    \min\left\{
        \frac1{16},
        \frac{\varepsilon^2\tau^2}{2^{16}}
    \right\},
\end{equation}
run Algorithm~\ref{alg:projective-local-search} with this value of
\(\alpha\), and output its final vector.  Consequently, for the output
\((\widehat\tau,\wtE^{\mathrm{out}})\) of Algorithm~\ref{alg:sos-2to4}, the
same rounding rule with \(\tau=\widehat\tau\) produces
\(\widehat v\in W\), \(\norm{\widehat v}_2=1\), and
\[
    \norm{\widehat v}_4^4
    \ge
    (1-\varepsilon)\widehat\tau .
\]
\end{theorem}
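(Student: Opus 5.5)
Two cases, following Corollary~\ref{cor:sharp-ideal-rounding}. If some index has \(P_{ii}^2\ge(1-\varepsilon)\tau\), the vector \(Pe_i/\sqrt{P_{ii}}\) is a unit vector in \(W\) whose \(i\)-th coordinate is \(\sqrt{P_{ii}}\). Hence its \(\ell_4^4\) norm is at least \(P_{ii}^2\), and we are done.

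Otherwise \(\sum_iP_{ii}^2<n\tau\). Proposition~\ref{prop:projective-local-search-termination} shows that Algorithm~\ref{alg:projective-local-search} terminates, and the subsequent proposition shows its output \(u\) is a \(\rho_\alpha\)-coordinate-circle maximizer. Lemma~\ref{lem:approx-coordinate-rounding} then applies and gives, for every \(\eta>0\),
\[
\tau\le(1+\eta)f(u)+C_\eta\rho_\alpha B+4\Delta_\alpha,
\qquad
B:=\frac{9}{(k-1)^2}\sum_iP_{ii}^2<\frac{9n\tau}{(k-1)^2}.
\]

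Now bound the three error terms.

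\textbf{Search error.} With \(\alpha\le 1/16\) we have \(\rho_\alpha\le 17/15\le 2\). Also \(\Delta_\alpha=2\sqrt\alpha/(1-\alpha)\le 3\sqrt\alpha\le 3\varepsilon\tau/2^8\), so \(4\Delta_\alpha\le \varepsilon\tau/16\). The choice of \(\alpha\) in \eqref{eq:finite-search-alpha} is designed so that \(\sqrt\alpha\) is linear in \(\varepsilon\tau\), which makes this additive error relative.

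\textbf{Curvature and tangent error.} Take \(\eta\) of order \(\varepsilon\), say \(\eta=\varepsilon/8\). Then \(\eta f(u)\le\eta\le\) something small relative to \(\tau\)? This is not quite right, since \(f(u)\le 1\) but \(\tau\) may be small. To avoid this, I will not bound \(f(u)\) by \(1\). Instead I rearrange to \(f(u)\ge(\tau-\text{errors})/(1+\eta)\), so the \(\eta\) loss is multiplicative.

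The tangent term then needs \(C_\eta\rho_\alpha B\le c\,\varepsilon\tau\), that is, \(C_\eta\cdot 18n/(k-1)^2\le c\varepsilon\). With \(C_\eta=O(1/\eta)=O(1/\varepsilon)\), this requires \((k-1)^2\gtrsim n/\varepsilon^2\), which is exactly \eqref{eq:algorithmic-rounding-degree-condition} once \(C_{\mathrm{alg}}\) is large enough. Here I use the explicit \(C_\eta=1+18/\eta+3(2/\eta)^{1/3}\le 25/\eta\) for \(\eta<1\).

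\textbf{Conclusion.} Collecting the bounds,
\[
f(u)\ge\frac{\tau(1-\varepsilon/16-\varepsilon/4)}{1+\varepsilon/8}\ge(1-\varepsilon)\tau .
\]
The final clause of the theorem follows by applying this with \(\tau=\widehat\tau\) and \(\wtE=\wtE^{\mathrm{out}}\). This is valid because \(\wtE^{\mathrm{out}}\in\mathcal F_{2k}(P,\widehat\tau)\) by \eqref{eq:binary-search-guarantee}, and \(\widehat\tau\ge 1/n>0\).

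\textbf{Main obstacle.} The delicate step is the \(4\Delta_\alpha\) term. It comes from bounding the approximate first-order cancellation via \(\sum_i|u_i|^3\|Qe_i\|\le 1\), and it is additive rather than relative to \(\tau\). It is absorbed only because \(\alpha\) scales like \(\varepsilon^2\tau^2\). I will check the numeric constants so that \(4\Delta_\alpha\le\varepsilon\tau/16\) holds for all \(\varepsilon,\tau\le1\), and adjust \(\eta\) and \(C_{\mathrm{alg}}\) accordingly.
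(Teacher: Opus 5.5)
Your proof is correct and follows the paper's structure. It uses the same case split, applies Lemma~\ref{lem:approx-coordinate-rounding} to the coordinate-circle maximizer, uses $\rho_\alpha<2$, relies on $\alpha\propto\varepsilon^2\tau^2$ to turn $4\Delta_\alpha$ into a relative error, and uses the degree condition to control the tangent term.

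The only difference is the final scalar step. The paper reuses its optimized choice $\eta=B^{1/2}\tau^{-1/2}$ (of order $\sqrt n/k$), splits on whether $\norm{u}_4\ge\tau^{1/4}$, and uses $\tau-a^4\ge(\tau^{1/4}-a)\tau^{3/4}$. You instead fix $\eta=\varepsilon/8$ and divide by $1+\eta$. Your route is slightly simpler here and gives the same requirement $k\gtrsim\sqrt n/\varepsilon$, while the paper's choice is what produces its $k$-uniform error form $\min\{b,b^2/\tau^{1/4}\}$.
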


\begin{proof}
The $P_{ii}^2 \ge (1-\epsilon)\tau$ case is already verified in Corollary~\ref{cor:sharp-ideal-rounding}, therefore it remains to assume that
\[
    P_{ii}^2<(1-\varepsilon)\tau
    \qquad\text{for every }i\in[n].
\]  Let \(u\) be the output of
Algorithm~\ref{alg:projective-local-search}.  Then
\(
    \sum_{i=1}^n P_{ii}^2
    \le n\max_i P_{ii}^2
    < n\tau .
\)
Set
\[
    B:=
    \rho_\alpha\frac{9}{(k-1)^2}\sum_{i=1}^nP_{ii}^2,
    \qquad
    a:=\norm{u}_4 .
\]
Since \(\alpha\le1/16\), we have \(\rho_\alpha<2\).  From
\eqref{eq:algorithmic-rounding-degree-condition}, we obtain
\(
    \frac{B}{\tau}
    <
    \frac{18\varepsilon^2}{C_{\mathrm{alg}}^2}.
\)

Choosing \(C_{\mathrm{alg}}\ge 576\sqrt{2} \)  gives
\begin{equation}
\label{eq:algorithmic-B-small}
    B<\tau,
    \qquad
    24\frac{B^{1/2}}{\tau^{1/4}}
    \le
    \frac{\varepsilon}{8}\tau^{1/4}.
\end{equation}
Moreover, by the definition of \(\alpha\) and
\eqref{eq:rho-delta-alpha-definitions},
\begin{equation}
\label{eq:algorithmic-cubic-error-small}
    4\frac{\Delta_\alpha}{\tau^{3/4}}
    \le
    \frac{\varepsilon}{16}\tau^{1/4}.
\end{equation}

Lemma~\ref{lem:approx-coordinate-rounding} now gives the same scalar
inequality as in the final part of the proof of
Theorem~\ref{thm:sharp-one-shot-rounding}, with \(B\) in place of the
tangent-error term and with the additional additive error \(4\Delta_\alpha\).
If \(a\ge\tau^{1/4}\), the desired conclusion is immediate.  Otherwise,
using \(B<\tau\) and choosing
\[
    \eta=\frac{B^{1/2}}{\tau^{1/2}}\in(0,1)
\]
in Lemma~\ref{lem:approx-coordinate-rounding}, the same optimization over
\(\eta\) as in Theorem~\ref{thm:sharp-one-shot-rounding} yields
\[
    (\tau^{1/4}-a)\tau^{3/4}
    \le
    24B^{1/2}\tau^{1/2}
    +4\Delta_\alpha .
\]
Combining this with
\eqref{eq:algorithmic-B-small} and
\eqref{eq:algorithmic-cubic-error-small} gives
\[
    \tau^{1/4}-a
    \le
    \frac{\varepsilon}{4}\tau^{1/4}.
\]
Therefore
\[
    \norm{u}_4^4=a^4
    \ge
    \left(1-\frac{\varepsilon}{4}\right)^4\tau
    \ge
    (1-\varepsilon)\tau ,
\]
for \(0<\varepsilon<1\).  Taking \(\widehat v=u\) proves the claim in the
finite-search branch.  Applying the same rule to
\((\widehat\tau,\wtE^{\mathrm{out}})\) gives the final assertion.
\end{proof}

\medskip
\paragraph{Running time.}
The coefficients of \(\Phi_k\) are read directly from the degree-\(2k\)
moment vector:
\[
    \Phi_k(v)
    =
    \sum_{|\beta|=2k}
    \binom{2k}{\beta}\wtE[x^\beta]v^\beta .
\]
Algorithm~\ref{alg:projective-local-search} uses \(O(k/\alpha)\) grid
points on each coordinate circle and performs \(O(k\log r/\alpha)\) updates.
Hence the rounding time is polynomial in \(n\), \(k\), \(1/\alpha\), and the
size of the supplied moment table.  When it is applied to the output of
Algorithm~\ref{alg:sos-2to4}, one has \(\widehat\tau\ge1/n\), and therefore
\(\alpha^{-1}\le\poly(n,1/\varepsilon)\) for the value of \(\alpha\) in
\eqref{eq:finite-search-alpha}.  The rounding stage is then bounded by
\(n^{O(k)}\) arithmetic operations, which is no larger than the semidefinite
feasibility cost in Algorithm~\ref{alg:sos-2to4}.  With
\(k=\lceil C_{\mathrm{alg}}\sqrt n/\varepsilon\rceil+1\), the vector rounding
is absorbed into the overall
\[
    n^{O(\sqrt n/\varepsilon)}
\]
running time.
\section{An SoS relaxation for matrix \texorpdfstring{\(p\!\to q\)}{p-to-q} norms with even exponents}
\label{sec:sos-pq}

This section extends the preceding analysis of the projector \(2\!\to\!4\)
norm to matrix \(p\!\to q\) norms for fixed even integers
\(2\le p<q\).  In this regime, both the \(p\)-unit ball and the objective
\(\norm{Ax}_q^q\) have finite polynomial descriptions.  We first formulate
the problem and its fixed-threshold SoS relaxation, and then use binary search
over the threshold to approximate the SoS value.  By duality, the result also
applies to the corresponding conjugate exponents below \(2\).  We do not
address the mixed regime \(1<p<2<q\).

The main part of this section is the rounding analysis.  As in the projector
\(2\!\to\!4\) case, we maximize a directional moment polynomial, reweigh the
pseudo-expectation toward the maximizing direction, and round to an actual
vector.  The geometry is now that of the \(\ell_p\)-unit ball and its dual
\(\ell_{p^\ast}\)-unit sphere.  Since \(1<p^\ast\le2\), smoothness of the
dual norm yields the factor \(k^{-2/p}\) in the tangent-moment bounds.  The
projected coordinate candidates from the preceding section are replaced by
the \emph{row candidates} \(u_i\), the \(\ell_p\)-unit vectors that attain
H\"older's inequality for the individual rows of \(A\).

\subsection{Problem formulation}

Let nonzero \(A\in\R^{N\times d}\) have rows
\(a_1,\ldots,a_N\in\R^d\).  Fix even integers \(2\le p<q\), and write
\(q=2r\), so \(r\ge2\).  Define
\[
    \norm{A}_{p\to q}
    :=
    \max_{x\ne0}\frac{\norm{Ax}_q}{\norm{x}_p},
    \qquad
    p^\ast:=\frac{p}{p-1}.
\]
Since \(q\) is even, the \(q\)-th-power objective is the homogeneous polynomial
\begin{equation}
\label{eq:pq-objective-polynomial}
    F_A(x):=\norm{Ax}_q^q
    =
    \sum_{i=1}^N \langle a_i,x\rangle^q,
\end{equation}
and the corresponding optimum is
\begin{equation}
\label{eq:pq-opt}
    \OPT_{p\to q}(A)
    :=
    \norm{A}_{p\to q}^q
    =
    \max_{\norm{x}_p\le1}F_A(x)
    =
    \max_{\norm{x}_p=1}F_A(x).
\end{equation}
The final equality follows from homogeneity and the assumption \(A\ne0\).

For a nonzero row \(a_i\), set
\[
    w_i:=\frac{a_i}{\norm{a_i}_{p^\ast}},
    \qquad
    u_i:=J_{p^\ast}(w_i).
\]
By \Cref{subsec:norms-polynomials},
\[
    \norm{u_i}_p=1,
    \qquad
    \langle a_i,u_i\rangle=\norm{a_i}_{p^\ast}.
\]
Thus \(u_i\) attains the H\"older upper bound for the row functional
\(x\mapsto\langle a_i,x\rangle\).  We call \(u_i\) the \emph{row candidate}
associated with \(a_i\).

\subsection{The fixed-threshold SoS relaxation}

Fix an integer \(k\ge r\), and set
\begin{equation}
\label{eq:pq-implemented-degree}
    D_{p,k}:=2k+p,
\end{equation}
and define the polynomial
\[
    g_p(x):=1-\sum_{j=1}^d x_j^p.
\]
Because \(p\) is even, \(g_p(x)\ge0\) is exactly the \(\ell_p\)-unit ball.
For a threshold \(\tau\ge0\), let
\(\mathcal F_{D_{p,k}}^{p,q}(A,\tau)\) be the set of degree-
\(D_{p,k}\) pseudo-expectations \(\wtE\) satisfying
\begin{subequations}
\label{eq:pq-fixed-threshold-relaxation}
\begin{align}
    \wtE\!\left[h(x)^2g_p(x)\right]
    &\ge0
    &&\text{whenever }2\deg h+p\le D_{p,k},
    \label{eq:pq-ball-condition-main}\\
    \wtE\!\left[h(x)^2\bigl(F_A(x)-\tau\bigr)\right]
    &\ge0
    &&\text{whenever }2\deg h+q\le D_{p,k}.
    \label{eq:pq-objective-condition-main}
\end{align}
\end{subequations}
For each fixed \(\tau\), these constraints are affine in the degree-
\(D_{p,k}\) pseudo-moments and define a semidefinite feasibility problem in
the original \(d\) variables.  The first constraint enforces the \(p\)-unit
ball, while the second enforces the objective threshold.  The next lemma
derives the H\"older inequalities used in the rounding proof from the first
constraint.

\begin{lemma}[H\"older inequalities from the even \(p\)-ball]
\label{lem:pq-holder-from-even-ball}
Let \(p\ge2\) be even, let \(\norm{w}_{p^\ast}=1\), and put
\(s=\langle w,x\rangle\).  For every integer \(j\ge1\), the polynomial
\(1-s^{2j}\) has an SoS certificate from \(g_p(x)\ge0\) of the form
\[
    1-s^{2j}=\sigma_0(x)+\sigma_1(x)g_p(x),
    \qquad
    \sigma_0,\sigma_1\in\SOS,
\]
with total degree at most \(2j+p\).  Consequently, every
\(\wtE\in\mathcal F_{D_{p,k}}^{p,q}(A,\tau)\) satisfies
\begin{equation}
\label{eq:pq-holder-condition-main}
    \wtE\!\left[
        h(x)^2\bigl(1-\langle w,x\rangle^{2j}\bigr)
    \right]\ge0
\end{equation}
for every \(1\le j\le k\) and every \(h\) with
\(\deg h\le k-j\).
\end{lemma}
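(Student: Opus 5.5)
The plan is to certify the scalar H\"older inequality $|s|\le\norm{x}_p$ at the level of $p$-th powers, and then lift it to arbitrary even powers $s^{2j}$ by a univariate argument. Throughout, write $\lambda_i:=|w_i|^{p^\ast}$, so that $\lambda_i\ge0$ and $\sum_i\lambda_i=\norm{w}_{p^\ast}^{p^\ast}=1$.

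\emph{Step 1: an SoS Jensen/H\"older certificate for $s^p$.} Discard the indices with $w_i=0$, which contribute nothing to $s$. For the remaining indices set
\[
y_i:=\operatorname{sgn}(w_i)\,|w_i|^{1-p^\ast}x_i ,
\]
a linear form in $x$. Then $s=\sum_i\lambda_i y_i$, and
\[
\lambda_i y_i^p=|w_i|^{p^\ast+p(1-p^\ast)}x_i^p=x_i^p ,
\]
because $p^\ast(1-p)+p=0$. Hence
\[
\sum_j x_j^p-s^p=\sum_i\lambda_i y_i^p-\Bigl(\sum_i\lambda_i y_i\Bigr)^p .
\]
To certify Jensen's inequality for the convex even power $\varphi(t)=t^p$, set $m:=s=\sum_i\lambda_i y_i$. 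For each $i$, the tangent-line remainder
\[
\varphi(y_i)-\varphi(m)-\varphi'(m)(y_i-m)
\]
is a nonnegative binary form of degree $p$ in $(y_i,m)$. Nonnegative binary forms are sums of squares, so after substituting the linear forms $y_i,m$ in $x$ each remainder is SoS of degree $p$. Summing with the weights $\lambda_i$ kills the linear terms, since $\sum_i\lambda_i(y_i-m)=0$. This yields
\[
1-s^p=g_p(x)+\sigma(x),\qquad \sigma\in\SOS,\ \deg\sigma\le p .
\]

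\emph{Step 2: from $s^p$ to $s^{2j}$.} First I would handle $j=1$ univariately. By weighted AM--GM applied to $t^2\ge0$,
\[
1-t^2-\tfrac{2}{p}(1-t^p)=\tfrac{2}{p}t^p+\bigl(1-\tfrac{2}{p}\bigr)-t^2\ge0
\]
for all real $t$. A nonnegative univariate polynomial is SoS, so
\[
1-t^2=\sigma'(t)+\tfrac{2}{p}(1-t^p),\qquad \deg\sigma'\le p .
\]
For general $j$ I would telescope:
\[
1-t^{2j}=\sum_{i=0}^{j-1}t^{2i}(1-t^2).
\]
Each summand equals $t^{2i}\sigma'(t)+\tfrac{2}{p}t^{2i}(1-t^p)$. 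Both multipliers $t^{2i}$ are squares, and every term has degree at most $2(j-1)+p$. Substituting $t=s$ and then $1-s^p=g_p+\sigma$ from Step 1 gives the certificate
\[
1-s^{2j}=\sigma_0+\sigma_1g_p,\qquad \sigma_1=\tfrac{2}{p}\sum_{i<j}s^{2i},
\]
where $\sigma_0$ collects the remaining squares (products of squares are sums of squares). The total degree is at most $2j+p$, as claimed.

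\emph{Step 3: the pseudo-expectation consequence.} Multiply the certificate by $h^2$ and apply $\wtE$, treating the two pieces separately.
\begin{itemize}
\item \emph{The $\sigma_1g_p$ piece.} $h^2\sigma_1g_p$ is a sum of terms $(hs^{i})^2g_p$ with $2\deg(hs^i)+p\le 2(k-j)+2(j-1)+p\le D_{p,k}$. Each such term is covered by \eqref{eq:pq-ball-condition-main}.
\item \emph{The $\sigma_0$ piece.} $h^2\sigma_0$ is a sum of squares of degree at most $2(k-j)+2j+p=D_{p,k}$. Its pseudo-expectation is therefore nonnegative by positivity of $\wtE$ on squares of polynomials of degree at most $D_{p,k}/2$.
\end{itemize}
Adding the two pieces gives \eqref{eq:pq-holder-condition-main}.

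The main obstacle is Step 1: obtaining a genuinely SoS (not merely pointwise) form of H\"older's inequality with the correct degree $p$. The binary-form reduction via tangent lines is what I would rely on, and the degree bookkeeping must be checked so that $\sigma_0$ does not exceed degree $2j+p$.
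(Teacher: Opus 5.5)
Your proof is correct, but it takes a different route from the paper's. The paper never certifies the $p$-th-power form of H\"older. It certifies the \emph{linear} inequalities $1\mp s\ge0$ by coordinatewise Young's inequality, $1-\langle w,x\rangle=\frac1p g_p(x)+\sum_i Y_i(x_i)$, where $Y_i(t)=\frac1p t^p+\frac1{p^\ast}|w_i|^{p^\ast}-w_it$ is a nonnegative (hence SoS) univariate polynomial. It then combines the two signs through the interval identity $1-s^{2j}=\frac12(1+s)^2\bigl(\sum_{\ell<j}s^{2\ell}\bigr)(1-s)+\frac12(1-s)^2\bigl(\sum_{\ell<j}s^{2\ell}\bigr)(1+s)$.

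You instead prove the homogeneous statement $\norm{x}_p^p-\langle w,x\rangle^p\in\SOS$ in three moves:
\begin{itemize}
\item the change of variables $y_i$ turns H\"older into Jensen for $t\mapsto t^p$;
\item Jensen is certified by tangent-line remainders, which are nonnegative binary forms and hence SoS (the same fact the paper invokes in Lemma~\ref{lem:pq-row-binary-sos});
\item you pass from $1-s^p$ to $1-s^2$ by the univariate AM--GM certificate and then telescope.
\end{itemize}

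The paper's route needs only univariate SoS facts and produces the linear certificates for $-1\le s\le 1$ along the way. Yours gives a clean degree-$p$ homogeneous H\"older certificate and avoids the sign split. It also actually yields total degree $2j+p-2$, with multiplier $\sigma_1=\frac2p\sum_{i<j}s^{2i}$ of degree $2j-2$, which is slightly better than the stated $2j+p$.

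One small slip: after discarding the indices with $w_i=0$, your displayed identity for $\sum_j x_j^p-s^p$ omits the term $\sum_{i:w_i=0}x_i^p$. Since $p$ is even, this is a sum of squares of degree $p$ and is simply absorbed into $\sigma$, so nothing downstream changes.
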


\begin{proof}
For each coordinate, Young's inequality shows that the univariate polynomial
\[
    Y_i(t)
    :=
    \frac1p t^p
    +\frac1{p^\ast}|w_i|^{p^\ast}
    -w_it
\]
is nonnegative on \(\R\).  It has even degree, and hence is a sum of squares
of real univariate polynomials.  Since
\(\sum_i|w_i|^{p^\ast}=1\),
\begin{equation}
\label{eq:pq-linear-holder-certificate}
    1-\langle w,x\rangle
    =
    \frac1p g_p(x)+\sum_{i=1}^dY_i(x_i).
\end{equation}
Replacing \(w\) by \(-w\) gives the analogous certificate for
\(1+\langle w,x\rangle\).

Writing \(s=\langle w,x\rangle\), one has the polynomial identity
\begin{equation}
\label{eq:pq-interval-holder-identity}
\begin{aligned}
    1-s^{2j}
    &=
    \frac12(1+s)^2
       \left(\sum_{\ell=0}^{j-1}s^{2\ell}\right)(1-s)\\
    &\qquad+
    \frac12(1-s)^2
       \left(\sum_{\ell=0}^{j-1}s^{2\ell}\right)(1+s).
\end{aligned}
\end{equation}
Both multipliers of \(1-s\) and \(1+s\) are sums of squares.  Combining
\eqref{eq:pq-linear-holder-certificate} and its \(-w\) version with
\eqref{eq:pq-interval-holder-identity} gives the claimed SoS certificate,
of total degree at most \(2j+p\).  Multiplying this certificate
by \(h^2\) has degree at most
\[
    2(k-j)+2j+p=D_{p,k},
\]
so the \(p\)-ball constraint \eqref{eq:pq-ball-condition-main} implies
\eqref{eq:pq-holder-condition-main}.
\end{proof}

Define the degree-\(D_{p,k}\) SoS value by
\begin{equation}
\label{eq:pq-sos-value}
    \SOS_{D_{p,k}}^{p\to q}(A)
    :=
    \sup\{\tau\ge0:
    \mathcal F_{D_{p,k}}^{p,q}(A,\tau)\ne\varnothing\}.
\end{equation}
The feasible thresholds are downward closed.
Indeed, if \(\wtE\in\mathcal F_{D_{p,k}}^{p,q}(A,\tau)\) and
\(0\le\tau'\le\tau\), then
\[
    \wtE\!\left[h(x)^2\bigl(F_A(x)-\tau'\bigr)\right]
    =
    \wtE\!\left[h(x)^2\bigl(F_A(x)-\tau\bigr)\right]
    +(\tau-\tau')\wtE h(x)^2
    \ge0
\]
for every admissible \(h\), while the \(p\)-ball constraint is unchanged.

\subsection{The matrix \texorpdfstring{\(p\!\to\!q\)}{p-to-q} norm approximation algorithm}

Algorithm~\ref{alg:sos-pq} uses the downward-closed property of the feasible
thresholds to approximate \(\SOS_{D_{p,k}}^{p\to q}(A)\).  The lower endpoint
is initialized by the best row candidate, and the upper endpoint is the H\"older
bound obtained from the row norms.  The validity of these endpoints and the
binary-search invariant is proved in Theorem~\ref{thm:pq-sos-main}.

\begin{algorithm}[H]
\caption{SoS approximation for matrix \(p\!\to\!q\) norms with even exponents}
\label{alg:sos-pq}
\begin{algorithmic}[1]
\Statex \textbf{Input:} A nonzero matrix \(A\in\R^{N\times d}\), fixed even
integers \(2\le p<q=2r\), an integer \(k\ge r\), and a tolerance
\(\varepsilon\in(0,1)\).
\Statex \textbf{Output:} A threshold \(\widehat\tau\) and a degree-
\(D_{p,k}\) pseudo-expectation \(\wtE^{\mathrm{out}}\).

\State For every nonzero row, set
\[
    w_i\gets \frac{a_i}{\norm{a_i}_{p^\ast}},
    \qquad
    u_i\gets J_{p^\ast}(w_i).
\]
\State Set
\[
    B\gets\sum_{i=1}^N \norm{a_i}_{p^\ast}^q,
    \qquad
    i_0\gets\argmax_{i:a_i\ne0}\norm{Au_i}_q^q,
\]
\[
    \tau_{\mathrm{lo}}\gets\norm{Au_{i_0}}_q^q,
    \qquad
    \tau_{\mathrm{hi}}\gets B,
    \qquad
    \wtE^{\mathrm{best}}\gets\wtE_{u_{i_0}},
\]
where \(\wtE_{u_{i_0}}[g]=g(u_{i_0})\) for \(\deg g\le D_{p,k}\).
\While{\(\tau_{\mathrm{hi}}>(1+\varepsilon)\tau_{\mathrm{lo}}\)}
    \State \(\tau_{\mathrm{mid}}\gets(\tau_{\mathrm{lo}}+\tau_{\mathrm{hi}})/2\).
    \If{\(\mathcal F_{D_{p,k}}^{p,q}(A,\tau_{\mathrm{mid}})\ne\varnothing\)}
        \State Choose
        \(\wtE_{\mathrm{mid}}\in
        \mathcal F_{D_{p,k}}^{p,q}(A,\tau_{\mathrm{mid}})\).
        \State \(\tau_{\mathrm{lo}}\gets\tau_{\mathrm{mid}}\) and
        \(\wtE^{\mathrm{best}}\gets\wtE_{\mathrm{mid}}\).
    \Else
        \State \(\tau_{\mathrm{hi}}\gets\tau_{\mathrm{mid}}\).
    \EndIf
\EndWhile
\State \(\widehat\tau\gets\tau_{\mathrm{lo}}\) and
\(\wtE^{\mathrm{out}}\gets\wtE^{\mathrm{best}}\).
\State \Return \((\widehat\tau,\wtE^{\mathrm{out}})\).
\end{algorithmic}
\end{algorithm}

\paragraph{Running time.}
For \(A\ne0\), the row-candidate initialization gives
\(0<\tau_{\mathrm{lo}}\le B\le N\tau_{\mathrm{lo}}\), so the binary search
makes \(O(\log(N/\varepsilon))\) feasibility calls.  Each feasibility
problem is a degree-\(D_{p,k}=2k+p\) moment SDP in \(d\) variables, with the
two families of constraints in
\eqref{eq:pq-fixed-threshold-relaxation}.  For fixed \(p,q\), its description
size is
\[
    (N+d)^{O_{p,q}(k)},
\]
up to polynomial factors in the input bit length and numerical precision.  At
the degree scale
\[
    k=
    O_{p,q}\!\left(
        \frac{N^{p/q}}{\varepsilon^{p/2}}
    \right),
\]
the running time is
\[
    (N+d)^{O_{p,q}\left(N^{p/q}/\varepsilon^{p/2}\right)}.
\]

The algorithm above approximates the optimum of the degree-\(D_{p,k}\)
threshold relaxation.  The following theorem shows that, for sufficiently
large \(k\), this relaxation gives a relative approximation to the true
matrix \(p\!\to q\) norm.

\begin{theorem}[Approximation guarantee for matrix \(p\!\to\!q\) norms with even exponents]
\label{thm:pq-sos-main}
Fix even integers \(2\le p<q\), and write \(q=2r\).  There exists a constant
\(C_{p,q}>0\) such that the following holds.  Let
\(A\in\R^{N\times d}\) be nonzero, let \(0<\varepsilon<1\), and let
\(k\ge r\) satisfy
\begin{equation}
\label{eq:pq-degree-requirement}
    k-1
    \ge
    C_{p,q}
    \frac{N^{p/q}}{\varepsilon^{p/2}}.
\end{equation}
Let \((\widehat\tau,\wtE^{\mathrm{out}})\) be the output of
Algorithm~\ref{alg:sos-pq}.  Then
\begin{equation}
\label{eq:pq-output-norm-approximation}
    (1+\varepsilon)^{-1/q}\norm{A}_{p\to q}
    \le
    \widehat\tau^{1/q}
    \le
    (1-\varepsilon)^{-1/q}\norm{A}_{p\to q}.
\end{equation}
Moreover, the degree-\(D_{p,k}\) SoS value satisfies
\begin{equation}
\label{eq:pq-sos-sandwich}
    \OPT_{p\to q}(A)
    \le
    \SOS_{D_{p,k}}^{p\to q}(A)
    \le
    \left(
        1+
        C_{p,q}\frac{N^{2/q}}{(k-1)^{2/p}}
        +
        C_{p,q}\frac{N}{(k-1)^{q/p}}
    \right)
    \OPT_{p\to q}(A).
\end{equation}
\end{theorem}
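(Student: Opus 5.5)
The plan mirrors the proof of Theorem~\ref{thm:sos-main}. The lower bound $\OPT_{p\to q}(A)\le\SOS^{p\to q}_{D_{p,k}}(A)$ comes from evaluating at a maximizer $x^\star$ with $\norm{x^\star}_p=1$; this evaluation satisfies \eqref{eq:pq-ball-condition-main} and \eqref{eq:pq-objective-condition-main}. The endpoints of Algorithm~\ref{alg:sos-pq} are justified next. $\tau_{\mathrm{lo}}=F_A(u_{i_0})$ is feasible by evaluation. For the upper endpoint, each $\norm{a_i}_{p^\ast}^{-q}\langle a_i,x\rangle^{q}\le 1$ is certified via Lemma~\ref{lem:pq-holder-from-even-ball} with $j=r$; summing over $i$ gives $\wtE F_A\le B$, so every feasible $\tau$ is at most $B$. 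The bound $B\le N\tau_{\mathrm{lo}}$ holds because $F_A(u_i)\ge\norm{a_i}_{p^\ast}^q$. The binary-search invariant and the fourth-root/$q$-th-root conversion then proceed exactly as before. So everything reduces to the upper half of \eqref{eq:pq-sos-sandwich}, namely a rounding statement: every $\wtE\in\mathcal F^{p,q}_{D_{p,k}}(A,\tau)$ yields $x$ with $\norm{x}_p\le1$ and $F_A(x)\ge(1-\mathrm{err})\tau$, where $\mathrm{err}=O_{p,q}(N^{2/q}(k-1)^{-2/p}+N(k-1)^{-q/p})$.

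For the rounding I would work on the dual side. Define $\Phi_k(w)=\wtE\langle w,x\rangle^{2k}$ for $w$ on the $\ell_{p^\ast}$-unit sphere, take a global maximizer $w^\ast$, and set $u=J_{p^\ast}(w^\ast)$. Then $\norm{u}_p=1$ and $\langle w^\ast,u\rangle=1$. Let $s=\langle w^\ast,x\rangle$ and reweigh by $s^{2k-q}$. This is well defined, since $\wtE s^{2k-q}\ge\Phi_k(w^\ast)>0$ by Lemma~\ref{lem:pq-holder-from-even-ball}, and the threshold constraint with $h=s^{k-r}$ gives $\tau\le\wtE_w F_A(x)$. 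Decompose $x=su+z$ with $\langle w^\ast,z\rangle=0$ (modulo the constraints, $z=x-\langle w^\ast,x\rangle u$), so that $\langle a_i,x\rangle=s\langle a_i,u\rangle+\langle a_i,z\rangle$.

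Apply the even-power analogue of Lemma~\ref{lem:centered-quartic-sos} coordinatewise: $(a+b)^q\le(1+\eta)a^q+qa^{q-1}b+C_{\eta,q}b^q$. This yields
\[
\tau\le(1+\eta)\wtE_w[s^q]F_A(u)+q\,\wtE_w\Bigl[s^{q-1}\sum_i\langle a_i,u\rangle^{q-1}\langle a_i,z\rangle\Bigr]+C_{\eta,q}\sum_i\wtE_w\langle a_i,z\rangle^q .
\]
The first factor satisfies $\wtE_w s^q\le1$. The middle term must vanish by first-order stationarity: for a perturbation $w^\ast+th$ with $\langle h,u\rangle=0$ (tangent to the $\ell_{p^\ast}$ sphere), stationarity gives $\wtE s^{2k-1}\langle h,x\rangle=0$. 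Taking $h=\sum_i\langle a_i,u\rangle^{q-1}(a_i-\langle a_i,u\rangle w^\ast)$ reproduces exactly the middle term, because $\langle h,x\rangle=\sum_i\langle a_i,u\rangle^{q-1}\langle a_i,z\rangle$.

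The main obstacle is the tangent $q$-th moment bound, the analogue of Lemma~\ref{lem:fourth-order-tangent}. Here the sphere is only $p^\ast$-smooth, not Euclidean. For a direction $h_i=a_i-\langle a_i,u\rangle w^\ast$, use the global maximality comparison
\[
\wtE(s\pm t\langle h_i,x\rangle)^{2k}\le\norm{w^\ast\pm th_i}_{p^\ast}^{2k}\,\Phi_k(w^\ast).
\]
Averaging the $\pm$ cases and using the uniform smoothness of $\ell_{p^\ast}$ for $1<p^\ast\le2$ gives $\tfrac12(\norm{w+th}^{p^\ast}+\norm{w-th}^{p^\ast})\le1+C_p t^{p^\ast}\norm{h}^{p^\ast}$, hence a right-hand side $\le\exp(C k t^{p^\ast}\norm{h_i}_{p^\ast}^{p^\ast})$. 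Keeping only the $t^{q}$ binomial term on the left and choosing $t\asymp (k\norm{h_i}^{p^\ast})^{-1/p^\ast}$ should give
\[
\wtE_w\langle h_i,x\rangle^q\lesssim_{p,q}k^{-q/p}\norm{h_i}_{p^\ast}^q\,\wtE_w s^q .
\]
The key exponent check is $q(1-1/p^\ast)=q/p$. Since $\norm{h_i}_{p^\ast}\le2\norm{a_i}_{p^\ast}$, summing over $i$ bounds the error by $k^{-q/p}B$.

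To optimize, use the first-endpoint reduction: if some row candidate already achieves $(1-\varepsilon)\tau$, we are done; otherwise $B\le N\tau$. Choosing $\eta$ as in the proof of Theorem~\ref{thm:sharp-one-shot-rounding} balances the two terms and produces both error terms $N^{2/q}k^{-2/p}$ and $Nk^{-q/p}$. The delicate points I expect are making the smoothness inequality an honest scalar bound (it is applied to the real norm function, not an SoS, so that part is fine) and handling the $t^q$ binomial coefficient versus $k^{q/2}$ scaling. I would first try $t^{p^\ast}=1/(k\norm{h_i}^{p^\ast})$ and track constants.
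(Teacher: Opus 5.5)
Your argument works and follows the paper's architecture. That architecture is: the dual-sphere maximizer $w^\ast$ with $u=J_{p^\ast}(w^\ast)$, reweighing by $s^{2k-q}$, stationarity to kill the linear term, tangent moment bounds from global maximality, comparison with row candidates, and the same endpoint and binary-search bookkeeping. The difference is in how you treat the nonlinear remainder.

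\textbf{Your route.} You split $(a+b)^q\le(1+\eta)a^q+qa^{q-1}b+C_{\eta,q}b^q$. As a result, only the top tangent moment $\wtE_w\ip{h_i}{x}^q\lesssim (k-1)^{-q/p}\norm{h_i}_{p^\ast}^q$ is ever used. The term $N^{2/q}(k-1)^{-2/p}$ then appears only after choosing $\eta\asymp X^{2/q}$, where $X=N/(k-1)^{q/p}$. For this you must record that $C_{\eta,q}=O_q(1+\eta^{-(q-2)/2})$. This follows from Young's inequality on the terms $a^{q-\ell}b^{\ell}$, worst at $\ell=2$, and gives $C_{\eta,q}X\lesssim X+X^{2/q}$. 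The $q=4$ choice of $\eta$ from Theorem~\ref{thm:sharp-one-shot-rounding} does not transfer verbatim.

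\textbf{The paper's route.} It uses Lemma~\ref{lem:pq-row-binary-sos}, which keeps the cross term $a^{q-2}b^2$ explicitly. That term is bounded with the separate quadratic tangent estimate \eqref{eq:pq-dual-tangent-quadratic} (the case $j=1$ of Lemma~\ref{lem:pq-dual-tangent}), followed by H\"older over the rows. This is parameter-free and yields both error terms at once.

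\textbf{Comparison.} H\"older over rows is effectively the optimized form of your global Young/$\eta$ tradeoff, so the rates coincide. Your route is closer to the $2\to4$ proof and needs one fewer tangent estimate. The paper's route avoids the $\eta$-bookkeeping.

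Three details need tightening.

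\begin{enumerate}
\item Your averaged inequality $\tfrac12(\norm{w+th}_{p^\ast}^{p^\ast}+\norm{w-th}_{p^\ast}^{p^\ast})\le1+C t^{p^\ast}\norm{h}_{p^\ast}^{p^\ast}$ does not bound $\tfrac12(\norm{w+th}_{p^\ast}^{2k}+\norm{w-th}_{p^\ast}^{2k})$. It allows one of the two norms to be roughly $2^{1/p^\ast}$, which is fatal after raising to the power $2k$. You need the one-sided bound $\norm{w\pm th}_{p^\ast}\le1+C|t|^{p^\ast}\norm{h}_{p^\ast}^{p^\ast}$ for each sign separately. It holds because $\ip{h}{u}=0$ kills the linear term in the coordinatewise expansion of $|\xi+\eta|^{p^\ast}$; this is Lemma~\ref{lem:pq-tangent-smoothness}.
\item The claim $\Phi_k(w^\ast)>0$ needs a reason. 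From $\wtE F_A(x)\ge\tau>0$ some row has positive $q$-th pseudo-moment, and log-convexity of the even pseudo-moments propagates positivity up to degree $2k$.
\item For the sandwich \eqref{eq:pq-sos-sandwich} at arbitrary $k$, the error need not be small. Replace the case split ``$B\le N\tau$'' by $B\le N\max_iF_A(u_i)\le N F_A(v)$, where $v$ is the best of $\{u\}\cup\{u_i\}$. Then $\tau\le(1+\eta+C\,C_{\eta,q}X)F_A(v)$, and taking $\eta=X^{2/q}$ gives exactly the stated factor for every $k$.
\end{enumerate}
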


\begin{corollary}[Dual exponents below \(2\)]
\label{cor:pq-dual-regime}
Fix even integers \(2\le p<q\).  For every
\(A\in\R^{N\times d}\),
\[
    \norm{A}_{q^\ast\to p^\ast}
    =
    \norm{A^\top}_{p\to q}.
\]
Consequently, applying Theorem~\ref{thm:pq-sos-main} to \(A^\top\) gives
the same two-sided guarantee as \eqref{eq:pq-output-norm-approximation} for
\(\norm{A}_{q^\ast\to p^\ast}\), with \(N\) replaced by \(d\).  In
particular, replacing the theorem's tolerance by a sufficiently small
\(p,q\)-dependent multiple of \(\varepsilon\) and taking
\[
    k=
    O_{p,q}\!\left(
        \frac{d^{p/q}}{\varepsilon^{p/2}}
    \right)
\]
and hence degree \(D_{p,k}=2k+p\) gives a multiplicative
\((1+\varepsilon)\)-approximation.
\end{corollary}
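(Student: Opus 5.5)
The corollary has two parts: the duality identity, and then a direct application of Theorem~\ref{thm:pq-sos-main} to \(A^\top\). I will prove the identity first and then transfer the guarantee.

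\textbf{Duality identity.} I write \(\norm{A}_{q^\ast\to p^\ast}\) as a bilinear supremum using \(\ell_s\)/\(\ell_{s^\ast}\) duality. For every \(z\in\R^d\), \(\norm{Az}_{p^\ast}=\max_{\norm{x}_p\le1}\langle x,Az\rangle\), where \(x\) ranges over \(\R^N\). For every \(x\in\R^N\), \(\norm{A^\top x}_q=\max_{\norm{z}_{q^\ast}\le1}\langle A^\top x,z\rangle\), where \(z\) ranges over \(\R^d\). Both facts follow from H\"older's inequality, with equality attained by the duality maps \(J_s\) of \Cref{subsec:norms-polynomials}. Therefore
\[
\norm{A}_{q^\ast\to p^\ast}=\max_{\norm{z}_{q^\ast}\le1}\max_{\norm{x}_p\le1}\langle x,Az\rangle=\max_{\norm{x}_p\le1}\max_{\norm{z}_{q^\ast}\le1}\langle A^\top x,z\rangle=\norm{A^\top}_{p\to q}.
\]
The two maxima can be swapped because both feasible sets are compact. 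The identity holds trivially when \(A=0\).

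\textbf{Transfer of the guarantee.} The matrix \(A^\top\in\R^{d\times N}\) has \(d\) rows. If \(A\ne0\), Theorem~\ref{thm:pq-sos-main} applied to \(A^\top\) yields the two-sided bound \eqref{eq:pq-output-norm-approximation} for \(\norm{A^\top}_{p\to q}\), with \(N\) replaced by \(d\) in the degree requirement \eqref{eq:pq-degree-requirement}. By the identity above, this is exactly the same two-sided bound for \(\norm{A}_{q^\ast\to p^\ast}\).

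\textbf{Tolerance adjustment.} It remains to turn the theorem's guarantee into a clean \((1+\varepsilon)\)-approximation. Run the theorem with tolerance \(\varepsilon'=c_{p,q}\,\varepsilon\). Then \((1\pm\varepsilon')^{-1/q}\) lies within a multiplicative factor \(1+O(\varepsilon')/q\) of \(1\). Hence a small enough constant \(c_{p,q}\) makes \(\widehat\tau^{1/q}\) a multiplicative \((1+\varepsilon)\)-approximation. The degree requirement then reads \(k-1\ge C_{p,q}\,d^{p/q}/(c_{p,q}\varepsilon)^{p/2}\), which is \(k=O_{p,q}(d^{p/q}/\varepsilon^{p/2})\), and the SoS degree is \(D_{p,k}=2k+p\).

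The only step needing real care is the duality identity, and it is standard; everything after it is bookkeeping of constants.
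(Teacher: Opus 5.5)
Your argument is correct and follows the route the paper indicates; the paper gives no separate proof beyond the statement itself. You use the H\"older-duality identity \(\norm{A}_{q^\ast\to p^\ast}=\norm{A^\top}_{p\to q}\), then apply Theorem~\ref{thm:pq-sos-main} to \(A^\top\), which has \(d\) rows, with the tolerance rescaled by a \(p,q\)-dependent constant. One small remark: swapping the two suprema needs no compactness, since iterated suprema over a product set always commute; compactness only ensures the suprema are attained.
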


The proof of Theorem~\ref{thm:pq-sos-main} is deferred to
Subsection~\ref{subsec:proof-pq-sos-main}.

\subsection{Spherical directional moment polynomial and rounding}
\label{subsec:pq-rounding}

For the soundness analysis of Theorem~\ref{thm:pq-sos-main}, we follow the
same argmax-principle-guided rounding strategy as in the preceding section.
Fix \(\tau>0\) and
\(\wtE\in\mathcal F_{D_{p,k}}^{p,q}(A,\tau)\).  Define the spherical
directional moment polynomial
\begin{equation}
\label{eq:pq-powered-moment}
    \Phi_k(w):=\wtE\langle w,x\rangle^{2k},
    \qquad
    \norm{w}_{p^\ast}=1.
\end{equation}
It records the \(2k\)-th pseudo-moment of \(x\) in the direction \(w\).
Since the \(p^\ast\)-unit sphere is compact, \(\Phi_k\) attains its maximum.
Choose
\begin{equation}
\label{eq:pq-maximizer-def}
    w\in\operatorname*{arg\,max}_{\norm{z}_{p^\ast}=1}\Phi_k(z),
    \qquad
    u=J_{p^\ast}(w),
    \qquad
    s=\langle w,x\rangle.
\end{equation}

\begin{lemma}[Tangent smoothness of the dual sphere]
\label{lem:pq-tangent-smoothness}
Fix an even integer \(p\ge2\).  There is a constant \(C_p>0\) such that,
whenever \(\norm{w}_{p^\ast}=1\),
\(\langle b,J_{p^\ast}(w)\rangle=0\), and
\(|t|\norm{b}_{p^\ast}\le1\),
\begin{equation}
\label{eq:pq-tangent-smoothness}
    \norm{w+tb}_{p^\ast}
    \le
    1+C_p|t|^{p^\ast}\norm{b}_{p^\ast}^{p^\ast}.
\end{equation}
The same estimate holds with \(t\) replaced by \(-t\).
\end{lemma}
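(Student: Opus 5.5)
The plan is to reduce \eqref{eq:pq-tangent-smoothness} to the standard uniform-smoothness inequality for $\ell_{p^\ast}$ with $1<p^\ast\le2$, and to use the tangency condition to remove the linear term. Set $\phi(z):=\norm{z}_{p^\ast}^{p^\ast}=\sum_j |z_j|^{p^\ast}$. This function is $C^1$, with gradient $\nabla\phi(z)=p^\ast J_{p^\ast}(z)$. In particular, $\nabla\phi(w)=p^\ast u$ where $u=J_{p^\ast}(w)$, so the hypothesis $\langle b,u\rangle=0$ says exactly that the first-order term of $\phi$ at $w$ in direction $b$ vanishes.

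The key step is a coordinatewise scalar inequality. For $1<\beta\le2$ and all real $a,c$,
\[
|a+c|^{\beta}\le |a|^{\beta}+\beta\operatorname{sgn}(a)|a|^{\beta-1}c+K_\beta|c|^{\beta}.
\]
To prove it, divide by $|a|^\beta$ (the case $a=0$ is trivial) and reduce to $g(y):=|1+y|^\beta-1-\beta y\le K_\beta|y|^\beta$. For $|y|\le1$, Taylor's theorem gives $g(y)\le \tfrac{\beta(\beta-1)}{2}\max_{|\xi|\le 1}|1+\xi|^{\beta-2}y^2$. That maximum is infinite near $\xi=-1$ when $\beta<2$, so Taylor alone is not enough. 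Instead I would argue directly: $g(y)/|y|^\beta$ is continuous on $\R\setminus\{0\}$. It tends to $0$ as $y\to0$, since $g(y)=O(y^2)$ near $0$ and $\beta\le2$. It tends to $1$ as $|y|\to\infty$, since $\beta>1$. Hence this ratio is bounded by some constant $K_\beta$, and this is the one mildly delicate point.

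Next, apply the scalar inequality with $a=w_j$, $c=tb_j$, $\beta=p^\ast$, and sum over $j$:
\[
\phi(w+tb)\le 1+p^\ast t\langle u,b\rangle+K_{p^\ast}|t|^{p^\ast}\norm{b}_{p^\ast}^{p^\ast}=1+K_{p^\ast}|t|^{p^\ast}\norm{b}_{p^\ast}^{p^\ast}.
\]
Here I use $\sum_j \operatorname{sgn}(w_j)|w_j|^{p^\ast-1}b_j=\langle J_{p^\ast}(w),b\rangle=0$. Finally, take the $1/p^\ast$ power and use $(1+\delta)^{1/p^\ast}\le 1+\delta/p^\ast\le 1+\delta$ for $\delta\ge0$. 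This gives \eqref{eq:pq-tangent-smoothness} with $C_p:=K_{p^\ast}$.

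The hypothesis $|t|\norm{b}_{p^\ast}\le1$ is not even needed in this argument; it only ensures the right-hand side is at most $1+C_p$, should a bounded form be wanted later. The statement for $-t$ follows at once, because the tangency condition is invariant under $b\mapsto -b$ (equivalently, the bound depends only on $|t|$). The only real obstacle is the uniform constant in the scalar inequality, which the compactness and limit argument above handles. If an explicit $C_p$ were wanted, one could alternatively split into the cases $|y|\le1/2$ (Taylor with $|1+\xi|\ge1/2$) and $|y|>1/2$ (crude bound $|1+y|^\beta\le 3^\beta|y|^\beta$).
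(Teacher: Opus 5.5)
Your proof is correct and follows the paper's argument essentially verbatim: the coordinatewise inequality $|\xi+\eta|^{\rho}\le|\xi|^{\rho}+\rho\operatorname{sgn}(\xi)|\xi|^{\rho-1}\eta+K_\rho|\eta|^{\rho}$ is summed over coordinates, the linear term is killed by $\langle b,J_{p^\ast}(w)\rangle=0$, and the $1/p^\ast$ power is taken via $(1+z)^{1/\rho}\le 1+z$. You also justify the existence of $K_\rho$, which the paper only asserts; there is one harmless slip there, since at $p^\ast=2$ the ratio $g(y)/|y|^{\beta}$ is identically $1$ rather than tending to $0$ at the origin, and boundedness is unaffected.
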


\begin{proof}
Put \(\rho=p^\ast\in(1,2]\).  By homogeneity in \(b\), it is enough to
consider \(\norm{b}_\rho=1\).  There is a constant \(K_\rho\) such that
\[
    |\xi+\eta|^\rho
    \le
    |\xi|^\rho
    +\rho\operatorname{sgn}(\xi)|\xi|^{\rho-1}\eta
    +K_\rho|\eta|^\rho
\]
for all real \(\xi,\eta\).  Applying this inequality coordinatewise to
\(w+tb\), summing, and using
\(\langle b,J_\rho(w)\rangle=0\) gives
\[
    \norm{w+tb}_\rho^\rho\le1+K_\rho|t|^\rho.
\]
Since \((1+z)^{1/\rho}\le1+z\) for \(z\ge0\), the claimed estimate follows.
Replacing \(b\) by \(-b\) proves the second assertion.
\end{proof}

The next lemma is the analogue of the argmax-induced concentration estimate
in the preceding section.  It converts global maximality of \(w\) into
high-moment bounds in every tangent direction and also records the
first-order stationarity identity used in the rounding proof.

\begin{lemma}[Argmax-induced tangent estimates]
\label{lem:pq-dual-tangent}
Fix even integers \(2\le p<q=2r\), and let \(k\ge r\).  Let
\(\wtE\in\mathcal F_{D_{p,k}}^{p,q}(A,\tau)\), let \(w,u,s\) be as in
\eqref{eq:pq-maximizer-def}, and assume \(\wtE s^{2k}>0\).  For
\(1\le j\le k\), define
\[
    \wtE_{w,j}[g]
    :=
    \frac{\wtE s^{2k-2j}g(x)}{\wtE s^{2k-2j}}.
\]
Then, for every \(b\in\R^d\) satisfying \(\langle b,u\rangle=0\),
\begin{equation}
\label{eq:pq-dual-tangent-high-moment}
    \wtE_{w,j}\langle b,x\rangle^{2j}
    \le
    C_{p,j}(k-1)^{-2j/p}
    \wtE_{w,j}s^{2j}\norm{b}_{p^\ast}^{2j}
    \le
    C_{p,j}(k-1)^{-2j/p}\norm{b}_{p^\ast}^{2j}.
\end{equation}
In particular, for
\[
    \wtE_w[g]
    :=
    \frac{\wtE s^{2k-q}g(x)}{\wtE s^{2k-q}},
\]
one has
\begin{align}
\label{eq:pq-dual-tangent-q-moment}
    \wtE_w\langle b,x\rangle^q
    &\le
    C_{p,q}(k-1)^{-q/p}\norm{b}_{p^\ast}^q,\\
\label{eq:pq-dual-tangent-stationarity}
    \wtE_w s^{q-1}\langle b,x\rangle
    &=0,\\
\label{eq:pq-dual-tangent-quadratic}
    \wtE_w s^{q-2}\langle b,x\rangle^2
    &\le
    C_{p,q}(k-1)^{-2/p}\norm{b}_{p^\ast}^2.
\end{align}
\end{lemma}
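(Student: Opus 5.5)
The plan is to copy the proof of Lemma~\ref{lem:fourth-order-tangent}, replacing the Euclidean normalization by the $\ell_{p^\ast}$ normalization and using Lemma~\ref{lem:pq-tangent-smoothness} to control the normalization cost. By homogeneity I take $\norm{b}_{p^\ast}=1$ and set $r_b:=\langle b,x\rangle$. For $|t|\le1$ the vectors $(w\pm tb)/\norm{w\pm tb}_{p^\ast}$ lie on the dual sphere. Global maximality of $w$ together with homogeneity of $\Phi_k$ then gives
\[
\wtE(s\pm t r_b)^{2k}\le \norm{w\pm tb}_{p^\ast}^{2k}\,\wtE s^{2k}\le \bigl(1+C_p|t|^{p^\ast}\bigr)^{2k}\wtE s^{2k}.
\]
Averaging the $\pm$ versions cancels the odd terms. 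Every remaining even term $\binom{2k}{2i}t^{2i}\wtE s^{2k-2i}r_b^{2i}$ is a pseudo-square and hence nonnegative. I keep only the $i=j$ term and drop the $i=0$ term against $\wtE s^{2k}$, which yields
\[
\binom{2k}{2j}t^{2j}\,\wtE s^{2k-2j}r_b^{2j}\le \Bigl[(1+C_p t^{p^\ast})^{2k}-1\Bigr]\wtE s^{2k}.
\]

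The key step is choosing $t$. I take $t^{p^\ast}=1/(2k)$, so the bracket is at most $e^{C_p}$; this also keeps $t\le1$. Using $\binom{2k}{2j}\gtrsim_j k^{2j}$, we get $\wtE s^{2k-2j}r_b^{2j}\lesssim_{p,j} k^{-2j}t^{-2j}\wtE s^{2k}$. Since $t^{-2j}=(2k)^{2j/p^\ast}=(2k)^{2j(1-1/p)}$, the power of $k$ is $k^{-2j/p}$. Dividing by $\wtE s^{2k-2j}$ gives the first inequality in \eqref{eq:pq-dual-tangent-high-moment}, with $\wtE_{w,j}s^{2j}=\wtE s^{2k}/\wtE s^{2k-2j}$. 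The main obstacle is the positivity of $\wtE s^{2k-2j}$ and the second inequality $\wtE_{w,j}s^{2j}\le1$. For this I would invoke Lemma~\ref{lem:pq-holder-from-even-ball} with $h=s^{k-j}$, which is admissible since $\deg h\le k-j$. It gives $\wtE s^{2k-2j}(1-s^{2j})\ge0$, hence $\wtE s^{2k-2j}\ge\wtE s^{2k}>0$, which gives both claims at once. I also need $(k-1)$ versus $k$, which only changes constants.

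The remaining items follow by specialization. Taking $j=r$ gives \eqref{eq:pq-dual-tangent-q-moment}, since $2j=q$. For \eqref{eq:pq-dual-tangent-stationarity}, the curve $t\mapsto\Phi_k((w+tb)/\norm{w+tb}_{p^\ast})$ is maximized at $t=0$. Because $p^\ast>1$, Lemma~\ref{lem:pq-tangent-smoothness} shows the normalization factor is $1+o(t)$. The one-sided difference quotients then force $2k\,\wtE s^{2k-1}r_b=0$, and dividing by $\wtE s^{2k-q}$ gives the claim. For \eqref{eq:pq-dual-tangent-quadratic}, I first write $\wtE_w s^{q-2}r_b^2=\wtE s^{2k-2}r_b^2/\wtE s^{2k-q}$. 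Applying the $j=1$ case of the displayed bound, and then $\wtE s^{2k}\le\wtE s^{2k-q}$ from Lemma~\ref{lem:pq-holder-from-even-ball} with $j=r$, gives the $(k-1)^{-2/p}$ bound.
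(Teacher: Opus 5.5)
Your proposal is correct and follows essentially the same route as the paper. It obtains positivity of $\wtE s^{2k-2j}$ and the bound $\wtE_{w,j}s^{2j}\le 1$ from Lemma~\ref{lem:pq-holder-from-even-ball} with $h=s^{k-j}$. It then controls the $\pm tb$ perturbation by Lemma~\ref{lem:pq-tangent-smoothness}, keeps the $2j$-th even term with $t\asymp k^{-1/p^\ast}$, and specializes to $j=r$ and $j=1$.

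The only difference is cosmetic. For stationarity you use one-sided difference quotients with $\norm{w+tb}_{p^\ast}\le 1+C_p|t|^{p^\ast}$ and $p^\ast>1$, whereas the paper differentiates the normalized curve, using that the derivative of $\norm{w+tb}_{p^\ast}$ at $t=0$ is $\langle b,u\rangle=0$. Both arguments are valid.
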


\begin{proof}
The denominators are positive because
Lemma~\ref{lem:pq-holder-from-even-ball}, applied with
\(h=s^{k-j}\), gives
\[
    \wtE s^{2k-2j}-\wtE s^{2k}
    =
    \wtE s^{2k-2j}(1-s^{2j})
    \ge0.
\]
By homogeneity assume \(\norm{b}_{p^\ast}=1\).  Since
\(p^\ast\in(1,2]\), Lemma~\ref{lem:pq-tangent-smoothness} gives
\[
    \norm{w\pm tb}_{p^\ast}^{2k}
    \le
    (1+C_p|t|^{p^\ast})^{2k}
    \qquad (|t|\le1).
\]
The normalized vectors
\[
    \frac{w\pm tb}{\norm{w\pm tb}_{p^\ast}}
\]
lie on the \(p^\ast\)-unit sphere.  Global maximality of \(w\), together
with the homogeneity of \(\Phi_k\), therefore gives
\[
    \wtE(s\pm t\langle b,x\rangle)^{2k}
    \le
    \wtE s^{2k}\norm{w\pm tb}_{p^\ast}^{2k}.
\]
Averaging these two inequalities and retaining the \(2j\)-th term gives
\[
    \binom{2k}{2j}t^{2j}
    \wtE s^{2k-2j}\langle b,x\rangle^{2j}
    \le
    \wtE s^{2k}(1+C_p|t|^{p^\ast})^{2k}.
\]
Here every discarded even term has nonnegative pseudo-expectation because it
is the pseudo-expectation of a square.
Choosing \(t=c_{p,j}k^{-1/p^\ast}\), with \(c_{p,j}>0\) sufficiently
small, and using
\[
    1-\frac1{p^\ast}=\frac1p
\]
gives
\[
    \wtE s^{2k-2j}\langle b,x\rangle^{2j}
    \le
    C_{p,j}(k-1)^{-2j/p}\wtE s^{2k}.
\]
Dividing by \(\wtE s^{2k-2j}\) proves the first inequality in
\eqref{eq:pq-dual-tangent-high-moment}; the second follows from
Lemma~\ref{lem:pq-holder-from-even-ball}.  Taking \(j=r\) proves
\eqref{eq:pq-dual-tangent-q-moment}.

For stationarity, consider
\[
    w(t)=\frac{w+tb}{\norm{w+tb}_{p^\ast}}.
\]
The derivative of \(\norm{w+tb}_{p^\ast}\) at \(t=0\) is
\(\langle b,u\rangle=0\), so \(w'(0)=b\).  The function
\(t\mapsto\Phi_k(w(t))\) is stationary at \(t=0\), and hence
\(\wtE s^{2k-1}\langle b,x\rangle=0\).  This yields
\eqref{eq:pq-dual-tangent-stationarity} after division by
\(\wtE s^{2k-q}\).  Taking \(j=1\) in the high-moment estimate just proved
gives
\[
    \wtE s^{2k-2}\langle b,x\rangle^2
    \le
    C_{p,q}(k-1)^{-2/p}\wtE s^{2k}.
\]
Dividing by \(\wtE s^{2k-q}\) and using
\[
    \wtE_w s^q
    =
    \frac{\wtE s^{2k}}{\wtE s^{2k-q}}
    \le1
\]
proves
\eqref{eq:pq-dual-tangent-quadratic}.
\end{proof}

The following theorem is the main rounding estimate.  Starting from a
feasible threshold pseudo-expectation, it compares the moment-polynomial
maximizer with the row candidates and shows that one of these explicit vectors
has objective value comparable to the threshold.

\begin{theorem}[Rounding for matrix \(p\!\to\!q\) norms with even exponents]
\label{thm:pq-rounding}
Fix even integers \(2\le p<q=2r\), and let \(k\ge r\).  Let
\(\tau>0\) and
\(\wtE\in\mathcal F_{D_{p,k}}^{p,q}(A,\tau)\).  With \(u\) as in
\eqref{eq:pq-maximizer-def}, and with
\[
    u_i=J_{p^\ast}\!\left(\frac{a_i}{\norm{a_i}_{p^\ast}}\right)
    \qquad (a_i\ne0),
\]
there exists
\[
    v\in\{u\}\cup\{u_i:a_i\ne0\}
\]
such that \(\norm{v}_p=1\) and
\begin{equation}
\label{eq:pq-rounding-bound}
    \tau
    \le
    \left(
        1+
        C_{p,q}\frac{N^{2/q}}{(k-1)^{2/p}}
        +
        C_{p,q}\frac{N}{(k-1)^{q/p}}
    \right)
    \norm{Av}_q^q.
\end{equation}
Consequently, under \eqref{eq:pq-degree-requirement}, the same vector
satisfies
\begin{equation}
\label{eq:pq-rounded-vector-guarantee}
    \norm{Av}_q^q\ge(1-\varepsilon)\tau.
\end{equation}
\end{theorem}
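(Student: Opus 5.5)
I plan to mirror the proof of Theorem~\ref{thm:sharp-one-shot-rounding}, with the $p^\ast$-sphere geometry taking the place of the Euclidean one. First, if $\wtE s^{2k}=0$ the argument degenerates. In that case I would average $\Phi_k$ over row directions, or use pseudo-Cauchy--Schwarz with the threshold constraint, to see that $\tau$ is small compared with $\max_i\norm{Au_i}_q^q$. Better, I expect positivity to be automatic: $\tau\le\wtE F_A=\sum_i\norm{a_i}_{p^\ast}^q\wtE\langle w_i,x\rangle^q$, and log-convexity of even pseudo-moments gives $\Phi_k(w_i)>0$ for some $i$. So assume $\wtE s^{2k}>0$. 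I reweigh by $s^{2k-q}$ to get $\wtE_w$, as in Lemma~\ref{lem:pq-dual-tangent}. The threshold constraint with $h=s^{k-r}$ gives $\tau\le\wtE_w F_A(x)$.

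Next I decompose each row functional along $u=J_{p^\ast}(w)$. Since $\langle w,u\rangle=1$, set $b_i:=a_i-\langle a_i,u\rangle w$. Then $\langle b_i,u\rangle=0$ and $\langle a_i,x\rangle=\langle a_i,u\rangle s+\langle b_i,x\rangle$. Apply the centered degree-$q$ SoS inequality (the analogue of Lemma~\ref{lem:centered-quartic-sos}),
\[
(\alpha+\beta)^q\le(1+\eta)\alpha^q+q\alpha^{q-1}\beta+C_{\eta,q}\bigl(\alpha^{q-2}\beta^2+\beta^q\bigr),
\]
to $\alpha=\langle a_i,u\rangle s$ and $\beta=\langle b_i,x\rangle$. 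I keep a quadratic cross term because for general $q$ the cleanest inequality has one, and this is exactly what \eqref{eq:pq-dual-tangent-quadratic} is designed to control. Summing over $i$ and applying $\wtE_w$ handles each term as follows. The main term is at most $(1+\eta)\norm{Au}_q^q$, using $\wtE_w s^q\le1$. The linear term vanishes by \eqref{eq:pq-dual-tangent-stationarity}. The quadratic term is bounded by \eqref{eq:pq-dual-tangent-quadratic}, giving $\sum_i\langle a_i,u\rangle^{q-2}\norm{b_i}_{p^\ast}^2\,(k-1)^{-2/p}$. The pure tangent term is bounded by \eqref{eq:pq-dual-tangent-q-moment}, giving $\sum_i\norm{b_i}_{p^\ast}^q(k-1)^{-q/p}$. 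Finally, $\norm{b_i}_{p^\ast}\le2\norm{a_i}_{p^\ast}$ because $|\langle a_i,u\rangle|\le\norm{a_i}_{p^\ast}$.

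Now I need to compare $\sum_i\norm{a_i}_{p^\ast}^q$ with the row candidates. Let $m:=\max_i\norm{Au_i}_q^q\ge\max_i\norm{a_i}_{p^\ast}^q$. Then $\sum_i\norm{a_i}_{p^\ast}^q\le Nm$, which produces the $N(k-1)^{-q/p}$ term. For the quadratic term I use H\"older with exponents $q/(q-2)$ and $q/2$:
\[
\sum_i\langle a_i,u\rangle^{q-2}\norm{a_i}_{p^\ast}^2\le\norm{Au}_q^{q-2}\Bigl(\sum_i\norm{a_i}_{p^\ast}^q\Bigr)^{2/q}\le\norm{Au}_q^{q-2}N^{2/q}m^{2/q}.
\]
Setting $V:=\max(\norm{Au}_q^q,m)$, this yields
\[
\tau\le(1+\eta)V+C\bigl(N^{2/q}(k-1)^{-2/p}+N(k-1)^{-q/p}\bigr)V,
\]
with $C$ depending on $\eta$. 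Let $v$ be whichever of $u$ and the best row candidate attains $V$, and take $\eta$ of order $N^{2/q}(k-1)^{-2/p}$, capped at $1$. This gives \eqref{eq:pq-rounding-bound}. The constant $C_{\eta,q}$ behaves like $\eta^{-1}$ on the quadratic term, so choosing $\eta$ this way makes the quadratic contribution comparable to $\eta$ itself. I expect this balancing of $\eta$ to be the main technical obstacle, together with stating a centered degree-$q$ inequality whose constants scale correctly. If the balancing fails, I would fall back to the cruder form $(1+\delta)^{q-1}a^q+(1+\delta^{-1})^{q-1}b^q$, accepting a worse power. The last claim, \eqref{eq:pq-rounded-vector-guarantee}, follows by plugging \eqref{eq:pq-degree-requirement} into \eqref{eq:pq-rounding-bound} and choosing $C_{p,q}$ large so that the bracket is at most $(1-\varepsilon)^{-1}$.
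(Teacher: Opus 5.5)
\paragraph{The final $\eta$-balancing step fails as justified.}
Everything up to the final estimate matches the paper's proof. That includes positivity of $\max\Phi_k$ via log-convexity of a row's even pseudo-moments, reweighing by $s^{2k-q}$, and the split $b_i=a_i-\langle a_i,u\rangle w$. It also includes killing the linear term with \eqref{eq:pq-dual-tangent-stationarity}, bounding the quadratic and pure-tangent terms with \eqref{eq:pq-dual-tangent-quadratic} and \eqref{eq:pq-dual-tangent-q-moment}, and the H\"older/row-candidate comparison against $V$. The gap is the step you flagged yourself. Under your stated premise that $C_{\eta,q}\sim\eta^{-1}$ on the quadratic term, take $\eta\asymp X:=N^{2/q}(k-1)^{-2/p}$. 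Then the quadratic contribution is $\eta^{-1}\cdot\Theta(X)\,V=\Theta(V)$, not $O(\eta V)$, so you get only a constant-factor bound. Balancing correctly ($\eta\asymp\sqrt X$) gives error $O(N^{1/q}(k-1)^{-1/p})$, the square root of the claimed error. That weakens the degree requirement from $\varepsilon^{-p/2}$ to $\varepsilon^{-p}$, so \eqref{eq:pq-rounded-vector-guarantee} would not follow under \eqref{eq:pq-degree-requirement}. Your fallback $(1+\delta)^{q-1}a^q+(1+\delta^{-1})^{q-1}b^q$ loses the same square root.

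\paragraph{The fix: take $\eta=0$.}
Your premise is false once the $\alpha^{q-2}\beta^2$ term is kept: the inequality holds with $\eta=0$ and a constant $C_q$. This is exactly Lemma~\ref{lem:pq-row-binary-sos}. Indeed,
$(\alpha+\beta)^q-\alpha^q-q\alpha^{q-1}\beta=\sum_{\ell=2}^{q}\binom{q}{\ell}\alpha^{q-\ell}\beta^{\ell}$.
Each monomial here is at most a constant times $\alpha^{q-2}\beta^2$ when $|\beta|\le|\alpha|$, and at most a constant times $\beta^q$ when $|\beta|>|\alpha|$. The resulting nonnegative binary form of even degree is a sum of squares, so it survives $\wtE_w$ after you substitute the linear forms $\alpha_i s$ and $\langle b_i,x\rangle$.

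The $\eta$ and its $\eta^{-1}$ cost in Lemma~\ref{lem:centered-quartic-sos} appear only because that inequality drops the $6a^2b^2$ term and must absorb it into $\eta a^4+O(\eta^{-1})b^4$. Here the quadratic term is precisely what \eqref{eq:pq-dual-tangent-quadratic} controls at rate $(k-1)^{-2/p}$, so there is nothing to absorb and no $\eta$ to tune. With $\eta=0$ your argument gives $\tau\le\bigl(1+C_{p,q}(X+N(k-1)^{-q/p})\bigr)V$ directly, and it then coincides with the paper's proof.
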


\begin{proof}
The objective constraint \eqref{eq:pq-objective-condition-main} with
multiplier \(1\) gives
\(\wtE F_A(x)\ge\tau>0\), so some row has positive \(q\)-th
pseudo-moment.  For the corresponding linear form \(\ell(x)\), the even
pseudo-moments \(m_j=\wtE\ell(x)^{2j}\) are log-convex by
pseudo--Cauchy--Schwarz:
\[
    m_j^2\le m_{j-1}m_{j+1}.
\]
Since \(m_r>0\), this inequality implies successively that
\(m_{r+1},\ldots,m_k\) are positive.  After normalizing the row, we obtain a
direction \(w_i\) with \(\Phi_k(w_i)>0\).  Hence the maximum of \(\Phi_k\)
is positive.  Lemma~\ref{lem:pq-holder-from-even-ball}, applied with
\(h=s^{k-r}\) and \(j=r\), gives
\[
    \wtE s^{2k-q}(1-s^q)\ge0,
\]
so \(\wtE s^{2k-q}\ge\wtE s^{2k}>0\).  The reweighed functional
\begin{equation}
\label{eq:pq-reweighted-functional}
    \wtE_w[g]
    :=
    \frac{\wtE s^{2k-q}g(x)}{\wtE s^{2k-q}}
\end{equation}
is therefore well defined for every polynomial \(g\) of degree at most
\(q\), and
\begin{equation}
\label{eq:pq-reweighted-sq}
    \wtE_w s^q\le1.
\end{equation}

For each row set
\begin{equation}
\label{eq:pq-row-decomposition}
    \alpha_i=\langle a_i,u\rangle,
    \qquad
    b_i=a_i-\alpha_iw.
\end{equation}
Then \(\langle b_i,u\rangle=0\),
\(\langle a_i,x\rangle=\alpha_i s+\langle b_i,x\rangle\), and
\(\norm{Au}_q^q=\sum_i|\alpha_i|^q\).  Applying the objective constraint
\eqref{eq:pq-objective-condition-main} with multiplier \(s^{k-r}\) yields
\begin{equation}
\label{eq:pq-objective-after-reweighing}
    \tau\le\wtE_wF_A(x).
\end{equation}

By Lemma~\ref{lem:pq-row-binary-sos}, applied row by row with
\(a=\alpha_i s\) and \(b=\langle b_i,x\rangle\),
\[
\begin{aligned}
    \wtE_w F_A(x)
    &\le
    \norm{Au}_q^q\,\wtE_w s^q
    +q\,\wtE_w\!\left[
        s^{q-1}\left\langle\sum_i\alpha_i^{q-1}b_i,x\right\rangle
    \right]\\
    &\qquad
    +C_q\sum_i|\alpha_i|^{q-2}
        \wtE_w\!\left[s^{q-2}\langle b_i,x\rangle^2\right]
    +C_q\sum_i\wtE_w\langle b_i,x\rangle^q.
\end{aligned}
\]
For every \(i\), one has \(\langle b_i,u\rangle=0\), and therefore
\[
    \left\langle\sum_i\alpha_i^{q-1}b_i,u\right\rangle=0.
\]
The linear term thus vanishes by
\eqref{eq:pq-dual-tangent-stationarity}.  The bounds
\eqref{eq:pq-dual-tangent-q-moment} and
\eqref{eq:pq-dual-tangent-quadratic}, together with
\eqref{eq:pq-reweighted-sq}, give
\begin{equation}
\label{eq:pq-row-factorized-bound}
    \tau
    \le
    \norm{Au}_q^q
    +C_{p,q}(k-1)^{-2/p}
        \sum_i|\alpha_i|^{q-2}\norm{b_i}_{p^\ast}^2
    +C_{p,q}(k-1)^{-q/p}
        \sum_i\norm{b_i}_{p^\ast}^q.
\end{equation}

Choose \(v\) from \(\{u\}\cup\{u_i:a_i\ne0\}\) to maximize
\(\norm{Av}_q^q\).
Then \(\norm{v}_p=1\), \(\norm{Av}_q^q\ge\norm{Au}_q^q\), and for every
nonzero row,
\[
    \norm{Av}_q^q
    \ge
    \norm{Au_i}_q^q
    \ge
    |\langle a_i,u_i\rangle|^q
    =
    \norm{a_i}_{p^\ast}^q.
\]
Moreover,
\(|\alpha_i|\le\norm{a_i}_{p^\ast}\) and
\(\norm{b_i}_{p^\ast}\le2\norm{a_i}_{p^\ast}\).  Consequently,
\begin{equation}
\label{eq:pq-L-bound}
    \sum_i\norm{b_i}_{p^\ast}^q
    \le
    C_{p,q}N\norm{Av}_q^q.
\end{equation}
H\"older's inequality gives
\begin{align}
\label{eq:pq-T-bound}
    \sum_i|\alpha_i|^{q-2}\norm{b_i}_{p^\ast}^2
    &\le
    \left(\sum_i|\alpha_i|^q\right)^{(q-2)/q}
    \left(\sum_i\norm{b_i}_{p^\ast}^q\right)^{2/q}\\
    &\le
    C_{p,q}N^{2/q}\norm{Av}_q^q.
\end{align}
Substituting \eqref{eq:pq-L-bound} and \eqref{eq:pq-T-bound} into
\eqref{eq:pq-row-factorized-bound} proves
\eqref{eq:pq-rounding-bound}.  After increasing \(C_{p,q}\) if necessary,
the degree condition \eqref{eq:pq-degree-requirement} makes the sum of the
two error terms in \eqref{eq:pq-rounding-bound} at most
\(\varepsilon/(1-\varepsilon)\).  Rearranging then gives
\eqref{eq:pq-rounded-vector-guarantee}.
\end{proof}

\subsection{Proof of the approximation guarantee}
\label{subsec:proof-pq-sos-main}

We now complete the proof of the main approximation guarantee.  As in
Section~\ref{sec:sos-2to4}, we first bound the SoS value in terms of the true
optimum and then verify the binary-search invariant.

\begin{proof}[Proof of Theorem~\ref{thm:pq-sos-main}]
We first prove the lower bound
\(\OPT_{p\to q}(A)\le\SOS_{D_{p,k}}^{p\to q}(A)\).  Indeed, if
\(x^\star\) maximizes \(F_A\) over the \(p\)-unit ball, define the evaluation
pseudo-expectation by \(\wtE_{x^\star}[g]=g(x^\star)\).  It satisfies both
constraints in
\eqref{eq:pq-fixed-threshold-relaxation} at
threshold \(\OPT_{p\to q}(A)\).

For the reverse inequality, let \(\tau>0\) be feasible and choose
\(\wtE\in\mathcal F_{D_{p,k}}^{p,q}(A,\tau)\).
Theorem~\ref{thm:pq-rounding} gives
an \(\ell_p\)-unit vector \(v\) such that
\[
    \tau
    \le
    \left(
        1+
        C_{p,q}\frac{N^{2/q}}{(k-1)^{2/p}}
        +
        C_{p,q}\frac{N}{(k-1)^{q/p}}
    \right)
    \norm{Av}_q^q.
\]
Since \(\norm{Av}_q^q\le\OPT_{p\to q}(A)\), taking the supremum over all
feasible \(\tau\) proves \eqref{eq:pq-sos-sandwich}.

It remains to verify the binary-search guarantee.  For every nonzero row
\(a_i\), the row candidate satisfies
\[
    \norm{Au_i}_q^q\ge\norm{a_i}_{p^\ast}^q.
\]
The evaluation pseudo-expectation at \(u_{i_0}\) therefore shows that the
initialized lower endpoint is feasible, and \(B\le N\tau_{\mathrm{lo}}\).
To bound the SoS value from above,
Lemma~\ref{lem:pq-holder-from-even-ball}, with \(w=w_i\), \(j=r\), and
\(h=1\), gives
\[
    \wtE\langle a_i,x\rangle^q
    \le
    \norm{a_i}_{p^\ast}^q.
\]
The objective constraint \eqref{eq:pq-objective-condition-main} with
multiplier \(1\) therefore gives
\(\tau\le\wtE F_A(x)\le B\).  Hence
\(\SOS_{D_{p,k}}^{p\to q}(A)\le B\), and binary search terminates after
\(O(\log(N/\varepsilon))\) iterations with
\begin{equation}
\label{eq:pq-binary-search-guarantee}
    \widehat\tau
    \le
    \SOS_{D_{p,k}}^{p\to q}(A)
    \le
    (1+\varepsilon)\widehat\tau.
\end{equation}
The degree condition \eqref{eq:pq-degree-requirement}, with \(C_{p,q}\)
sufficiently large, makes the multiplicative factor in
\eqref{eq:pq-sos-sandwich} at most \((1-\varepsilon)^{-1}\).  Combining
this with \eqref{eq:pq-binary-search-guarantee} yields
\[
    \frac{1}{1+\varepsilon}\OPT_{p\to q}(A)
    \le
    \widehat\tau
    \le
    \frac{1}{1-\varepsilon}\OPT_{p\to q}(A).
\]
Taking \(q\)-th roots proves
\eqref{eq:pq-output-norm-approximation}.
\end{proof}

% ============================================================
% Part 3: Homogeneous Degree-d Polynomials
% ============================================================

\section{SoS approximation for degree-\texorpdfstring{$d$}{d} polynomials}
\label{sec:degree-d-argmax-fold-sos}
This section applies the argmax principle to fixed-degree polynomial optimization over the unit sphere, where the relevant auxiliary object is a \emph{fold} rather than a
direction.
%This section establishes the SoS approximation guarantee for optimizing a homogeneous degree-$d$ polynomial over the unit sphere.
This uses the standard polynomial-folding viewpoint from earlier work on homogeneous polynomial optimization~\cite{KN07,HLZ10,So11}, as well as the SoS convergence analysis of Bhattiprolu--Ghosh--Guruswami--Lee--Tulsiani~\cite{BGGLT}.%we give a new soundness analysis for the SoS relaxation.
The key difference in our analysis is that, rather than using the weak decoupling tool developed by Bhattiprolu et al., we select the fold by our argmax principle.
A pseudo-H\"older and spherical-moment
calculation lower bounds the average high moment over all folds, hence the maximizing fold has large pseudo-moment; once such a fold is selected, rounding is spectral, and polarization transfers the folded value back to the original degree-\(d\) polynomial.
This proves the desired $O_d((n/k)^{d/2-1})$ approximation guarantee for the SoS relaxation.
% The key difference is the rounding argument: we fold $d-2$ variables using random
% spherical directions, certify the existence of a quadratic fold with large
% pseudo-moment via pseudo-H\"older, and then convert this fold into an actual
% unit vector by taking a top eigenvector and applying polarization.  This
% direct rounding step proves the stated
% $O_d((n/k)^{d/2-1})$ approximation guarantee for the degree-$D_{d,k}$
% absolute-value SoS relaxation.

As is standard in sphere-constrained polynomial optimization, it suffices,
up to a constant factor depending only on \(d\), to treat homogeneous
polynomials; see, e.g.,~\cite{DW12}. Indeed, for a degree-\(d\)
polynomial \(f\), decompose \(f=f_{\mathrm{even}}+f_{\mathrm{odd}}\).
On the unit sphere, each parity component can be made homogeneous by
multiplying its lower-degree homogeneous pieces by suitable powers of
\(\|x\|_2^2\).  Moreover
\(\|f_{\mathrm{even}}\|_{\Sph},\|f_{\mathrm{odd}}\|_{\Sph}\le \|f\|_{\Sph}\)
and
\(\|f\|_{\Sph}\le
\|f_{\mathrm{even}}\|_{\Sph}+\|f_{\mathrm{odd}}\|_{\Sph}\).
Thus the homogeneous case implies the general degree-\(d\) case after losing
only a factor \(2\), absorbed in \(O_d(\cdot)\).

\subsection{Problem input and SoS relaxation}
\label{subsec:degree-d-input-algorithm}

% Fix an integer $d\ge3$.  The input consists of an integer $n$, a symmetric
% $d$-linear form
% \[
%     T:(\R^n)^d\to\R,
% \]
% given by its coefficients, and an integer level parameter $k\ge1$.  We write
% \[
%     f(x):=T(x,\ldots,x),
%     \qquad
%     \OPT(f):=\norm{f}_{\Sph}:=\max_{\norm{x}_2=1}|f(x)|.
% \]

Fix an integer \(d\ge3\).  The input is a homogeneous degree-\(d\)
polynomial \(f:\R^n\to\R\), given by its coefficients.  Let
\(T:(\R^n)^d\to\R\)
denote the associated symmetric \(d\)-linear form, so that
\[
    f(x)=T(x,\ldots,x).
\]
We write
\[
    \OPT(f):=\norm{f}_{\Sph}:=\max_{\norm{x}_2=1}|f(x)|.
\]

We use standard degree-$D$ pseudo-expectation notation.  A feasible
pseudo-expectation for the unit sphere is required to satisfy
\begin{equation}
    \wtE\big[(\norm{x}_2^2-1)q(x)\big]=0
    \quad\text{whenever }\deg q\le D-2.
    \label{eq:degree-d-sphere-constraint}
\end{equation}
For a homogeneous polynomial $f$, define the degree-$D$ absolute-value SoS
relaxation by
\begin{equation}
    \tau_D^{\sos}(f)
    :=
    \max\left\{
        \sup_{\wtE}\wtE[f(x)],
        \sup_{\wtE}\wtE[-f(x)]
    \right\},
    \label{eq:degree-d-sos-relaxation}
\end{equation}
where both suprema range over degree-$D$ pseudo-expectations satisfying
\eqref{eq:degree-d-sphere-constraint}.  The objective in
\eqref{eq:degree-d-sos-relaxation} is linear in $f$.

\begin{algorithm}[H]
\caption{Degree-$D_{d,k}$ SoS relaxation}
\label{alg:degree-d-sos}
\begin{algorithmic}[1]
\Require Coefficients of the symmetric form $T:(\R^n)^d\to\R$ associated with $f$ and level parameter $k\ge2$.
\State Set
\[
    D_{d,k}:=\max\{4k,\,2k(d-2)\}.
\]
\State Solve the two degree-$D_{d,k}$ SoS relaxations under \eqref{eq:degree-d-sphere-constraint} and set
\[
    \alpha_+:=\sup_{\wtE}\wtE[f(x)],
    \qquad
    \alpha_-:=\sup_{\wtE}\wtE[-f(x)].
\]
\State \Return
\[
    \widehat{\OPT}:=\max\{\alpha_+,\alpha_-\}.
\]
\end{algorithmic}
\end{algorithm}

\paragraph{Running time.}
For fixed $d$, the degree chosen by Algorithm~\ref{alg:degree-d-sos} is
$D_{d,k}=O_d(k)$.  The moment matrix has size
$\binom{n+O_d(k)}{O_d(k)}=n^{O_d(k)}$, so the algorithm can be implemented as
an SDP of size $n^{O_d(k)}$ in the usual real-arithmetic model.

\subsection{Soundness analysis}
\label{subsec:degree-d-one-sided-soundness}

The main technical statement is the following theorem.
It is applied to whichever of $f$ and $-f$ that realizes the larger SDP value in
Algorithm~\ref{alg:degree-d-sos}.

\begin{theorem}
\label{thm:degree-d-folded-soundness}
Let $d\ge3$ and $2\le k\le n$, and let $D_{d,k}$ be the degree used in
Algorithm~\ref{alg:degree-d-sos}.  Let $f(x)=T(x,\ldots,x)$ be a homogeneous
degree-$d$ polynomial, and let $\wtE$ be a feasible degree-$D_{d,k}$
pseudo-expectation satisfying
\begin{equation}
    \wtE[f(x)]\ge \tau>0.
    \label{eq:degree-d-positive-pseudo-value}
\end{equation}
Then there is a unit vector $y\in\Sph^{n-1}$ such that
\begin{equation}
    |f(y)|
    \ge
    \frac{d!}{d^d}\left(\frac{k}{3en}\right)^{(d-2)/2}\tau.
    \label{eq:degree-d-extraction-asymptotic}
\end{equation}
\end{theorem}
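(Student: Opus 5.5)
We follow the quartic sketch from the introduction, now folding $d-2$ slots. For $a=(a_1,\ldots,a_{d-2})\in(\Sph^{n-1})^{d-2}$, write
\[
    q_a(x):=T(a_1,\ldots,a_{d-2},x,x),
    \qquad
    p_a(x):=\prod_{j=1}^{d-2}\langle a_j,x\rangle ,
\]
and set
\[
    \Phi_k(a):=\wtE\, q_a(x)^{2k}.
\]
The degree of $q_a^{2k}$ is $4k$, and the degree of $p_a^{2k}$ is $2k(d-2)$. Both fit inside $D_{d,k}=\max\{4k,2k(d-2)\}$, although the mixed term $p_a^{2k-1}q_a$ has degree $(2k-1)(d-2)+2$, so we may need to check this degree bookkeeping. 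If it fails, we will trim the exponent: apply Hölder with weight $p_a^{2k-2}$, or replace $D_{d,k}$ by its stated value and verify the inequality $(2k-1)(d-2)+2\le \max\{4k,2k(d-2)\}$. This holds when $d\ge4$, and when $d=3$ it reads $2k+1\le 4k$, so it is fine.

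\textbf{Step 1 (pseudo-Hölder).} For each fixed $a$, we will use the pseudo-Hölder inequality
\[
    \bigl(\wtE[p^{2k-1}q]\bigr)^{2k}\le (\wtE p^{2k})^{2k-1}\,\wtE q^{2k},
\]
which follows from iterated pseudo-Cauchy--Schwarz (Lemma~\ref{pseudo-Cauchy–Schwarz}), or more robustly from the SoS form of weighted AM--GM, $p^{2k-1}q\le \tfrac{2k-1}{2k}\lambda p^{2k}+\tfrac{1}{2k}\lambda^{-(2k-1)}q^{2k}$. That form is a degree-$2k$ SoS identity in the two variables $(p,q)$, substituted. We then average over Haar-random $a$, using the positive functional $\E_a\wtE$, and optimize over $\lambda$. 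This gives
\[
    \max_a\Phi_k(a)\ \ge\ \E_a\wtE q_a^{2k}\ \ge\ \frac{(\E_a\wtE[p_a^{2k-1}q_a])^{2k}}{(\E_a\wtE p_a^{2k})^{2k-1}}.
\]

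\textbf{Step 2 (spherical moments).} By Lemma~\ref{lem:degree-d-spherical-moments} together with the sphere constraint, $\E_a p_a^{2k}=c_{n,k}^{d-2}\|x\|^{2k(d-2)}$, so
\[
    \E_a\wtE p_a^{2k}=c_{n,k}^{d-2}.
\]
For the numerator, multilinearity gives $\E_a[\langle a_j,x\rangle^{2k-1}a_j]=c_{n,k}\|x\|^{2k-2}x$. This is the gradient identity obtained by differentiating $\E\langle a,x\rangle^{2k}=c_{n,k}\|x\|^{2k}$, which yields $\E_a\langle a,x\rangle^{2k-1}a=c_{n,k}\|x\|^{2k-2}x$. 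Hence
\[
    \E_a\wtE[p_a^{2k-1}q_a]=c_{n,k}^{d-2}\,\wtE[\|x\|^{\cdots}f(x)]=c_{n,k}^{d-2}\wtE f\ge c_{n,k}^{d-2}\tau,
\]
using the sphere constraint to remove the norm powers. Therefore the maximizer $a^*$ satisfies $\Phi_k(a^*)\ge c_{n,k}^{d-2}\tau^{2k}$.

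\textbf{Step 3 (spectral rounding and polarization).} Let $H$ be the symmetric matrix with $x^\top Hx=q_{a^*}(x)$, and let $\lambda=\|H\|_{\mathrm{op}}$. Since $\lambda^2\|x\|^4-(x^\top Hx)^2$ is SoS, the polynomial $\lambda^{2k}\|x\|^{4k}-(x^\top Hx)^{2k}$ factors as a difference of $k$-th powers of two commuting nonnegative quantities and is SoS times SoS. Thus
\[
    \lambda^{2k}\ge\Phi_k(a^*),
\]
and the top eigenvector $y_0$ gives $|T(a_1^*,\ldots,a_{d-2}^*,y_0,y_0)|\ge c_{n,k}^{(d-2)/(2k)}\tau$. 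Next we bound $c_{n,k}=\frac{(2k-1)!!}{n(n+2)\cdots(n+2k-2)}\ge \bigl(\tfrac{k}{e(n+2k)}\bigr)^k\ge\bigl(\tfrac{k}{3en}\bigr)^k$ using $k\le n$, so $c_{n,k}^{1/(2k)}\ge (k/(3en))^{1/2}$. Finally, the standard polarization bound $\max_{y\in\Sph^{n-1}}|f(y)|\ge \frac{d!}{d^d}\max_{\|b_i\|=1}|T(b_1,\ldots,b_d)|$ (Banach / Kellogg-type, via the polarization formula with $\pm$ signs) converts this to a unit vector $y$ satisfying the theorem's bound.

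The main obstacle is Step 2's numerator identity, together with the degree accounting for the pseudo-Hölder step. The spherical averaging must be done as a polynomial identity in $x$ before applying $\wtE$. I would verify it first for $d=4$ as in the introduction, then induct over slots.
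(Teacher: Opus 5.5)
Your proposal is correct and follows the paper's proof essentially step for step. It uses pseudo-H\"older for the averaged functional $\E_a\wtE$ over Haar-random folds, evaluates both sides by spherical moments, selects the maximizing fold, and then applies the SoS bound $\wtE\, q^{2k}\le\|H\|_{\mathrm{op}}^{2k}$, the top eigenvector, the estimate $c_{n,k}^{1/(2k)}\ge\sqrt{k/(3en)}$, and polarization. The only cosmetic differences are that you obtain pseudo-H\"older from a weighted AM--GM binary-form SoS certificate with $\lambda$ optimized, rather than from log-convexity of $L[p^{2k-2j}q^{2j}]$, and you get the vector moment identity by differentiating $\E\langle a,x\rangle^{2k}=c_{n,k}\|x\|^{2k}$. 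Both routes need the same degree check $(2k-2j)(d-2)+4j\le D_{d,k}$ that the paper performs.
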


\begin{proof}
Let $m=d-2$, and let $U=(u_1,\ldots,u_m)$ be drawn from the product of the
uniform measures on $\Sph^{n-1}$.  For this tuple define
\begin{equation}
    p_U(x):=\prod_{j=1}^m\ip{u_j}{x},
    \qquad
    q_U(x):=T(u_1,\ldots,u_m,x,x),
    \qquad
    \Phi_k(U):=\wtE\big[q_U(x)^{2k}\big].
    \label{eq:degree-d-fold-definitions}
\end{equation}
Apply pseudo-H\"older (Lemma~\ref{lem:degree-d-pseudo-holder}) with $r=2k$ to the
functional $L=\E_U\wtE_x$, the polynomial $p_U$, and the quadratic fold
$q_U$.  This gives
\begin{equation}
    \left|\E_U\wtE\big[p_U(x)^{2k-1}q_U(x)\big]\right|^{2k}
    \le
    \Bigl(\E_U\wtE\big[p_U(x)^{2k}\big]\Bigr)^{2k-1}
    \E_U\wtE\big[q_U(x)^{2k}\big].
    \label{eq:degree-d-holder-application}
\end{equation}
The chosen degree is sufficient.  The terms $p_U^{2k}$ and $q_U^{2k}$ have
$x$-degrees $2k(d-2)$ and $4k$, respectively.  The intermediate squares in
Lemma~\ref{lem:degree-d-pseudo-holder} have $x$-degree
\[
    (2k-2j)(d-2)+4j,
    \qquad 0\le j\le k,
\]
whose maximum is at most $D_{d,k}$; the mixed term
$p_U^{2k-1}q_U$ also has degree at most $D_{d,k}$.

By the standard spherical moment (Lemma~\ref{lem:degree-d-spherical-moments}) and the sphere constraints \eqref{eq:degree-d-sphere-constraint},
\begin{equation}
    \E_U\wtE\big[p_U(x)^{2k}\big]
    =c_{n,k}^m\wtE\norm{x}_2^{2km}
    =c_{n,k}^m.
    \label{eq:degree-d-p-moment}
\end{equation}
For the mixed term, we have
\begin{align}
    \E_U\wtE\big[p_U(x)^{2k-1}q_U(x)\big]
    &=\wtE\left[
        \E_U\left[
          \prod_{j=1}^m \ip{u_j}{x}^{2k-1}
          T(u_1,\ldots,u_m,x,x)
        \right]
      \right]
      \label{eq:degree-d-mixed-expand}\\
    &=c_{n,k}^m\wtE\big[
        T(x,\ldots,x)\norm{x}_2^{(2k-2)m}
      \big]
      \label{eq:degree-d-mixed-average}\\
    &=c_{n,k}^m\wtE[f(x)]
      \label{eq:degree-d-mixed-sphere}\\
    &\ge c_{n,k}^m\tau .
    \label{eq:degree-d-mixed-moment}
\end{align}
Here \eqref{eq:degree-d-mixed-expand} expands the definitions of $p_U$ and
$q_U$ and uses the linearity of the pseudo-expectation to move the spherical
average inside.  To obtain \eqref{eq:degree-d-mixed-average}, average the
variables $u_1,\ldots,u_m$ successively.  For each $u_j$, the dependence on
$u_j$ is linear through the multilinear form $T$, and
\eqref{eq:degree-d-spherical-vector} replaces that coordinate by $x$ while
contributing the factor $c_{n,k}\norm{x}_2^{2k-2}$.  Finally,
\eqref{eq:degree-d-mixed-sphere} follows from $f(x)=T(x,\ldots,x)$ and the
sphere constraints, while \eqref{eq:degree-d-mixed-moment} uses
\eqref{eq:degree-d-positive-pseudo-value}.
Substituting \eqref{eq:degree-d-p-moment} and
\eqref{eq:degree-d-mixed-moment} into \eqref{eq:degree-d-holder-application}
gives the average lower bound
\begin{equation}
    \E_{U\sim(\Sph^{n-1})^m}\Phi_k(U)
    \ge c_{n,k}^m\tau^{2k},
    \label{eq:degree-d-average-lower-bound}
\end{equation}
where the expectation is with respect to the product uniform spherical
measure.  Let $U^\star$ be a maximizer of $\Phi_k$ over
$(\Sph^{n-1})^m$.  Then
\begin{equation}
    \Phi_k(U^\star)=\wtE\big[q_{U^\star}(x)^{2k}\big]
    \ge c_{n,k}^m\tau^{2k}.
    \label{eq:degree-d-good-fold}
\end{equation}
For notational simplicity, write $U^\star=(u_1,\ldots,u_m)$ in the remainder
of the proof.

Let $H_U\in\mathbb R^{n\times n}$ be the symmetric matrix representing this quadratic
fold,
\begin{equation}
    y^\top H_U y=q_U(y)=T(u_1,\ldots,u_m,y,y),
    \label{eq:degree-d-fold-matrix}
\end{equation}
and write $R=\norm{H_U}_{\mathrm{op}}$.  Since
$-RI\preceq H_U\preceq RI$, by Fact \ref{psd-sos} the quadratic forms
\[
    R\norm{x}_2^2-x^\top H_Ux,
    \qquad
    R\norm{x}_2^2+x^\top H_Ux
\]
are sums of squares.  Their product is
$R^2\norm{x}_2^4-(x^\top H_Ux)^2$, hence is a sum of squares.  Therefore
\begin{align*}
    &R^{2k}\norm{x}_2^{4k}-(x^\top H_Ux)^{2k}\\
    &\quad=
    \bigl(R^2\norm{x}_2^4-(x^\top H_Ux)^2\bigr)
    \sum_{r=0}^{k-1}
      (R^2\norm{x}_2^4)^{k-1-r}(x^\top H_Ux)^{2r}
\end{align*}
is a degree-$4k$ sum of squares.  Applying $\wtE$ and using the sphere
constraints gives
\begin{equation}
    \wtE\big[q_U(x)^{2k}\big]\le R^{2k}.
    \label{eq:degree-d-fold-upper-opnorm}
\end{equation}
Together with \eqref{eq:degree-d-good-fold}, this implies
\begin{equation}
    \norm{H_U}_{\mathrm{op}}
    \ge c_{n,k}^{m/(2k)}\tau.
    \label{eq:degree-d-opnorm-lower}
\end{equation}
Choose a unit eigenvector $w$ corresponding to an eigenvalue of $H_U$ with
largest absolute value.  Then
\begin{equation}
    |T(u_1,\ldots,u_m,w,w)|
    =|w^\top H_Uw|
    \ge c_{n,k}^{m/(2k)}\tau.
    \label{eq:degree-d-large-mixed-value}
\end{equation}
Applying the polarization lemma (Lemma~\ref{lem:degree-d-polarization}),
to the $d$ unit vectors
\(
    u_1,\ldots,u_m,w,w
\)
produces a nonzero normalized signed sum $y$ with
\begin{equation}
    |f(y)|
    \ge
    \frac{d!}{d^d}c_{n,k}^{m/(2k)}\tau.
    \label{eq:degree-d-extraction-cnk-proof}
\end{equation}
Since $m=d-2$ and $2\le k\le n$, the spherical moment (Lemma~\ref{lem:degree-d-spherical-moments})
gives \eqref{eq:degree-d-extraction-asymptotic}.
\end{proof}

\begin{corollary}[Correctness of Algorithm~\ref{alg:degree-d-sos}]
\label{cor:degree-d-algorithm-correctness}
For $2\le k\le n$, Algorithm~\ref{alg:degree-d-sos} returns a value
$\widehat{\OPT}$ satisfying
\begin{equation}
    \OPT(f)
    \le \widehat{\OPT}
    \le
    \frac{d^d}{d!}\left(\frac{3en}{k}\right)^{(d-2)/2}\OPT(f).
    \label{eq:degree-d-algorithm-correctness-asymptotic}
\end{equation}
Consequently, the degree-$D_{d,k}$ SoS relaxation gives an
$O_d((n/k)^{d/2-1})=O_d((n/D_{d,k})^{d/2-1})$ approximation to homogeneous
sphere optimization.
\end{corollary}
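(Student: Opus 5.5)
The plan is a sandwich argument: feasibility of a point evaluation gives the lower bound, and Theorem~\ref{thm:degree-d-folded-soundness} gives the upper bound, applied to whichever sign is better.

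\emph{Lower bound.} Let \(x^\star\in\Sph^{n-1}\) attain \(\OPT(f)=|f(x^\star)|\). Define the evaluation functional \(\wtE_{x^\star}[g]:=g(x^\star)\) on polynomials of degree at most \(D_{d,k}\). It is a degree-\(D_{d,k}\) pseudo-expectation: it is normalized, every square is nonnegative, and the sphere constraint \eqref{eq:degree-d-sphere-constraint} holds exactly. Depending on the sign of \(f(x^\star)\), it is feasible for either \(\alpha_+\) or \(\alpha_-\) and has value \(\OPT(f)\). Hence \(\widehat{\OPT}=\max\{\alpha_+,\alpha_-\}\ge\OPT(f)\).

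\emph{Upper bound.} Assume without loss of generality that \(\widehat{\OPT}=\alpha_+\); otherwise replace \(f\) by \(-f\), which replaces \(T\) by \(-T\) and leaves \(\norm{f}_\Sph\) unchanged. If \(\alpha_+\le 0\), the bound is trivial, since \(\OPT(f)\ge 0\) and \(\widehat{\OPT}\ge\OPT(f)\ge0\) forces equality. So suppose \(\alpha_+>0\).

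The supremum may not be attained, so fix any \(0<\tau<\alpha_+\) and a feasible \(\wtE\) with \(\wtE f\ge\tau\). Theorem~\ref{thm:degree-d-folded-soundness} (valid since \(2\le k\le n\)) produces a unit \(y\) with
\[
\OPT(f)\ge|f(y)|\ge\frac{d!}{d^d}\left(\frac{k}{3en}\right)^{(d-2)/2}\tau .
\]
Letting \(\tau\uparrow\alpha_+\) and rearranging gives the right inequality of \eqref{eq:degree-d-algorithm-correctness-asymptotic}.

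\emph{Consequence.} The approximation ratio is \(\frac{d^d}{d!}(3e)^{(d-2)/2}(n/k)^{(d-2)/2}=O_d((n/k)^{d/2-1})\). Since \(D_{d,k}=\Theta_d(k)\), this equals \(O_d((n/D_{d,k})^{d/2-1})\). The only subtle point is the non-attainment of the supremum, which the limiting argument handles; everything else is a direct invocation of the theorem.
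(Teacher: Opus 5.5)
Your proposal is correct and follows essentially the same route as the paper. The lower bound comes from evaluation at a maximizer. The upper bound comes from applying Theorem~\ref{thm:degree-d-folded-soundness} to whichever of $f,-f$ attains $\widehat{\OPT}$, with thresholds approaching $\widehat{\OPT}$ to handle possible non-attainment, exactly as in the paper's $\tau-\varepsilon$ argument.
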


\begin{proof}
The lower bound $\OPT(f)\le\widehat{\OPT}$ follows because evaluation at any
unit vector is a feasible degree-$D_{d,k}$ pseudo-expectation, and one of the
two one-sided relaxations in Algorithm~\ref{alg:degree-d-sos} realizes the
absolute value at that vector.

For the upper bound, let $\tau=\widehat{\OPT}$.  If $\tau=0$, the claim is
immediate.  Otherwise, for every $\varepsilon>0$, one of the two signed
polynomials $g\in\{f,-f\}$ has a feasible degree-$D_{d,k}$ pseudo-expectation
with
\[
    \wtE[g(x)]\ge \tau-\varepsilon.
\]
Applying Theorem~\ref{thm:degree-d-folded-soundness} to $g$ gives a true unit
vector $y$ satisfying
\[
    \OPT(f)
    \ge |f(y)|
    =|g(y)|
    \ge
    \frac{d!}{d^d}\left(\frac{k}{3en}\right)^{(d-2)/2}(\tau-\varepsilon).
\]
Letting $\varepsilon\downarrow0$ proves \eqref{eq:degree-d-algorithm-correctness-asymptotic}.
\end{proof}

\begin{comment}
\begin{remark}[Algorithmic Rounding]
\label{rem:degree-d-constructive-witness}
The rounding step of Theorem~\ref{thm:degree-d-folded-soundness} is constructive.
Given a feasible degree-\(D_{d,k}\) pseudo-expectation for either \(f\) or
\(-f\) with pseudo-value at least \(\tau>0\), we can enumerate the spherical design constructed in Lemma \ref{lem:constructive-spherical-cubature}
 and find a fold tuple \(U\in\mathcal D_{n,k}^{d-2}\) satisfying
\[
    \wtE\big[q_U(x)^{2k}\big]\ge c_{n,k}^{d-2}\tau^{2k}.
\]
The subsequent eigenvector and polarization steps then output a unit vector
\(y\in\Sph^{n-1}\) with
\[
    |f(y)|
    \ge
    \frac{d!}{d^d}
    \left(\frac{k}{3en}\right)^{(d-2)/2}\tau .
\]
For fixed \(d\), this rounding stage runs in \(n^{O_d(k)}\) arithmetic
operations.
\end{remark}
\end{comment}

\bigskip
\section*{Acknowledgment}
The authors used GPT-5.5 Pro, and Claude as assistants for exploratory analysis, idea testing, and editorial polishing.
The authors reviewed and finalized all mathematical claims and proofs, and take full responsibility for the correctness, exposition, and attribution in the final manuscript.

\bibliographystyle{alpha}
\bibliography{reference}

\clearpage
\appendix
\section*{APPENDIX}

\crefalias{section}{appendix}
% ============================================================
% Merged appendices
% ============================================================

\section{BSS algorithmic rounding proof}
\label{app:bss-algorithmic-rounding}

Applying \cite{BAC18} on $\Phi_k$, we have the following lemma:
\begin{lemma}
\label{lem:bss-rtr-blackbox}
Let \((p_0,q_0)\in\Sph^{m-1}\times\Sph^{n-1}\) satisfy
\(\Phi_k(p_0,q_0)\ge A_0>0\).  For every fixed constant
\(0<\delta<1\), there is a deterministic algorithm which returns
\((u,v)\in\Sph^{m-1}\times\Sph^{n-1}\) such that, with
\(A=\Phi_k(u,v)\), one has \(A\ge A_0\) and, for every unit
\(h\perp u\) and every unit \(\ell\perp v\),
\begin{align}
\label{eq:bss-approx-local-x}
    \left|
    \left.\frac{d}{d\tau}
    \Phi_k\!\left(\frac{u+\tau h}{\sqrt{1+\tau^2}},v\right)
    \right|_{\tau=0}
    \right|
    &\le \delta A,
    &
    \left.\frac{d^2}{d\tau^2}
    \Phi_k\!\left(\frac{u+\tau h}{\sqrt{1+\tau^2}},v\right)
    \right|_{\tau=0}
    &\le \delta A, \\
\label{eq:bss-approx-local-y}
    \left|
    \left.\frac{d}{d\tau}
    \Phi_k\!\left(u,\frac{v+\tau\ell}{\sqrt{1+\tau^2}}\right)
    \right|_{\tau=0}
    \right|
    &\le \delta A,
    &
    \left.\frac{d^2}{d\tau^2}
    \Phi_k\!\left(u,\frac{v+\tau\ell}{\sqrt{1+\tau^2}}\right)
    \right|_{\tau=0}
    &\le \delta A .
\end{align}
The algorithm uses
\[
    (m+n)^{O(k)} A_0^{-O(1)}
\]
real-arithmetic operations for fixed \(\delta\).
\end{lemma}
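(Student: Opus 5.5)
The plan is to obtain Lemma~\ref{lem:bss-rtr-blackbox} by invoking the Riemannian trust-region (RTR) guarantee of Boumal--Absil--Cartis~\cite{BAC18} on the product manifold \(\mathcal M=\Sph^{m-1}\times\Sph^{n-1}\). We run it on the normalized objective \(F=-\Phi_k/A_0\), started at \((p_0,q_0)\).

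RTR is a monotone descent method, so every iterate satisfies \(\Phi_k\ge A_0\). It returns a point with \(\norm{\operatorname{grad}F}\le\epsilon_g\) and \(\lambda_{\min}(\operatorname{Hess}F)\ge-\epsilon_H\). The number of iterations is \(O\bigl((F(x_0)-F_{\min})\cdot\max\{L_g\epsilon_g^{-2}\epsilon_H^{-1},L_H^2\epsilon_H^{-3}\}\bigr)\), where \(L_g\) and \(L_H\) are the Lipschitz-type constants of the pullbacks through the retraction \(R_x(\xi)=(x+\xi)/\norm{x+\xi}\) on each factor.

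The curves in the lemma are exactly \(\tau\mapsto R_u(\tau h)\) (respectively \(R_v(\tau\ell)\)). Hence their first derivatives are \(\langle\operatorname{grad}\Phi_k,(h,0)\rangle\), and their second derivatives equal the Hessian quadratic form. The second-order correction vanishes because the retraction is second order: its acceleration at \(0\) is normal to the sphere. We set \(\epsilon_g=\epsilon_H=\delta\) for \(F\). Termination then gives \(|\cdot|\le\delta A_0\le\delta A\) for the first derivatives and \(\le\delta A_0\le \delta A\) for the second derivatives, which is the claim since \(A\ge A_0\). The coordinate-wise curves are special tangent directions, so the joint conditions suffice.

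The remaining task is to bound the constants, and I expect this to be the main obstacle. We use \(0\le\Phi_k\le1\), which follows from the sphere constraints exactly as in the upper bound for \(\Phi_k\) in Proposition~\ref{prop:projective-local-search-termination}. So \(F(x_0)-F_{\min}\le A_0^{-1}\).

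For smoothness, \(g(\tau)=\Phi_k(R(\tau\xi))\) for unit tangent \(\xi\) is a rational function. Its numerator is a degree-\(4k\) polynomial in \(\tau\) whose coefficients are pseudo-moments \(\wtE\, s^{a}t^{b}r^{c}q^{e}\), each bounded by \(1\) via pseudo-Cauchy--Schwarz (Lemma~\ref{pseudo-Cauchy–Schwarz}) and the sphere constraints. Its denominator is \((1+\tau^2)^{k}\) (or \((1+\tau^2\norm{\xi_1}^2)^{k}\cdots\)). Differentiating up to three times on \(|\tau|\le1\) gives \(L_g,L_H=\poly(k)\). Multiplying by the \(A_0^{-1}\) normalization gives \(\poly(k)A_0^{-1}\).

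Hence the iteration count is \(\poly(k)A_0^{-O(1)}\) for fixed \(\delta\). Each iteration requires the gradient and Hessian of \(\Phi_k\) at a point and an approximate trust-region subproblem solve (e.g.\ the truncated conjugate-gradient or eigenstep variant in~\cite{BAC18}). The gradient and Hessian come by contracting the degree-\(4k\) moment table with \(u^{\otimes},v^{\otimes}\), at a cost of \((m+n)^{O(k)}\) operations; the subproblem solve costs \(\poly(m+n)\). Multiplying gives \((m+n)^{O(k)}A_0^{-O(1)}\), absorbing \(\poly(k)\) into \((m+n)^{O(k)}\).

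A minor point to check is that \cite{BAC18}'s assumptions (Lipschitz pullback gradient and Hessian along retractions in a ball) hold globally. This follows from the uniform bound on \(|\tau|\le1\) together with compactness and the boundedness of \(\Phi_k\) by \(1\).
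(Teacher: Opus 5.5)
Your proposal is correct and follows essentially the same route as the paper. You run the trust-region method of \cite{BAC18} on $\Sph^{m-1}\times\Sph^{n-1}$ from $(p_0,q_0)$ and get $A\ge A_0$ from monotone descent. Your normalization $F=-\Phi_k/A_0$ with tolerance $\delta$ is equivalent to the paper's tolerance $\delta A_0$ for $-\Phi_k$, since the iterates are invariant under positive rescaling of the objective. You then identify the curve derivatives with gradient and Hessian forms via the normalization retraction, bound the smoothness constants through pseudo-Cauchy--Schwarz, and pay $(m+n)^{O(k)}$ work per iteration, all as in the paper.

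Two small points need tidying, neither of which changes the final bound $(m+n)^{O(k)}A_0^{-O(1)}$. First, the complexity theorem of \cite{BAC18} has auxiliary tolerance conditions, which the paper enforces by taking $\varepsilon_*=\min\{\delta A_0,\Delta_0/\lambda_g,\lambda_H/\lambda_g^2,1/\lambda_g\}$; setting the tolerance to $\delta$ without this minimum can violate them. Second, bounding the numerator coefficients termwise only gives $2^{O(k)}$ rather than $\poly(k)$, because of the binomial weights. The paper instead bounds $\norm{D^r\Phi_k}$ by $(4k)^r$ on the unit ball and applies the chain rule through the retraction. Either way, the loss is absorbed into $(m+n)^{O(k)}A_0^{-O(1)}$.
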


\begin{proof}
Set
\[
    \mathcal M=\Sph^{m-1}\times\Sph^{n-1},
    \qquad
    f=-\Phi_k,
\]
and equip \(\mathcal M\) with the product Euclidean metric.  We first
establish quantitative smoothness bounds for \(f\).  Fix
\(0\le r\le3\), a point
\((p,q)\in\R^m\times\R^n\) satisfying
\(\norm{p}_2,\norm{q}_2\le1\).  Expanding the corresponding \(r\)-th
directional derivative of \(\Phi_k\), each differentiation acts on one of
the \(4k\) linear factors.  The total absolute coefficient in the resulting
expansion is therefore at most \((4k)^r\), and every term has the form
\[
    \wtE\!\left[
        \prod_{i=1}^{2k}\langle x,a_i\rangle
        \prod_{j=1}^{2k}\langle y,b_j\rangle
    \right],
    \qquad
    \norm{a_i}_2,\norm{b_j}_2\le1.
\]
For such a term, define
\[
    F
    :=
    \prod_{i=1}^{k}\langle x,a_i\rangle
    \prod_{j=1}^{k}\langle y,b_j\rangle,
    \qquad
    G
    :=
    \prod_{i=k+1}^{2k}\langle x,a_i\rangle
    \prod_{j=k+1}^{2k}\langle y,b_j\rangle.
\]
Pseudo-Cauchy--Schwarz Lemma \ref{pseudo-Cauchy–Schwarz} gives
\[
    |\wtE[FG]|
    \le
    \wtE[F^2]^{1/2}\wtE[G^2]^{1/2}.
\]
To bound \(\wtE[F^2]\), write the \(2k\) linear factors in \(F\) as
\(z_1,\ldots,z_{2k}\).  The identity
\[
    1-\prod_{s=1}^{2k}z_s^2
    =
    \sum_{s=1}^{2k}
    (1-z_s^2)\prod_{t<s}z_t^2
\]
has a degree-\(4k\) SoS certificate modulo the sphere constraints.  Namely,
for every \(\norm{a}_2\le1\),
\[
    1-\langle x,a\rangle^2
    =
    (1-\norm{x}_2^2)+x^\top(I-aa^\top)x,
\]
where \(I-aa^\top\succeq0\), and the analogous identity holds for factors
involving \(y\).  Hence \(\wtE[F^2]\le1\), and the same argument gives
\(\wtE[G^2]\le1\).  Each term in the derivative expansion consequently has
absolute value at most \(1\), so
\begin{equation}
\label{eq:bss-ambient-derivative-bound}
    \sup_{\norm{p}_2,\norm{q}_2\le1}
    \norm{D^r\Phi_k(p,q)}_{\operatorname{op}}
    \le
    (4k)^r,
    \qquad
    0\le r\le3.
\end{equation}

For \(z=(p,q)\in\mathcal M\) and
\(\eta=(\alpha,\beta)\in T_z\mathcal M\), use the normalization retraction
\[
    \operatorname{Retr}_{(p,q)}(\alpha,\beta)
    =
    \left(
        \frac{p+\alpha}{\norm{p+\alpha}_2},
        \frac{q+\beta}{\norm{q+\beta}_2}
    \right).
\]
Since \(\alpha\perp p\) and \(\beta\perp q\),
\[
    \norm{p+\alpha}_2
    =
    \sqrt{1+\norm{\alpha}_2^2},
    \qquad
    \norm{q+\beta}_2
    =
    \sqrt{1+\norm{\beta}_2^2}.
\]
Thus, on \(\norm{\eta}\le1\), direct differentiation of the two maps
\[
    \alpha\longmapsto
    (p+\alpha)(1+\norm{\alpha}_2^2)^{-1/2},
    \qquad
    \beta\longmapsto
    (q+\beta)(1+\norm{\beta}_2^2)^{-1/2}
\]
gives universal constants \(C_1,C_2,C_3\) such that
\[
    \sup_{\substack{z\in\mathcal M\\ \norm{\eta}\le1}}
    \norm{D^j\operatorname{Retr}_z(\eta)}_{\operatorname{op}}
    \le C_j,
    \qquad
    1\le j\le3.
\]
Combining these bounds with
\eqref{eq:bss-ambient-derivative-bound} and the chain rule, the pullbacks
\[
    \widehat f_z:=f\circ\operatorname{Retr}_z
\]
satisfy
\[
    \sup_{\substack{z\in\mathcal M\\ \norm{\eta}\le1}}
    \norm{D^2\widehat f_z(\eta)}_{\operatorname{op}}
    \le k^{O(1)},
    \qquad
    \sup_{\substack{z\in\mathcal M\\ \norm{\eta}\le1}}
    \norm{D^3\widehat f_z(\eta)}_{\operatorname{op}}
    \le k^{O(1)}.
\]
Taylor's theorem therefore verifies the pullback regularity assumptions
A4 and A5 of \cite[Sec.~3.1]{BAC18} with constants
\[
    1\le L_g,L_H\le k^{O(1)}.
\]
The normalization retraction is also a second-order retraction.

At every iterate \(z_j\), choose the exact pullback Hessian
\[
    H_j
    :=
    \nabla^2\widehat f_{z_j}(0)
\]
in the trust-region model.  It is linear and symmetric, and the preceding
bounds give an admissible constant
\[
    1\le c_0\le k^{O(1)}
\]
for A6.  Assumption A7 holds with \(c_1=0\).  Use the Cauchy step for
first-order iterations and the eigenstep for second-order iterations, as
in \cite[Sec.~3.3]{BAC18}; these satisfy A8 and A9 with
\[
    c_2=c_3=\frac12.
\]

Fix the algorithmic parameters
\[
    \rho'=\frac18,
    \qquad
    \overline\Delta=\Delta_0=1.
\]
With the notation of \cite[Sec.~3.4]{BAC18}, set
\[
    \lambda_g
    =
    \frac14
    \min\left\{
        \frac1{c_0},
        \frac{c_2}{L_g+c_0}
    \right\},
    \qquad
    \lambda_H
    =
    \frac34\frac{c_3}{L_H+c_1}.
\]
The quantitative bounds above imply
\[
    \lambda_g^{-1},\lambda_H^{-1}\le k^{O(1)}.
\]
Choose
\[
    \varepsilon_*
    :=
    \min\left\{
        \delta A_0,\,
        \frac{\Delta_0}{\lambda_g},\,
        \frac{\lambda_H}{\lambda_g^2},\,
        \frac1{\lambda_g}
    \right\},
    \qquad
    \varepsilon_g=\varepsilon_H=\varepsilon_*.
\]
Since \(c_2=c_3\), these choices satisfy all auxiliary tolerance conditions
in the Riemannian trust-region complexity theorem.  They also give
\begin{equation}
\label{eq:bss-epsilon-star-bound}
    \varepsilon_*^{-1}
    \le
    (\delta A_0)^{-1}+k^{O(1)}.
\end{equation}

Applying the Riemannian trust-region method of
\cite[Sec.~3.4]{BAC18} to \(f=-\Phi_k\) returns
\(z=(u,v)\in\mathcal M\) such that
\[
    \norm{\operatorname{grad}f(u,v)}
    \le\varepsilon_*,
    \qquad
    H_z\succeq-\varepsilon_* I.
\]
Because the retraction is second order and
\(H_z=\nabla^2\widehat f_z(0)\), the pullback-to-Riemannian Hessian
comparison in \cite[Sec.~3.5]{BAC18} gives
\[
    \operatorname{Hess}f(u,v)
    \succeq-\varepsilon_* I.
\]
Equivalently,
\begin{equation}
\label{eq:bss-rtr-output-bounds}
    \norm{\operatorname{grad}\Phi_k(u,v)}
    \le\varepsilon_*,
    \qquad
    \operatorname{Hess}\Phi_k(u,v)
    \preceq\varepsilon_* I.
\end{equation}

The trust-region method is a descent method for \(f=-\Phi_k\), and therefore
\[
    A:=\Phi_k(u,v)
    \ge
    \Phi_k(p_0,q_0)
    \ge A_0.
\]
Consequently,
\[
    \varepsilon_*
    \le
    \delta A_0
    \le
    \delta A.
\]
For every unit \(h\perp u\),
\[
    \operatorname{Retr}_{(u,v)}\bigl(\tau(h,0)\bigr)
    =
    \left(
        \frac{u+\tau h}{\sqrt{1+\tau^2}},
        v
    \right).
\]
Since the retraction is second order,
\[
    \left.\frac{d}{d\tau}
    \Phi_k\!\left(\frac{u+\tau h}{\sqrt{1+\tau^2}},v\right)
    \right|_{\tau=0}
    =
    \left\langle
        \operatorname{grad}\Phi_k(u,v),(h,0)
    \right\rangle
\]
and
\[
    \left.\frac{d^2}{d\tau^2}
    \Phi_k\!\left(\frac{u+\tau h}{\sqrt{1+\tau^2}},v\right)
    \right|_{\tau=0}
    =
    \left\langle
        (h,0),
        \operatorname{Hess}\Phi_k(u,v)[(h,0)]
    \right\rangle .
\]
Applying \eqref{eq:bss-rtr-output-bounds} proves
\eqref{eq:bss-approx-local-x}; the same argument with the unit tangent
vector \((0,\ell)\) proves \eqref{eq:bss-approx-local-y}.

Finally, the iteration bound in \cite[Sec.~3.4]{BAC18}, together with
\eqref{eq:bss-epsilon-star-bound} and
\(-1\le f\le0\), is
\[
    k^{O(1)}\varepsilon_*^{-3}
    =
    k^{O(1)}A_0^{-O(1)}
\]
for fixed \(\delta\).  In each iteration, \(\Phi_k\), its gradient, and its
pullback Hessian can be evaluated from the supplied degree-\(4k\) moment
table, and the Cauchy or eigenstep requires only polynomial-size linear
algebra in dimension \(m+n-2\).  Since the moment table has size
\[
    \binom{m+n+4k}{4k}
    =
    (m+n)^{O(k)},
\]
the total number of real-arithmetic operations is
\(
    (m+n)^{O(k)}A_0^{-O(1)}.
\)
\end{proof}
\begin{lemma}
\label{lem:bss-approximate-local-rounding}
Let \(\wtE\) be a feasible degree-\(4k\) pseudo-expectation from Algorithm \ref{alg:sos-bss}.
Let \((u,v)\in\Sph^{m-1}\times\Sph^{n-1}\), set
\(A=\Phi_k(u,v)>0\), and assume that
\eqref{eq:bss-approx-local-x} and \eqref{eq:bss-approx-local-y} hold with
parameter \(\delta\in(0,1)\).  Then
\begin{equation}
\label{eq:bss-approx-local-distance}
    \distF(uv^\top,W)
    \le
    \frac{((m-1)(n-1))^{1/4}}{2k-1}
    \left(1+\frac{\delta}{2k}\right)
    +\frac{\delta}{k} .
\end{equation}
\end{lemma}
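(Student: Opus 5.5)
I will rerun the proof of Theorem~\ref{thm:main} with the same decomposition $x=su+\xi$, $y=tv+\zeta$ and the same reweighing $w=s^{2k-2}t^{2k-2}$, tracking the error terms that exact stationarity previously killed. The reweighing is still well defined, since $\wtE w\ge A>0$ by the same SoS identity. Set $Z=\wtE w$ and $\sigma=\wtE_w s^2t^2=A/Z\le 1$. The cross-moment matrix $M=\wtE_w[(sx)(ty)^\top]\in W$ is unchanged, and now
\[
M=\sigma\,uv^\top+m_x v^\top+u\,m_y^\top+E,
\]
with $m_x=\wtE_w[st^2\xi]$, $m_y=\wtE_w[s^2t\zeta]$ and $E=\wtE_w[st\,\xi\zeta^\top]$. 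The plan is to bound $\distF(uv^\top,W)\le\|uv^\top-M/\sigma\|_{\mathrm F}\le(\|m_x\|+\|m_y\|+\|E\|_{\mathrm F})/\sigma$.

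\emph{First-order terms.} By \eqref{eq:x-first-derivative} and the approximate stationarity in \eqref{eq:bss-approx-local-x}, every unit $h\perp u$ satisfies $2k\,|\wtE s^{2k-1}t^{2k}\langle x,h\rangle|\le\delta A$. Dividing by $2kZ$ gives $|\langle m_x,h\rangle|\le\delta\sigma/(2k)$. Since $m_x\perp u$, this yields $\|m_x\|\le\delta\sigma/(2k)$, and likewise $\|m_y\|\le\delta\sigma/(2k)$. Together these contribute $\delta/k$ to the distance, which is the additive term in \eqref{eq:bss-approx-local-distance}.

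\emph{Second-order terms.} From \eqref{eq:x-second-derivative} and the second-order condition in \eqref{eq:bss-approx-local-x},
\[
2k(2k-1)\wtE s^{2k-2}t^{2k}r^2\le 2kA+\delta A.
\]
Dividing by $2k(2k-1)Z$ gives
\[
h^\top Xh\le\frac{\sigma}{2k-1}\Bigl(1+\frac{\delta}{2k}\Bigr),
\qquad X=\wtE_w[t^2\xi\xi^\top].
\]
As before, $X$ lives on $u^\perp$ and is PSD, so $\|X\|_{\mathrm F}^2\le\|X\|_{\operatorname{op}}\Tr X\le (m-1)\|X\|_{\operatorname{op}}^2$. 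Hence $\|X\|_{\mathrm F}\le\sqrt{m-1}\,\sigma(1+\delta/2k)/(2k-1)$, and the analogous bound holds for $Y=\wtE_w[s^2\zeta\zeta^\top]$. The block-PSD argument is purely a positivity statement and does not use optimality, so Lemma~\ref{lem:block-cauchy} still gives $\|E\|_{\mathrm F}\le(\|X\|_{\mathrm F}\|Y\|_{\mathrm F})^{1/2}$. Dividing by $\sigma$ produces the first term of \eqref{eq:bss-approx-local-distance}.

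The main point to check is the trace bound $\Tr X\le(m-1)\|X\|_{\operatorname{op}}$. It follows because $X$ is supported on the $(m-1)$-dimensional space $u^\perp$. The remaining care is in normalizations, namely that the $\delta A$ slack is divided by $Z$ rather than $A$; since $A/Z=\sigma$, all errors come out proportional to $\sigma$ and cancel in the final division. Summing the three contributions gives the stated bound.
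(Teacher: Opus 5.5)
Your proposal is correct and follows the paper's proof essentially line by line. It uses the same expansion $M=\sigma uv^\top+m_xv^\top+um_y^\top+E\in W$, the same bounds $\|m_x\|_2,\|m_y\|_2\le\delta\sigma/(2k)$ and $\|X\|_{\mathrm F},\|Y\|_{\mathrm F}$ inflated by $(1+\delta/(2k))$, obtained by dividing the $\delta A$ slack by $\wtE w$, and the same block-PSD step via Lemma~\ref{lem:block-cauchy}. The paper leaves the normalization $\sigma=A/\wtE w$ implicit, and you spell it out.
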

\begin{proof}
We use the notation \(s,t,\xi,\zeta,w,\wtE_w,\sigma,X,Y,E,M\) from the
proof of Theorem~\ref{thm:main}.  Set
\[
    m_x:=\wtE_w[st^2\xi],
    \qquad
    m_y:=\wtE_w[s^2t\zeta].
\]
Combining the derivative identities
\eqref{eq:x-first-derivative}--\eqref{eq:y-second-derivative} with
\eqref{eq:bss-approx-local-x} and \eqref{eq:bss-approx-local-y} gives
\[
    \|m_x\|_2,\ \|m_y\|_2
    \le \frac{\delta\sigma}{2k},
    \qquad
    \|X\|_{\mathrm F}
    \le
    \frac{\sqrt{m-1}\,\sigma}{2k-1}
    \left(1+\frac{\delta}{2k}\right),
    \qquad
    \|Y\|_{\mathrm F}
    \le
    \frac{\sqrt{n-1}\,\sigma}{2k-1}
    \left(1+\frac{\delta}{2k}\right).
\]
The same block-PSD argument as in Theorem~\ref{thm:main} gives
\(\|E\|_{\mathrm F}^2\le \|X\|_{\mathrm F}\|Y\|_{\mathrm F}\).  Hence
\[
    \frac{\|E\|_{\mathrm F}}{\sigma}
    \le
    \frac{((m-1)(n-1))^{1/4}}{2k-1}
    \left(1+\frac{\delta}{2k}\right).
\]
Moreover,
\[
    M=\sigma uv^\top+m_xv^\top+um_y^\top+E\in W.
\]
Since \(M/\sigma\in W\),
\[
\begin{aligned}
    \distF(uv^\top,W)
    &\le
    \left\|uv^\top-\frac{M}{\sigma}\right\|_{\mathrm F}\\
    &\le
    \frac{\|m_x\|_2+\|m_y\|_2+\|E\|_{\mathrm F}}{\sigma}\\
    &\le
    \frac{((m-1)(n-1))^{1/4}}{2k-1}
    \left(1+\frac{\delta}{2k}\right)
    +\frac{\delta}{k},
\end{aligned}
\]
which is \eqref{eq:bss-approx-local-distance}.
\end{proof}
% \begin{proof}
% We use the same notations from Theorem~\ref{thm:main}.
% Combining \eqref{eq:x-first-derivative},
% \eqref{eq:y-first-derivative}, \eqref{eq:bss-approx-local-x} and
% \eqref{eq:bss-approx-local-y} gives
% \[
%     \|m_x\|_2\le\frac{\delta a}{2k},
%     \qquad
%     \|m_y\|_2\le\frac{\delta a}{2k},\qquad
%         \|X\|_{\mathrm F}
%     \le
%     \frac{\sqrt{m-1}\,A}{2k-1}
%     \left(1+\frac{\delta}{2k}\right),
%     \qquad
%     \|Y\|_{\mathrm F}
%     \le
%     \frac{\sqrt{n-1}\,A}{2k-1}
%     \left(1+\frac{\delta}{2k}\right).
% \]

% The same block-PSD argument as in Theorem~\ref{thm:main} gives
% \(\|C\|_{\mathrm F}^2\le\|X\|_{\mathrm F}\|Y\|_{\mathrm F}\).  Hence
% \[
%     \frac{\|c\|_{\mathrm F}}{a}
%     =
%     \frac{\|C\|_{\mathrm F}}{A}
%     \le
%     \frac{((m-1)(n-1))^{1/4}}{2k-1}
%     \left(1+\frac{\delta}{2k}\right).
% \]
% Since \(M/a\in W\),
% \[
% \begin{aligned}
%     \distF(uv^\top,W)
%     \le
%     \left\|uv^\top-\frac{M}{a}\right\|_{\mathrm F}
%     \le
%     \frac{\|m_x\|_2}{a}
%     +
%     \frac{\|m_y\|_2}{a}
%     +
%     \frac{\|c\|_{\mathrm F}}{a},
% \end{aligned}
% \]
% which proves \eqref{eq:bss-approx-local-distance}.
% \end{proof}

\begin{proof}[Proof of Theorem~\ref{thm:bss-algorithmic-rounding}]
Choose a universal constant \(2\sqrt2\,\delta_0<1-\frac1{\sqrt2}\).
Let \(e_1,\ldots,e_m\) and \(f_1,\ldots,f_n\) be the standard bases of
\(\R^m\) and \(\R^n\).  Choose \((i_0,j_0)\) maximizing
\(\Phi_k(e_i,f_j)\) over all coordinate pairs, and set
\[
    (p_0,q_0)=(e_{i_0},f_{j_0}),
    \qquad
    A_0=\Phi_k(p_0,q_0).
\]
By the elementary SoS power-mean certificate, modulo
\(\|x\|_2^2=1\) and \(\|y\|_2^2=1\),
\(
    \sum_{i,j}x_i^{2k}y_j^{2k}
    \ge
    m^{1-k}n^{1-k}
\) has a degree-\(4k\) SoS certificate.
Applying \(\wtE\) gives
\[
    \sum_{i=1}^m\sum_{j=1}^n \Phi_k(e_i,f_j)
    =
    \wtE\sum_{i=1}^m\sum_{j=1}^n x_i^{2k}y_j^{2k}
    \ge
    m^{1-k}n^{1-k}.
\]
Hence,
\begin{equation}
\label{eq:bss-a0-lower-bound}
    A_0^{-1}\le (mn)^k\le (m+n)^{O(k)} .
\end{equation}
Apply Lemma~\ref{lem:bss-rtr-blackbox} with this starting point \((p_0,q_0)\) and
\(\delta=\delta_0\).  It returns
\((\widehat u,\widehat v)\in\Sph^{m-1}\times\Sph^{n-1}\) satisfying
\eqref{eq:bss-approx-local-x} and \eqref{eq:bss-approx-local-y}.  Put
\(\widehat X=\widehat u\widehat v^\top\).  By
Lemma~\ref{lem:bss-approximate-local-rounding},
\[
    \distF(\widehat X,W)
    \le
    \frac{R_{m,n}}{2k-1}
    \left(1+\frac{\delta_0}{2k}\right)
    +
    \frac{\delta_0}{k} .
\]
The degree condition \eqref{eq:kcondition} gives
\(
    \frac{R_{m,n}}{2k-1}\le\frac{\sqrt\epsilon}{\sqrt2},
    \frac1k\le\sqrt2\sqrt\epsilon,
\)
where the second inequality uses \(R_{m,n}\ge1\).  Therefore
\[
    \distF(\widehat X,W)
    \le
    \frac{\sqrt\epsilon}{\sqrt2}
    +2\sqrt2\,\delta_0\sqrt\epsilon
    <
    \sqrt\epsilon.
\] Hence
\(
    \|\Pi_W(\widehat X)\|_{\mathrm F}^2
    =
    1-\distF(\widehat X,W)^2
    >
    1-\epsilon .
\)
By
\eqref{eq:bss-a0-lower-bound}, the trust-region stage has the same asymptotic
cost, up to a polynomial factor in the supplied moment-table size.  Hence the running time for rounding is absorbed into the
SoS feasibility time scale used in Algorithm~\ref{alg:sos-bss}.
\end{proof}

\section{Positivity of the maximum and reweighing}
\label{sec:appendix-pos}
In this section, we prove Lemma~\ref{lem:positive-projective-moment}
\begin{proof}
Let \(r=\operatorname{rank}(P)\ge1\), and let \(\sigma_W\) denote the
uniform probability measure on \(\Sph(W)\). Since \(v\in W\), \(\langle x,v\rangle=\langle Px,v\rangle\). Identifying \(W\) isometrically
with \(\mathbb R^r\), Lemma \ref{lem:degree-d-spherical-moments} applied in dimension \(r\) gives
\[
    \int_{\Sph(W)} \langle x,v\rangle^{2k}\,d\sigma_W(v)
    =
    c_{r,k}\|Px\|_2^{2k},
\]
where \(c_{r,k}>0\) is the \(2k\)-th moment of the first coordinate of a
uniform point on \(S^{r-1}\).  Therefore
\[
    \int_{\Sph(W)} \Phi_k(v)\,d\sigma_W(v)
    =
    c_{r,k}\,\wtE \norm{Px}_2^{2k}.
\]

We claim that
\[
    \wtE \norm{Px}_2^{2k}=1.
\]
Indeed, since \(P\) is an orthogonal projector,
\[
    \norm{Px}_2^2-\norm{x}_2^2
    =
    -x^\top(I-P)x
    =
    -\sum_{i=1}^n x_i((I-P)x)_i .
\]
Hence, for every polynomial \(R\) with \(\deg R\le 2k-2\),
\[
    \wtE\!\left[
        R(x)(\norm{Px}_2^2-\norm{x}_2^2)
    \right]=0
\]
by the subspace constraint, because each multiplier \(R(x)x_i\) has
degree at most \(2k-1\).  Taking
\[
    R(x)=\sum_{j=0}^{k-1}
        \norm{Px}_2^{2(k-1-j)}\norm{x}_2^{2j}
\]
gives
\[
    \wtE \norm{Px}_2^{2k}
    =
    \wtE \norm{x}_2^{2k}.
\]
The sphere constraint implies
\(\
    \wtE \norm{x}_2^{2k}=1,
\)
since
\(
    \norm{x}_2^{2k}-1
    =
    (\norm{x}_2^2-1)
    \sum_{j=0}^{k-1}\norm{x}_2^{2j},
\)
and the multiplier has degree \(2k-2\).  Thus
\[
    \int_{\Sph(W)} \Phi_k(v)\,d\sigma_W(v)
    =
    c_{r,k}>0.
\]
Since \(u\) is a global maximizer of \(\Phi_k\) on \(\Sph(W)\), it follows that
\[
    \Phi_k(u)\ge \int_{\Sph(W)}\Phi_k(v)\,d\sigma_W(v)>0.
\]
It remains to verify that the reweighing has positive total mass. Since
\(w(x)=s^{2k-4}\),
\begin{align}\label{eq:weighted-mass-comparison}
    \wtE w(x)-\wtE s^{2k}=
    \wtE\!\left[s^{2k-4}(1+s^2)(1-s^2)\right].
\end{align}
Let \(\ell(x):=(I-uu^\top)x\). Modulo the sphere constraint,
\(
    1-s^2=\norm{\ell(x)}_2^2.
\)
Consequently, the right-hand side of
\eqref{eq:weighted-mass-comparison} equals
\[
    \wtE\!\left[
        \sum_{j=1}^n(s^{k-2}\ell_j(x))^2
        +
        \sum_{j=1}^n(s^{k-1}\ell_j(x))^2
    \right]
    \ge0.
\]
Each squared polynomial has degree at most \(k\), and the multiplier of
the sphere equality has degree at most \(2k-2\). This proves
\eqref{eq:positive-projective-moment}.
\end{proof}

\section{Auxiliary lemmas}

\begin{lemma}[PSD block inequality~\cite{Bha07}]
\label{lem:block-cauchy}
If
\[
\begin{pmatrix}
X & C\\
C^\top & Y
\end{pmatrix}
\succeq 0,
\]
then
\[
\|C\|_{\mathrm F}^2\le \|X\|_{\mathrm F}\|Y\|_{\mathrm F}.
\]
\end{lemma}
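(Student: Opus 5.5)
Since the block matrix is positive semidefinite, I will reduce to the case where the diagonal blocks are invertible and use the Schur-complement factorization of $C$. First, replace $X,Y$ by $X+\delta I$, $Y+\delta I$ for $\delta>0$. This keeps the block matrix PSD and makes the diagonal blocks positive definite. The inequality is continuous in $\delta$, so I can let $\delta\downarrow 0$ at the end. With $X,Y\succ0$, positivity of the block matrix is equivalent to $Y\succeq C^\top X^{-1}C$, or equivalently to the contraction $K:=X^{-1/2}CY^{-1/2}$ satisfying $\|K\|_{\operatorname{op}}\le1$. This gives the factorization $C=X^{1/2}KY^{1/2}$.

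Next, I bound $\|C\|_{\mathrm F}^2=\Tr(C^\top C)$. Using the factorization,
\[
\|C\|_{\mathrm F}^2=\Tr\bigl(Y^{1/2}K^\top X K Y^{1/2}\bigr)=\langle X,\,KYK^\top\rangle ,
\]
where the inner product is the Frobenius one. Cauchy--Schwarz gives
\[
\|C\|_{\mathrm F}^2\le\|X\|_{\mathrm F}\,\|KYK^\top\|_{\mathrm F}.
\]
Since $K$ is a contraction, $\|KYK^\top\|_{\mathrm F}\le\|K\|_{\operatorname{op}}^2\|Y\|_{\mathrm F}\le\|Y\|_{\mathrm F}$. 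This is the submultiplicativity of the Frobenius norm under operator-norm multiplication. Combining the two displays gives the claim.

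The only delicate step is justifying the contraction property and the perturbation limit. Both are standard. For the contraction, $\begin{pmatrix}I&K\\K^\top&I\end{pmatrix}\succeq0$ is a congruence of the original block matrix by $\operatorname{diag}(X^{-1/2},Y^{-1/2})$. Testing it on the vector $(Ka,-a)$ gives $\|a\|^2-\|Ka\|^2\ge0$. For the limit, both sides of the inequality are continuous functions of $\delta$.
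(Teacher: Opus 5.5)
Your proposal is correct and follows essentially the same route as the paper: you perturb to $X+\delta I$, $Y+\delta I$, factor $C=X^{1/2}KY^{1/2}$ with a contraction $K$ via the Schur complement, and finish with Frobenius Cauchy--Schwarz plus $\|K\|_{\operatorname{op}}\le 1$. The differences are cosmetic. You pair $X$ with $KYK^\top$ where the paper pairs $K^\top XK$ with $Y$, and you spell out the contraction property via the congruence to $\begin{pmatrix} I & K\\ K^\top & I\end{pmatrix}$, which the paper leaves implicit.
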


\begin{proof}
When \(X\) and \(Y\) are positive definite, the Schur complement implies
\[
Y-C^\top X^{-1}C\succeq 0 .
\]
Equivalently, with
\[
K=X^{-1/2}CY^{-1/2},
\]
one has \(K^\top K\preceq I\).  Thus
\[
C=X^{1/2}KY^{1/2}.
\]
The singular case follows by replacing \(X,Y\) with \(X+\varepsilon I\), \(Y+\varepsilon I\), applying the argument above, and sending \(\varepsilon\downarrow 0\).

Using the contraction factorization,
\[
\|C\|_{\mathrm F}^2
=
\Tr(C^\top C)
=
\Tr\!\left(Y^{1/2}K^\top XKY^{1/2}\right)
=
\Tr\!\left((K^\top XK)Y\right).
\]
By Cauchy--Schwarz for the Frobenius inner product,
\[
\Tr\!\left((K^\top XK)Y\right)
\le
\|K^\top XK\|_{\mathrm F}\|Y\|_{\mathrm F}.
\]
Since \(K^\top K\preceq I\),
\[
\|K^\top XK\|_{\mathrm F}\le \|X\|_{\mathrm F}.
\]
Combining the inequalities gives the claim.
\end{proof}

\begin{lemma}
\label{lem:centered-quartic-sos}
For every \(\eta>0\), define
\[
    C_\eta:=1+\frac{18}{\eta}+3\left(\frac{2}{\eta}\right)^{1/3}.
\]
Then, for all \(a,b\in\R\),
\begin{equation}
\label{eq:centered-quartic-sos}
    (a+b)^4
    \le
    (1+\eta)a^4+4a^3b+C_\eta b^4.
\end{equation}
Moreover, \eqref{eq:centered-quartic-sos} has a quartic
sum-of-squares certificate.
\end{lemma}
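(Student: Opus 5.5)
The statement to prove is Lemma~\ref{lem:centered-quartic-sos}: for every \(\eta>0\), \((a+b)^4\le(1+\eta)a^4+4a^3b+C_\eta b^4\), with a quartic SoS certificate. I will work with the difference
\[
    G(a,b):=(1+\eta)a^4+4a^3b+C_\eta b^4-(a+b)^4
    =\eta a^4-6a^2b^2-4ab^3+(C_\eta-1)b^4 .
\]
The linear-in-\(b\) terms \(4a^3b\) cancel exactly, which is the reason for keeping that term on the right-hand side. Since \(G\) is a binary quartic form, nonnegativity on \(\R^2\) is equivalent to being a sum of squares (Hilbert). So it suffices to exhibit an explicit decomposition, which then serves as the certificate. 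I will build one by absorbing each cross term separately with weighted AM--GM squares, splitting the budget \(\eta a^4\) into two halves.

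\textbf{Absorbing the \(a^2b^2\) term.} Use
\[
    \tfrac{\eta}{2}a^4-6a^2b^2+\tfrac{18}{\eta}b^4=\tfrac{\eta}{2}\Bigl(a^2-\tfrac{6}{\eta}b^2\Bigr)^2 .
\]
This is a perfect square. It consumes half the \(a^4\) budget and costs \(18/\eta\) in the \(b^4\) coefficient, which is exactly the \(18/\eta\) term in \(C_\eta\).

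\textbf{Absorbing the \(ab^3\) term.} By weighted AM--GM with exponents \(1/4,3/4\),
\[
    4|ab^3|\le \tfrac{\eta}{2}a^4+3\lambda b^4 ,
\]
where \(\lambda\) is chosen from \((\eta/2)\lambda^3\ge1\), i.e. \(\lambda=(2/\eta)^{1/3}\). This produces the \(3(2/\eta)^{1/3}\) term of \(C_\eta\). To make this step SoS, I need the binary quartic \(\tfrac{\eta}{2}a^4-4ab^3+3\lambda b^4\) to be a sum of squares. I would certify it by the standard identity
\[
    \alpha^4+3\beta^4-4\alpha\beta^3=(\alpha-\beta)^2(\alpha^2+2\alpha\beta+3\beta^2),
\]
applied with \(\alpha=(\eta/2)^{1/4}a\) and \(\beta=\lambda^{1/4}b\) after checking that the scalings match. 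The second factor is a PSD quadratic, since it equals \((\alpha+\beta)^2+2\beta^2\). Hence the product is \((\alpha-\beta)^2(\alpha+\beta)^2+2\beta^2(\alpha-\beta)^2\), an explicit sum of two squares of quadratics.

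\textbf{Combining.} Adding the two pieces gives
\[
    G=\tfrac{\eta}{2}\Bigl(a^2-\tfrac{6}{\eta}b^2\Bigr)^2+\Bigl[\tfrac{\eta}{2}a^4-4ab^3+3\lambda b^4\Bigr]+\bigl(C_\eta-1-\tfrac{18}{\eta}-3\lambda\bigr)b^4 .
\]
The last coefficient is zero by the definition of \(C_\eta\), so \(G\) is a sum of squares of quadratics, i.e. a quartic SoS certificate.

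The main obstacle I expect is the scaling check in the AM--GM step. The constants must match so that \(4\alpha\beta^3\) equals \(4ab^3\), which requires \((\eta/2)^{1/4}\lambda^{3/4}=1\), i.e. \(\lambda^3=2/\eta\). This holds with \(\lambda=(2/\eta)^{1/3}\), so the identity reproduces the stated \(C_\eta\) exactly.
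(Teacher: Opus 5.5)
Your proof is correct and is essentially the paper's proof. You make the same reduction to $\eta a^4-6a^2b^2-4ab^3+(C_\eta-1)b^4$ and use the same two-piece certificate: with $\lambda=(2/\eta)^{1/3}$ one has $(\eta/2)^{1/4}\lambda=\lambda^{1/4}$, so your factor $\alpha-\beta$ equals $(\eta/2)^{1/4}(a-\lambda b)$, and your second piece coincides exactly with the paper's $\tfrac{\eta}{2}(a-\lambda b)^2\bigl((a+\lambda b)^2+2\lambda^2b^2\bigr)$.
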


\begin{proof}
Expanding both sides, it suffices to certify
\[
    \eta a^4-6a^2b^2-4ab^3+(C_\eta-1)b^4\ge0.
\]
The following identity gives an explicit sum-of-squares certificate:
\[
\begin{aligned}
&\eta a^4-6a^2b^2-4ab^3+(C_\eta-1)b^4 \\
&\qquad =
\frac{\eta}{2}
\left(a^2-\frac{6}{\eta}b^2\right)^2
+
\frac{\eta}{2}
\left(a-\left(\frac{2}{\eta}\right)^{1/3}b\right)^2 \\
&\qquad\qquad\cdot
\left(
    \left(a+\left(\frac{2}{\eta}\right)^{1/3}b\right)^2
    +2\left(\frac{2}{\eta}\right)^{2/3}b^2
\right).
\end{aligned}
\]
Thus \eqref{eq:centered-quartic-sos} is a quartic SoS inequality.
\end{proof}

\begin{lemma}
\label{lem:pq-row-binary-sos}
Let \(q\ge4\) be even.  There is a constant \(C_q>0\) such that, for all
\(a,b\in\R\),
\begin{equation}
\label{eq:pq-row-binary-sos}
    (a+b)^q
    \le
    a^q+qa^{q-1}b+C_q\bigl(a^{q-2}b^2+b^q\bigr).
\end{equation}
Moreover, after moving the left-hand side of
\eqref{eq:pq-row-binary-sos} to the right, the resulting homogeneous binary
form is nonnegative and hence is a sum of squares of real polynomials.
\end{lemma}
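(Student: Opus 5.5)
The plan is to reduce to a statement about a single nonnegative univariate polynomial and then use the standard fact that nonnegative binary forms are sums of squares. Set
\[
    G(a,b):=a^q+qa^{q-1}b+C_q\bigl(a^{q-2}b^2+b^q\bigr)-(a+b)^q .
\]
This is homogeneous of even degree \(q\). A nonnegative binary form is a sum of squares: dehomogenize to \(g(t)=G(1,t)\), which is a nonnegative univariate polynomial of degree at most \(q\), write it as a sum of two squares, and then rehomogenize. The degree of \(g\) is exactly \(q\) once \(C_q>1\). So the whole content of the lemma is the existence of \(C_q\) with \(G\ge0\) on \(\R^2\).

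For nonnegativity we split by homogeneity.

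\emph{The case \(a=0\).} Here \(G(0,b)=(C_q-1)b^q\), which is nonnegative provided \(C_q\ge1\).

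\emph{The case \(a\neq0\).} Divide by \(a^q>0\) (recall \(q\) is even) and set \(t=b/a\). The task becomes
\[
    (1+t)^q-1-qt\le C_q\bigl(t^2+t^q\bigr)\qquad\text{for all } t\in\R .
\]
By the binomial expansion, the left-hand side equals \(\sum_{j=2}^q\binom{q}{j}t^j\), and each term satisfies \(|t|^j\le t^2+t^q\) for \(2\le j\le q\): if \(|t|\le1\) it is dominated by \(t^2\), and otherwise by \(t^q\) (note \(t^q=|t|^q\)). It therefore suffices to take \(C_q=\max\{1,\sum_{j=2}^q\binom{q}{j}\}=2^q-q-1\).

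This gives \(G\ge0\), and hence \(G\) is a sum of squares by the reduction above. The conclusion for quartic-type use follows immediately, since \(G\) has degree \(q\) and its squares have degree \(q\) as well, which is what the pseudo-expectation argument in Theorem~\ref{thm:pq-rounding} needs when the inequality is applied under \(\wtE_w\) with \(a=\alpha_i s\) and \(b=\langle b_i,x\rangle\).

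The only mildly delicate step is the SoS conclusion, since the Hilbert-type fact is used for binary forms rather than an explicit certificate. To make it explicit, I would factor \(g(t)\) over \(\C\) into conjugate pairs \((t-z)(t-\bar z)=|t-z|^2\) together with real roots of even multiplicity. Collecting these factors as \(g=|p(t)|^2=(\operatorname{Re}p)^2+(\operatorname{Im}p)^2\) and homogenizing with \(a\) gives the certificate.
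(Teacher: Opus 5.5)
Your proof is correct and follows essentially the paper's route. Like the paper, you bound each binomial monomial $a^{q-\ell}b^{\ell}$ ($2\le\ell\le q$) by $a^{q-2}b^2$ when $|b|\le|a|$ and by $b^q$ otherwise, then use that nonnegative binary forms of even degree are sums of squares. Your explicit constant $C_q=2^q-q-1$ and your self-contained proof of the binary-form SoS fact (dehomogenize, factor over $\C$, rehomogenize), where the paper simply cites it, are harmless refinements.
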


\begin{proof}
Expanding \((a+b)^q\) and retaining the constant and linear terms leaves
only monomials \(a^{q-\ell}b^{\ell}\) with \(2\le \ell\le q\).  If
\(|b|\le|a|\), each such monomial is bounded in absolute value by a constant
multiple of \(|a|^{q-2}b^2\); if \(|b|>|a|\), it is bounded by a constant
multiple of \(|b|^q\).  Since \(q\) is even, this proves
\eqref{eq:pq-row-binary-sos} with \(C_q\) chosen sufficiently large.  The right-hand side
minus the left-hand side is then a nonnegative homogeneous binary form of even
degree, by
\cite[Theorem~2.1(a)]{Naldi2014} every such form is a sum of squares over \(\R\).
\end{proof}
\begin{lemma}[Spherical moments]
\label{lem:degree-d-spherical-moments}
Let $u$ be uniform on $\Sph^{n-1}$, and define
\begin{equation}
    c_{n,k}:=\E_u\ip{u}{e_1}^{2k}.
    \label{eq:degree-d-cnk-definition}
\end{equation}
Then
\begin{equation}
    c_{n,k}
    =\frac{(2k-1)!!}{n(n+2)\cdots(n+2k-2)},
    \qquad
    c_{n,k}^{1/(2k)}\ge\sqrt{\frac{k}{3en}}\quad(2\le k\le n).
    \label{eq:degree-d-cnk-formula-lower}
\end{equation}
For every fixed $x,a\in\R^n$,
\begin{align}
    \E_u\ip{u}{x}^{2k}
        &=c_{n,k}\norm{x}_2^{2k},
        \label{eq:degree-d-spherical-even}\\
    \E_u\ip{u}{x}^{2k-1}\ip{u}{a}
        &=c_{n,k}\norm{x}_2^{2k-2}\ip{x}{a}.
        \label{eq:degree-d-spherical-vector}
\end{align}
\end{lemma}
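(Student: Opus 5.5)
I would first derive the closed form for $c_{n,k}$ via Gaussian integration. Let $g\sim N(0,I_n)$ and write $g=\|g\|_2\,u$ with $u$ uniform on $\Sph^{n-1}$ and independent of $\|g\|_2$. Then
\[
\E g_1^{2k}=\E\|g\|_2^{2k}\cdot c_{n,k}.
\]
Here $\E g_1^{2k}=(2k-1)!!$. Since $\|g\|_2^2\sim\chi^2_n$, we also have $\E\|g\|_2^{2k}=n(n+2)\cdots(n+2k-2)$. Dividing gives the formula.

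Identity \eqref{eq:degree-d-spherical-even} then follows from rotation invariance. For $x\neq0$, choose an orthogonal $R$ with $Rx=\|x\|_2e_1$. Then $\ip{u}{x}=\|x\|_2\ip{R u}{e_1}$, and $Ru$ is again uniform on the sphere. The case $x=0$ is trivial.

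For \eqref{eq:degree-d-spherical-vector}, differentiate \eqref{eq:degree-d-spherical-even} in $x$ along direction $a$. Differentiating under the expectation is legitimate because the integrand is a polynomial on a compact set. This gives
\[
2k\,\E\ip{u}{x}^{2k-1}\ip{u}{a}=c_{n,k}\cdot 2k\|x\|_2^{2k-2}\ip{x}{a}.
\]
Dividing by $2k$ yields the identity.

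For the lower bound on $c_{n,k}^{1/(2k)}$, I would bound the numerator and denominator separately. For the numerator,
\[
(2k-1)!!=\frac{(2k)!}{2^k k!}\ge \frac{(2k/e)^{2k}}{2^k\,k^k}=\frac{2^kk^k}{e^{2k}},
\]
using $(2k)!\ge(2k/e)^{2k}$ and $k!\le k^k$. For the denominator, since $k\le n$, each factor satisfies $n+2j\le n+2k-2\le 3n$, so the product is at most $(3n)^k$. Combining,
\[
c_{n,k}\ge\Big(\frac{2k}{3e^2 n}\Big)^k,
\qquad
c_{n,k}^{1/(2k)}\ge\sqrt{\frac{2k}{3e^2n}}.
\]
This falls short of $\sqrt{k/(3en)}$ by a factor $\sqrt{2/e}<1$, so the estimate needs tightening. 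A sharper bound on the numerator fixes it: Stirling gives $(2k)!/k!\ge\sqrt2\,(4k/e)^k$, hence $(2k-1)!!\ge\sqrt2\,(2k/e)^k$. Then
\[
c_{n,k}\ge\Big(\frac{2k}{3en}\Big)^k\ge\Big(\frac{k}{3en}\Big)^k,
\]
which gives $c_{n,k}^{1/(2k)}\ge\sqrt{k/(3en)}$.

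The main obstacle is getting the constant in this last step right. I would verify the Stirling ratio $(2k)!/k!\ge\sqrt2(4k/e)^k$ directly, by induction on $k$ using $(1+1/k)^k\le e$. The rest is routine.
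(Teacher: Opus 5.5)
Your closed-form derivation (Gaussian polar decomposition plus the moments of $\chi^2_n$), the rotation-invariance proof of \eqref{eq:degree-d-spherical-even}, and the denominator bound $n(n+2)\cdots(n+2k-2)\le(3n)^k$ are all correct. Your proof of \eqref{eq:degree-d-spherical-vector}, obtained by differentiating \eqref{eq:degree-d-spherical-even} in the direction $a$, is also valid. It differs from the paper, which argues by symmetry that $\E_u\ip{u}{x}^{2k-1}u$ is parallel to $x$ and then pairs with $x$. Your route is somewhat cleaner because it avoids justifying that symmetry claim.

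The gap is in the final numerical step. The inequality $(2k)!/k!\ge\sqrt2\,(4k/e)^k$, on which you rest the lower bound, is false for every $k\ge1$. At $k=1$ it reads $2\ge\sqrt2\cdot 4/e\approx 2.08$, and at $k=2$ it reads $12\ge\sqrt2\,(8/e)^2\approx 12.25$.

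In general, write $n!=\sqrt{2\pi n}\,(n/e)^n e^{\theta_n}$ with $\frac{1}{12n+1}<\theta_n<\frac{1}{12n}$ (Robbins). Then $(2k)!/k!=\sqrt2\,(4k/e)^k e^{\theta_{2k}-\theta_k}$, and $\theta_{2k}<\frac{1}{24k}<\frac{1}{12k+1}<\theta_k$. So the Stirling constant is approached strictly from below, and no induction can establish your claim; its base case already fails. Equivalently, your intermediate bound $(2k-1)!!\ge\sqrt2\,(2k/e)^k$ is false.

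The repair is easy because you have roughly a factor $2^k$ of slack. Together with your denominator bound, all you need is $(2k-1)!!\ge(k/e)^k$. This follows from $(2k-1)!!=\prod_{j=1}^k(2j-1)\ge\prod_{j=1}^k j=k!\ge(k/e)^k$, which is exactly how the paper obtains $c_{n,k}\ge(k/(3en))^k$. The lemma itself is fine, but your proof of the lower bound currently relies on a false inequality and needs this replacement.
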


\begin{proof}
The formula for $c_{n,k}$ follows from the standard Gaussian representation
$u=g/\norm{g}_2$, where $g\sim N(0,I_n)$.  Equivalently,
$u_1^2$ has the beta distribution with parameters $1/2$ and $(n-1)/2$, so
\[
    \E u_1^{2k}
    =\frac{(1/2)(3/2)\cdots(k-1/2)}{(n/2)(n/2+1)\cdots(n/2+k-1)}
    =\frac{(2k-1)!!}{n(n+2)\cdots(n+2k-2)}.
\]
For $2\le k\le n$,
\[
    (2k-1)!!\ge k!\ge(k/e)^k,
    \qquad
    n(n+2)\cdots(n+2k-2)\le(3n)^k.
\]
Substituting these two bounds into the formula for $c_{n,k}$ proves the
lower bound in \eqref{eq:degree-d-cnk-formula-lower}.
Rotational invariance gives \eqref{eq:degree-d-spherical-even}.  For the
vector identity, the vector
$\E_u\ip{u}{x}^{2k-1}u$ must be parallel to $x$.  Taking its inner product
with $x$ and using \eqref{eq:degree-d-spherical-even} gives
\[
    \E_u\ip{u}{x}^{2k-1}u
    =c_{n,k}\norm{x}_2^{2k-2}x,
\]
and \eqref{eq:degree-d-spherical-vector} follows by pairing with $a$.
\end{proof}

\begin{lemma}[Pseudo-H\"older]
\label{lem:degree-d-pseudo-holder}
Let $r\ge2$ be even.  Let $L$ be a linear functional defined on the
polynomials below and assume that
\[
    L[h^2]\ge0
    \quad\text{for every }h\in
    \operatorname{span}\{p^{r/2-j}q^j:0\le j\le r/2\}.
\]
Then
\begin{equation}
    |L[p^{r-1}q]|^r
    \le L[p^r]^{r-1}L[q^r].
    \label{eq:degree-d-pseudo-holder}
\end{equation}
\end{lemma}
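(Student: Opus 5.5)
I will prove the pseudo-H\"older inequality (Lemma~\ref{lem:degree-d-pseudo-holder}) by induction along the log-convexity chain of the mixed moments. Set $r=2s$ and define, for $0\le j\le r$,
\[
    m_j:=L[p^{r-j}q^{j}],
\]
so the claim reads $|m_1|^r\le m_0^{r-1}m_r$. The even-indexed quantities $m_{2i}=L[(p^{s-i}q^{i})^2]$ are nonnegative by hypothesis, and the hypothesis also gives positivity of $L[h^2]$ for every $h=\alpha p^{s-i}q^i+\beta p^{s-i'}q^{i'}$ in the span. This yields the Cauchy--Schwarz inequalities
\[
    m_{i+i'}^2\le m_{2i}\,m_{2i'}
    \qquad(0\le i,i'\le s),
\]
by the same $2\times2$ PSD-matrix argument as Lemma~\ref{pseudo-Cauchy–Schwarz}.

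The plan is to extract a log-convex chain from these inequalities. I would first handle the degenerate case: if $m_0=0$, then Cauchy--Schwarz with $i=0$ forces $m_{i'}=0$ for all $i'\le s$, in particular $m_1=0$, and the claim is trivial. Assume then $m_0>0$. Next I use the powers-of-two trick, with $r$ an arbitrary even integer. For the indices $0,1,2,\dots,r$, Cauchy--Schwarz gives $m_j^2\le m_{j-1}m_{j+1}$ only when $j-1$ and $j+1$ are even, i.e.\ for odd $j$; the even-$j$ case would need odd-index squares, which are not available. So I cannot get full log-convexity directly and will instead work only with the even subsequence $a_i:=m_{2i}\ge0$ together with $m_1$.

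I would combine two ingredients. First, $m_1^2\le m_0 m_2$, i.e.\ $m_1^2\le a_0a_1$ (take $i=0,i'=1$). Second, the sequence $a_i$ is log-convex: $a_i^2=m_{2i}^2$, and with $i=(i-1)+... $ hmm, I need $m_{2i}^2\le m_{2i-2}m_{2i+2}$, which requires $h$ combining $p^{s-i+1}q^{i-1}$ and $p^{s-i-1}q^{i+1}$. These have index sum $2i$, so Cauchy--Schwarz gives $m_{2i}^2\le a_{i-1}a_{i+1}$. This works for $1\le i\le s-1$. Log-convexity of $(a_0,\dots,a_s)$ with $a_0>0$ then gives $a_1\le a_0^{1-1/s}a_s^{1/s}$ by the standard telescoping argument: the ratios $a_{i+1}/a_i$ are nondecreasing, so $(a_1/a_0)^s\le a_s/a_0$. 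Zero terms are handled by the same vanishing argument as before.

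Combining the two ingredients,
\[
    m_1^{2s}\le a_0^s a_1^s\le a_0^s\,a_0^{s-1}a_s=m_0^{r-1}m_r,
\]
which is exactly \eqref{eq:degree-d-pseudo-holder}. The main obstacle is the zero/degenerate bookkeeping in the telescoping step when some $a_i$ vanish. If $a_i=0$ for some $i\ge1$, Cauchy--Schwarz propagates: $a_{i}^2\le a_{i-1}a_{i+1}$ does not immediately force $a_1=0$. I would therefore prove $(a_1/a_0)^{i}\le a_i/a_0$ inductively on $i$ directly from $a_i^2\le a_{i-1}a_{i+1}$. If an intermediate $a_i$ is $0$ with $i\ge1$, then $a_i^2\le a_0a_{2i}$-type inequalities (take indices $0$ and $2i$ when $2i\le 2s$) together with $m_{i}^2\le a_0 a_i$ let me show $a_1=0$ by downward induction, after which the claim is trivial.
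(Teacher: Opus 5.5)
Your proof is correct and is essentially the paper's argument: the paper likewise sets $c_j=L[p^{r-2j}q^{2j}]$ (your $a_j$), derives log-convexity $c_j^2\le c_{j-1}c_{j+1}$ from Cauchy--Schwarz to get $c_1\le c_0^{1-2/r}c_{r/2}^{2/r}$, and then concludes with $|L[p^{r-1}q]|^2\le c_0c_1$. Your handling of the degenerate case is more roundabout than it needs to be, though not wrong. If $a_0>0$, then either $a_1=0$, which forces $m_1=0$ via $m_1^2\le a_0a_1$, or $a_1>0$, in which case $a_i^2\le a_{i-1}a_{i+1}$ shows every later $a_i$ is positive.
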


\begin{proof}
For $0\le j\le r/2$, put
\[
    c_j:=L[p^{r-2j}q^{2j}]=L[(p^{r/2-j}q^j)^2]\ge0.
\]
Cauchy--Schwarz for the positive semidefinite bilinear form induced by $L$ on
the displayed span gives
\[
    c_j^2\le c_{j-1}c_{j+1}
    \qquad (1\le j<r/2).
\]
Thus the sequence $(c_j)_{j=0}^{r/2}$ is log-convex.  Consequently,
\begin{equation}
    c_1\le c_0^{1-2/r}c_{r/2}^{2/r}.
    \label{eq:degree-d-log-convex-step}
\end{equation}
A final Cauchy--Schwarz inequality gives
\[
    |L[p^{r-1}q]|^2
    \le c_0c_1
    \le c_0^{2-2/r}c_{r/2}^{2/r}.
\]
Raising both sides to the power $r/2$ proves the claim.  If one of the
endpoints $c_0,c_{r/2}$ is zero, the same Cauchy--Schwarz inequalities give
$L[p^{r-1}q]=0$, and the inequality is immediate.
\end{proof}

The following mixed-to-diagonal estimate is the standard consequence of the
real polarization identity of Banach~\cite{Ban38}; the same form is also used in~\cite{HLZ10, So11, BGGLT}.

\begin{lemma}[Polarization lemma]
\label{lem:degree-d-polarization}
Let $T:(\R^n)^d\to\R$ be symmetric and let $g(x):=T(x,\ldots,x)$.  If
$x_1,\ldots,x_d\in \Sph^{n-1}$ and
\[
    |T(x_1,\ldots,x_d)|\ge\beta,
\]
then one of the nonzero normalized signed sums
\[
    y_\varepsilon
    :=\frac{\sum_{i=1}^d\varepsilon_i x_i}
            {\norm{\sum_{i=1}^d\varepsilon_i x_i}_2},
    \qquad \varepsilon\in\{\pm1\}^d,
\]
satisfies
\begin{equation}
    |g(y_\varepsilon)|\ge\frac{d!}{d^d}\,\beta.
    \label{eq:degree-d-polarization-loss}
\end{equation}
\end{lemma}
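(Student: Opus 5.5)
The plan is the classical Banach-type polarization argument: express the mixed value \(T(x_1,\ldots,x_d)\) as a signed average of diagonal values \(g\) at signed sums. Then use homogeneity and the triangle inequality to bound each sum's norm.

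\emph{Step 1 (polarization identity).} For \(\varepsilon\) uniform in \(\{\pm1\}^d\), I would establish
\[
    \E_\varepsilon\Bigl[\Bigl(\prod_{i=1}^d\varepsilon_i\Bigr)\, g\Bigl(\sum_{i=1}^d\varepsilon_i x_i\Bigr)\Bigr]
    = d!\,T(x_1,\ldots,x_d).
\]
To see this, expand \(g(\sum_i\varepsilon_i x_i)=\sum_{i_1,\ldots,i_d}\varepsilon_{i_1}\cdots\varepsilon_{i_d}T(x_{i_1},\ldots,x_{i_d})\) by multilinearity. Multiplying by \(\prod_i\varepsilon_i\) and averaging, \(\E_\varepsilon\prod_i\varepsilon_i\,\varepsilon_{i_1}\cdots\varepsilon_{i_d}\) is nonzero, and then equal to \(1\), exactly when each index \(i\in[d]\) appears an odd number of times among \(i_1,\ldots,i_d\). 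Since there are \(d\) slots and \(d\) indices, that forces each index to appear exactly once, i.e.\ \((i_1,\ldots,i_d)\) is a permutation. Symmetry of \(T\) then makes each of the \(d!\) permutations contribute \(T(x_1,\ldots,x_d)\).

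\emph{Step 2 (averaging).} Taking absolute values gives
\[
    d!\,\beta\le \E_\varepsilon\bigl|g(\textstyle\sum_i\varepsilon_ix_i)\bigr|,
\]
so some \(\varepsilon\) has \(|g(\sum_i\varepsilon_i x_i)|\ge d!\,\beta\). In particular \(\sum_i\varepsilon_ix_i\ne0\), because \(g(0)=0\) and \(\beta>0\). (If \(\beta=0\) the claim is trivial for any nonzero sum, and one exists: \(\varepsilon=(1,\ldots,1)\) or flipping a single sign cannot both give \(0\) since the \(x_i\) are unit vectors.)

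\emph{Step 3 (normalization).} By \(d\)-homogeneity of \(g\),
\[
    |g(y_\varepsilon)| = \frac{|g(\sum_i\varepsilon_ix_i)|}{\norm{\sum_i\varepsilon_ix_i}_2^d},
\]
and the triangle inequality gives \(\norm{\sum_i\varepsilon_ix_i}_2\le d\). Hence \(|g(y_\varepsilon)|\ge d!\,\beta/d^d\).

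The only step needing care is the combinatorial bookkeeping in the expansion in Step 1: correctly identifying which index tuples survive the sign average. That reduces to the parity observation above. The rest is routine.
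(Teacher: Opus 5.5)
Your proof is correct and follows the same route as the paper, which gives no proof of its own but cites Banach's real polarization identity. That is exactly your argument: the signed-average identity \(\E_\varepsilon\bigl[(\prod_i\varepsilon_i)\,g(\sum_i\varepsilon_ix_i)\bigr]=d!\,T(x_1,\ldots,x_d)\), averaging to pick a good sign pattern, then \(d\)-homogeneity together with \(\norm{\sum_i\varepsilon_ix_i}_2\le d\). Your check that the chosen signed sum is nonzero is also sound, and the same argument disposes of any \(\beta\le 0\), not just \(\beta=0\).
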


\end{document}